\newif\iflong
\longtrue

\documentclass[11pt]{article}

\usepackage{hyperref}
\usepackage[utf8]{inputenc}
\usepackage[T1]{fontenc}
\usepackage{amsmath}
\usepackage{amssymb}
\usepackage{amsthm}
\usepackage{graphicx}
\usepackage[margin=1in,letterpaper]{geometry}

\usepackage{titlesec}
\usepackage{cleveref}
\renewcommand{\subparagraph}[1]{%
\vspace{5pt}
\noindent\textbf{#1}\space
}

\usepackage{lineno}
\usepackage[utf8]{inputenc}
\usepackage[T1]{fontenc}
\usepackage{lmodern}
\usepackage{tikz}
\usepackage{xspace}
\usetikzlibrary{calc,shapes.geometric}

\newcommand{\foot}{underlying graph\xspace}

\newcommand{\smallgadget}{
\node[square,draw] (1) at (0,0) {1};
\node (xx) at (0,-.9) {\Huge $\vdots$};
\node (xx) at (0,-2.9) {\Huge $\vdots$};
\node[square,draw] (8) at (0,-2) {8};
\node[square,draw] (20) at (0,-4) {20};
\draw[draw=black] (-.5,.5) rectangle ++(1,-5.1);
}

\usepackage{lineno}
\newcommand{\Oh}{\ensuremath{\mathcal{O}}}

\newcommand{\W}[1]{\textsf{W[#1]}}

\newcommand\NP{\textsf{NP}\xspace}

\newcommand\TC{\ensuremath{\textsf{TC}}\xspace}

\newcommand{\NCTCC}{\textsc{Nontrivial TC Subgraph}\xspace}

\newcommand{\CTCC}{\textsc{Closed TCC}\xspace}

\newcommand{\G}{\ensuremath{\mg}\xspace}

\newcommand{\DirNT}{\NCTCC}

\newcommand{\pre}{\mathrm{pre}}
\newcommand{\post}{\mathrm{post}}
\newcommand{\iin}{\mathrm{in}}
\newcommand{\conn}{\mathrm{con}}
\newcommand{\oout}{\mathrm{out}}

\newcommand{\mP}[1][]{\mathcal{P}^{#1}}

\newcommand{\helpname}[3][]{\ensuremath{{#2}_{#1}^{#3}}}

\newcommand{\vin}[1][v]{\helpname{\iin}{#1}}
\newcommand{\postin}[1][v]{\helpname[\post]{\iin}{#1}}

\newcommand{\preout}[1][v]{\helpname[\pre]{\oout}{#1}}
\newcommand{\vout}[1][v]{\helpname{\oout}{#1}}

\newcommand{\vcon}[1][v]{\helpname{\conn}{#1}}

\newcommand{\Vin}{V_\iin}

\newcommand{\Vout}{V_\oout}

\newcommand{\Vconn}{V_\conn}

\newcommand{\prob}[3]{
\smallskip

\noindent \textsc{#1}:

\noindent \textbf{Input:} #2

\noindent \textbf{Question:} #3

\smallskip
}

\newcommand{\sel}{\mathrm{sel}}
\newcommand{\con}{\mathrm{con}}
\newcommand{\val}{\mathrm{val}}

\newcommand{\mg}{\mathcal{G}}

\pgfdeclarelayer{background}
\pgfdeclarelayer{foreground}
\pgfsetlayers{background,main,foreground}
\tikzstyle{filled}=[circle,fill=black,draw=black,minimum size=5pt,inner sep=0pt]

\tikzstyle{knoten}=[circle,fill=white,draw=black,minimum size=10pt,inner sep=2pt]

\title{On the Hardness of Finding Temporally Connected\linebreak Subgraphs of Any Size}

\usepackage{authblk}

\author[1]{Arnaud Casteigts\thanks{Supported by the French ANR project TEMPOGRAL (ANR-22-CE48-0001) and Swiss NSF project RECAPT (200021-236640).}}
\author[2]{Christian Komusiewicz}
\author[2,3]{Nils Morawietz\thanks{Supported by the French ANR project TEMPOGRAL (ANR-22-CE48-0001).}}
 
\affil[1]{Department of Computer Science, University of Geneva, Switzerland\\ \texttt{arnaud.casteigts@unige.ch}}
\affil[2]{Friedrich Schiller University Jena, Institute of Computer Science, Germany\\ \texttt{c.komusiewicz@uni-jena.de, nils.morawietz@uni-jena.de}}
\affil[3]{LaBRI, Université de Bordeaux, France}

\newcommand{\keywords}[1]{\noindent\textbf{Keywords:} \nobreak\ #1}

\usepackage{setspace}
\newtheorem{theorem}{Theorem}[section]
\newtheorem{lemma}[theorem]{Lemma}
\newtheorem{corollary}[theorem]{Corollary}
\newtheorem{definition}[theorem]{Definition}
\newtheorem{observation}[theorem]{Observation}
\newtheorem{claim}[theorem]{Claim}
\crefname{section}{Section}{Sections}

\newenvironment{claimproof}
  {\begin{proof}[Proof of Claim]}
  {\end{proof}}
  
\AtBeginEnvironment{claimproof}{}
  
  \date{}

\begin{document}

\maketitle
\begin{abstract}
  Temporal graphs are graphs whose edges are present only at certain points in time. Reachability in these graphs is defined via temporal paths, in which edges are traversed in chronological order. A temporal graph is temporally connected (or \TC) if every ordered pair of vertices is connected by a temporal path. When the graph itself is not \TC, a natural question is whether it admits a \TC subgraph (a.k.a.~closed temporal component) of a given size $k$. This question was one of the earliest and most studied in the field, shown to be \textsf{NP}-hard by Bhadra and Ferreira in 2003.
  We strengthen this result dramatically, showing that deciding if a temporal graph admits a \TC subgraph \emph{of any size} (beyond the trivial case of a single vertex in the directed and a single edge in the undirected case) is already \textsf{NP}-hard. Our result holds for all standard temporal graph settings, answering a series of open questions in the field and strengthening several existing results. 
This sharply separates closed components from open ones (where temporal paths can travel outside the component), for which the analogous problem is trivially solvable in polynomial time.

  More precisely, our reductions imply that the size of the largest \textsf{TC} subgraph cannot even be approximated within a factor of~$(1-\epsilon)n$ in directed graphs, and within a factor of~$(1-\epsilon)\frac{n}{2}$ in undirected graphs. 
  They also complete the complexity landscape for \TC subgraphs of size exactly $k$ when parameterized by $k$ (answering the missing ``non-strict'' case). Our results also have structural implications. In particular, they imply that there exist arbitrarily large \TC graphs without nontrivial \TC subgraphs, and that there exist \TC graphs of arbitrary girth, both facts being of independent interest.
  \end{abstract}

\keywords{Temporal reachability, temporally connected components, parameterized complexity, para-NP-hardness, simple and proper temporal graphs}

  \newpage

\section{Introduction}

Temporal graphs
have received significant attention for their ability to model dynamic networks, such as transportation, communication, social, and robotic networks. Formally, a temporal graph can be represented by a labeled graph~$\G=(G,\lambda)$ where~$G=(V,E)$ is a standard finite graph (directed or undirected) called the \emph{\foot}. The labeling function
$\lambda : E \to 2^{\mathbb{N}}$ assigns one or several
\emph{time labels} to each edge of $E$, interpreted as presence times. The largest time label $L$ is called the \emph{lifetime} of $\G$.
Reachability in these graphs is defined in terms of paths that
traverse the edges chronologically. Precisely, a \emph{temporal path}
is a sequence $\langle(e_i, t_i)\rangle$ such that
$\langle e_i \rangle$ is a path in $G$, $\langle t_i \rangle$ is
nondecreasing, and $t_i \in \lambda(e_i)$ for all $i$ in the
sequence. The graph \G is temporally connected (\TC, for short) if it
has at least one temporal path between all ordered pairs of vertices.

Temporal paths differ crucially from standard paths in several aspects. In an early work, Kempe, Kleinberg, and Kumar~\cite{KKK00} observed that the
temporal analog of Menger's theorem does not hold: the size of a
maximum separator may differ from the maximum number of internally
node-disjoint temporal paths, causing the problem of finding $k$
node-disjoint temporal paths between two vertices \textsf{NP}-hard.
Observe that temporal paths are not composable when the first path ends after the second path starts, which makes the reachability relation not transitive, with structural and algorithmic consequences. In another seminal article, Bhadra and Ferreira~\cite{BF03} study the existence of temporal components---subsets of vertices that can reach
each other by temporal paths. They distinguish between
\emph{open components}, allowing paths to travel outside
the component, and \emph{closed components}, requiring that all paths remain inside.
In both cases, they show that the absence of transitivity implies that (inclusion-wise) maximal temporal components may overlap, and the problem of finding such a component of given size~$k$ is also \textsf{NP}-hard. Interestingly, closed components
correspond to vertex-induced subgraphs that are themselves \TC, a quite
fundamental concept.

The literature on temporal graphs is sensitive to definitional details, whose mention is unavoidable.
Some works consider temporal paths whose time labels are not only required to be nondecreasing, but strictly increasing, giving rise to so-called \emph{strict} temporal paths, connectivity, reachability, and components. Another common distinction is that of simple versus nonsimple temporal graphs, where \emph{simple} refers to the fact that every edge has a single time label. Both distinctions matter, as some of the resulting combinations are incomparable in terms of expressivity (see e.g.~\cite{CCS24} and~\cite{doering2026}). For example, it can be observed that the open and closed versions of the component problem are already \textsf{NP}-hard in the simple and strict setting, in fact even for lifetime $L=1$: for any graph $G$, the strict temporal components of the graph $\G=(G,\lambda)$ where $\lambda$ assigns label $1$ to every edge coincide with the cliques of $G$. In contrast, the component problem is trivial to solve when $L=1$ in the non-strict setting, where temporal components coincide with static components.
The above-mentioned reduction from~\cite{BF03} is more subtle and implies that both component problems are \textsf{NP}-hard also in the non-strict setting (see also Balev, Sanlaville, and Schoeters~\cite{BSS24}).

As the reader may feel, the comparison of existing results in temporal graphs is quite delicate. A convenient way to unify negative results between the strict and the non-strict settings is to consider \emph{proper} temporal graphs, where adjacent edges never share a common label (in directed temporal graphs, the requirement is that no in-arc shares a common label with an out-arc at the same node). De facto, showing that a problem is \textsf{NP}-hard on proper temporal graphs establishes at once that it is \textsf{NP}-hard for both the strict and the nonstrict settings.
Going further, showing that a problem is \textsf{NP}-hard in \emph{proper and simple} (a.k.a.~happy) temporal graphs establishes at once that it is \textsf{NP}-hard for any combination of the above parameters, making negative results unified and as general as possible. As it happens, deciding the existence of open and closed components of given size $k$ is indeed \textsf{NP}-hard in happy temporal graphs~\cite{CCS24}.

Our main result is to show that deciding if a closed component (i.e. \TC subgraph) of \emph{any nontrivial size} exists in happy temporal graphs is already \textsf{NP}-hard. Before presenting this result, let us discuss a few more related works to better explain how our results relate to what was known so far and what it implies beyond the question itself. 

The systematic study of temporal components from the perspective of parameterized complexity was initiated in~\cite{CLMS24}, in the non-strict setting.
 In that work, Lopes Costa et al. show that both the open and closed versions of the component problem are \textsf{W[1]}-hard in the solution size~$k$, as well as in the lifetime $L$, both in directed and undirected temporal graphs (the above works considered only undirected graphs).
In contrast, they show that both versions of the problem are FPT when parameterized by the combination $k+L$ in undirected graphs. They also consider the question of deciding whether a given component is maximal for inclusion, obtaining an interesting separation: the open version of the component problem is polynomial-time solvable, while the closed version is hard (in this case, \textsf{coNP}-hard).

Casteigts, Morawietz, and Wolf~\cite{CMW24} also considered the parameterized complexity of finding temporal components, this time in both the strict and the non-strict setting, in terms of a new parameter called \emph{distance to transitivity}, which measures how far the reachability of the temporal graph is from being transitive.
For both settings, they show that deciding if an open component of size $k$ exists in (un)directed temporal graphs is FPT for this parameter, whereas the closed version remains \textsf{NP}-hard even when the reachability relation is only a single pair away from being transitive. This negative result is obtained for happy graphs and therefore applies equally to the strict and the non-strict settings.

Recently, Deligkas et al.~\cite{DDE+25} show that both the open and the closed versions of the problem are \textsf{NP}-hard even when the \foot has bounded treewidth, unlike the case of trees, which is polynomial-time solvable~\cite{JL04}.
 They also consider a parameter called \emph{temporal path number} (namely, the minimum number of temporal paths whose union is the input graph), for which the open version of the problem is in \textsf{XP}, whereas the closed version remains \textsf{NP}-hard even for bounded values. 
Finally, they present FPT algorithms for several combinations of such parameters. 
All the results in~\cite{DDE+25} hold in the strict and the non-strict settings, due to counting arguments for positive results, and happy reductions for negative results.

In contrast to the hardness of finding closed components, in the random setting Becker~et~al.~\cite{BCCKRRZ26} showed that TC subgraphs of size~$n- o(n)$ emerge at a sharp density threshold.

Further concepts of temporal components have been studied, which are not directly related to the present work. For example, \cite{BSS24} investigates round-trip components (providing reachability plus subsequent return) and recurrent components (realized within all time windows of a certain duration). Finally, \cite{RML20} and~\cite{baudin2024} enumerate components of a slightly different type, which are either maximal in size or in persistence.

\subsection{Our contributions}

In the above works, the main question is whether a given temporal graph contains a temporal component of size $k$ or at least~$k$, for \emph{given} $k$. Considering the strong connection between temporal components on the one hand, and cliques in static graphs on the other hand, it is not surprising that these questions turn out to be hard for large values of $k$. In this paper, we ask the more radical question of whether a \TC subgraph \emph{of any size} exists in the input graph, apart from the trivial case of a single vertex in directed graphs and two adjacent vertices in undirected graphs.

\prob{\NCTCC}{A temporal graph~$\G$.}{Does~$\G$ admit any \TC subgraph (closed component) of nontrivial size?}

By \emph{nontrivial}, we mean size more than $1$ in the directed case, and size more than~$2$ in the undirected case; in other words, any size that requires strictly more than an edge to be realized. 

Intuitively, this question is much simpler. Indeed, for open components, it is trivially solvable in polynomial time, because open components are \emph{hereditary}: any subset of an open temporal component remains an open temporal component (a fact also observed in~\cite{CLMS24}). This implies, in particular, that the answer is yes if and only if there exists three vertices that form an open component, which is easy to check.
In contrast, \emph{closed} components are not hereditary (not even quasi-hereditary): the existence of a closed component of size $k$ does not imply the existence of one of size $k-1$, making the answer to the closed version of the question unclear.

Our main result is the striking fact that \NCTCC is indeed \textsf{NP}-hard, for both directed and undirected temporal graphs, even in the happy setting. (Recall that in directed temporal graphs, the happy condition is that no in-arc shares a common label with an out-arc at the same node, so that no two arcs with a same label can be composed even in the non-strict case.)

\begin{theorem}
\NCTCC is \textsf{NP}-hard on happy directed temporal graphs, even if the lifetime is bounded (namely, $L\ge 11$).
\end{theorem}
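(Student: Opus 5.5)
We reduce from \textsc{3-SAT} (or a similar \textsf{NP}-hard satisfiability variant). Given a formula $\phi$, we build a happy directed temporal graph $\G_\phi$ with lifetime at most $11$. It should have a nontrivial \TC subgraph if and only if $\phi$ is satisfiable, and then $V(\G_\phi)$ minus some junk vertices is itself \TC. The macros in the preamble point to the design. Each vertex $v$ of a base structure is split into an in-copy $\vin$, an out-copy $\vout$ and a connector $\vcon$, with pre/post variants such as $\preout$ and $\postin$. Time labels are layered so that any temporal path has a rigid shape: early labels leave the out-copies and gather at a hub, middle labels perform selection and validation, and late labels distribute to the in-copies. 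With a bounded number of phases, a constant lifetime such as $11$ is plausible. Happiness is preserved by never letting an in-arc and an out-arc at the same vertex share a label. This is easy to check when each vertex is incident only to arcs of a fixed small set of layers.

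\textbf{Key steps.}
\begin{enumerate}
\item Design a variable gadget for each $x_i$ with two alternative ``routes'', true and false. In any \TC subgraph, at least one route must survive; if both survive, some designated vertex becomes unreachable or cannot reach others, or we simply allow it.
\item Design a clause gadget for each clause $C_j$. Its vertices can be reached in the middle phase only through a literal route that is present in the subgraph.
\item Add global vertices (the $\Vin,\Vout,\Vconn$ layers) that force closure. If a \TC subgraph $S$ contains any two vertices, then because reachability must stay inside $S$ and the graph has no short cycles of compatible labels, $S$ must contain a hub vertex. Once $S$ contains the hub, it must contain every clause checkpoint, all in-copies and all out-copies.
\end{enumerate}

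\textbf{Soundness.} Any nontrivial \TC subgraph is forced to contain a consistent route per variable and to reach every clause, which yields a satisfying assignment.

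\textbf{Completeness.} Given a satisfying assignment, we take all global vertices, the chosen literal routes and the clause vertices. We then verify temporal connectivity for every ordered pair by concatenating an out-phase path, a middle path through a satisfied literal, and an in-phase path. This stays within the label layers, so lifetimes add up to $11$.

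\textbf{Main obstacle.} The hardest step is the forcing argument: showing that every nontrivial \TC subgraph, even a tiny one of two vertices, must blow up to contain the global structure. Closed components are not hereditary, so we must rule out all small \TC subgraphs that avoid the hub. Two vertices $u,v$ are mutually reachable only if there are arcs or paths $u\to v$ and $v\to u$ with labels allowing a cycle. We first try to ensure that the underlying graph's cycles all pass through the layered structure: every arc goes from a lower layer to a higher one, except the designated "return" arcs through $\vcon$. A case analysis by layer of the smallest-label vertex then shows that any strongly temporally connected set must include a return arc, and from there the hub. The label ordering must make every temporal cycle use this global path. If direct forcing fails, we fall back to a weaker requirement: one that still gives \textsf{NP}-hardness as long as every nontrivial \TC subgraph encodes an assignment.
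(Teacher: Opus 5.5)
There is a genuine gap: the proposal never specifies a reduction. No vertex set, arc set or labeling is given, so happiness, the lifetime bound, completeness and soundness cannot be checked. The step that carries the theorem is that every nontrivial \TC subgraph is forced to contain the global structure and then to encode a solution. That is exactly what you call the ``main obstacle'', and you leave it as a plan. Your fallback is not a weaker requirement; ``every nontrivial \TC subgraph encodes an assignment'' is the soundness statement itself. The claim that containing the hub forces $S$ to contain all in-copies, out-copies and clause checkpoints is also unsupported, and the paper never needs it.

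In the paper the forcing is done by concrete label arguments:
\begin{itemize}
\item $\mg[V_\con]$ is a DAG (Observation~\ref{obs dag}), so $S$ meets $V$.
\item No two vertices have arcs in both directions, so $|S|\ge 3$. By minimality, every vertex of $S$ must then be an intermediate vertex of some temporal path in $\mg[S]$.
\item The few arcs at vertices of $V$ force $\alpha,\omega\in S$ (Claim~\ref{claim dir alpha and omega}).
\item An induction over $j$ shows that $S$ meets every color class (Claim~\ref{claim for each vert some color}). It uses the temporal paths between $\alpha$, $\omega$ and an already found $v_j\in V_j\cap S$, together with the labels $4,5,6,7,8$ of the selector.
\end{itemize}

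Independently of the missing details, your variable gadget has a soundness hole. If both the true and the false route may survive (``or we simply allow it''), a \TC subgraph can satisfy one clause via $x$ and another via $\neg x$, and no consistent assignment can be read off. Enlarging $S$ only adds temporal paths, so you need an explicit mechanism that makes complementary choices incompatible.

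The paper gets this for free from the pairwise nature of temporal connectivity by reducing from \textsc{Multicolored Clique}. For $x,y\in V$ with $(x,y)$ not an arc, every temporal path from $x$ to $y$ must traverse the validation arc $(\vout[x],\vin[y])$, which exists iff $\{x,y\}\in E$ (Claim~\ref{dir happy paths between V}). Hence mutual reachability within $S\cap V$ is exactly being a clique, which also rules out two vertices from one color class.

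A \textsc{3-SAT} version would need the same ingredient: literal-occurrence vertices per clause, with validation paths only between non-complementary occurrences. That amounts to composing the standard reduction from \textsc{3-SAT} to clique with the paper's construction. Two-route variable gadgets do not supply it.
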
  
\begin{theorem}
\NCTCC is \textsf{NP}-hard on happy undirected temporal graphs.
\end{theorem}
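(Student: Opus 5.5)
We would prove the undirected statement by reducing from the directed case, i.e.\ from the first theorem: \NCTCC is \textsf{NP}-hard on happy directed temporal graphs. Given a happy directed temporal graph $\G=(G,\lambda)$ on vertex set $V$, we build a happy undirected temporal graph $\G'$ in polynomial time. The two properties we need are:
\begin{itemize}
\item[(a)] every nontrivial directed \TC subgraph of $\G$ yields a \TC subgraph of $\G'$ on at least three vertices;
\item[(b)] every \TC subgraph of $\G'$ on at least three vertices projects back to a nontrivial \TC subgraph of $\G$.
\end{itemize}

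\textbf{Construction.} Orientation must be simulated by time, as in the standard arc-to-path gadgets. First rescale all labels, for instance $t\mapsto 10t$, to leave room. Each vertex $v$ of $G$ is replaced by a small gadget built from the helper vertices $\vin$, $\vout$ and $\vcon$, where $\vcon$ is joined to $\vin$ and $\vout$. Each arc $(u,v)$ with label $t$ is replaced by an undirected path $\vout[u]$--$x_{uv}$--$\vin[v]$ on a fresh vertex $x_{uv}$, with labels $10t+1$ and $10t+2$. Because $\G$ is happy, an in-arc and an out-arc at the same node never share a label, so the rescaled labels stay pairwise distinct at every vertex. Hence $\G'$ is simple and proper.

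The intended effect is that an arc path is traversable only from $u$ towards $v$ in a useful way. The reverse traversal would require decreasing labels, so a temporal walk through the arc gadgets in $\G'$ mimics a directed temporal path in $\G$. The connector edges $\vin$--$\vcon$--$\vout$ need labels that let every arrival at $\vin$ be forwarded to every later departure from $\vout$. This forces using a few labels per edge around each vertex (giving up simplicity) or a fan of parallel connector vertices, one per pair of consecutive incident times. We will use the fan: for each arriving time and each later departing time we add a distinct two-edge path with suitable intermediate labels, which keeps the graph simple and proper.

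\textbf{Proof of (a).} Take $X\subseteq V$ with $|X|\ge 2$ that is \TC in $\G$. Keep all gadget vertices of $X$ and all arc vertices $x_{uv}$ with $u,v\in X$. We then check that every ordered pair of gadget or arc vertices is temporally connected. Between gadgets this follows from the directed paths in $\G[X]$, pre- and post-composed with local moves inside gadgets. An arc vertex $x_{uv}$ reaches everything through $\vin[v]$, and is reached through $\vout[u]$.

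\textbf{Proof of (b), the main obstacle.} Let $Y$ be a \TC vertex set of $\G'$ with $|Y|\ge 3$. We must show that $Y$ cannot live entirely inside a single gadget, a single arc path, or a local mixture. Small structures must be made non-\TC by design, e.g.\ by ensuring that the gadget around one vertex contains no triangle and that its label pattern forbids temporal connectivity of every sub-path of length two. Then we argue that if $Y$ meets the gadgets of at least two vertices, the set $X=\{v: Y \text{ meets the gadget of } v\}$ is \TC in $\G$. Undirected reachability in $\G'[Y]$ between gadgets of $u$ and $v$ projects onto a directed temporal path from $u$ to $v$ in $\G[X]$, using monotonicity of labels along arc paths.

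The delicate part is this projection. Two points need care: paths that enter an arc gadget and bounce back, and gadget vertices reached only through a partial gadget. The label offsets ($+1,+2$ and the fan intermediates) must be chosen so that bouncing never gains reachability, so that $|X|\ge 2$ is forced, and so that no three-vertex \TC set survives locally. I expect to iterate on the gadget labels here; the first thing I would try is making every intra-gadget path strictly monotone between an in-time and a later out-time.
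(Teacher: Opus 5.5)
Your direction (a) fails for your construction, and the failure is structural, so tuning labels will not fix it. You split each vertex $v$ into $\vin$ and $\vout$ and make every fan and connector path increasing from $\vin$ to $\vout$. Hence $\vout$ can reach $\vin$ in $\mg'$ only by leaving through one arc path and returning through another, i.e., only if $v$ lies on a temporal closed walk of $\mg$. Directed temporal connectivity does not provide that.

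Take two vertices $u,w$ with arcs $(u,w)$ labeled $1$ and $(w,u)$ labeled $2$. This graph is happy and $\{u,w\}$ is a nontrivial \TC subgraph. In $\mg'$, however, the only way for $\vout[w]$ to reach $\vin[w]$ is via $x_{wu}$ at time $21$ and $\vin[u]$ at time $22$, while the arc path of $(u,w)$ exists only at times $11,12$. In fact this $\mg'$ has no \TC subgraph on three or more vertices at all:
\begin{itemize}
\item it is triangle-free;
\item its only $4$-cycle consists of two increasing two-edge paths between $\vin[w]$ and $\vout[w]$, and such a cycle is not \TC;
\item every larger subgraph has fewer edges than the classical $2n-4$ gossip bound.
\end{itemize}
So a yes-instance is mapped to a no-instance.

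Restricting to the instances produced by the directed reduction does not rescue this. There, no $x\in V$ lies on a temporal closed walk: a walk leaving $x$ is either stuck after a label-$9$ arc, or after time $6$ it can only continue from $\omega$ into the connector gadget. So the image of a clique solution is not \TC either. The underlying problem is that every auxiliary vertex you create ($x_{uv}$, the fan vertices, $\vcon$) must be reached by all vertices and must reach all vertices, and a local arc-to-path simulation cannot guarantee that.

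Your properness claim is also wrong. In a happy directed graph several in-arcs of one vertex may share a label, e.g.\ the label-$2$ arcs into $\alpha$. These would give equal labels at $\vin[\alpha]$.

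The missing idea is a global hub, which is why the paper reduces directly from MCC rather than from the directed problem. The vertices of $V$ stay single vertices, including $\alpha=v_1^1$ and $\omega=v_k^1$. Every auxiliary vertex $c\in\Vin\cup\Vout\cup\Vconn$ is the docking point of a $20$-vertex connector path $\mP[c]$ of girth $5$, built from two edge-disjoint Hamiltonian paths:
\begin{itemize}
\item the red path carries the first $19$ labels, so all of $\mP[c]$ reaches the sink and then $\alpha$ early;
\item the blue path carries the last $19$ labels, so $\omega$ reaches all of $\mP[c]$ late through the source.
\end{itemize}
The clique vertices lie on a temporal $\alpha$--$\omega$ path during the selection times, which links these two halves. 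The remaining pairs $x_j\to x_i$ use the path $(x_j,\vout[x_j],\vcon[x_j,x_i],\vin[x_i],x_i)$.

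Your direction (b), which you leave to future iteration, is where most of the paper's work lies. The paper uses the following chain:
\begin{itemize}
\item happy \TC graphs without $4$-cycles have at least $2n-3$ edges, so with girth $5$ every nontrivial closed tcc has at least $17$ vertices;
\item connector paths, and their four small extensions by $\alpha,\alpha'$ or $\omega,\omega'$, contain no nontrivial closed tcc;
\item a minimal tcc is $2$-vertex-connected;
\item a case analysis then forces $\alpha,\omega\in S$, then one vertex per color class on the $\alpha$--$\omega$ path, then adjacency in $G$ via the $\vcon$ vertices.
\end{itemize}
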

For the undirected case, this leaves open the setting with constant lifetime. 
Indeed, the problem cannot be hard for constant lifetime in the non-strict (and thus happy) setting, as an algorithm by Lopes Costa et al.~\cite{CLMS24} implies indirectly that \NCTCC is FPT when parameterized by $L$ in this setting. 
We show that all remaining settings are \textsf{NP}-hard for constant lifetime. 
\begin{theorem}
\NCTCC is \textsf{NP}-hard on simple undirected temporal graphs in the strict setting, even if the lifetime is bounded (namely, $L\ge 55$).
\end{theorem}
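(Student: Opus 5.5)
Our plan is to obtain this statement from the directed construction behind the first theorem, by replacing every arc with an undirected gadget of constant lifetime that, in the strict setting, behaves like the arc. We reduce from 3-SAT, or more directly from the directed happy instance with lifetime $11$. In the strict setting we may give several edges the same label, and a temporal path can then use at most one edge per label. This lets us simulate direction. An arc $(u,v)$ with label $t$ is replaced by a short path $u - a - v$ with labels $\lambda(ua)=5t$ and $\lambda(av)=5t+1$. The path can be traversed from $u$ to $v$ at times $5t,5t+1$, but not from $v$ to $u$. Scaling by a constant factor (about $5$) keeps the lifetime constant, which is presumably where the bound $L\ge 55$ comes from.

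\textbf{Step 1 (simulation).} We show that strict temporal paths in the new graph that start and end at original vertices correspond exactly to directed temporal paths in the original happy instance. Because the original is proper, no two consecutive arcs at a vertex share a label, so the strict and non-strict readings of the original coincide. The scaled labels then preserve the order of consecutive arcs.

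\textbf{Step 2 (soundness).} We show that any nontrivial \TC subgraph $S$ of the new graph projects to a nontrivial \TC subgraph of the original. Three difficulties arise here:
\begin{itemize}
\item the subdivision vertices $a$ must also be reached and must reach everything;
\item $S$ may consist mostly of gadget vertices;
\item a triangle or a small undirected component could be \TC on its own.
\end{itemize}
We handle these by arguing as follows. If $S$ contains a subdivision vertex $a$ and has more than two vertices, then $S$ contains both endpoints $u$ and $v$, since $a$ has degree two. Reachability inside $S$ restricted to original vertices is then exactly directed reachability in the original, by Step 1. The underlying graph is bipartite between original and subdivision vertices, so there are no triangles. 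A \TC set of size at least $3$ therefore contains at least two original vertices, giving a nontrivial directed \TC subgraph.

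\textbf{Step 3 (completeness).} Conversely, a directed \TC subgraph $X$ must lift to a \TC subgraph, which requires that every subdivision vertex on the arcs inside $X$ can reach and be reached by everyone. This is the main obstacle. Vertex $a$ on arc $(u,v)$ only leaves toward $v$ at time $5t+1$, and is only entered from $u$ at time $5t$. So we need every vertex of $X$ to reach $u$ before time $5t$, and $v$ to reach all of $X$ after time $5t+1$. The arc gadget used to encode direction will not satisfy this in general. We will therefore first try to add a global synchronizing structure. One option is to prepend a ``broadcast'' phase: a hub connected with early labels so that everyone reaches everyone early. A cleaner option is to triple the time axis, using copies of the graph with labels shifted into early and late windows, so that arbitrary early and late connectivity holds inside any candidate solution. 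Such additions could create spurious \TC sets, so Step 2 must be rechecked against them. Because of this risk, we may instead prove the statement by reducing directly from 3-SAT, reusing the variable and clause gadget skeleton of the directed proof with undirected gadgets designed from the outset to be strict-\TC. The case analysis ruling out spurious small \TC subgraphs would then be the longest part of the proof.
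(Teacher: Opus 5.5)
Your Step 2 is essentially fine. A strict \TC set of size at least $3$ in the subdivided graph projects to a \TC set of at least two original vertices, because a star on at least three vertices with one label per edge is never strictly \TC. (Your claim that $a\in S$ forces both endpoints into $S$ is unjustified, since \TC graphs can have degree-one vertices, but you do not need it.)

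The gap is Step 3, and for this construction it is fatal, not just a missing detail. Take $x\in V$ and the subdivision vertex $a$ of the label-$2$ arc $(\vout[x],\alpha)$, so $\lambda(\{\vout[x],a\})=10$ and $\lambda(\{a,\alpha\})=11$.
\begin{itemize}
\item Every other edge at $\vout[x]$ has label at least $16$, and every edge at $\alpha$ has label at least $11$. Hence only $\vout[x]$ and $\alpha$ themselves can reach $a$, and $a$ lies in no nontrivial \TC set.
\item Without $a$, the vertex $\vout[x]$ has no strict temporal path to $\omega$ at all. Each exit either dead-ends in a subdivision vertex, or reaches $\alpha$ or a vertex of $V$ at time $46$, after which no edge is available.
\end{itemize}
So the lift of the directed solution $V_{\con}\cup C$ is never \TC. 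The underlying reason is that each auxiliary vertex needs an early exit towards $\alpha$ and a late entry from $\omega$. A degree-two subdivision of an early arc can only be entered early, which here means from nowhere.

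Your suggested repairs are not specified. A single early hub $h$ cannot work even in principle: two of its neighbours $u,w$ would need both $\lambda(\{u,h\})<\lambda(\{h,w\})$ and $\lambda(\{w,h\})<\lambda(\{h,u\})$.

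The idea you are missing is the paper's connector path. The paper reduces from \textsc{Multicolored Clique} and replaces each auxiliary vertex $\vin[x]$, $\vout[x]$ by a $20$-vertex girth-$5$ temporal graph $\mP$. This graph is the union of two edge-disjoint Hamiltonian paths:
\begin{itemize}
\item a red path labeled within the first $19$ time steps, ending in a sink adjacent to $\alpha$;
\item a blue path labeled within the last $19$ time steps, starting at a source adjacent to $\omega$.
\end{itemize}
Vertex $8$ serves as the docking point for the validation edges. This gives every gadget vertex exactly the early route to $\alpha$ and the late route from $\omega$ that your Step 3 lacks.

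Soundness is then proved structurally, not by simulation. A strict \TC graph with one label per edge and no $4$-cycle has at least $2n-3$ edges. By the Garnick--Kwong--Lazebnik extremal bound, every nontrivial closed tcc in a girth-$5$ graph therefore has at least $17$ vertices. Combined with $2$-vertex-connectivity of minimal closed tccs, this shows that connector paths and their small extensions by $\alpha,\alpha',\omega,\omega'$ contain no nontrivial closed tcc. A case analysis then forces $\alpha,\omega\in S$, a vertex of every colour class, and pairwise adjacency.

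The selection gadget is redesigned as well: mostly size-one colour classes, triangle-free, with $12$ selection labels. That is where $L=19+1+1+1+12+1+1+19=55$ comes from. It is not a rescaling of the $11$-label directed instance; your scaling would give lifetime $56$.
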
  
Thus, these results are tight in terms of the considered settings. As a by-product, our results strengthen the \textsf{W[1]}-hardness for~$k+L$ by Lopes Costa et al.~\cite{CLMS24} to a \textsf{para-NP}-hardness for the problem of determining whether a closed component of size at least~$k$ exists (except in the undirected non-strict setting, where the problem is FPT). 
Technically, the presented reductions have further strong implications for several questions related to temporal components. Namely:

\begin{enumerate}
\item\label{imp hardness question} The reduction for happy undirected temporal graphs answers the open question~4 in~\cite{CLMS24}, regarding the parameterized complexity of finding closed temporal components of size exactly~$k$ in the non-strict setting, thereby completing the complexity landscape for open and closed components when parameterized by $k$.
For all the other settings, \textsf{W[1]}-hardness for~$k$ is already known, and the fact that our hardness result considers happy undirected graphs implies in particular that it holds in the non-strict setting.
\item\label{imp maximality} For each setting of the three theorems, determining whether a vertex (in the directed case) or an edge (in the undirected case) is an inclusion-maximal closed component is already \textsf{coNP}-hard.
This strengthens the result by Lopes Costa et al.~\cite{CLMS24} that testing whether a given closed component (of unbounded size) is inclusion-maximal is \textsf{coNP}-hard. 
Namely, this problem is already hard when given a component of size $2$.
\item\label{imp ETH} Under the Exponential Time Hypothesis (ETH), \NCTCC on happy directed temporal graphs of constant lifetime or undirected (nonhappy) temporal graphs of constant lifetime in the strict model cannot be solved in $2^{o(n)}$~time. In other words, a simple brute-force algorithm that checks all vertex subsets is essentially optimal. For happy undirected temporal graphs, our results also imply that a $2^{o(\sqrt{n})}$-time algorithm cannot exist under ETH.
\item\label{imp Counting} 
Implication~\ref{imp ETH} holds even when promised that each yes-instance contains $2^{\Theta(n)}$~maximal closed components ($2^{\Theta(\sqrt{n})}$ in the undirected non-strict case).
This implies also that there is some fixed~$c>0$ for which one cannot approximate the number of maximal closed components within a factor of $2^{c \cdot n}$ or~$2^{c \cdot \sqrt{n}}$, respectively.
\item\label{imp approx} In each of the settings and for all fixed~$0 < \epsilon < 1$, it is \textsf{NP}-hard to approximate the size of the largest closed component (i)~within a factor of $(1-\epsilon)n$ in directed graphs and (ii)~within a factor of $(1-\epsilon)\frac{n}{2}$ in undirected graphs. (Observe that approximation for $\epsilon=0$ is trivial in both cases, by outputting a vertex in the directed case, or an edge in the undirected case.)
\end{enumerate}

On the structural side, our reductions also imply that there exist happy directed temporal graphs of constant lifetime and undirected temporal graphs of constant lifetime in the strict setting that are \TC but for which no proper \TC subgraph of nontrivial size exists. Of independent interest, we also prove the existence of happy \TC graphs of arbitrary girth.

\subsection{Technical overview}

Here, we explain the general ideas behind our reductions. In a sense, all three reductions follow the same general principle, although the level of technicality increases as the settings become less expressive. For this reason, the directed case is presented first (in Section~\ref{sec directed}) and it can serve as a basis for understanding the more complex undirected cases (described at a high level in Sections~\ref{sec gagdet} and~\ref{sec sketches}, and with more details in the appendix in Sections~\ref{sec hard happy} and~\ref{sec hard strict}). The following explanations are common to the three reductions.

Each reduction is performed from the~\NP-hard \textsc{Multicolored Clique} problem, which is also~\W1-hard for the solution size~$k$~\cite{C+15}.

\prob{Multicolored Clique (MCC)}{An undirected graph~$G$, an integer~$k$, and a $k$-partition~$(V_1 ,\dots, V_k)$ of~$G$ such that~$V_i$ (called a \emph{color class}) is an independent set in~$G$ for each~$i\in [1,k]$.}{Is there a clique of size~$k$ in~$G$?}

The vertices~$V$ of the MCC instance~$I$ become a subset of vertices of the constructed temporal graph~$\mg$, which additionally contains two dedicated vertices~$\alpha$ and~$\omega$ and additional gadget vertices to ensure temporal reachability between specific vertex pairs of~$V$.

In each reduction, we show that the two vertices~$\alpha$ and~$\omega$ are part of every set of vertices~$S$ that induce a nontrivial \TC subgraph.
This then implies that~$\mg[S]$ (the subgraph of $\G$ induced by $S$) has to contain a temporal path from~$\alpha$ to~$\omega$.
We also establish a selection mechanism that, based on the existence of such a temporal path, ensures that~$S$ contains at least one vertex of each of the color classes~$V_i$ of the MCC instance~$I$.  
By an additional validation mechanism, we guarantee that vertices of different color classes can be part of the same \TC subgraph only if they are adjacent in the MCC instance.
This validation mechanism is achieved by introducing additional vertices~$\Vconn$ (two for each original vertex).
The main difficulty now is that adjacent vertices~$x$ and~$y$ from the MCC instance must reach all vertices of the connector gadget~$\Vconn$ associated with them and conversely, these vertices of the connector gadget must reach~$x$ and~$y$ (and each other). 
However, this must be done in a way that avoids that the vertices~$x$ and~$y$ plus their associated vertices of the connector gadget already form a \TC subgraph.
This construction guarantees that a nontrivial \TC subgraph in \G implies the existence of a clique of size~$k$ in~$I$.

For the other direction, we show that a clique~$X$ of size~$k$ in~$I$ implies a nontrivial \TC subgraph in~$\mg$.
To this end, we show that~$X \cup \Vconn \cup \{\alpha,\omega\}$ induces a nontrivial \TC subgraph.
Intuitively, the temporal connectivity is ensured as follows: 
all vertices of~$\Vconn$ can reach~$\alpha$ early, then follow paths from~$\alpha$ to all vertices of~$X \cup \{\omega\}$, and finally reach each vertex of~$\Vconn$ via paths starting late at~$\omega$.
Similarly, for each vertex~$x$ of~$X$, there is a temporal path from~$x$ to~$\omega$ in~$\mg[X\cup \{\omega\}]$ that can subsequently be extended by temporal paths to all vertices of~$\Vconn$.
Finally, the connectivity between distinct vertices of~$\{\alpha,\omega\} \cup X$ is ensured via paths that visit vertices of~$\Vconn$.
By construction, such paths only exist since the respective endpoint are adjacent in~$I$.

\section{Hardness on Happy Directed Temporal Graphs}\label{sec directed}

We start by presenting our reduction for the case of happy directed temporal graphs of constant lifetime.
To distinguish between a (nontrivial) \TC subgraph~$\mg[S]$ and the set of vertices~$S$ that induce this subgraph, we refer to~$S$ as a~\emph{(nontrivial) closed tcc}.

\begin{theorem}\label{hardness happy directed}
\DirNT is \NP-hard in happy directed temporal graphs of lifetime~$11$. 
\end{theorem}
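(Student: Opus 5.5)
I will reduce from \textsc{Multicolored Clique}, following the blueprint of the technical overview. Given an instance $(G,k,(V_1,\dots,V_k))$, I build a happy directed temporal graph $\mg$ with lifetime $11$ on vertex set $V \cup \Vconn \cup \{\alpha,\omega\}$, where $\Vconn$ contains two connector vertices $\vin$ and $\vout$ for each $v\in V$.

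\textbf{Selection mechanism.} The $\alpha$--$\omega$ layer is a chain of color classes. Arcs go from $\alpha$ to every vertex of $V_1$, from every vertex of $V_i$ to every vertex of $V_{i+1}$ for $i<k$, and from every vertex of $V_k$ to $\omega$. All these arcs get one common middle label; happiness then forbids composing them, which is a problem. So I would instead replace each layer transition by a small gadget that uses two alternating labels, placing $V$-vertices on odd positions and using intermediate arcs so that in-labels differ from out-labels at each node. Any temporal $\alpha\to\omega$ path confined to $S$ then visits one vertex per color class.

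\textbf{Validation and connectors.} Each $v\in V$ gets the following arcs:
\begin{itemize}
\item an early arc $\vout\to\alpha$ and a late arc $\omega\to\vin$;
\item arcs $v\to\vout$ at an early label and $\vin\to v$ at a late label;
\item an arc $\vin\to\vout[u]$ for each edge $uv\in E(G)$, with a label strictly between the two.
\end{itemize}
For different colors, a vertex $x$ should reach $y$ only through the route $x\to\vout[x]$ or through edge connectors $\vin[x]\to\vout[y]$. Such a route exists only if $xy\in E$. I expect to need around $11$ distinct labels to keep every node's in-labels and out-labels disjoint, so I will fix an explicit label table and verify happiness node by node.

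\textbf{Soundness (clique $\Rightarrow$ closed tcc).} Take $S=X\cup\Vconn\cup\{\alpha,\omega\}$ for a clique $X$ and check all pairs:
\begin{itemize}
\item connector vertices reach $\alpha$ early, then follow selection paths through $X$ to $\omega$, then fan out late to every $\vin$ and from there to $X$;
\item each $\vin$ reaches the $\vout$ vertices through a selection path and the return arcs;
\item pairs inside $X$ use the edge connectors.
\end{itemize}
$S$ has more than one vertex, so it is nontrivial. One caveat: connector vertices of vertices outside $X$ are included, so their reachability must not depend on their own base vertex. If it does, I will restrict $S$ to $X\cup\{\vin,\vout : v\in X\}$ instead.

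\textbf{Completeness (closed tcc $\Rightarrow$ clique).} This is the main obstacle. I will show that any closed tcc $S$ with $|S|\ge 2$ contains $\alpha$ and $\omega$. The argument is a case analysis showing that no "local" set can be \TC. For instance, $\{v,\vout\}$ fails because $\vout$ cannot get back to $v$ without passing through $\alpha$ and $\omega$ and a late arc. The timing must be designed so that every cycle in the underlying graph avoiding $\{\alpha,\omega\}$ is not temporally closable. Concretely, labels strictly increase along every such cycle except when it passes through the $\alpha$/$\omega$ "reset".

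Once $\alpha,\omega\in S$, the temporal path $\alpha\to\omega$ inside $S$ selects some $x_i\in V_i\cap S$ for every $i$. Pairwise reachability between the $x_i$ inside $S$, at the available times, then forces every pair $x_ix_j$ to be adjacent, so $\{x_1,\dots,x_k\}$ is a clique.

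I would first attempt a potential-function argument. Assign each vertex an interval of possible arrival times and show that the reachability $v\to v'$ for $v\neq v'$ among non-$\{\alpha,\omega\}$ vertices forces a route through $\alpha$ or $\omega$. If that fails, I will fall back to exhaustive case analysis on which gadget types intersect $S$.
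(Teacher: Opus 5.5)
Your selection mechanism cannot work at lifetime $11$, and this is the central gap. In a happy directed temporal graph, two consecutive arcs $(u,v),(v,w)$ of a temporal path cannot carry the same label, since that would give $v$ an in-arc and an out-arc with a common label. So labels strictly increase along every temporal path, and no temporal path has more than $11$ arcs. Hence no temporal $\alpha$--$\omega$ path can visit a vertex of each of $k$ color classes once $k>10$, while \textsc{Multicolored Clique} with $k\le 10$ is polynomial-time solvable. Your repair with \emph{two alternating labels} does not help: a label sequence $a,b,a,b,\dots$ is not nondecreasing, so it is not a temporal path at all. A chain with layer-wise increasing labels would only give hardness for unbounded lifetime.

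The missing idea is to force the selection through many \emph{short} paths that temporal connectivity of $\mg[S]$ demands simultaneously. The paper (with $k$ odd) uses $\omega$ as a hub:
\begin{itemize}
\item $\alpha\to V_1$ at label $4$, and $\alpha\to V_i$ for $i\ge 2$ at label $8$;
\item odd classes have arcs to $\omega$ at $6$, and $\omega$ has arcs to even classes at $5$;
\item each even class $V_i$ has arcs to $V_{i+1}$ at $4$ and to $V_{i-1}$ at $7$.
\end{itemize}
Then the $\alpha\to\omega$ path in $\mg[S]$ forces a vertex of $V_1$. Inductively, a selected $v_j$ with $j$ even can reach $\omega$ only through $V_{j+1}$ (labels $4,6$), and $\omega$ can reach a selected $v_j$ with $j<k$ odd only through $V_{j+1}$ (labels $5,7$). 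All these paths have length at most $2$.

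The connector part also fails as you specified it. You give $\vout[v]$ only an early arc to $\alpha$ and $\vin[v]$ only a late arc from $\omega$. Then $\vin[v]$ cannot reach $\alpha$ and $\omega$ cannot reach $\vout[v]$, so your set $S=X\cup V_\con\cup\{\alpha,\omega\}$ is not \TC. Every connector vertex needs both arcs. Moreover, the labels must be ordered so that $(\vout[x],\alpha)$ precedes $(x,\vout[x])$ and $(\vin[y],y)$ precedes $(\omega,\vin[y])$; the paper uses $2<3$ and $9<10$. Otherwise $x\to\vout[x]\to\alpha$ and $\omega\to\vin[y]\to y$ become shortcuts that let vertices of $V$ reach each other without any validation arc.

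Three further points are missing or wrong:
\begin{itemize}
\item Your validation arcs point the wrong way. They must be $(\vout[x],\vin[y])$, with a label strictly between those of $(x,\vout[x])$ and $(\vin[y],y)$.
\item You do not say how $\omega$ reaches $\alpha$ and $V_k$ while every arc keeps a single label. The paper does this with $\vout[\omega]$, $\vin[\alpha]$, $\preout[\omega]$ and $\postin[\alpha]$.
\item Your backward direction is only a plan. The paper's argument rests on three ingredients: taking $S$ of minimum size, so that every vertex of $S$ is intermediate on some temporal path in $\mg[S]$; the fact that $\mg[V_\con]$ is a DAG; and the claim that for $x,y\in V$ with $(x,y)$ not an arc, every temporal $x\to y$ path uses $(\vout[x],\vin[y])$.
\end{itemize}
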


\newcommand{\tableDirected}{
\begin{table}
\caption{An overview for the arcs of the temporal graph grouped by their label.}
\label{tab label dir}
\centering
\begin{tabular}{l|l}
label &  set of arcs with that label \\\hline
1 & the arc~$(\vin[\alpha],\postin[\alpha])$\\
2 & the arcs~$\{(\vout[x],\alpha),(\vin[x],\alpha)\mid x\in V\} \cup \{(\preout[\omega],\alpha),(\vout[\omega],\alpha),(\postin[\alpha],\alpha)\}$\\
3 & the arcs~$\{(x,\vout[x])\mid x\in V \cup \{\omega\}\}$\\
4 & the arcs~$(\{\alpha\} \times V_1) \cup \bigcup_{\text{even~}i\in [1,k]} (V_i \times V_{i+1})$ and each arc of the form~$(\vout[x],\vin[y])$ in~$\mg$\\
5 & the arcs~$\bigcup_{\text{even~}i\in [1,k]} (\{\omega\} \times V_i)$\\
6 & the arcs~$\bigcup_{\text{odd~}i\in [1,k]} (V_i \times \{\omega\})$\\
7 &  the arcs~$\bigcup_{\text{even~}i\in [1,k]} (V_i \times V_{i-1})$\\
8 & the arcs~$\bigcup_{i\in [2,k]} (\{\alpha\} \times V_{i})$\\
9  & the arcs~$\{(\vin[x],x)\mid x\in V \cup \{\alpha\}\}$\\
10  & the arcs~$\{(\omega,\vout[x]),(\omega,\vin[x])\mid x\in V\} \cup \{(\omega,\preout[\omega]),(\omega,\vin[\alpha]),(\omega,\postin[\alpha])\}$\\
11  & the arc~$(\preout[\omega],\vout[\omega])$
\end{tabular}
\end{table}
}

\newcommand{\proofDirHappyClaim}{
\begin{claimproof}
Note that the arc~$(\vout[x],\vin[y])$ is in~$\mg$ if and only if~$\{x,y\}$ is an edge of~$G$.
Let~$\mg' := \mg$ if~$\{x,y\}$ is not an edge of~$G$, and let~$\mg'$ be the temporal graph obtained from~$\mg$ by removing the arc~$(\vout[x],\vin[y])$, otherwise.
Hence, $\mg'$ does not contain the arc~$(\vout[x],\vin[y])$.
We show that there is no temporal path from~$x$ to~$y$ in~$\mg'$.
This then proves both parts of the statement.

Assume towards a contradiction that there is a temporal path~$P$ from~$x$ to~$y$ in~$\mg'$.
We distinguish between the choices of the first arc of~$P$.

Firstly, consider the case that~$P$ starts with the arc~$(x,\vout[x])$ with time label~$3$.
Each outgoing arc of~$\vout[x]$ with a time label larger than~$3$ ends in a vertex~$\vin[z]$ for some~$z\in V\cup \{\alpha\}$.
Note that~$z\neq y$, as~$\mg'$ does not contain the arc~$(\vout[x],\vin[y])$.
The vertex~$\vin[z]$ has only one outgoing arc with time label larger than~$3$, namely the arc~$(\vin[z],z)$ with time label~$9$.
Furthermore, $z$ has no outgoing arc with a time label larger than~$9$.
This contradicts the assumption that~$P$ can reach~$y$.

Secondly, consider the case that~$P$ starts with any arc distinct from~$(x,\vout[x])$.
We show that this implies that~$P$ traverses an arc towards~$\omega$ at time~$6$.
To this end, let~$i$ be the index of the color class that contains~$x$.
If~$i$ is odd, then~$(x,\omega)$ is the only outgoing arc of~$x$ distinct from~$(x,\vout[x])$.
Moreover, this arc has time label~$6$.
Otherwise, that is, if~$i$ is even, then the only outgoing arcs of~$x$ distinct from~$(x,\vout[x])$ end in a vertex~$z$ of~$V_{i-1}\cup V_{i+1}$ and have a time label that is larger than~$3$.
As the index of the color class of~$z$ is odd, the only outgoing arc of~$z$ with a time label larger than~$3$ is the arc~$(z,\omega)$ with time label~$6$.
In both cases, $P$~traverses an arc towards~$\omega$ at time~$6$.
By construction, the only outgoing arcs of~$\omega$ with a time label larger than~$6$ are arcs with time label~$10$ towards vertices of~$V_\con$.
As the largest label of any incoming arc of~$y$ is~$9$, this contradicts the assumption that~$P$ reaches~$y$.

As both cases lead to a contradiction, we conclude that there is no temporal path from~$x$ to~$y$ in~$\mg'$, which proves the claim.
\end{claimproof}
}

\newcommand{\proofDirectedForward}{
Let~$C$ be a clique of size~$k$ in~$G$.
Since~$G$ is~$k$-partite, for each~$i\in [1,k]$, $C$ contains exactly one vertex~$v_i$ of~$V_i$.
We show that there is a tcc of size at least two in~$\mg$.
We set~$S := V_\con \cup C$ and show that~$\mg[S]$ is a temporally connected graph.
Recall that~$S$ contains~$\alpha$ and~$\omega$, as both are from~$V_\con$.

First, we argue that all vertices of~$S\cap V = C$ can reach each other in~$\mg[S]$.
Let~$x$ and~$y$ be distinct vertices of~$C$.
By definition of the arcs in~$A_\val$, $(x,\vout[x],\vin[y],y)$ is a temporal path in~$\mg[S]$ with label sequence~$(3,4,9)$.
Here, observe that the arc~$(\vout[x],\vin[y])$ exists by the fact that~$\{x,y\}$ is an edge in~$G$.
Hence, all vertices of~$C$ can reach each other in~$\mg[S]$.

Now, consider the vertex~$\omega$.
By definition, $(\omega,\vout[\omega],\vin[\alpha],\alpha)$ and~$(\omega,\vout[\omega],\vin[v_k],v_k)$ are temporal paths in~$\mg$ (and thus in~$\mg[S]$) with label sequence~$(3,4,9)$, which allows~$\omega$ to reach both~$\alpha$ and~$v_k$.
For each~$i\in [1,k-1]$, $\omega$ can also reach the vertex~$v_i$ in~$\mg[S]$ as follows:
If~$i$ is even, then~$v_i$ is an out-neighbor of~$\omega$.
If~$i$ is odd, then~$(\omega,v_{i+1})$ is an arc of~$\mg$ with label~$5$ and~$(v_{i+1},v_i)$ is an arc with label~$7$, thus implying that~$(\omega, v_{i+1}, v_i)$ is a temporal path in~$\mg[S]$.
Hence, $\omega$ can reach all vertices of~$C\cup \{\alpha\}$ in~$\mg[S]$.
For paths towards~$\omega$, we show that each vertex of~$C$ can arrive at~$\omega$ with an arc of label~$6$.
This directly holds for~$v_i\in C$ with odd~$i$, as arc~$(v_i,\omega)$ has label~$6$.
If~$i$ is even, then~$(v_{i},v_{i+1})$ and~$(v_{i+1},\omega)$ are arcs with label~$4$ and~$6$ respectively, implying that~$v_i$ can arrive at~$\omega$ in~$\mg[S]$  with an arc of label~$6$.
Such a path also exists from~$\alpha$ to~$\omega$, namely the path~$(\alpha,v_1,\omega)$ with label sequence~$(4,6)$.
Hence, $\omega$ can reach all vertices of~$C\cup \{\alpha\}$ in~$\mg[S]$, and all vertices of~$C\cup \{\alpha\}$ can arrive at~$\omega$ with an arc of time label~$6$ in~$\mg[S]$.

Now, consider the vertex~$\alpha$. 
Each vertex~$v_i\in C$ can reach~$\alpha$ via the temporal path~$(v_i,\vout[v_i],\vin[\alpha],\alpha)$ with label sequence~$(3,4,9)$ in~$\mg[S]$.
Moreover, $\alpha$ can reach each vertex of~$C$ via the direct arc with a time label larger than~$2$.
Hence, the vertices of~$C\cup \{\alpha,\omega\}$ can reach each other in~$\mg[S]$.

It thus remains to show that all vertices of~$V_\con \setminus \{\alpha,\omega\}$ can reach all vertices of~$S$ in~$\mg[S]$, and vice versa.
Due to the arcs of~$A_\con$, for each vertex~$u\in V_\con \setminus \{\alpha,\omega\}$, 
\begin{itemize}
\item $\mg[S]$ contains a temporal path from~$u$ that reaches~$\alpha$ at time step~$2$ and 
\item there is a temporal path from~$\omega$ to~$u$ that starts at time step~$10$.\footnote{The arcs of these paths are highlighted in~\Cref{fig connectivity gadget happy} as red and blue, respectively.}
\end{itemize}
By the temporal paths from~$\alpha$ to each vertex of~$C\cup \{\omega\}$ that start with an edge of label~$4$, the first item implies that all vertices of~$V_\con \setminus \{\alpha,\omega\}$ can reach all vertices of~$C\cup \{\alpha,\omega\}$.
In particular, the vertices of~$V_\con \setminus \{\alpha,\omega\}$ reach~$\omega$ at time~$6$ via the subpath~$(\alpha,v_1,\omega)$.
Thus, by the second item, all vertices of~$V_\con$ can reach all vertices of~$V_\con\setminus \{\alpha,\omega\}$ in~$\mg[S]$.
It remains to show that all vertices of~$C$ can also reach all vertices of~$V_\con \setminus \{\alpha,\omega\}$.
This also follows by the second item (the fact that there is a temporal path from~$\omega$ to~$u$ that starts at time step~$10$) and the already argued temporal paths from each vertex of~$C$ to~$\mg[S]$ that arrive at~$\omega$ at time~$6$.
Concluding, all vertices of~$S$ can reach each other via temporal paths in~$\mg[S]$.
This implies that~$S$ is a closed tcc which clearly has size more than~$1$.
Hence, $I$ is a yes-instance of~\DirNT.
}

\newcommand{\proofDirectedBack}{
Let~$S\subseteq V(\mg)$ be a vertex set of size at least~$2$, such that~$\mg[S]$ is temporally connected.
Moreover, assume that~$S$ has minimal size among all such vertex sets.
We show that there is a clique of size~$k$ in~$G$.

Since there are no two vertices~$x$ and~$y$ for which~$\mg$ contains the arcs~$(x,y)$ and~$(y,x)$, $S$ has size at least 3. 
In the following, we will show that~$S$ contains at least one vertex of~$V_i$ for each~$i\in [1,k]$.
We do this in two steps.
Firstly, we show that~$S$ contains~$\alpha$ and~$\omega$.
Secondly, we show that this implies that~$S$ contains at least one vertex of~$V_i$ for each~$i\in [1,k]$.
To show the first step, we observe the following.

\begin{observation}\label{obs dag}
$\mg[V_\con]$ is a DAG. 
\end{observation}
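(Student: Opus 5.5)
We prove the observation by exhibiting an explicit topological ordering of $V_\con$, reading the arcs off \Cref{tab label dir}. From the table, $V_\con$ consists of $\alpha$, $\omega$, the vertices $\vout[x]$ and $\vin[x]$ for $x\in V$, and the auxiliary vertices $\preout[\omega]$, $\vout[\omega]$, $\vin[\alpha]$ and $\postin[\alpha]$. The proof has two steps.

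First, we restrict the table to arcs with both endpoints in $V_\con$. All arcs with labels $3$, $5$, $6$, $7$ and $8$ have an endpoint in $V$, so they drop out. The same holds for the label-$4$ arcs out of $\alpha$ or between color classes, and for the label-$9$ arcs $(\vin[x],x)$ with $x\in V$. The remaining arcs are:
\begin{itemize}
\item the label-$1$ arc $(\vin[\alpha],\postin[\alpha])$;
\item the label-$2$ arcs from $\vout[x]$, $\vin[x]$, $\preout[\omega]$, $\vout[\omega]$ and $\postin[\alpha]$ into $\alpha$;
\item the label-$4$ arcs of the form $(\vout[x],\vin[y])$, where now $x$ may also be $\omega$ and $y$ may also be $\alpha$;
\item the label-$9$ arc $(\vin[\alpha],\alpha)$;
\item the label-$10$ arcs from $\omega$ to every other vertex of $V_\con$;
\item the label-$11$ arc $(\preout[\omega],\vout[\omega])$.
\end{itemize}

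Second, we fix the linear order
\[
\omega,\ \preout[\omega],\ \vout[\omega],\ \text{all } \vout[x]\ (x\in V),\ \vin[\alpha],\ \text{all } \vin[x]\ (x\in V),\ \postin[\alpha],\ \alpha
\]
and check that every listed arc points forward in it:
\begin{itemize}
\item $\omega$ has no in-arc inside $V_\con$, since its only in-arcs carry label $6$ and come from $V$.
\item $\alpha$ has no out-arc inside $V_\con$, since its out-arcs carry labels $4$ and $8$ and go to $V$.
\item $\preout[\omega]\to\vout[\omega]$ and $\preout[\omega]\to\alpha$ point forward.
\item Every $\vout[\cdot]$ has out-arcs only to $\vin[\cdot]$ vertices or to $\alpha$, which all come later.
\item $\vin[\alpha]\to\postin[\alpha]$ and $\vin[\alpha]\to\alpha$ point forward.
\item Every $\vin[x]$ and $\postin[\alpha]$ has $\alpha$ as its only out-neighbor within $V_\con$.
\end{itemize}
A digraph admitting a topological order has no directed cycle, so $\mg[V_\con]$ is a DAG.

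The only real obstacle is bookkeeping: we must make sure no arc of the connector gadget was overlooked. In particular we need that no $\vin[\cdot]$ vertex has an arc back to a $\vout[\cdot]$ vertex, and that $\omega$ has no in-arc from $V_\con$. We check this directly against the complete label-by-label list in \Cref{tab label dir}, which by construction contains every arc of $\mg$.
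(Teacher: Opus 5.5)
Your topological order is correct. It is exactly the layering $\omega \to$ out-vertices $\to$ in-vertices $\to \alpha$ visible in \Cref{fig connectivity gadget happy}, which the paper relies on implicitly, since it states the observation without proof. One bookkeeping slip: the label-$3$ arc $(\omega,\vout[\omega])$ does lie inside $V_\con$, and conversely $\omega$ has no label-$10$ arc to $\alpha$ or to $\vout[\omega]$; this is harmless because $\omega$ has no in-arc in $V_\con$ and comes first in your order.
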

Since~$\mg[S]$ is temporally connected, $\mg[S]$ is not a DAG.
This implies that $S$ contains at least one vertex of~$V(\mg)\setminus V_\con = V$.
Based on this observation, we now show that~$S$ contains~$\alpha$ and~$\omega$.

\begin{claim}\label{claim dir alpha and omega}
Both~$\alpha$ and~$\omega$ are contained in~$S$.
\end{claim}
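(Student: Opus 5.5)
The plan is to prove the two memberships separately by contradiction. In each case I will show that, once the respective vertex is removed, no set of at least two vertices containing a vertex of $V$ can be mutually temporally reachable. Recall that we already know $|S|\ge 3$ and, by \Cref{obs dag} and the fact that $\mg[S]$ is temporally connected, that $S$ contains some $x\in V$, say $x\in V_i$. The basic tool is a \emph{latest-departure} analysis of the out-arcs of each vertex type, read off \Cref{tab label dir}:
\begin{itemize}
\item $x\in V$ leaves only via $(x,\vout[x])$ at time~$3$, via $(x,\omega)$ at time~$6$ ($i$ odd), or via arcs into $V_{i\pm1}$ at times~$4$ and~$7$ ($i$ even);
\item $\vout[x]$ leaves only at time~$2$ (to $\alpha$) or at time~$4$ (to some $\vin[y]$);
\item $\vin[y]$ leaves only at time~$2$ (to $\alpha$) or at time~$9$ (to $y$);
\item $y\in V$ has no out-arc with label larger than~$7$.
\end{itemize}

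\textbf{Step 1: $\omega\in S$.} Suppose $\omega\notin S$. I first consider temporal paths in $\mg-\omega$ that start at $x$. The argument already used in the proof of the previous claim shows the following. Such a path either goes $x\to\vout[x]\to\vin[z]\to z$ with labels $(3,4,9)$ and then stops, or (for even $i$) moves once or twice inside $V_{i-1}\cup V_{i+1}$ and then gets stuck, since odd classes only continue to $\omega$. Consequently, every vertex that $x$ can reach in $\mg[S]$ is reached at time at least~$3$.

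Next I list the vertices that can reach $x$ with last arc into $x$ at time at least~$3$. There are three ways in:
\begin{itemize}
\item via $(\vin[x],x)$ at time~$9$;
\item via arcs from $\alpha$ at time~$4$ or~$8$;
\item via arcs from a neighbouring class at time~$4$ or~$7$.
\end{itemize}
The set $S$ must contain some $u\neq x$ with $u$ reachable from $x$ and $x$ reachable from $u$. I then do a case distinction on $u$'s type, namely $u\in\{\vout[x],\vin[z],\alpha\}$ or $u\in V$. In each case I check that every return path from $u$ to $x$ would have to leave $u$ at time at most~$2$, and hence before $u$ was reached, or would have to pass through $\omega$. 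For $u=\vin[z]$, for example, the only early move is to $\alpha$ at time~$2$, and from $\alpha$ the path cannot come back to $\vin[z]$. For $u$ in an odd class, the only exits are to $\omega$ or at time~$3$.

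To close the case I use the previous claim. Two vertices of $V$ can only reach each other through the $(3,4,9)$ connector paths or through $\omega$, and the connector path $x\to\vout[x]\to\vin[y]\to y$ ends at time~$9$. The return path from $y$ to $x$ must also use time~$3$, which is fine. However, this pair then does not reach $\vout[\cdot]$ and $\vin[\cdot]$ back. Any $\vout[x]$ or $\vin[y]$ in $S$ must be reached from $y$ (respectively from $x$) after time~$4$, and no such path exists without $\omega$. This contradicts the minimality of $S$ (or $|S|\ge 3$).

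\textbf{Step 2: $\alpha\in S$.} Suppose $\alpha\notin S$, while now $\omega\in S$. The vertex $\omega$ must reach some vertex of $S$ and be reached by it. Its out-arcs carry labels $3$, $5$ and~$10$ (to $\vout[\omega]$, $V_{\mathrm{even}}$ and $V_\con$). Its in-arcs carry label $6$ from the odd classes and label $2$ from $V_\con$-vertices. The label-$2$ arcs arrive too early to be composed with anything out of $\omega$ and then back. So $\omega$ must be reached at time~$6$ through an odd class, and $S$ contains some $x\in V$ that reaches $\omega$ and is reached from $\omega$.

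The vertices $\vin[\cdot]$, $\vout[\cdot]$, $\preout[\omega]$ and $\postin[\alpha]$ can send only to $\alpha$ at time~$2$, or to vertices that are dead ends afterwards. Without $\alpha$, any $V_\con$-vertex in $S$ other than $\omega$ therefore fails to reach $\omega$. Hence $S\subseteq V\cup\{\omega\}$. Inside $\mg[V\cup\{\omega\}]$, any $x$ reaches $\omega$ only at time~$6$, and $\omega$ then has out-arcs only at time~$10$ into $V_\con$. So no $y\in S\cap V$ with $y\neq x$ is reachable from $x$. This means $S=\{x,\omega\}$, which is impossible since there are no antiparallel arcs.

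I expect the main obstacle to be Step~1. It is an exhaustive case analysis over vertex types and colour-class parities, and one must be careful that paths through $\alpha$ (labels $4$ and~$8$ into $V$) cannot produce a cycle avoiding $\omega$. I would try to organise it as a single invariant: in $\mg-\omega$, every temporal path that leaves $V$ at time $\ge 3$ never returns to a vertex it left earlier.
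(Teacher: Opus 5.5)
Your Step~1 does not prove $\omega\in S$. Its central inference is that a temporal path from $u$ back to $x$ which leaves $u$ at time at most~$2$ leaves ``before $u$ was reached'' and is therefore excluded. That inference is invalid. Temporal connectivity asks for a temporal path for each ordered pair independently, and nothing forces the path from $u$ to $x$ to start after the path from $x$ to $u$ has arrived. In the tcc $V_\con\cup C$ of the forward direction, $x$ reaches $\vin[y]$ at time~$4$, while $\vin[y]$ reaches $x$ through $\alpha$ starting at time~$2$.

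Moreover, the dichotomy you intend to verify fails for two of your cases:
\begin{itemize}
\item for $u=\alpha$, the return path is the arc $(\alpha,x)$ with label $4$ or $8$;
\item for $u\in V$, it is $(u,\vout[u],\vin[x],x)$, starting at time~$3$.
\end{itemize}
Neither path leaves at time at most~$2$ or visits $\omega$. Your closing paragraph only addresses two vertices of $V$ joined by a connector path. Its assertion about $\vin[y]$ is also wrong, since $x$ reaches $\vin[y]$ at time~$4$ without $\omega$. So the case $S\cap V=\{x\}$, which the paper settles in Cases~1 and~2 via minimality of $S$, is left open.

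The missing idea is the in-arc structure of the out-vertices. Without $\omega$, every out-neighbour of $x$ in $\mg[S]$ is either $\vout[x]$ or, for $x$ in an even class, a vertex $z$ of an odd class. Such a $z$ must itself have an out-neighbour in $\mg[S]$, and its only out-arcs go to $\vout[z]$ and $\omega$. Hence $S$ contains some $\vout[y]$. Since $|S|\ge 3$, some $w\in S\setminus\{y,\vout[y]\}$ must reach $\vout[y]$. The only in-arcs of $\vout[y]$ are $(y,\vout[y])$ with label~$3$ and $(\omega,\vout[y])$ with label~$10$. All in-arcs of $y$ have label at least~$4$, so this forces $\omega\in S$, and it uses only $|S|\ge 3$, not minimality.

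With that repair, your Step~2 is a valid route to $\alpha\in S$ that differs from the paper's three-case analysis, after two corrections:
\begin{itemize}
\item $\omega$ has no label-$2$ in-arcs; those all end in $\alpha$.
\item The claim that $x$ reaches no other $y\in S\cap V$ inside $\mg[V\cup\{\omega\}]$ is false, e.g.\ via the label-$4$ arcs from $V_2$ to $V_3$. Instead, apply \Cref{dir happy paths between V} in the direction without an arc. This would put some $\vout[\cdot]$ and $\vin[\cdot]$ into $S\subseteq V\cup\{\omega\}$, which is impossible, so $|S\cap V|\le 1$ and $|S|\le 2$.
\end{itemize}
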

\begin{claimproof}
By~\Cref{obs dag}, $S$ contains vertices of at least one color class~$V_i$.
We distinguish several cases.
In the first three cases, we use the fact that each vertex~$v\in S$ needs to be an intermediate vertex of at least one temporal path in~$\mg[S]$, as otherwise, all temporal paths between the vertices of~$S\setminus \{v\}$ are in~$\mg[S\setminus \{v\}]$.
This would then contradict the minimality of~$S$, as~$S$ has size at least~$3$. 

\textbf{Case 1:} $S \subseteq V_1 \cup V_\con$\textbf{.}
Recall that there are no arcs between the vertices of~$V_1$.
Consider any vertex~$v_1\in V_1\cap S$.
By construction, the only arcs incident with~$v_1$ and any vertex of~$V_\con$ are (i)~arc~$(v_1,\vout[v_1])$ with label~$3$, (ii)~arc~$(\alpha,v_1)$ with label~$4$, (iii)~arc~$(v_1,\omega)$ with label~$6$, and (iv)~arc~$(\vin[v_1],v_1)$ with label~$9$.
This implies that each temporal path in~$\mg[S]$ that has~$v_1$ as an intermediate vertex uses both arcs~$(\alpha,v_1)$ and~$(v_1,\omega)$.
As~$v_1$ needs to be an intermediate vertex of at least one temporal path in~$\mg[S]$, we conclude that~$S$ contains~$\alpha$ and~$\omega$.

\textbf{Case 2:} $S \subseteq V_i \cup V_\con$ for some~$i\in [2,k]$\textbf{.}
We show that this is impossible.
Recall that there are no arcs between the vertices of~$V_i$.
Consider any vertex~$v_i\in V_i\cap S$.
By construction, the only arcs incident with~$v_i$ and any vertex of~$V_\con$ are (i)~arc~$(v_i,\vout[v_i])$ with label~$3$, (ii)~arc~$(\alpha,v_i)$ with label~$8$, (iii)~arc~$(\vin[v_i],v_i)$ with label~$9$, and (iv)~either arc~$(v_i,\omega)$ with label~$6$ (if~$i$ is odd) or arc~$(\omega,v_i)$ with label~$5$.
In both cases, there is no temporal path in~$\mg[S]$ that has~$v_i$ as an intermediate vertex.
Hence, $S$ cannot be a subset of~$V_i \cup V_\con$.

\textbf{Case 3:} $S$ contains vertices of at least two distinct color classes\textbf{.}
Let~$x$ and~$y$ be vertices of~$S$ from distinct color classes.
By construction, we can assume without loss of generality that~$(x,y)$ is not an arc of~$\mg$.
However, since~$\mg[S]$ is temporally connected, there is a temporal path from~$x$ to~$y$ in~$\mg[S]$.
Due to~\Cref{dir happy paths between V}, this implies that~$\vout[x]$ and~$\vin[y]$ are in~$S$.
Thus, there needs to be a temporal path~$P$ from~$\vin[y]$ to~$\vout[x]$ in~$\mg[S]$.
Note that~$\vin[y]$ has only two outgoing arcs, namely~$(\vin[y],\alpha)$ with time label~$2$ and~$(\vin[y],y)$ with time label~$9$.
The second arc cannot be traversed by~$P$, as~$y$ has no outgoing arc with a time label larger than~$9$, which would prevent~$P$ from reaching~$\vout[x]$.
Similarly, note that~$\vout[x]$ has only two incoming arcs, namely~$(\omega,\vout[x])$ with time label~$10$ and~$(x,\vout[x])$ with time label~$3$.
Also here, the second arc cannot be traversed by~$P$, as~$x$ has no incoming arc with a time label smaller than~$3$.
Consequently, $P$ traverses both arcs~$(\vin[y],\alpha)$ and~$(\omega,\vout[x])$, which implies that~$S$ contains~$\alpha$ and~$\omega$.
\end{claimproof}

We are now ready to show that $S$ contains at least one vertex of~$V_i$ for each~$i\in[1,k]$.

\begin{claim}\label{claim for each vert some color}
For each~$i\in[1,k]$, $S$ contains at least one vertex of~$V_i$.
\end{claim}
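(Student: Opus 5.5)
Since $\alpha,\omega\in S$ by \Cref{claim dir alpha and omega}, the plan is to use that $\mg[S]$ must contain a temporal path $P$ from $\alpha$ to $\omega$, and to show that any such path walks through the color classes one after another. From the table of labels, the in-arcs of $\omega$ are exactly the arcs $(v,\omega)$ with $v\in V_i$, $i$ odd, all carrying label $6$. Hence $P$ reaches $\omega$ at time $6$ and uses only labels at most $6$.

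The next step is to determine which arcs such a path can use. Out-arcs of $\alpha$ with label at most $6$ are only those into $V_1$ with label $4$; the label-$8$ arcs to $V_i$, $i\ge 2$, are too late. The label-$2$ arcs into $\alpha$ cannot be used after leaving $\alpha$ at time $\geq 4$, so $P$ never returns to $\alpha$. The arcs into $V_\con$ that could be usable have labels $3$ (to $\vout[x]$) or $4$ (from $\vout[x]$ to $\vin[y]$). I would check that from $\vin[y]$ the only continuations are label $9$ or label $2$, both unusable (too late, or earlier than $4$). So $P$ never enters $V_\con\setminus\{\alpha,\omega\}$ usefully, and after leaving $\alpha$ it stays inside $V$ until the final arc into $\omega$.

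It then remains to analyze the arcs inside $V$ with labels $4$ to $6$. These are $V_i\times V_{i+1}$ for even $i$ with label $4$; the label-$7$ arcs $V_i\times V_{i-1}$ are excluded. A path starting with $\alpha\to v_1$ at label $4$ is at an odd class. From an odd class, the out-arcs with label at least $4$ are only those to $\omega$ (label $6$). I need to double-check the table: odd vertices get in-arcs from even classes and $\alpha$, and out-arcs to $\omega$ and to $\vout$. This would force $P=(\alpha,v_1,\omega)$, which only gives a vertex of $V_1$. So the path from $\alpha$ to $\omega$ alone is insufficient, and I would instead combine several reachability requirements in $\mg[S]$.

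For even $i$, the natural requirement is that $\alpha$ must reach the vertices $\vout[x]$ and $\vin[x]$ present in $S$, and the vertex $\omega$ must be reached from everything. The cleaner route is induction on $i$, roughly as follows.
\begin{itemize}
\item $V_1\cap S\neq\emptyset$ follows from the $\alpha$–$\omega$ path above.
\item Given $v\in V_i\cap S$, I consider temporal paths out of $v$ and into $v$ within $\mg[S]$.
\item For odd $i<k$, I use that $v$ must reach $\omega$ and also be reached from $\omega$. The in-arcs of an odd vertex $v$ are $(\alpha,v)$ with label $4$ or $8$, $(\vin[v],v)$ with label $9$, and the label-$4$ arcs from $V_{i-1}$ and label-$7$ arcs from $V_{i+1}$. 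Paths from $\omega$ leave at label $5$ (into even classes) or $3$/$10$ via $\vout[\omega]$ and $\preout[\omega]$, so a path $\omega\to V_{i+1}\to v$ with labels $(5,7)$ is the natural witness.
\end{itemize}
By \Cref{dir happy paths between V}, any non-direct route between vertices of distinct classes must use $\vout\to\vin$ connector arcs, so I need to show that connector-based routes cannot replace the $V_{i+1}$ vertex.

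The main obstacle is this exhaustive label bookkeeping: ruling out that every vertex of $S\cap V$ could be served via $V_\con$ paths (labels $3,4,9$) and $\omega$'s label-$10$ arcs without touching some class. I would first try the following: a vertex $v\in V_1\cap S$ reaches $\alpha$ only via $\vout$, $\vin[\alpha]$ (labels $3,4,9$). Must $\alpha$ then be reached late? I would track the minimality of $S$ together with the fact that $\vout[\omega]$ forces $\omega$'s connector route, and argue that some vertex must realize the chain $\alpha\to V_2$ (label $8$) $\to V_1/V_3$ (label $7$)… and analogously propagate class by class.
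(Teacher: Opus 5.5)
Your base case is fine, and so is your observation that every temporal path from $\alpha$ to $\omega$ has the form $(\alpha,v_1,\omega)$. But the inductive step is never carried out, and the ``main obstacle'' you flag is misdiagnosed. The step needs no accounting of how all vertices of $S\cap V$ are served, no minimality of $S$, and no use of \Cref{dir happy paths between V}. It is a local first-arc analysis of one temporal path, whose direction depends on the parity of $i$.

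\textbf{Even $i$.} Your proposal gives no mechanism here. The missing idea: $v\in V_i\cap S$ must reach $\omega$ (there is no arc $(v,\omega)$), and every in-arc of $\omega$ has label $6$. The out-arcs of $v$ are:
\begin{itemize}
\item the label-$3$ arc to $\vout[v]$, after which the path can only take a label-$4$ arc to some $\vin[x]$, whose only later out-arc is $(\vin[x],x)$ at label $9$, too late;
\item label-$7$ arcs into $V_{i-1}$, too late;
\item label-$4$ arcs into $V_{i+1}$.
\end{itemize}
So the first arc must enter $V_{i+1}$.

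\textbf{Odd $i<k$.} You name the right witness $(\omega,v_{i+1},v)$ with labels $(5,7)$, but leave open why it is forced. Consider a temporal path from $\omega$ to $v$ and its first arc:
\begin{itemize}
\item The label-$10$ arcs of $\omega$ are useless, since every in-arc of $v$ has label at most $9$.
\item The label-$3$ arc to $\vout[\omega]$ continues only to $\vin[x]$ with $x\in V_k\cup\{\alpha\}$, then to $x$ at time $9$. Neither $\alpha$ nor a vertex of $V_k$ has a later out-arc, and $v\notin V_k$.
\item So the first arc is a label-$5$ arc to some even $v_\ell$. Its only later out-arc is a label-$7$ arc into $V_{\ell-1}$, after which no vertex of $V$ has a later out-arc. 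This forces $v\in V_{\ell-1}$, i.e.\ $\ell=i+1$.
\end{itemize}

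Finally, your closing chain $\alpha\to V_2$ (label $8$) $\to V_1/V_3$ (label $7$) cannot propagate anything. Its labels decrease, so it is not a temporal path. Arcs from $V_2$ to $V_3$ carry label $4$, not $7$, and a vertex of $V_2$ entered at time $8$ has no later out-arc at all.
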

\begin{claimproof}
We show via induction over~$j$ that for each~$i\in[1,j]$, $S$ contains at least one vertex of~$V_i$.
To this end, recall that~\Cref{claim dir alpha and omega} implies that~$\alpha$ and~$\omega$ are in~$S$.

For the base case of~$j= 1$, we thus need to show that~$S$ contains at least one vertex of~$V_1$.
As~$\mg[S]$ is temporally connected, there is a temporal path~$P$ from~$\alpha$ to~$\omega$ in~$\mg[S]$.
By construction, all outgoing arcs of~$\alpha$ end in vertices of~$V$.
Hence, the first arc of~$P$ is of the form~$(\alpha,v)$ for some~$v\in V$.
We get that~$v$ is from~$V_1$, as the arc~$(\alpha,w)$ receives time label~$8$ for each vertex~$w\in V\setminus V_1$ and~$w$ has no outgoing arc with a time label larger than~$8$.
Hence, $S$ contains a vertex of~$V_1$, which establishes the base case.

For the inductive step assume that there is some~$j\in [1,k-1]$, such that for each~$i\in[1,j]$, $S$ contains at least one vertex of~$V_i$.
Let~$v_j$ be an arbitrary vertex of~$V_j\cap S$.
We show that~$S$ also contains a vertex of~$V_{j+1}$.
We distinguish between the parity of~$j$.

If~$j$ is even, then~$(v_j,\omega)$ is not an arc of~$\mg$.
Still, there has to be a temporal path~$P$ from~$v_j$ to~$\omega$ in~$\mg[S]$.
By construction, each incoming arc of~$\omega$ in~$\mg$ is of the form~$(v_\ell,\omega)$ with time label~$6$  from some vertex~$v_\ell\in V_\ell$ with odd~$\ell$.
Thus, $P$ traverses the arc~$(v_\ell,\omega)$ for such a vertex~$v_\ell$.
Now, consider the outgoing arcs of~$v_j$.
These are (i)~the arc~$(v_j,\vout[v_j])$ with time label~$3$, (ii)~the arcs towards vertices of~$V_{j-1}$ with time label~$7$, and (iii)~the arcs towards vertices of~$V_{j+1}$.
As~$P$ traverses the arc~$(v_\ell,\omega)$ at time~$6$, $P$ cannot traverse any arc of the second type previously.
Assume towards a contradiction the arc~$(v_j,\vout[v_j])$ as first arc of~$P$.
The only outgoing arcs of~$\vout[v_j]$ with a time label larger than~$3$ end in a vertex~$\vin[x]$ for some~$x\in V \cup \{\alpha\}$.
That vertex~$\vin[x]$ only has a single outgoing arc with a time label larger than~$3$, namely the arc~$(\vin[x],x)$ with time label~$9$.
This contradicts the fact that~$P$ traverses the arc~$(v_\ell,\omega)$ at time~$6$.
Thus, $(v_j,\vout[v_j])$ is also not the first arc of~$P$.
Hence, the first arc of~$P$ is an arc towards a vertex of~$V_{j+1}$.
In particular, this implies that~$S$ contains at least one vertex of~$V_{j+1}$.

If~$j$ is odd, then~$(\omega,v_j)$ is not an arc of~$\mg$.
Still, there has to be a temporal path~$P$ from~$\omega$ to~$v_j$ in~$\mg[S]$.
There are three types of outgoing arcs of~$\omega$:
(i)~the arc~$(\omega,\vout[\omega])$ with time label~$3$, (ii)~the arcs towards the vertices of~$V_\con\setminus \{\alpha,\omega,\vout[\omega]\}$ with time label~$10$, and (iii)~the arcs towards vertices~$v_\ell\in V_\ell$ with time label~$5$ for some even~$\ell$.
We show that the first arc of~$P$ needs to be of type~(iii).
Clearly, it cannot be of type~(ii), as each incoming arc of~$v_j$ has a time label of at most~$9$.
Moreover, the first arc of~$P$ cannot be~$(\omega,\vout[\omega])$ by the following facts:
By construction, each outgoing arc of~$\vout[\omega]$ with a time label larger than~$3$ ends in a vertex~$\vin[x]$ for some~$x\in V_k$, and each outgoing arc of~$\vin[x]$ with a time label larger than~$3$ has time label~$9$ and ends in~$x$.
As~$j\neq k$ and no vertex of~$V$ has an outgoing arc with a label larger than~$7$, this would then prevent~$P$ from reaching~$v_j$.
Consequently, the first arc of~$P$ is of the third type.
That is, the first arc of~$P$ has time label~$5$ and ends in a vertex~$v_\ell\in V_\ell$ for some even~$\ell$. 
By construction, the only outgoing arc of~$v_\ell$ with a label larger than~$5$ is an arc towards a vertex of~$V_{\ell-1}$ with time label~$7$.
This implies that~$\ell = j+1$, as no vertex of~$V$ has an outgoing arc with a time label larger than~$7$.
Hence, $P$ visits at least one vertex of~$V_{j+1}$, which implies that~$S$ contains at least one vertex of~$V_{j+1}$.
\end{claimproof}

We are now ready to show that there is a clique~$C$ of size~$k$ in~$G$.
We set~$C := S \cap V$.
Due to~\Cref{claim for each vert some color}, for each~$i\in [1,k]$, $C$ contains a vertex~$v_i$ of~$V_i$.
Let~$x$ and~$y$ be distinct vertices of~$C$.
We show that~$\{x,y\}$ is an edge of~$G$.
By construction of the selection gadget, $\mg$ contains at most one of the arcs~$(x,y)$ and~$(y,x)$.
So assume without loss of generality that~$(x,y)$ is not an arc of~$\mg$.
As~$\mg[S]$ is temporally connected, there is a temporal path from~$x$ to~$y$ in~$\mg[S]$.
By \Cref{dir happy paths between V}, $\{x,y\}$ is an edge in~$G$.
This implies that~$C$ is a clique of size~$k$ in~$G$.}

\begin{proof}
As already discussed, we reduce from \textsc{Multicolored Clique}.
Let~$I=(G=(V,E),k)$ be an instance of MCC with
$k$-partition~$(V_1 ,\dots, V_k)$ of~$V$.
We assume without loss of generality that~$k$ is odd.\footnote{If~$k$ is even, simply add a new universal vertex to the graph and increase~$k$ by~$1$.}
The directed temporal graph we construct consists of a \emph{selector gadget} which enforces the selection of one vertex of each~$V_i$, a \emph{connector gadget} that establishes some temporal paths, and \emph{validation arcs} between selector gadget and connector gadget that ensure that the selected vertices of~$V$ form a clique.
The connector gadget and the selector gadget overlap on two vertices~$\alpha$ and~$\omega$.
The selector gadget is shown in~\Cref{dir sel gadget} and the connector gadget is shown in \Cref{fig connectivity gadget happy}. 
A general description of the labels of the temporal graph is given in~\Cref{tab label dir}.

\begin{figure}
\centering
\scalebox{.8}{
\begin{tikzpicture}[xscale=1.5,square/.style={regular polygon,regular polygon sides=4,inner sep=1pt}]

\node[knoten] (a) at (-.5,0) {$\alpha$};
\node[knoten] (o) at (8.5,0) {$\omega$};

\foreach \x in {1,...,7}
\node[square,draw] (\x) at (\x,0) {$V_\x$};

\foreach \x/\y/\z in {2/1/3,4/3/5,6/5/7}{

\draw[-stealth,very thick,bend left=0] (\x) edge node[fill=white] {$7$} (\y);
\draw[-stealth,very thick,bend left=0] (\x) edge node[fill=white] {$4$} (\z);
\draw[-stealth,very thick,bend left=50] (o) edge node[fill=white] {$5$} (\x);
\draw[-stealth,very thick,bend right=50] (a) edge node[fill=white] {$8$} (\x);
}

\foreach \x in {3,5,7}{
\draw[-stealth,very thick,bend left=50] (\x) edge node[fill=white] {$6$} (o);
\draw[-stealth,very thick,bend right=50] (a) edge node[fill=white] {$8$} (\x);
}

\draw[-stealth,very thick,bend left=50] (1) edge node[fill=white] {$6$} (o);
\draw[-stealth,very thick,bend right=0] (a) edge node[fill=white] {$4$} (1);

\end{tikzpicture}
}
\caption{The selection gadget. Arcs and time labels between the vertices of~$V\cup \{\alpha,\omega\}$ in the reduction for~\Cref{hardness happy directed}, for an example instance of MCC with~$k=7$.
Each vertex~$v\in V$ has two other incident arcs not depicted here.
Namely, the arc~$(v,\vout[v])$ with time label~$3$ and the arc~$(\vin[v],v)$ with time label~$9$.}
\label{dir sel gadget}
\end{figure}
\begin{figure}
\centering
\scalebox{.55}{
\begin{tikzpicture}[square/.style={regular polygon,regular polygon sides=4},yscale = .7]

\normalfont

\node[knoten] (ome) at (0,-2) {$\omega$};

\node[knoten] (alp) at (0,-12) {$\alpha$};

\foreach \i/\oi/\nammme [count=\j] in {outalp/out/\omega}
{

\node[knoten] (pre\i) at (-10+\j*4,-4) {$\preout[\nammme]$};
\node[knoten] (\i) at ($(pre\i) + (0,-3)$) {$\vout[\nammme]$};
\node (post\i) at ($(\i) + (0,-3)$) {$ $};

}

\foreach \i/\oi/\nammme [count=\j] in {out/out/V}
{

\node (pre\i) at (-6+\j*4,-4) {};
\node[knoten,square,inner sep = -1] (\i) at ($(pre\i) + (0,-3)$) {$\vout[\nammme]$};
\node[] (post\i) at ($(\i) + (0,-3)$) {
};

}

\foreach \i/\oi/\nammme [count=\j] in {in/in/V}
{

\node[] (pre\i) at (-2+\j*4,-4) {%$\prein[\nammme]$
};
\node[knoten,square,inner sep = 1] (\i) at ($(pre\i) + (0,-3)$) {$\vin[\nammme]$};
\node (post\i) at ($(\i) + (0,-3)$) {};

}

\foreach \i/\oi/\nammme [count=\j] in {inome/in/\alpha}
{

\node (pre\i) at (2+\j*4,-4) {$ $};
\node[knoten] (\i) at ($(pre\i) + (0,-3)$) {$\vin[\nammme]$};
\node[knoten] (post\i) at ($(\i) + (0,-3)$) {$\postin[\nammme]$};

}

\large

\draw[-stealth,very thick,blue] (preoutalp) edge node[fill=white] {$11$} (outalp);

\draw[-stealth,very thick,red] (inome) edge node[fill=white] {$1$} (postinome);

\tikzstyle{every path}=[blue]

\draw[-stealth,very thick] (ome) edge node[fill=white] {$10$} (preoutalp);

\draw[-stealth,very thick] (ome) edge node[fill=white] {$10$} (out);

\draw[-stealth,very thick] (ome) edge node[fill=white] {$10$} (in);

\draw[-stealth,very thick] (ome) edge node[fill=white] {$10$} (inome);
\draw[-stealth,very thick] (ome) edge node[fill=white] {$10$} (postinome);

\tikzstyle{every path}=[red]

\draw[stealth-,very thick] (alp) edge node[fill=white] {$2$} (preoutalp);
\draw[stealth-,very thick] (alp) edge node[fill=white] {$2$} (outalp);

\draw[stealth-,very thick] (alp) edge node[fill=white] {$2$} (out);

\draw[stealth-,very thick,bend left=0] (alp) edge node[fill=white] {$2$} (in);

\draw[stealth-,very thick] (alp) edge node[fill=white] {$2$} (postinome);

\tikzstyle{every path}=[black!60]

\draw[-stealth,very thick] (ome) edge node[fill=white] {$3$} (outalp);
\draw[stealth-,very thick] (alp) edge node[fill=white] {$9$} (inome);

\draw[-stealth,very thick,bend right=20] (out) edge node[fill=white,near end] {$4$} (inome);
\draw[-stealth,very thick,bend right=20,dashed] (outalp) edge node[fill=white,near start] {$4$} (in);
\draw[-stealth,very thick,bend left=20] (outalp) edge node[fill=white] {$4$} (inome);
\draw[-stealth,very thick,bend left=0,dashed] (out) edge node[fill=white] {$4$} (in);

\end{tikzpicture}
}
\caption{The connector gadget plus the arcs of~$A_\val$ in gray.
Here, $\vin[V]$ and~$\vout[V]$ denote the vertex sets~$\{\vin\mid v\in V\}$ and~$\{\vout\mid v\in V\}$, respectively. The dashed arcs indicate that some but not all the potential arcs between the two respective (sets of) vertices exist.
For example, $\vout[\omega]$ only has outgoing gray arcs exactly to the vertices of~$\{\vin[x]\mid x\in V_k \cup \{\alpha\}\}$.}
\label{fig connectivity gadget happy}
\end{figure}

\subparagraph{Selection Gadget.}
We define
$V_\sel := \{\alpha,\omega\} \cup V$ and
$A_\sel := \{(\alpha,v) \mid v\in V\} \cup \{(v,\omega)\mid v\in V_i, i \text{ odd}\} \cup \{(\omega,v)\mid v\in V_i, i \text{ even}\} \cup  \bigcup_{\text{even}~i\in [1,k]} (V_{i} \times (V_{i-1}\cup V_{i+1}))$.

The labels of these arcs are defined as follows:
The arcs of~$\{\alpha\} \times V_{1}$ receive time label~$4$ and the arcs of~$\{\alpha\} \times (V\setminus V_1)$ receive time label~$8$.
The arcs of~$\{(v,\omega)\mid v\in V_i, i \text{ odd}\}$ receive time label~$6$ and the arcs of~$\{(\omega,v)\mid v\in V_i, i \text{ even}\}$ receive time label~$5$.
The arcs of~$\bigcup_{\text{even}~i\in [1,k]} (V_{i} \times V_{i-1})$ receive time label~$7$ and the arcs of~$\bigcup_{\text{even}~i\in [1,k]} (V_{i} \times V_{i+1})$ receive time label~$4$.

\subparagraph{Connector Gadget.}
We initialize~$V_\con := \{\alpha,\omega\}$ and~$A_\con := \emptyset$.
For each vertex~$v\in V$, we add two vertices~$\vout[v]$ and~$\vin[v]$ to~$V_\con$, two arcs~$(\omega,\vout[v])$ and~$(\omega,\vin[v])$ with time label~$10$ to~$A_\con$, and two arcs~$(\vout[v],\alpha)$ and~$(\vin[v],\alpha)$ with time label~$2$ to~$A_\con$.
Next, we add the two vertices~$\preout[\omega]$ and~$\vout[\omega]$ to~$V_\con$, the arc~$(\omega,\preout[\omega])$ with time label~$10$, the arc~$(\preout[\omega],\vout[\omega])$ with time label~$11$, and the arcs~$(\preout[\omega],\alpha)$ and~$(\vout[\omega],\alpha)$ with time label~$2$.
Finally, we add the two vertices~$\vin[\alpha]$ and~$\postin[\alpha]$ to~$V_\con$, the two arcs~$(\omega,\vin[\alpha])$ and~$(\omega,\postin[\alpha])$ with time label~$10$, the arc~$(\vin[\alpha],\postin[\alpha])$ with time label~$1$, and the arc~$(\postin[\alpha],\alpha)$ with time label~$2$.

\subparagraph{Validation Arcs.}
We add further arcs~$A_\val$ as follows:
We initialize~$A_\val := \emptyset$ and add for each vertex~$x\in V \cup \{\omega\}$ the arc~$(x,\vout[x])$ with time label~$3$ to~$A_\val$.
Moreover, for each vertex~$y\in V \cup \{\alpha\}$, we add the arc~$(\vin[y],y)$ with time label~$9$ to~$A_\val$.
For each vertex~$v\in V$, we additionally add the arc~$(\vout[v],\vin[\alpha])$ with time label~$4$ to~$A_\val$.
Moreover, for each~$v_k\in V_k$, we add the arc~$(\vout[\omega],\vin[v_k])$ with time label~$4$ to~$A_\val$.
Finally, we add arc~$(\vout[\omega], \vin[\alpha])$ with time label~$4$ to~$A_\val$.

We now describe the remaining arcs that we add between the in- and out-vertices of the vertices of~$V$.
Let~$x$ and~$y$ be distinct vertices of~$V$.
Then, we add the arc~$(\vout[y],\vin[x])$ with time label~$4$ to~$A_\val$ if~$\{x,y\}$ is an edge of~$G$.
In this way, we ensure that there is a temporal path from~$x$ to~$y$, namely~$(x,\vout[x],\vin[y],y)$ with label sequence~$(3,4,9)$.

This completes the construction.
Note that~$\mg$ is happy, that is, each arc has only a single time label and no vertex has both an in- and an out-neighbor at the same time step (see~\Cref{tab label dir}).
\tableDirected
Before we come to the correctness proof, we observe some properties about temporal paths in~$\mg$ between vertices of~$V$.

\begin{claim}[$\star$]\label{dir happy paths between V}
Let~$x$ and~$y$ be distinct vertices of~$V$ where~$(x,y)$ is not an arc of~$\mg$.
If~$\{x,y\}$ is not an edge of~$G$, then there is no temporal path from~$x$ to~$y$ in~$\mg$.
If~$\{x,y\}$ is an edge of~$G$, then each temporal path from~$x$ to~$y$ in~$\mg$ traverses the arc~$(\vout[x],\vin[y])$.
\end{claim}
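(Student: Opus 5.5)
We reduce both parts of the claim to a single statement. Let $\mg'$ be $\mg$ with the arc $(\vout[x],\vin[y])$ removed if it exists. We show that $\mg'$ contains no temporal path from $x$ to $y$.

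This suffices for both parts. If $\{x,y\}\notin E(G)$, the arc $(\vout[x],\vin[y])$ is absent by construction, so $\mg'=\mg$. If $\{x,y\}\in E(G)$, every temporal path of $\mg$ from $x$ to $y$ that avoids the arc already lies in $\mg'$; hence every such path must use the arc.

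To prove the statement, suppose a temporal path $P$ from $x$ to $y$ exists in $\mg'$, and case-split on its first arc using the label table. Two facts about $y$ drive everything:
\begin{itemize}
\item every incoming arc of $y\in V$ has label at most $9$, namely $\vin[y]\to y$ at $9$ and selector arcs with labels in $\{4,5,7,8\}$;
\item no vertex of $V$ has an outgoing arc with label above $7$.
\end{itemize}

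\textbf{Case A: the first arc is $(x,\vout[x])$ with label $3$.} The out-arcs of $\vout[x]$ with label at least $3$ go only to vertices $\vin[z]$ at label $4$, with $z\in V\cup\{\alpha\}$ and $z\neq y$. The arc to $\vin[y]$ is the removed one, and the arc $(\vout[x],\alpha)$ has label $2$. From $\vin[z]$ the only later arc is $(\vin[z],z)$ at label $9$. After that, $z$ has no out-arc with label at least $9$. This holds for $z\in V$ by the second fact, and for $z=\alpha$ because its out-arcs carry labels $4$ and $8$. So $P$ is stuck at $z\neq y$.

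\textbf{Case B: any other first arc.} Let $x\in V_i$.
\begin{itemize}
\item If $i$ is odd, the only remaining out-arc of $x$ is $(x,\omega)$ at label $6$.
\item If $i$ is even, the remaining out-arcs lead into $V_{i\pm1}$, which are odd classes, with labels $4$ or $7$. From an odd-class vertex $z$, the only arcs with label at least $4$ are $(z,\omega)$ at $6$ and $(z,\vout[z])$ at $3$; the latter is too early.
\end{itemize}
There is one more possibility to exclude: that $P$ has already reached $y$ before arriving at $\omega$. In the odd case this cannot happen. In the even case it happens only when $y\in V_{i\pm1}$ and $(x,y)$ is an arc, which the hypothesis forbids.

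So in every subcase $P$ enters $\omega$ at time $6$ without having reached $y$. The out-arcs of $\omega$ carry labels $3$, $5$ and $10$, so only the label-$10$ arcs into the connector vertices are usable. Continuing from those requires labels at least $10$, and by the first fact $y$ has no incoming arc with such a label. This is a contradiction.

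The main obstacle is bookkeeping: we must check exhaustively, from the table, every out-arc at each intermediate vertex with label at least the current time. Particular care is needed for $\alpha$ and $\vin[\alpha]$ (with $\postin[\alpha]$), and for the possibility that the path passes through $y$ early via a direct selector arc. That case is excluded exactly by the hypothesis that $(x,y)\notin A(\mg)$.
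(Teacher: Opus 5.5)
Your proof is correct and follows the paper's argument essentially step for step. You use the same reduction to $\mg'$ (the graph with the arc $(\vout[x],\vin[y])$ deleted) and the same split on the first arc: either $(x,\vout[x])$, which dead-ends at some $z\neq y$ after label $9$, or any other arc, which forces passage through $\omega$ at time $6$, after which only label-$10$ arcs remain while every in-arc of $y$ has label at most $9$. You also make explicit a step the paper leaves implicit, namely that the hypothesis that $(x,y)$ is not an arc rules out reaching $y$ directly from an even class.
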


We now show the correctness of the reduction.
That is, we show that~$\mg$ contains a \TC subgraph of size at least~$2$ if and only if~$G$ contains a clique of size~$k$.

The full correctness proof is deferred to the appendix.
There, we show the following:

For the $(\Rightarrow)$-direction, let~$C$ be a clique of size~$k$ in~$G$.
We show that~$\mg[S]$ is temporally connected for~$S:= V_\con \cup C$.
Note that this includes~$\alpha$ and~$\omega$.
In this subgraph, all vertices of~$S\setminus C$ (besides~$\omega$) can reach~$\alpha$ at time~$2$ in the connector gadget, follow along edges of the selection gadget (over vertices of~$C$) from~$\alpha$ to each vertex of~$S\cup \{\omega\}$, and reach back to each vertex of~$S\setminus C$ (besides~$\alpha$) with a path starting at time~$10$ from~$\omega$ in the connector gadget.
This also enables temporal paths between vertices of~$C$ and vertices of~$S\setminus C$.
The reachability for~$\omega$ is enabled as follows: towards~$\alpha$ there is the path~$(\omega,\vout[\omega],\vin[\alpha],\alpha)$, towards the unique vertex~$x_k\in V_k\cap S = V_k\cap C$ there is the path~$(\omega,\vout[\omega],\vin[x_k],x_k)$, and towards each other vertex of~$C$, there is a path of length at most~$2$ inside the selection gadget.
The remaining reachability between vertices of~$C$ in~$\mg[S]$ is then enabled over paths of the form~$(x,\vout[x],\vin[y],y)$, for which the validation arc~$(\vout[x],\vin[y])$ exists because~$C$ is a clique in~$G$.

For the $(\Leftarrow)$-direction, we show that (i)~each nontrivial \TC subgraph contains~$\alpha$ and~$\omega$, (ii)~any such \TC subgraph also contains at least one vertex per color class, and (iii)~distinct vertices of~$V$ can only be part of the same \TC subgraph if they are adjacent in~$G$.
This then guarantees that a nontrivial \TC subgraph in~$\mg$ implies the existence of a multicolored clique in~$G$.
\end{proof}

\subparagraph{The implications of this reduction.}
We now discuss how the stated implications follow from this reduction.
Implication~\ref{imp maximality} for this setting follows because in the family of constructed instances by these reductions, $\alpha$ is part of every closed tcc of size at least~$2$.
Hence, it is \textsf{coNP}-hard to decide whether the set~$\{\alpha\}$ is an inclusion-maximal closed tcc.
Next, Implication~\ref{imp ETH} for this setting follows because in the reduction, the number of vertices~$n$ in~$\mg$ is~$\Oh(|V|)$.
An algorithm that solves~\NCTCC on happy directed temporal graphs in $2^{o(n)}$~time would imply an algorithm for~MCC that runs in~$2^{o(|V|)}$~time, which is impossible if the ETH is true~\cite{IPZ01}.

Further, Implication~\ref{imp Counting} follows by assuming that the initial instance of~MCC contains at least $2^{\Theta(|V|)}$ multicolored cliques in case of a yes-instance.
That can be ensured by adding to an instance of MCC $\frac{|V|}{2}$~additional color classes of size~$2$, for which each newly added vertex is adjacent to all other vertices of the graph besides the other vertex of its color class, and by increasing~$k$ by~$\frac{|V|}{2}$.
These new color classes contain $2^{\frac{|V|}{2}}$~maximal cliques and each such clique can be extended to a multicolored clique for the whole instance, if the initial instance contained at least one multicolored clique.
Consequently, there is some fixed~$c>0$ such that it is \NP-hard to approximate the number of (maximal) \TC subgraphs within a factor of~$2^{c\cdot n}$, as a no-instance only contains~$\Oh(n^2)$ \TC subgraphs.

Finally, Implication~\ref{imp approx} for this setting can be obtained by slightly extending the reduction.
For any fixed~$0<\epsilon<1$, we can add vertices~$q$ to the connector gadget with arc~$(\omega,q)$ with label~$10$ and arc~$(q,\alpha)$ with label~$2$, until the size of the connector gadget is at least~$(1-\epsilon)|V^*|$ where~$V^*$ is the vertex set of the current graph.
The resulting temporal graph then has a nontrivial closed tcc if and only if the original graph had one.
As we showed, the connector gadget is part of every maximal nontrivial closed tcc, just due to the arcs incident with~$\alpha$ and~$\omega$.
Hence, the resulting temporal graph has a \TC subgraph of size at least~$(1-\epsilon)n$ if and only if it has a nontrivial \TC subgraph.
This implies that it is \NP-hard to decide whether the largest \TC subgraph has size (i) at most~$1$ or (ii)~at least~$(1-\epsilon)n$.

\section{Structural Analysis and Gadgets for the Undirected Case}\label{sec gagdet}
In this section, we prepare the gadget for the undirected case.
A simple adaptation of the directed reduction is not possible.
In fact, if we would make each arc in the previous reduction undirected, we would immediately get triangles\footnote{For example $\{\omega,\vout[\omega],\vin[\alpha]\}$ would induce a triangle (see~\Cref{fig connectivity gadget happy}), if each arc was replaced by an edge.} in the \foot, which are (no matter the labels of the  edges) closed tccs of nontrivial size. 
Hence, to obtain a gadget for the undirected case that mimics the behavior of the directed connectivity gadget, we aim to replace each vertex~$\vin[x]$ and~$\vout[x]$ by a constant-size temporal graph~$\mP$ that contains no nontrivial closed tcc.
This in particular implies that~$\mP$ must be triangle-free. Moreover, we would also like to avoid cycles of length 4: these become temporally connected as soon as the two earliest edges are on opposite sides of the cycle. This complicates proofs that undesired subgraphs are not \TC. Thus, we want to construct some suitable~$\mP$ of higher girth.    
Finally, as the vertices of such a gadget~$\mP$ need to reach~$\alpha$ early and~$\omega$ needs to reach all vertices of~$\mP$ late, $\mP$ also needs two disjoint spanning trees. One of these trees will be an in-tree with a root that can be temporally reached from all vertices via edges of that tree and is adjacent to~$\alpha$, the other one will be an out-tree with a root that is adjacent to~$\omega$ and temporally reaches all other vertices via edges of that tree.

In the first subsection, we show that \TC graphs with these properties exist.
In fact, we show the existence of \TC graphs of arbitrary girth and where additionally both spanning trees are Hamiltonian paths. 
For one such graph, we will then show that reversing the chronological order of one Hamiltonian path gives us the desired gadget.

\subsection{Happy TC graphs of arbitrary girth}\label{sec tc girth}

\begin{theorem}[$\star$]\label{arb girth}
Let~$g\geq 4$.
There are undirected happy temporal graphs consisting of two temporal Hamiltonian paths that are \TC and whose \foot has girth at least~$g$.
\end{theorem}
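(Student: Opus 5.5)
The plan is to reduce temporal connectivity to a purely combinatorial condition on two Hamiltonian paths, and then to put all the work into the girth requirement.

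\emph{Step 1 (temporal part).} Let the vertex set be $\{1,\dots,n\}$ and fix two Hamiltonian paths: $P_1$ visiting the vertices in the order $\pi=(\pi_1,\dots,\pi_n)$, and $P_2$ visiting them in the order $\sigma=(\sigma_1,\dots,\sigma_n)$, with $\sigma_1=\pi_n=:z$. We label the $i$-th edge of $P_1$ with time $i$ and the $i$-th edge of $P_2$ with time $n-1+i$. Provided $P_1$ and $P_2$ are edge-disjoint, every edge gets exactly one label and all labels are pairwise distinct, so the temporal graph is simple and proper, i.e.\ happy.

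It is then \TC. Take $u,v$ and write $u=\pi_j$ and $v=\sigma_\ell$. We follow $P_1$ from $\pi_j$ to $\pi_n=z$, using labels $j,\dots,n-1$. We then follow $P_2$ from $\sigma_1=z$ to $\sigma_\ell$, using labels $n,\dots,n+\ell-2$. The concatenation is a temporal walk with strictly increasing labels, and it contains a temporal path from $u$ to $v$. Hence the whole statement reduces to the following: find edge-disjoint Hamiltonian paths $P_1,P_2$ on the same vertex set, with the last vertex of $P_1$ equal to the first vertex of $P_2$, whose union has girth at least $g$. Edge-disjointness is automatic once the girth is at least $3$ and the union is simple.

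\emph{Step 2 (girth).} Arithmetic choices do not work. Taking $P_1$ as $0,1,\dots,n-1$ and $P_2$ as an arithmetic progression modulo $n$ gives a subgraph of a circulant. Any abelian Cayley-type structure contains $4$-cycles $x,x+1,x+1+s,x+s$, so such choices must be avoided.

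Instead, I would take $P_1=(1,2,\dots,n)$ and let $\sigma$ be a uniformly random ordering of the vertices, conditioned on $\sigma_1=n$. A cycle of length $\ell$ in the union alternates between segments of $P_1$ and edges of $P_2$. The expected number of cycles of length $\ell$ that use $r\ge 1$ edges of $P_2$ is $O_\ell(1)$ as $n\to\infty$: there are $O(n^{r}\cdot \mathrm{poly}(\ell))$ placements, and each prescribed set of $r$ edges lies in $P_2$ with probability $O(n^{-r})$. Cycles using only $P_1$ edges cannot occur. By Markov's inequality, with positive probability the number $N$ of cycles of length less than $g$ is bounded by a constant $c_g$ independent of $n$. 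With high probability, all these short cycles are also pairwise far apart along both paths.

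\emph{Step 3 (repair).} We destroy the at most $c_g$ short cycles one at a time by a local modification of $P_2$ that keeps it Hamiltonian and keeps $\sigma_1=n$. For a short cycle $C$, pick an edge $\sigma_t\sigma_{t+1}$ of $P_2$ on $C$, with $t\geq 2$, and apply a 2-opt move: reverse the segment $\sigma_{t+1},\dots,\sigma_{s}$ for a suitably chosen $s$. This replaces the edges $\sigma_t\sigma_{t+1}$ and $\sigma_s\sigma_{s+1}$ by $\sigma_t\sigma_s$ and $\sigma_{t+1}\sigma_{s+1}$.

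We choose $s$ among the $\Omega(n)$ candidates so that the two new edges are not $P_1$-edges and create no new cycle of length less than $g$. Fix $t$. The number of candidates $s$ for which a new edge closes a cycle of length less than $g$ is at most the number of vertices within distance $g$ of $\sigma_t$ or $\sigma_{t+1}$ in the current union. Since the union has maximum degree at most $4$, this is at most $2\cdot 4^{g}$. So for $n$ large enough a good $s$ exists. Repeating at most $c_g$ times yields the required pair of paths, and Step 1 finishes the proof.

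\emph{Main obstacle.} The delicate point is Step 3, together with the uniform bound in Step 2. The bound in Step 2 is a standard first-moment count, though one has to handle the dependence coming from random orderings rather than independent edges. In Step 3 one must check that a reversal does not create a short cycle through the \emph{reversed} segment. Reversal only changes two edges, so every newly created cycle must pass through one of the two new edges, and the local counting above suffices. If that argument turns out to be messy, I would fall back on an explicit construction: let $P_2$ follow a Hamiltonian path in a known $2$-regular-plus-matching structure of large girth, for instance a Cayley graph of $\mathrm{SL}_2(\mathbb{Z}_p)$ after suitable edge deletions. However, I expect the probabilistic route to be the cleaner one to write down.
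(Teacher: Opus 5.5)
Your proof is correct in outline, but it takes a genuinely different route from the paper's.

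\textbf{The paper's route.} The paper constructs nothing explicitly. It cites two facts about random $4$-regular graphs: they have girth at least $g$ with probability tending to a positive constant, and they a.a.s.\ decompose into two Hamiltonian cycles. For large $n$ these two probabilities sum to more than $1$, so some $4$-regular graph of girth at least $g$ has two edge-disjoint Hamiltonian cycles. Cutting both cycles at a common vertex and labelling as in your Step~1 then finishes; the paper phrases this last step via two edge-disjoint spanning trees. That argument is two lines long, but it rests on the deep Hamilton-decomposition theorem for random regular graphs.

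\textbf{Your route.} You fix $P_1$ and randomise only $P_2$. You need only an elementary first-moment count over a random permutation plus an alteration step. You also build the junction $\pi_n=\sigma_1$ in from the start. The price is that the repair step has to be done with care.

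As written, Step~3 needs four small fixes.
\begin{enumerate}
\item A new short cycle may use \emph{both} new edges. It can close up through $C-\sigma_t\sigma_{t+1}$ and through $C_2-\sigma_s\sigma_{s+1}$, where $C_2$ is an existing short cycle containing the removed edge $\sigma_s\sigma_{s+1}$. This cycle has length $|C|+|C_2|$ and is invisible to your distance test around $\sigma_t,\sigma_{t+1}$. You should also discard the $O(c_g\, g)$ values of $s$ for which $\sigma_s\sigma_{s+1}$ lies on a short cycle. Then the three cases (a new cycle through $\sigma_t\sigma_s$ only, through $\sigma_{t+1}\sigma_{s+1}$ only, or through both) are fully covered.
\item If $t$ is close to $n$, there are not $\Omega(n)$ choices of $s>t$. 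In that case reverse $\sigma_{s+1},\dots,\sigma_t$ for some $1\le s<t$ instead; this still leaves $\sigma_1$ fixed.
\item The restriction $t\ge 2$ is unnecessary, and it can fail when the only $P_2$-edge of $C$ is $\sigma_1\sigma_2$. Reversing $\sigma_2,\dots,\sigma_s$ keeps $\sigma_1=n$, so $t=1$ is allowed.
\item Edge-disjointness is not automatic. $P_1$ and a random $P_2$ share about two edges in expectation. Count shared edges as $2$-cycles in the first-moment bound, and remove them with the same move, which you already arrange never to create $P_1$-edges.
\end{enumerate}
With these adjustments the proof goes through.
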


\newcommand{\textGirth}{
First, observe that if a graph~$G$ contains two edge-disjoint spanning trees, then there is a happy labeling~$\lambda$ of the edges of~$G$, such that~$\mg = (G,\lambda)$ is temporally connected.
So, to prove~\Cref{arb girth}, it suffices to show that there are graphs of girth~$g$ that have two edge-disjoint spanning trees.
In particular, we show that this is true even for edge-disjoint Hamiltonian paths.
We do this by combining two results about random graphs.

\begin{lemma}[{\cite[Corollary 1]{MW04}}]\label{prob girth}
Let~$g\geq 4$.
The probability that a random $4$-regular graph on~$n$ vertices has girth (at least)~$g$ tends towards
$$\mathsf{exp}\left({-\sum_{\ell=3}^{g-1} \frac{3^\ell}{2\ell} + o(1)}\right)$$
as~$n$ goes towards~$\infty$.
\end{lemma}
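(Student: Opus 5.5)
We would prove this via the configuration (pairing) model together with the method of moments for short cycle counts, following Bollob\'as and Wormald. Since $g$ is fixed, everything reduces to computing the joint limiting distribution of finitely many cycle counts. Take $n$ with $4n$ even, give each vertex $4$ points, and let $P$ be a uniformly random perfect matching of the $4n$ points. Projecting $P$ onto the vertices gives a $4$-regular multigraph $G(P)$, where a loop is a cycle of length $1$ and a double edge is a cycle of length $2$. Each simple $4$-regular graph arises from exactly $(4!)^n$ pairings. Hence, conditioned on $G(P)$ being simple, $G(P)$ is uniformly distributed over simple $4$-regular graphs on $[n]$. For $\ell\ge 1$, let $Z_\ell$ be the number of $\ell$-cycles in $G(P)$.

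\emph{Key step: Poisson convergence.} For every fixed $m$, we aim to show that $(Z_1,\dots,Z_m)$ converges jointly in distribution to independent Poisson variables with means $\mu_\ell=\frac{3^\ell}{2\ell}$. In general these means are $\frac{(d-1)^\ell}{2\ell}$; here $d=4$.

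\textbf{First moment.} An $\ell$-cycle is determined by an ordered choice of $\ell$ distinct vertices, counted up to the $2\ell$ rotations and reflections. At each vertex we choose an ordered pair of its points, which gives $4\cdot 3=12$ choices. The $\ell$ prescribed pairs are all present in $P$ with probability $\prod_{j=0}^{\ell-1}\frac{1}{4n-1-2j}$. This gives
\[
\mathbb{E}Z_\ell=\frac{(n)_\ell\, 12^\ell}{2\ell\,(4n)^\ell}(1+o(1))\to\frac{3^\ell}{2\ell}.
\]

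\textbf{Factorial moments.} We compute $\mathbb{E}\prod_\ell (Z_\ell)_{r_\ell}$ by summing over ordered tuples of distinct cycles. Tuples of pairwise vertex-disjoint cycles contribute $\prod_\ell \mu_\ell^{r_\ell}(1+o(1))$. Any tuple in which cycles share a vertex forms a union $H$ with more edges than vertices. The expected number of such $H$ is $O(n^{|V(H)|-|E(H)|})=O(n^{-1})$, and there are boundedly many isomorphism types, so these tuples contribute $o(1)$. By the method of moments, the joint factorial moments converging to those of independent Poissons yields the claimed joint convergence.

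\emph{Conditioning on simplicity.} We have $G(P)$ simple iff $Z_1=Z_2=0$, and it has girth at least $g$ iff $Z_1=\dots=Z_{g-1}=0$. Independence in the limit gives
\[
\Pr[Z_3=\dots=Z_{g-1}=0\mid Z_1=Z_2=0]\to \frac{e^{-\sum_{\ell=1}^{g-1}\mu_\ell}}{e^{-\mu_1-\mu_2}}=\exp\Big(-\sum_{\ell=3}^{g-1}\frac{3^\ell}{2\ell}\Big).
\]
The denominator $\Pr[Z_1=Z_2=0]\to e^{-3/2-9/4}$ is bounded away from $0$, so the conditioning is legitimate. By the uniformity remark above, this conditional probability is exactly the probability that a uniformly random $4$-regular graph on $n$ vertices has girth at least $g$. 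The $o(1)$ term in the exponent absorbs the convergence error.

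The main obstacle is the factorial-moment step, namely showing uniformly that overlapping cycle configurations are negligible. We would handle it by the standard excess argument: a connected union of at least two distinct cycles has edges minus vertices at least $1$. For fixed $g$ and fixed $r_\ell$, only finitely many such unions can occur, so each contributes $O(1/n)$.
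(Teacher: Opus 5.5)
Your proof is correct. It is the classical Bollob\'as--Wormald argument via the pairing model, and the details check out: the first-moment computation $\frac{(n)_\ell\,12^\ell}{2\ell\,(4n)^\ell}\to\frac{3^\ell}{2\ell}$ is valid also for $\ell=1,2$; the excess bound shows overlapping cycle configurations contribute $O(1/n)$; and conditioning on $Z_1=Z_2=0$, whose probability tends to $e^{-15/4}>0$, correctly transfers the result to uniform simple $4$-regular graphs.

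The paper takes a different route: it does not prove the lemma at all, but imports it as Corollary~1 of McKay, Wormald, and Wysocka. That result is considerably stronger. It allows $d=d(n)$ and $g=g(n)$ to grow with $n$ as long as $(d-1)^{2g-1}=o(n)$, which is why the $o(1)$ is placed inside the exponent: the sum itself may diverge. It is proved there by a finer counting (switching) analysis rather than by the method of moments. Your approach does not extend straightforwardly to growing $g$, since the number of moments and cycle configurations, and the constants in your $O(1/n)$ error terms, all depend on $g$.

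For the paper's only use of the lemma, the proof of Theorem~\ref{arb girth}, $g$ is fixed and $n\to\infty$. So your self-contained fixed-$g$ argument is fully sufficient. In that regime the statement just says the probability converges to the positive constant $\exp\bigl(-\sum_{\ell=3}^{g-1}3^\ell/(2\ell)\bigr)$, which is exactly what you derive. What the citation buys is generality the paper never needs; what your argument buys is an elementary, self-contained proof of the case actually used.
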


\begin{lemma}[{\cite[Corollary 4.9]{W99}}]\label{prob ham cycles}
The probability that a random $4$-regular graph on~$n$ vertices decomposes in two edge-disjoint Hamiltonian cycles tends towards~$1$ as~$n$ goes towards~$\infty$.
\end{lemma}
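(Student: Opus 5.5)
The plan is to prove the statement by establishing \emph{contiguity} between the uniform random $4$-regular graph $G_{n,4}$ and the model $\mathcal{H}_n\oplus\mathcal{H}_n$. That model is the union of two independent uniformly random Hamiltonian cycles on $[n]$, conditioned on being edge-disjoint. Every graph in the support of $\mathcal{H}_n\oplus\mathcal{H}_n$ is, by definition, a simple $4$-regular graph that decomposes into two edge-disjoint Hamiltonian cycles. Contiguity means that any event that holds asymptotically almost surely in one model holds asymptotically almost surely in the other. The property ``decomposes into two Hamiltonian cycles'' holds with probability~$1$ in $\mathcal{H}_n\oplus\mathcal{H}_n$, so it will then hold with probability tending to~$1$ in $G_{n,4}$.

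To get contiguity I would use the small subgraph conditioning method of Robinson and Wormald, in the form presented by Janson. I would work in the pairing (configuration) model $\mathcal{P}_{n,4}$ and later condition on simplicity. The random variable to study is $Y$, the number of ordered pairs $(H_1,H_2)$ of edge-disjoint Hamiltonian cycles with $H_1\cup H_2=G$. For each $k\ge 1$, let $X_k$ be the number of $k$-cycles; this includes loops and double edges for $k=1,2$. The key steps are as follows.

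\begin{enumerate}
\item It is classical that the $X_k$ are asymptotically independent Poisson variables with means $\lambda_k=3^k/(2k)$.
\item Compute $\mathbb{E}Y$ by direct counting: choose the two cyclic orders, then count the pairings consistent with them.
\item Compute the joint factorial moments $\mathbb{E}[Y(X_1)_{j_1}\cdots(X_m)_{j_m}]/\mathbb{E}Y$. These should tend to $\prod_k(\lambda_k(1+\delta_k))^{j_k}$ for explicit constants $\delta_k$, which are obtained by counting how a $k$-cycle alternates between edges of $H_1$ and edges of $H_2$. From this one checks $\sum_k\lambda_k\delta_k^2<\infty$.
\item Compute $\mathbb{E}Y^2$ and verify that $\mathbb{E}Y^2/(\mathbb{E}Y)^2\to\exp\bigl(\sum_k\lambda_k\delta_k^2\bigr)$.
\end{enumerate}

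Janson's theorem then gives $\Pr(Y>0)\to 1$, and in fact contiguity of $\mathcal{P}_{n,4}$ with the $Y$-weighted measure. Restricting to simple pairings corresponds to conditioning on $X_1=X_2=0$, an event of probability bounded away from~$0$, so both conclusions transfer to $G_{n,4}$. Moreover, the $Y$-weighted simple model is exactly $\mathcal{H}_n\oplus\mathcal{H}_n$, which yields the desired contiguity.

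The main obstacle will be the second moment in the last step. Here one has to sum over pairs of Hamiltonian decompositions $(H_1,H_2)$ and $(H_1',H_2')$ according to how the edge classes overlap. I would parametrize the overlap by the numbers of edges in each class $H_a\cap H_b'$ and by the number of ``switch'' points along the cycles. This turns $\mathbb{E}Y^2$ into a multidimensional sum that can be evaluated by the Laplace method. The work consists of showing that the exponent is maximized uniquely at the ``independent'' overlap point, that the Hessian there is nondegenerate, and that the resulting constant matches $\exp(\sum_k\lambda_k\delta_k^2)$. I would attempt this by first reducing the problem by symmetry to a two-parameter optimization and then checking concavity numerically-by-hand. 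If this becomes unwieldy, the fallback is to cite the known contiguity result $G_{n,4}\approx\mathcal{H}_n\oplus\mathcal{H}_n$ of Kim and Wormald, from which the statement follows immediately as above.
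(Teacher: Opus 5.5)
Your fallback, deriving the statement from the Kim--Wormald contiguity $G_{n,4}\approx\mathcal{H}_n\oplus\mathcal{H}_n$, is exactly what the paper does. It gives no argument and simply cites \cite[Corollary~4.9]{W99}, which is a consequence of that result. Your primary, self-contained route, however, has a genuine gap at Step~4, and the missing piece is not Laplace-method bookkeeping.

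Every decomposition uses all pairs of the pairing. Hence $\mathbb{E}Y^2/(\mathbb{E}Y)^2$ is the expected number of decompositions $(H_1',H_2')$ of a planted union $H_1\cup H_2$, divided by $\mathbb{E}Y$. Parametrize $(H_1',H_2')$ by its set $W$ of switch vertices, with $|W|=s$. Then the colour changes along both $H_1$ and $H_2$ occur exactly at $W$. Placing the non-switch vertices into the resulting alternating segments is an explicit multinomial count. After contracting segments, what remains is a skeleton of four perfect matchings on $W$: $M_1,M_2$ are the two colour classes of segments of $H_1$, and $M_3,M_4$ are those of $H_2$. They must satisfy that $M_1\cup M_2$, $M_3\cup M_4$, $M_1\cup M_3$ and $M_2\cup M_4$ are all Hamilton cycles on $W$. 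Your overlap parameters (edge counts of the classes $H_a\cap H_b'$ and the number of switch points) are only functions of $s$ and the segment sizes. They say nothing about how many such quadruples exist.

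That number is the whole difficulty. The four constraints form a cycle $M_1$--$M_2$--$M_4$--$M_3$--$M_1$, so they cannot be resolved one matching at a time. This is unlike the cubic Hamiltonicity proof, where closing $s$ segments into one cycle gives the explicit count $2^{s-1}(s-1)!$. Here the number of admissible $M_4$ depends on the random cycle structure of $M_2\cup M_3$. Moreover, you need its asymptotics with the exact constant, since the ratio must converge to exactly $\exp(\sum_k\lambda_k\delta_k^2)$ rather than merely stay bounded.

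Finding the asymptotic probability that four random perfect matchings induce Hamilton cycles in exactly this pattern is the main theorem of Kim and Wormald's paper on random matchings which induce Hamilton cycles. A two-parameter optimisation checked ``numerically by hand'' cannot replace it. So Step~4 fails as proposed and must be replaced by their theorem, or by a new argument of comparable depth. At that point you are citing the same source as the paper.

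A smaller slip: a loop never lies on a Hamilton cycle, so $\delta_1=-1$; in fact $\delta_k=((-1)^k-2)/3^k$. Janson's theorem therefore gives $\Pr(Y>0)\to e^{-3/2}$ in the pairing model, not $1$. The limit $1$ appears only after conditioning on $X_1=0$, which your simplicity step does anyway.
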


In combination, we conclude~\Cref{arb girth}.

\begin{proof}[Proof of~\Cref{arb girth}]
Due to~\Cref{prob girth}, there is some finite~$n_g > 0$, such that for each~$n_g' \geq n_g$, the probability that a random $4$-regular graph on~$n_g'$ vertices has girth (at least)~$g$ is at least $$c_g := \frac{1}{2}\cdot \mathsf{exp}\left({-\sum_{\ell=3}^{g-1} \frac{3^\ell}{2\ell}}\right).$$
Note that~$c_g$ is larger than~$0$ and independent from~$n_g$.
Now, by \Cref{prob ham cycles}, there is some finite~$n_c > 0$, such that for each~$n_c' \geq n_c$, the probability that a random $4$-regular graph on~$n_c'$ vertices contains two edge-disjoint Hamiltonian cycles is larger than~$1-c_g$.
Hence, for all~$n \geq \max(n_c,n_g)$, the probabilities that a random $4$-regular graph on~$n$ vertices (i)~has girth (at least)~$g$ and (ii)~contains two edge-disjoint Hamiltonian cycles sum up to more than~$1$.
As a consequence, these two random events cannot be disjoint, which implies that there are $4$-regular graphs~$G$ on~$n$ vertices that have girth (at least)~$g$ and contain two edge-disjoint Hamiltonian cycles.
This completes the proof, as~$G$ contains two edge-disjoint spanning trees.
\end{proof}
}

For girth~$5$, an example of a happy temporally connected graph is shown in~\Cref{fig gadget}, if the labeling of the red path is reversed (that is, it goes from vertex~$20$ to vertex~$1$ in chronological order starting with label~$1$). This \TC graph will serve as basis for our gadget.

\subsection{The connector path gadget}

We now present the gadget~$\mP$ for the undirected case.
As mentioned above, our gadget~$\mP$ needs to be triangle-free. 
To simplify the argument that~$\mP$ and specific subgraphs of the temporal graph that we construct in our reductions do not contain closed tccs of size more than~$2$, we combine properties of  extremal graph theory with properties of \TC graphs. 

\newcommand{\proofGirthImplication}{
\begin{lemma}[{\cite[Table 1]{GKL93}}]\label{atleats17}
Let~$G$ be a graph on~$n>2$ vertices that has girth at least~$5$.
If~$G$ has at least~$2n-3$ edges, then~$n\geq 17$.
If~$G$ has at least~$2n-2$ edges, then~$n\geq 18$.
\end{lemma}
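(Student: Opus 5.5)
The statement is the extremal fact $\mathrm{ex}(n;\{C_3,C_4\}) < 2n-3$ for $3\le n\le 16$ and $\mathrm{ex}(n;\{C_3,C_4\})<2n-2$ for $n\le 17$. In the paper it is cited from Garnick, Kwong and Lazebnik~\cite{GKL93}. The table there gives $\mathrm{ex}(16)=28<29$ and $\mathrm{ex}(17)=31<32$, and every smaller $n$ also satisfies the bound. So one legitimate proof is to read these values off the table and check the two inequalities for each $n\le 17$. For a self-contained argument, I would combine minimum-degree reduction, Moore-type counting, and a final finite case analysis.

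\textbf{Step 1: reduce to minimum degree~$3$.} Take a counterexample $G$ with $n\le 16$ vertices and $m\ge 2n-3$ edges, girth at least~$5$, and $n$ minimal. (The $2n-2$ variant is handled identically with $n\le 17$.) Suppose some vertex $v$ has degree at most~$2$.
\begin{itemize}
\item Then $G-v$ has girth at least~$5$, $n-1$ vertices, and at least $2(n-1)-3$ edges.
\item For $n-1>2$ this contradicts minimality.
\item The base cases $n\le 4$ are immediate, since triangle- and $C_4$-free graphs there have at most $n-1$ edges.
\end{itemize}
Hence $\delta(G)\ge 3$. Also the average degree is $2m/n\ge 4-6/n>3$, so some vertex $u$ has degree $\Delta\ge 4$.

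\textbf{Step 2: Moore-type counting.} Girth at least~$5$ means the second neighbourhoods hanging off distinct neighbours of $u$ are pairwise disjoint and avoid $N[u]$. Each neighbour of $u$ has at least $\delta-1\ge 2$ further neighbours, so $n\ge 1+\Delta+2\Delta=1+3\Delta$.
\begin{itemize}
\item This already forces $\Delta\le 5$ and $n\ge 13$.
\item It also yields the global inequality $\sum_v\binom{d(v)}{2}\le \binom n2-m$, since each non-adjacent pair has at most one common neighbour and adjacent pairs have none.
\item Together with $m\ge 2n-3$ and the degree-sequence constraints $3\le d(v)\le 5$, $\sum d(v)=2m$, this pins the degree sequence down to a few options for each $n\in\{13,\dots,16\}$.
\end{itemize}

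\textbf{Step 3: rule out the remaining small cases (main obstacle).} The counting bounds alone do not exclude $n=15,16$; for example $n=16$, $m=29$ passes the crude convexity check. I would attack these by fixing a degree-$4$ or degree-$5$ vertex $u$ and examining the structure around it.
\begin{itemize}
\item Count edges inside the layer at distance~$2$ from $u$ and between that layer and the layer at distance~$3$.
\item Use that the second layer has at least $2\Delta$ vertices and that any edge inside it must join branches of different neighbours of $u$ without creating a $C_4$.
\item Bounding the edges these layers can carry gives $m\le 2n-4$.
\end{itemize}
If this hand analysis becomes too branchy, I would fall back on the exhaustive computation underlying~\cite[Table~1]{GKL93}. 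Concretely, $\mathrm{ex}(15)=26<27$, $\mathrm{ex}(16)=28<29$ and $\mathrm{ex}(17)=31<32$, and these are exactly the three values the lemma needs. I expect this final case analysis to be the only genuinely laborious step.
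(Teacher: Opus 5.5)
Your first route, reading $\mathrm{ex}(n;\{C_3,C_4\})$ off \cite[Table~1]{GKL93} and comparing it with $2n-3$ for $3\le n\le 16$ and with $2n-2$ for $n\le 17$, is exactly what the paper does. The lemma is quoted from that table with no further argument. The values you name ($\mathrm{ex}(15)=26$, $\mathrm{ex}(16)=28$, $\mathrm{ex}(17)=31$, with all smaller $n$ further below the bound) are the right ones and suffice.

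The self-contained argument, however, is not a proof on its own.

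Steps~1 and~2 are sound and actually do a bit more than you say. Take a counterexample minimal first in $n$ and then in $m$. This gives $m=2n-3$ and $\delta\ge 3$. The inequality $\sum_v\binom{d(v)}{2}\le\binom n2-m$ then excludes $n=13$ outright ($60>55$). For $n=14$ it forces the degree sequence $4^8\,3^6$ with equality. Equality means every non-adjacent pair has exactly one common neighbour, so $G$ would be an irregular graph of girth~$5$ and diameter~$2$, which is impossible since such graphs are regular.

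The cases $n\in\{15,16\}$, and $n=17$ for the second bound, are where all the difficulty lies. Step~3 only announces that a layer count around a vertex of degree $4$ or $5$ ``gives $m\le 2n-4$''; it never carries the count out. Any such count would have to be exactly sharp, because $\mathrm{ex}(15)=2\cdot 15-4$ and $\mathrm{ex}(16)=2\cdot 16-4$. Moreover, for the $2n-2$ variant at $n=17$ the stated target $m\le 2n-4=30$ is false, since $\mathrm{ex}(17)=31=2\cdot17-3$; there you may only aim for $m\le 2n-3$.

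Nothing in Steps~1 and~2 handles these three cases, so your fallback to the table is not optional. In the end your proof rests on \cite[Table~1]{GKL93}, just as the paper's does.
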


Note that this implies that for each~$S\subseteq V(\G)$ of size at least~$3$, for which~$\G[S]$ is \TC under strict reachability and has one label per edge, $\G[S]$ has at least~$17$ vertices, as a \TC graph without a cycle of length~$4$ needs at least~$2n-3$ time-edges\footnote{Here, the set of \emph{time-edges} is defined as~$\{(e,t)\mid e\in E(\mg), t\in \lambda(e)\}$.
Note that a simple temporal graph has as many edges as it has time-edges.}
\cite{Bumby81,KMMS24}.
\begin{lemma}[\cite{Bumby81,KMMS24}]\label{tccminedges}
A happy undirected temporally connected graph with~$n$ vertices and no cycle of length~$4$ has at least~$2n-3$ edges.
\end{lemma}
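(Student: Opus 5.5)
The plan is to translate the statement into the classical \emph{gossip problem} and then invoke the known extremal result on gossip schemes. Let $\mg=(G,\lambda)$ be happy, temporally connected, on $n$ vertices, and with no $4$-cycle in $G$. Since $\mg$ is simple, every edge carries exactly one label. Since $\mg$ is proper, two edges sharing a label are vertex-disjoint, so no temporal path can use both of them.

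First I will order all edges of $G$ into a linear sequence $e_1,\dots,e_m$ that is nondecreasing in label, breaking ties among equal labels arbitrarily. Properness makes the tie-breaking irrelevant. Hence the temporal paths of $\mg$ (strict or non-strict, which coincide here) are exactly the paths whose edges appear in increasing positions of the sequence. I then read $e_i=\{u,v\}$ as a ``call'' in which $u$ and $v$ exchange everything they currently know.

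By induction along the sequence, vertex $w$ knows the initial secret of $x$ after call $i$ if and only if there is a temporal path from $x$ to $w$ using only $e_1,\dots,e_i$. So $\mg$ is \TC exactly when the call sequence is a complete gossip scheme. In this translation $m=|E(G)|$ and the underlying graph of the scheme is $G$.

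Next I will dispose of small $n$ by hand. For $n=2$ a single edge suffices and $2n-3=1$. For $n=3$ a path with two edges is not \TC, because whichever edge comes later cannot be reached backwards, so the triangle is forced and we get $3=2n-3$ edges. For $n\ge 4$ the classical theorem (Baker--Shostak, Hajnal--Milner--Szemerédi, Tijdeman) says a complete gossip scheme needs at least $2n-4$ calls. Bumby's refinement says every scheme with exactly $2n-4$ calls contains a $4$-cycle in its underlying graph. Since $G$ has no $C_4$, $m\ge 2n-3$ follows; \cite{KMMS24} restates precisely this in temporal language, and I would cite it directly.

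If a self-contained argument is wanted, this is where the main obstacle lies. I would first try the standard inductive approach. One considers a vertex whose first and last incident calls coincide, or a call after which some participant is already fully informed. One then contracts or deletes such a vertex so that $n$ drops by one and $m$ drops by at least two, while keeping the scheme complete and $C_4$-free. The delicate part is the base configuration where this fails. Bumby's analysis shows that any extremal $(2n-4)$-scheme contains a ``middle'' $4$-cycle: four vertices that each collect half the information and then exchange it in two rounds. Reproving this is substantially harder than the reduction, so I would rely on the cited result rather than redo it.
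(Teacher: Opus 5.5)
Your proposal is correct and takes the same route as the paper, which gives no independent proof and simply cites Bumby's theorem that every complete gossip scheme with exactly $2n-4$ calls has a $4$-cycle in its underlying graph, in the temporal form restated in \cite{KMMS24}. Your reduction to a label-sorted call sequence, where properness makes ties and the strict/non-strict distinction irrelevant, together with your handling of $n\le 3$, is exactly the bookkeeping that this citation implicitly relies on.
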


In combination, this implies~\Cref{tccimplication}.
}

\proofGirthImplication

\begin{corollary}%[$\star$]
\label{tccimplication}
Let~$\G$ be a happy undirected temporal graph where the \foot has girth at least~$5$.
Then, each nontrivial closed tcc of~$\G$ has size at least~$17$.
\end{corollary}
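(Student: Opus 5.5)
Let $S$ be a nontrivial closed tcc of $\G$, so $|S|=n\ge 3$, and write $H:=\G[S]$. I will show $n\ge 17$ by combining \Cref{tccminedges} with \Cref{atleats17}, applied to $H$ and to its \foot $G[S]$.

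First, I check that $H$ satisfies the hypotheses of \Cref{tccminedges}. Since $\G$ is happy, every induced temporal subgraph of it is happy: restricting to $S$ keeps one label per edge and keeps adjacent edges with distinct labels. The \foot of $H$ is the induced subgraph $G[S]$ of the \foot $G$ of $\G$. Every cycle of $G[S]$ is a cycle of $G$, so $G[S]$ has girth at least $5$ and in particular contains no cycle of length~$4$. Finally, $H$ is \TC by the definition of a closed tcc. Hence \Cref{tccminedges} applies, and $G[S]$ has at least $2n-3$ edges.

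Second, $G[S]$ is a graph on $n>2$ vertices with girth at least $5$ and at least $2n-3$ edges, so the first part of \Cref{atleats17} gives $n\ge 17$, which is the claim.

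The only step needing care is the first one: confirming that happiness, \TC, and the girth bound all pass to the induced subgraph on $S$, so that the cited edge lower bound is legitimate. Each of these is immediate from the definitions. If the cited version of \Cref{tccminedges} required a $C_4$-free assumption phrased differently, girth at least $5$ would still imply it, so I expect no real obstacle.
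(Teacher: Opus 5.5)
Your proof is correct and is essentially the paper's argument. The paper also applies \Cref{tccminedges} to $\G[S]$, which is still happy, temporally connected and $C_4$-free, to get at least $2|S|-3$ edges, and then uses the first part of \Cref{atleats17} to conclude $|S|\geq 17$. You spell out the inheritance of these properties to induced subgraphs, which the paper leaves implicit.
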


With~\Cref{tccimplication}, we can show that the temporal graph depicted in~\Cref{fig gadget} contains no nontrivial tcc, even though it contains two (edge-disjoint) temporal Hamiltonian~$(1,20)$-paths.
For integers~$i$ and~$j$ with~$i\leq j$, we denote with~$[i,j]$ the set of all integers~$k$ with~$i\leq k \leq j$.

\begin{definition}[Connector path]\label{def con path}
Let~$\mP$ be the undirected happy temporal graph with vertex set~$[1,20]$ where the edges set is partitioned into the two edge-disjoint Hamiltonian~$(1,20)$-paths
\begin{itemize}
\item the path~$(1,2,\dots,20)$ (which we call the~\emph{red path}) and
\item the path~$(1,5,9,13,2,7,11,19,14,6,17,10,3,15,8,18,4,12,16,20)$ (which we call the~\emph{blue path}).
\end{itemize}
Then, $\mP$ is called a~\emph{connector path}, if the labels of the edges of the red (blue, respectively) path are consecutively increasing and all labels assigned to edges of the blue path are strictly larger than all labels assigned to edges of the red path.
\end{definition}

Note that a connector path contains no cycle of length 3 or 4.

\begin{figure}
\centering
\scalebox{.6}{
\begin{tikzpicture}[yscale=.9]

\foreach \x in {1,...,20}
\node[knoten] (\x) at (\x,0) {\x};

\foreach \x [count=\xi] in {2,...,20}
\draw[thick,-,red] (\xi) to (\x);

\xdef\y{1}
\foreach \x [count = \xi from 41]in {5,9,13,2,7,11,19,14,6,17,10,3,15,8,18,4,12,16,20}{

\draw[blue,thick,-, bend left = 39] (\y) edge node[black,fill=white] {\xi} (\x);
\xdef\y{\x}
}

\end{tikzpicture}
}
\caption{A connector path.
The labels on the red path (not depicted) are defined as the smaller index of the endpoints of that edge.
A blue edge is above the vertices, if the left endpoint (with smaller index) precedes the right endpoint in the blue path; otherwise it is below the vertices.}
\label{fig gadget}
\end{figure}

\begin{lemma}\label{gadget no tcc}
A connector path contains no nontrivial closed tcc. 
\end{lemma}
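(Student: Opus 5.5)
By \Cref{tccimplication}, any nontrivial closed tcc $S$ of a connector path $\mP$ has $|S|\ge 17$. The underlying graph of $\mP$ has girth at least $5$, since it has no $3$- or $4$-cycles, and $\mP$ is happy. So only the sets $S\subseteq[1,20]$ with $17\le |S|\le 20$ need to be ruled out. I will split these by $d:=20-|S|\in\{0,1,2,3\}$.

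\textbf{Edge counting.} By \Cref{tccminedges}, $\mP[S]$ must have at least $2|S|-3$ edges. The whole graph has $38$ edges, and each vertex has degree $2$, $3$ or $4$; only $1$ and $20$ have degree $2$. Deleting $d$ vertices removes at least (sum of their degrees) minus (number of edges among them) edges. If $d=3$, then $|S|=17$ and we need $\ge 31$ edges, so at most $7$ edges may be lost. If $d=2$, we need $\ge 33$, so at most $5$ may be lost. If $d=1$, we need $\ge 35$, so at most $3$ may be lost. \Cref{atleats17} also says that a girth-$5$ graph on $17$ vertices with $2n-2$ edges cannot exist. I will use these bounds to confine the deleted set to vertices of small degree, essentially those near the endpoints $1,2,19,20$ of the two paths. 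I will tabulate the degrees from the two Hamiltonian paths to list exactly which deletion sets survive.

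\textbf{Temporal arguments.} For each surviving $S$, including $S=[1,20]$ itself, I will exhibit an ordered pair with no temporal path inside $\mP[S]$. The key structure is that every red label is smaller than every blue label, so any temporal path is a red increasing segment followed by a blue increasing segment. Hence $u$ reaches $v$ only if some vertex $w$ exists such that $w$ is red-reachable from $u$ (moving forward along $1,2,\dots,20$) and $v$ is blue-reachable from $w$ (moving forward along the blue order). For the full graph, take $u=20$: its red edge $\{19,20\}$ has the largest red label, so it can only reach $19$ via red. Its blue edge $\{16,20\}$ is the last blue edge, so from $20$ or $19$ it can only continue forward in blue order, and $19$ sits in the middle of the blue path. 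So $20$ cannot reach, say, $5$; this is exactly why reversing the red path was needed to make this graph \TC. Deleting vertices only breaks red and blue segments further, so for every candidate $S$ I expect some vertex to be the witness: either $\max S$ (red-stuck) or a vertex late in the blue order. To verify this, I will check for the chosen $u$ that $\{w \text{ red-reachable}\}$ followed by blue-forward reachability misses some $v\in S$.

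\textbf{Main obstacle.} The hard part is organising the $d=3$ case uniformly, since it has the most subsets. The cleanest handle is to note that removal cannot help: a path in $\mP[S]$ is also a path in $\mP$. So it suffices to find, for each $S$, a pair $u,v\in S$ that is already non-reachable in $\mP$ itself. I will therefore compute the full non-reachability relation of $\mP$, using the red-then-blue characterization. I then need to show this relation contains a pair inside every $S$ of size $\ge17$, equivalently that no set of at most $3$ vertices is a hitting set for all non-reachable pairs. I expect many non-reachable pairs, e.g.\ all pairs $(u,v)$ with $u$ large and $v$ early in the blue order, which should make this a short check.
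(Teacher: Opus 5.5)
Your last paragraph is the paper's proof, and it is correct. The paper uses the pairs $\{17,18,19,20\}\times\{1,2,5,7\}$: red edges from $17,\dots,20$ only reach $16,\dots,20$, and blue edges from those vertices never reach anything before $11$, the blue predecessor of $19$. So $S$ must miss one whole side of this $K_{4,4}$ of non-reachable pairs, which gives $|S|\le 16$ and makes your edge-counting case analysis unnecessary.

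One correction when you compute the relation: your ``moving forward'' description must allow a single backward step, to $u-1$ in red and to the blue predecessor of the vertex where the path switches to blue. For example, $20$ does reach $11$ via $(20,19,11)$, so a forward-only computation would wrongly report extra non-reachable pairs.
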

\newcommand{\proofGadget}{
\begin{proof}
Let~$\G$ be a connector path (see~\Cref{fig gadget} for an illustration).
Assume towards a contradiction that there is some vertex set~$S\subseteq [1,20]$ of size at least~$3$, such that~$\G[S]$ is temporally connected.

We show that for each~$(s,t) \in \{17,18,19,20\} \times \{1,2,5,7\}$, there is no temporal path from~$s$ to~$t$ in~$\G$ and thus also not in~$\G[S]$.
This is due to the fact that in the red path, no vertex of~$\{17,18,19,20\}$ can reach any vertex of~$[1,15]$.
Moreover, in the blue path, each vertex of~$\{1,2,5,7\}$ precedes~$11$ and~$11$ precedes each of~$\{16,17,18,19,20\}$.
Thus, for each~$(s,t) \in \{17,18,19,20\} \times \{1,2,5,7\}$, there is no temporal path from~$s$ to~$t$ in~$\G$.
As~$\G[S]$ is temporally connected, this implies that if~$S$ contains at least one vertex of~$\{1,2,5,7\}$, then~$S$ contains no vertex of~$\{17,18,19,20\}$.
Thus, $S$ has size at most~$16$.

However, since~$\G$ is happy and contains no cycle of length~$3$ or~$4$,  \Cref{tccimplication} implies~$|S| \geq 17$, leading to a contradiction. 
\end{proof}
}
\proofGadget

\newcommand{\proofExtensionGadgets}{
We also show that some specific temporal graphs obtained from connector paths contain no nontrivial closed tccs.
Essentially, these extensions will occur as subgraphs in our reductions.

\begin{lemma}\label{gadget extensions no tcc}
Let~$\mP$ be a connector path, where~$r$ is the largest label of any edge of the red path and let~$b$ be the smallest label of any edge of the blue path.
The following temporal graphs contain no nontrivial closed tcc:
\begin{enumerate}
\item \label{gadget plus alpha} the temporal graph~$\G_1$ obtained from $\mP$ by adding a vertex~$\alpha$ and edges~$\{20,\alpha\}$ and~$\{8,\alpha\}$ with labels fulfilling~$r < \lambda(\{20,\alpha\}) < \lambda(\{8,\alpha\})<b$,
\item\label{gadget plus alpha plus x1} the temporal graph~$\G_2$ obtained from $\mP$ by adding a vertex~$\alpha$, a vertex~$\alpha'$, and edges~$\{20,\alpha\}$, $\{\alpha,\alpha'\}$, and~$\{\alpha',8\}$ with labels fulfilling~$r < \lambda(\{20,\alpha\}) < \lambda(\{\alpha,\alpha'\})<\lambda(\{\alpha',8\})<b$,
\item\label{gadget plus omega} the temporal graph~$\G_3$ obtained from $\mP$ by adding a vertex~$\omega$ and edges~$\{\omega,1\}$ and~$\{\omega,8\}$ with labels fulfilling~$r < \lambda(\{\omega,8\}) < \lambda(\{\omega,1\})<b$, and
\item\label{gadget plus omega plus xk} the temporal graph~$\G_4$ obtained from $\mP$ by adding a vertex~$\omega$, a vertex~$\omega'$, and edges~$\{\omega,1\}$, $\{\omega',\omega\}$, and~$\{\omega',8\}$ with labels fulfilling~$r < \lambda(\{\omega',8\}) < \lambda(\{\omega',\omega\})<\lambda(\{\omega,1\})<b$.
\end{enumerate}
\end{lemma}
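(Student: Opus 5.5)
The plan is to follow the template of the proof of \Cref{gadget no tcc}: a girth/edge-count lower bound from \Cref{tccimplication}, played against an upper bound coming from pairs of vertices that cannot temporally reach each other. Fix $i\in\{1,2,3,4\}$ and suppose $S\subseteq V(\G_i)$ with $|S|\ge 3$ induces a \TC subgraph.

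\textbf{Reduction to the case with new vertices.} If $S$ contains none of the new vertices ($\alpha,\alpha',\omega,\omega'$), then $\G_i[S]=\mP[S]$. This is impossible by \Cref{gadget no tcc}. So $S$ contains a new vertex $z$. Every vertex of a \TC subgraph on at least $3$ vertices needs at least two neighbours inside $S$. For $\G_2$ and $\G_4$ the new vertices form a path of length~$3$ between $20$ and $8$, or between $1$ and $8$, so every new vertex lies in $S$. In all four cases the attachment vertices ($20$ and $8$, or $1$ and $8$) are then in $S$ as well.

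\textbf{Applying \Cref{tccimplication}.} Next I check that each $\G_i$ is happy and has girth at least~$5$, so that \Cref{tccimplication} gives $|S|\ge 17$.

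\begin{itemize}
\item Happiness holds because all new labels are distinct and lie strictly between $r$ and $b$, and $\mP$ itself is happy.
\item For girth, a new cycle of length at most~$4$ through the added vertices would require the two attachment vertices to be at distance at most~$2$ in $\mP$ (for $\G_1,\G_3$), or adjacent (for $\G_2,\G_4$).
\item Checking neighbourhoods in $\mP$: $N(20)=\{19,16\}$, $N(8)=\{7,9,15,18\}$ and $N(1)=\{2,5\}$. These give $\mathrm{dist}(8,20)=3$ via $8,18,19,20$, and $\mathrm{dist}(1,8)=3$ via $1,5,9,8$.
\end{itemize}

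So the girth is at least~$5$, and in fact the extra edges raise the edge count enough that I may be able to use the $2n-2$ bound of \Cref{atleats17}, which gives $|S|\ge 18$ once new vertices are present.

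\textbf{Upper bound from forbidden pairs.} Because all red labels are $<$ all new labels $<$ all blue labels, every temporal path decomposes into three consecutive phases:
\begin{itemize}
\item an increasing red segment, which moves only to higher indices;
\item a segment on the new edges, traversed in increasing label order;
\item a blue segment, which moves only forward in the blue order.
\end{itemize}
This gives an explicit description of the reachability set of each vertex. For example, in $\G_1$ a vertex $s\ge 17$ reaches exactly $[s,20]\cup\{\alpha,8,18,4,12,16\}$; in particular it does not reach $1,2,5,7$.

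I will compute these reachability sets for the relevant vertices and collect forbidden pairs (unreachable in either direction), aiming to show that every set containing the forced vertices and avoiding all forbidden pairs has at most $16$ (resp.\ $17$) elements. For $\G_1$ the pairs $\{17,\dots,20\}\times\{1,2,5,7\}$ alone only give $|S|\le 17$. Here $20\in S$ is forced, so $S\cap\{1,2,5,7\}=\emptyset$, and I need a few more forbidden pairs among the remaining vertices. Candidates are pairs involving $\alpha$ and early-blue vertices, or vertices $v$ whose red continuation and blue predecessors both miss some $u$.

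For $\G_3$ and $\G_4$ the roles reverse. The new vertex $\omega$ sits at $1$, the start of the red path, so now low-index vertices and vertices late in blue order become the obstructed ones. I would find the symmetric family of forbidden pairs there.

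\textbf{Main obstacle.} The hard step is this last one: extracting, for each of the four graphs, enough unreachable pairs to push the maximum conflict-free set below the girth bound. I will do it by explicitly tabulating the out-reachability set of each of the roughly $21$ vertices using the three-phase structure. Then I will find a small set of vertices, including the forced attachment vertices, whose reachability deficits cover everything.
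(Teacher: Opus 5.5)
\textbf{Gap: the final contradiction is never established.} Your setup matches the paper's proof. The happiness and girth checks are right (in particular $\mathrm{dist}(8,20)=\mathrm{dist}(1,8)=3$ in $\mathcal{P}$), so \Cref{tccimplication} gives $|S|\ge 17$, and \Cref{gadget no tcc} forces a new vertex into $S$. But the step that yields the contradiction is never carried out. You only announce a tabulation of forbidden pairs, and you note yourself that the pairs you have give merely $|S|\le 17$.

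The missing idea is to bound $S$ by the reachability of the forced new vertex itself, not by conflicts among old vertices.
\begin{itemize}
\item All labels at $\alpha$ (and $\alpha'$) lie strictly between $r$ and $b$. So a temporal path starting there enters $\mathcal{P}$ at $20$ or $8$ and can afterwards use only blue edges: it follows the blue path forward, or takes a single backward blue step and stops.
\item Hence in $\mathcal{G}_1$ the vertex $\alpha$ reaches only $\{8,15,18,4,12,16,20\}$, so $|S|\le 8$. In $\mathcal{G}_2$ each of $\alpha,\alpha'$ reaches only these seven vertices and the other new vertex, so $|S|\le 9$.
\item Dually, a temporal path ending at $\omega$ (or $\omega'$) uses only red edges before it first reaches a new vertex, which it does from $1$ or $8$. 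Hence only the vertices of $[1,9]$ (and, in $\mathcal{G}_4$, the other new vertex) can reach it, giving $|S|\le 10$ resp.\ $|S|\le 11$.
\end{itemize}
Each bound contradicts $|S|\ge 17$ at once; no covering argument over all vertices is needed.

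\textbf{The tools you prepared for the tabulation would also have misled you.}
\begin{itemize}
\item Your three-phase description omits that a red or a blue segment may be a single \emph{backward} step. The step $9\to 8$ at a red time is why $9$ reaches $\omega$ in $\mathcal{G}_3$. The step $8\to 15$ at a blue time is why $15$ is reachable from $\alpha$.
\item Your sample reach set for $s\ge 17$ in $\mathcal{G}_1$ is wrong, because phases may be empty. The vertex $17$ reaches $10,3,15$ directly along the blue path and $6$ by one backward blue step. The vertex $18$ reaches its red neighbour $17$, and routes through $19$ give $11$ and $14$. Forbidden pairs read off such sets would be unsound.
\item Your claim that every vertex of a temporally connected graph on at least three vertices has two neighbours in it is false in general. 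For example, take the complete graph on $\{u,a,b,c\}$ with $\lambda(ab)=1$, $\lambda(bu)=2$, $\lambda(cu)=3$, $\lambda(ua)=6$, $\lambda(ac)=7$, $\lambda(bc)=8$, plus a pendant vertex $v$ with $\lambda(uv)=5$; it is happy and temporally connected. The claim holds only after choosing $S$ inclusion-minimal and invoking \Cref{connectivity}.
\item The hoped-for $2n-2$ bound is unjustified, and with the argument above it is unnecessary.
\end{itemize}
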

The four temporal graphs defined in \Cref{gadget extensions no tcc} are sketched in~\Cref{fig gadget extensions}.
\begin{proof}
Let~$i\in [1,4]$.
Recall that a connector path has a girth of~$5$.
So do the temporal graphs~$\G_i$, as the distance between~$1$ and~$8$ and the distance between~$8$ and~$20$ is at least 3 each, so the additionally added vertices and edges do not create a cycle of length less than~$5$.
Assume towards a contradiction that~$\G_i$ contains a nontrivial closed tcc~$S$.
By~\Cref{tccimplication}, the above implies that~$S$ has size at least~$17$.
Moreover, $S$ has to contain at least one vertex that is not in~$[1,20]$, as we did not add any edges between these vertices, and \Cref{gadget no tcc} implies that the connector path contains no nontrivial tcc.
We now distinguish the different cases for~$i$.

\textbf{Case:} $i = 1$\textbf{.}
As~$S$ contains at least one vertex outside of~$[1,20]$, $\alpha$ is in~$S$.
However, $\alpha$ can reach only~$7$ vertices of~$[1,20]$ in~$\G_1$.
This is due to the fact that all edges incident with~$\alpha$ have a label that is larger than~$r$, which is the largest label of any edge of the red path.
Hence, a temporal path starting at~$\alpha$ can only use the incident edges to a vertex~$v\in \{8,20\}$ and then either follow the blue path from vertex~$v$ or use the edge towards the unique predecessor of~$v$ in the blue path and end there.
By definition of the blue path, this implies that~$S \subseteq \{\alpha\} \cup \{15,8,18,4,12,16,20\}$, which contradicts the fact that~$S$ has size at least~$17$.

\textbf{Case:} $i = 2$\textbf{.}
This case follows the same arguments as the previous one:
$S$ has to contain at least one of~$\{\alpha,\alpha'\}$, but both vertices can only reach vertices from~$\{\alpha,\alpha'\} \cup \{15,8,18,4,12,16,20\}$, as all the incident edges of these two vertices have labels larger than~$r$ and end only in vertices~$8$ and~$20$.
This thus contradicts the fact that~$S$ has size at least~$17$.

\textbf{Case:} $i = 3$\textbf{.}
As~$S$ contains at least one vertex outside of~$[1,20]$, $\omega$ is in~$S$.
The idea is mostly similar to the one for~$i=1$, besides that in this case, only few vertices can reach~$\omega$.
That is, $\omega$ can only be reached by~$9$ vertices of~$[1,20]$ in~$\G_3$.
This is due to the fact that all edges incident with~$\omega$ have a label that is smaller than~$b$, which is the smallest label of any edge of the blue path.
Hence, a temporal path ending at~$\omega$ can only use the incident edges from a vertex~$v\in \{1,8\}$ and previously reach the vertex~$v$ via edges of only the red path.
The only vertices that can reach~$1$ or~$8$ via edges of the red path are the vertices of~$[1,9]$.
This implies that~$S \subseteq \{\omega\} \cup [1,9]$, which contradicts the fact that~$S$ has size at least~$17$.

\textbf{Case:} $i = 4$\textbf{.}
This case follows the same arguments as the previous one:
$S$ has to contain at least one of~$\{\omega,\omega'\}$, but both vertices can only be reached by vertices from~$\{\omega,\omega'\} \cup [1,9]$, as all the incident edges of these two vertices have labels smaller than~$b$ and reach only the vertices~$1$ and~$8$.
This thus contradicts the fact that~$S$ has size at least~$17$.

As a consequence, none of the four described temporal graphs contains a nontrivial closed tcc.
\end{proof}

\begin{figure}
\centering
\scalebox{.75}{
\begin{tikzpicture}[square/.style={regular polygon,regular polygon sides=4},yscale = .8]

\begin{scope}
\smallgadget
\node[knoten] (a) at (0,-7) {$\alpha$};
\draw[thick] (20) edge node[black,fill=white] {$30$} (a);
\draw[thick,-, bend left = 40] (8) edge node[black,fill=white] {31} (a);

\end{scope}

\begin{scope}[xshift = 4cm]
\smallgadget
\node[knoten] (a) at (0,-6) {$\alpha$};
\node[knoten] (x1) at (0,-8) {$\alpha'$};
\draw[thick] (20) edge node[black,fill=white] {$30$} (a);
\draw[thick] (x1) edge node[black,fill=white] {$31$} (a);
\draw[thick,-, bend left = 40] (8) edge node[black,fill=white] {32} (x1);

\end{scope}

\begin{scope}[xshift = 8cm]
\smallgadget
\node[knoten] (o) at (0,2) {$\omega$};
\draw[thick] (1) edge node[black,fill=white] {$31$} (o);
\draw[thick,-, bend left = 40] (8) edge node[black,fill=white] {30} (o);
\end{scope}

\begin{scope}[xshift = 12cm]
\smallgadget
\node[knoten] (o) at (0,2) {$\omega$};
\node[knoten] (xk) at (0,4) {$\omega'$};
\draw[thick] (xk) edge node[black,fill=white] {$31$} (o);
\draw[thick] (o) edge node[black,fill=white] {$32$} (1);
\draw[thick,-, bend left = 40] (8) edge node[black,fill=white] {30} (xk);
\end{scope}

\end{tikzpicture}
}
\caption{Examples for the four temporal graphs described in~\ref{gadget extensions no tcc}.
The rectangular areas with the vertices~$1$, $8$, and~$20$ represent connector paths for which the largest red label~$r$ is smaller than~$30$ and the smallest blue label~$b$ is larger than~$32$.}
\label{fig gadget extensions}
\end{figure}

We also observe the following property about nontrivial closed tccs of minimal size.

\begin{lemma}\label{connectivity}
Let~$S$ be an inclusion-minimal nontrivial closed tcc in a simple undirected temporal graph~$\G$ with strict reachability.
Then, the \foot~$G[S]$ is 2-vertex-connected.  
\end{lemma}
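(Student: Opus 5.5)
We argue by contradiction with the minimality of~$S$. Since~$\G[S]$ is temporally connected, its \foot~$G[S]$ is connected, and~$|S|\geq 3$ because~$S$ is nontrivial. Suppose~$G[S]$ is not 2-vertex-connected. Then there is a cut vertex~$c\in S$, and~$G[S]\setminus\{c\}$ splits into at least two components. We fix one component~$A$ and let~$B := S\setminus (A\cup\{c\})$, which is nonempty. Our goal is to show that~$A\cup\{c\}$ or~$B\cup\{c\}$ is already a nontrivial closed tcc, which contradicts inclusion-minimality.

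The first step is to see that paths between vertices of~$A\cup\{c\}$ can be confined to~$A\cup\{c\}$. Any temporal path in~$\G[S]$ between two vertices of~$A\cup\{c\}$ that enters~$B$ must do so through~$c$ and must also return through~$c$. A temporal path is a path in the \foot, so no vertex repeats, and hence such a path cannot leave and come back. Therefore every temporal path in~$\G[S]$ between two vertices of~$A\cup\{c\}$ lies entirely in~$\G[A\cup\{c\}]$. Consequently~$\G[A\cup\{c\}]$ is temporally connected, and by the same argument so is~$\G[B\cup\{c\}]$.

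The second step is to check nontriviality. We need~$|A\cup\{c\}|\geq 3$ or~$|B\cup\{c\}|\geq 3$. Both sets are proper subsets of~$S$, since~$B$ and~$A$ are nonempty. If both sides have size exactly~$2$, then~$S=\{a,c,b\}$ is a path~$a$--$c$--$b$ in the \foot (a triangle would have no cut vertex). In a simple graph with strict reachability, going from~$a$ to~$b$ requires~$\lambda(\{a,c\})<\lambda(\{c,b\})$, and going from~$b$ to~$a$ requires~$\lambda(\{c,b\})<\lambda(\{a,c\})$. These cannot both hold, which is a contradiction. So one side has size at least~$3$, and it is a strictly smaller nontrivial closed tcc, contradicting the minimality of~$S$.

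The main obstacle is this last size case. It is the only place where the hypotheses ``simple'' and ``strict'' are used: in the non-strict or nonsimple setting, a path on three vertices with equal or multiple labels can be temporally connected. We therefore handle it explicitly through the label comparison above. The rest of the argument is the routine observation that temporal paths are vertex-simple and therefore cannot detour through a cut vertex.
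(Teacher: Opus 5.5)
Your proof is correct, and its core step is the same as the paper's: a temporal path between two vertices on one side of a cut vertex $c$ cannot cross to the other side and come back, because that would repeat $c$.

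The difference lies in which side you keep. The paper keeps a \emph{largest} component $S'$ of $G[S]-c$ together with $c$. This is nontrivial only when $|S'|\ge 2$, so the paper must treat separately the case where every component is a singleton, i.e.\ where $G[S]$ is a star. It excludes stars with $|S|\ge 4$ via the edge lower bound for temporally connected graphs ($n-1<2n-4$, \Cref{tccminedges}). It handles $|S|=3$ by noting that a temporally connected simple strict graph on three vertices is a triangle.

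You instead keep $A\cup\{c\}$ or $B\cup\{c\}$, where $B$ collects \emph{all} remaining components. Both sets are temporally connected, and their sizes sum to $|S|+1$. Hence one of them has at least three vertices whenever $|S|\ge 4$, so stars need no special treatment and no external counting result is needed.

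Your route is therefore self-contained and slightly more general. It shows that simplicity and strictness are needed only to rule out the three-vertex path. Consequently, in any undirected setting, an inclusion-minimal nontrivial closed tcc with at least four vertices has a 2-vertex-connected underlying graph.
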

\begin{proof}
Clearly, $G[S]$ is connected.
Moreover, since~$S$ is a nontrivial closed tcc, $S$ has size at least~$3$. 
For~$|S|=3$, the statement follows directly, as each temporally connected simple graph on~$3$ vertices is always a triangle and thus 2-vertex-connected.
So we assume for the remainder, that~$S$ has size at least~$4$. 
Assume towards a contradiction that~$\G[S]$ is not 2-vertex-connected.
Then, there is a vertex~$v$, such that the \foot of~$\G[S\setminus \{v\}]$ is disconnected.
Let~$S'$ be a largest connected component in~$G[S\setminus \{v\}]$.
If~$|S'| = 1$, then $G[S]$ is a star.
This implies that~$G[S]$ contains only~$n-1$ edges, which is less than~$2n-4$ by~$|S|\geq 4$.
This contradicts the assumption that~$\G[S]$ is temporally connected (see~\Cref{tccminedges}).
Thus, assume that~$S'$ contains at least two vertices.
We show that for~$S^* := S'\cup \{v\}$, $\G[S^*]$ is temporally connected, which then contradicts the minimality of~$S$.
As~$S'$ is a connected component in~$G[S\setminus \{v\}]$, each path in~$G[S]$ (and thus also each temporal path in~$\G[S]$) between distinct vertices of~$S^*$ only visits vertices of~$S^*$.
By the fact that~$\G[S]$ is temporally connected, $\G[S^*]$ is also temporally connected.
Moreover, we get that~$3 \leq |S^*| < |S|$ by~$|S'|\geq 2$ and the fact that~$G[S\setminus \{v\}]$ contains at least two connected components.
This then implies that~$S^*$ is a nontrivial closed tcc in~$\G$; a contradiction to the minimality of~$S$. 
\end{proof}

\begin{observation}\label{intermediate}
Let~$\G$ be a directed or undirected temporally connected graph with at least two vertices.
Then, for each vertex~$v$ of~$\mg$ for which there is no temporal path in~$\mg$ that contains~$v$ as an intermediate vertex, $\mg$ is still temporally connected after removing vertex~$v$.
\end{observation}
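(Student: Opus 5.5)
The plan is to argue directly from the definition of temporal connectivity. Let $\G' := \G - v$, and take two distinct vertices $x,y$ of $\G'$; both exist as vertices of $\G$ distinct from $v$. Since $\G$ is temporally connected, it contains a temporal path $P$ from $x$ to $y$. The goal is to show that $P$ avoids $v$, so that $P$ is already a temporal path in $\G'$.

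First, $v$ cannot be an endpoint of $P$, because the endpoints of $P$ are $x$ and $y$, and both differ from $v$. Second, $v$ cannot be an intermediate vertex of $P$, since by assumption no temporal path of $\G$ has $v$ as an intermediate vertex. Temporal paths are paths in the underlying graph, so every vertex of $P$ is either an endpoint or an intermediate vertex. Hence $v$ does not lie on $P$.

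It follows that every arc or edge of $P$ has both endpoints in $V(\G)\setminus\{v\}$. These arcs or edges are therefore present in the induced subgraph $\G'$ with the same time labels. So $P$ is a temporal path from $x$ to $y$ in $\G'$, using the same label sequence. This holds for every ordered pair, so $\G'$ is temporally connected.

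The argument is the same in the directed and undirected cases and is indifferent to strict versus non-strict reachability, since the path is reused verbatim. The only point needing a remark is the degenerate case where $\G'$ has a single vertex, and then temporal connectivity holds vacuously. I do not expect any real obstacle here.
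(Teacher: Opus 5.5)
Your proof is correct and is essentially the paper's argument. The paper states the observation without a separate proof, but in the proof of Claim~\ref{claim dir alpha and omega} it uses exactly your reasoning: a temporal path between two vertices other than $v$ could contain $v$ only as an intermediate vertex, so under the hypothesis every such path already lies in $\mathcal{G}[S\setminus\{v\}]$.
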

}

In the appendix, we also show that this gadget is minimal in the sense that there is no graph on less than 20 vertices with (i)~girth at least~$5$ and (ii)~two edge-disjoint spanning trees both sharing at least two leaves (vertices~1 and~20 in our gadget).
So the only way to possibly obtain a smaller gadget would be to reduce the girth to~$4$ which we deliberately avoided as discussed above. 

\newcommand{\proofOptimal}{
\begin{lemma}
Let~$G$ be a graph of girth at least~$5$ that contains two edge-disjoint spanning trees that share at least two leaves.
Then, $G$ contains at least~$20$ vertices.
\end{lemma}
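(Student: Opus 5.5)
Let $n$ be the number of vertices of $G$ and let $T_1,T_2$ be the two edge-disjoint spanning trees, with shared leaves $u$ and $w$. I plan to combine a lower bound on the number of edges forced by the two trees with the extremal bound of \Cref{atleats17}, which caps the number of edges a girth-$5$ graph can have on few vertices. Since $T_1$ and $T_2$ are edge-disjoint, $|E(G)|\ge 2n-2$. By \Cref{atleats17} this already gives $n\ge 18$, so it remains to rule out $n\in\{18,19\}$. This is where the shared leaves must be used.

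To exploit the leaves, I delete $u$ and $w$. Since each is a leaf in both trees, each has degree at least $2$ in $G$ (one edge from each tree). Removing a leaf from a spanning tree leaves a spanning tree of the rest, so $G'=G-\{u,w\}$ still contains two edge-disjoint spanning trees $T_1-\{u,w\}$ and $T_2-\{u,w\}$. Here I need $u,w$ not to be each other's only neighbour in a tree; that holds since $n\ge 18$. Hence $G'$ has $n-2$ vertices, girth at least $5$, and at least $2(n-2)-2$ edges. Applying \Cref{atleats17} to $G'$ gives $n-2\ge 18$, that is, $n\ge 20$.

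So the whole argument reduces to this leaf-deletion step plus two applications of \Cref{atleats17}. The only subtlety is verifying that deleting two common leaves preserves two edge-disjoint spanning trees, and that girth does not decrease under taking subgraphs. I do not expect a real obstacle. If the bound via $G'$ turns out weaker than needed, I would fall back to a finer count using degree-$2$ vertices in girth-$5$ extremal graphs (the table in \cite{GKL93}). I expect the direct argument to suffice.
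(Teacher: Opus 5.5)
Your proposal is correct and is essentially the paper's proof: delete the two common leaves, observe that $T_1-\{u,w\}$ and $T_2-\{u,w\}$ are still edge-disjoint spanning trees of a girth-$\ge 5$ graph on $n-2$ vertices, and apply \Cref{atleats17} to obtain $n-2\ge 18$. Your preliminary application of \Cref{atleats17} to $G$ itself needs $n>2$. This holds because two distinct shared leaves give $n\ge 2$, and a simple graph on $2$ or $3$ vertices cannot contain two edge-disjoint spanning trees; the paper notes exactly this to get $n\ge 4$. The fallback you mention is not needed.
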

\begin{proof}
Let~$n$ be the number of vertices of~$G$.
Let~$T_1$ and~$T_2$ be the two edge-disjoint spanning trees of~$G$ and let~$s$ and~$t$ be distinct common leaves of both trees.
By assumption, two such vertices exist.
Clearly, $n\geq 2$.
Moreover, $n\geq 4$, as no graph on 2 or 3 vertices contains two edge-disjoint spanning trees.
Now, $G-\{s,t\}$ still contains the two edge-disjoint spanning trees~$T_1 - \{s,t\}$ and~$T_2 - \{s,t\}$ and has a girth of at least~$5$.
This implies that~$G$ contains~$n-2 \geq 2$ vertices and at least~$2(n-2)-2$ edges.
By~\Cref{atleats17}, this implies that~$n-2\geq 18$, which implies that~$G$ contains at least~$n\geq 20$ vertices.
\end{proof}
}

\section{Overview of  the Undirected Reductions (and Implications)}\label{sec sketches}
\begin{figure}
\centering
\scalebox{.65}{
\begin{tikzpicture}[square/.style={regular polygon,regular polygon sides=4},yscale = .8]

    \begin{scope}[xshift=9cm]

\begin{scope}[xshift = 2.5cm]
\smallgadget

\node[knoten] (a) at (0,-6) {$\alpha$};

\node[knoten] (o) at (0,2) {$\omega$};

\draw[thick] (20) edge (a);
\draw[thick] (1) edge  (o);

\node[square,draw] (cc) at (8) {8};

\node (ll) at ($(1) + (-1.1,0)$) {$\mP[{\vcon[u,v]}]$};

\end{scope}

\begin{scope}
\smallgadget

\draw[thick] (20) edge (a);
\draw[thick] (1) edge  (o);
\node[square,draw] (cl) at (8) {8};

\node (ll) at ($(1) + (-1,0)$) {$\mP[{\vout[u]}]$};
\end{scope}

\begin{scope}[xshift = 5cm]
\smallgadget

\draw[thick] (20) edge  (a);
\draw[thick] (1) edge (o);
\node[square,draw] (cr) at (8) {8};

\node (ll) at ($(1) + (-1,0)$) {$\mP[{\vin[v]}]$};
\end{scope}

\draw[thick] (cl) edge (cc);
\draw[thick] (cr) edge  (cc);

\node[knoten] (u) at ($(cl) + (-1.5,0)$) {$u$};
\node[knoten] (v) at ($(cr) + (1.5,0)$) {$v$};

\draw[thick] (u) edge (cl);
\draw[thick] (v) edge (cr);
    \end{scope}

    %%strict
    
\begin{scope}[xshift = 1.25cm]
\node[knoten] (a) at (0,-6) {$\alpha$};

\node[knoten] (o) at (0,2) {$\omega$};
\end{scope}

\begin{scope}
\smallgadget

\draw[thick] (20) edge (a);
\draw[thick] (1) edge (o);
\node[square,draw] (cl) at (8) {8};

\node (ll) at ($(1) + (-1,0)$) {$\mP[{\vout[u]}]$};
\end{scope}

\begin{scope}[xshift = 2.5cm]
\smallgadget

\draw[thick] (20) edge (a);
\draw[thick] (1) edge (o);
\node[square,draw] (cr) at (8) {8};

\node (ll) at ($(1) + (-1,0)$) {$\mP[{\vin[v]}]$};
\end{scope}

\draw[thick] (cl) edge (cr);

\node[knoten] (u) at ($(cl) + (-1.5,0)$) {$u$};
\node[knoten] (v) at ($(cr) + (1.5,0)$) {$v$};

\draw[thick] (u) edge  (cl);
\draw[thick] (v) edge  (cr);

\end{tikzpicture}
}
\caption{An illustration of parts of the connector gadgets for the undirected reductions and paths between vertices of~$V$ going through the gadget. 
On the left, the parts of the gadget for the strict reduction and on the right, parts of the gadget for the happy reduction. 
The rectangular areas represent the connector paths~$\mP[{\vout[u]}]$, $\mP[{\vcon[u,v]}]$, and~$\mP[{\vin[v]}]$, where the respective vertex~$8$ of each such path denotes the docking point of that connector path, that is, the vertices~$\vout[u]$, $\vcon[u,v]$, and~$\vin[v]$, respectively.
The concrete time labels are not depicted.
}
\label{fig sketch}
\end{figure}

In this section, we present an overview of how the connector paths are implemented as subgraphs of the connector gadget in both undirected reductions (see~\Cref{fig sketch} for a sketch or~\Cref{fig connection strict} on Page~\pageref{fig connection strict} and~\Cref{fig connection happy} on Page~\pageref{fig connection happy} for more detailed illustrations). We focus here on the most important part, some details of the labeling being deferred to the appendix.
Consider the connector gadget for the directed case depicted in~\Cref{fig connectivity gadget happy}.
In a first step, we remove the vertices~$\postin[\alpha]$ and~$\preout[\omega]$, as the purpose of these vertices is to prevent the need for multiple labels on the arcs from~$\omega$ to~$\vout[\omega]$ or from~$\vin[\alpha]$ to~$\alpha$ which we now achieve by using multiple vertices in~$\mP$.
Next, for each vertex~$c$ of type~$\vin[x]$ or~$\vout[y]$, we replace vertex~$c$ by a connector path~$\mP[c]$ and identify vertex~$8$ of that connector path with~$c$, which we call the~\emph{docking point of~$\mP[c]$}. 
Additionally, we replace the arc with label~$2$ from~$c$ to~$\alpha$ by an edge between vertex~$20$ of~$\mP[c]$ and~$\alpha$, and we replace the arc with label~$10$ from~$\omega$ to~$c$ by an edge between vertex~$1$ of~$\mP[c]$ and~$\omega$.
Finally, arcs of the form~$(x,\vout[x])$, $(\vout[x],\vin[y])$, or~$(\vin[y],y)$ are simply made undirected.
The labels of the connector paths are assigned in a way that the labels on the red paths and the blue paths are the first and last $19$~labels of the temporal graph, respectively.

As for the directed case, this construction allows all vertices of the connector gadget (except~$\omega$) to reach $\alpha$ early by using the red paths of the connector paths, and it allows all vertices of the connector gadget (except~$\alpha$) to be reached late by $\omega$ by using the blue paths of the connector paths. 

To prevent the creation of triangles in the \foot (which would immediately imply the existence of a closed tcc of size~$3$), we redefine the edges of the selection gadget. To this end, we make use of several additional properties (that can be assumed without loss of generality) on the structure of the MCC instance that we reduce from.
Finally, to prevent unwanted small closed tccs, the edges of type~$\{\vout[x],\vin[y]\}$ do not exist for all edges~$\{x,y\}$ of~$G$, as some vertex pairs will already have a temporal path that uses the vertices and edges of the selection gadget (at least one-sided).

For the reduction in happy undirected graphs, to ensure that vertices of~$V$ can reach each other via temporal paths only if they are adjacent in the MCC instance, we subdivide each edge~$\{\vout[x],\vin[y]\}$ by a vertex~$d:=\vcon[x,y]$.
As this vertex must be able to reach all other vertices of the desired closed tcc, we add a connector path~$\mP[d]$ for that vertex, for which we again identify the docking point (that is, vertex~$8$) with~$d$, and for which we make (again) vertices~$1$ and~$20$ adjacent to~$\omega$ and~$\alpha$, respectively.

\subparagraph{Sketch for the implications.}
We now briefly describe how to see the stated implications.

\begin{theorem}[Implication~\ref{imp hardness question}]\label{hardness exact tcc}
Deciding whether a happy undirected temporal graph contains a closed tcc of size exactly~$k$ is \textsf{W[1]}-hard for parameter~$k$.
\end{theorem}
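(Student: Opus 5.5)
We obtain \Cref{hardness exact tcc} from the happy undirected reduction (the one sketched above, with the connector paths $\mP[c]$ for every docking point and the subdivision vertices $\vcon[x,y]$). That reduction is a polynomial-time reduction from \textsc{Multicolored Clique}, which is \W1-hard for $k$. The only missing ingredient for a parameterized reduction is that the target size must depend only on $k$. So the plan is to run the reduction on a suitably restricted MCC instance and exhibit an explicit size $k' = f(k)$ such that $\mg$ has a closed tcc of size exactly $k'$ if and only if $G$ has a multicolored $k$-clique.

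The first step is to restrict the connector gadget to the vertices relevant to a candidate clique. In the directed reduction and its undirected analogue, the whole connector gadget lies in the solution $S = V_\con \cup C$, and $|V_\con|$ depends on $|V|$. For the parameterized version we instead argue about a minimal nontrivial closed tcc $S$. By the validation mechanism (the undirected analogue of \Cref{dir happy paths between V}), two selected vertices $x,y$ in distinct colour classes need a temporal path through $\mP[\vout[x]]$, $\mP[\vcon[x,y]]$ and $\mP[\vin[y]]$. Conversely, for a clique $C$, the set
\[
S_C := C \cup \{\alpha,\omega\} \cup \text{the connector paths associated with } C \text{ and its edges}
\]
(together with the few constant-size selector vertices the undirected construction adds) should already be \TC. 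The argument is the same as in the $(\Rightarrow)$-direction: reach $\alpha$ early along red paths, traverse the selection gadget, then leave $\omega$ late along blue paths. This involves only $\Oh(k^2)$ connector paths of $20$ vertices each, so $|S_C| = f(k) = \Oh(k^2)$ and the value is the same for every clique. We set $k' := f(k)$; computing $f$ exactly is bookkeeping over the gadget.

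The second step is the converse. Take any closed tcc $S$ of size exactly $k'$. Since $k' \ge 3$, $S$ is nontrivial, and the $(\Leftarrow)$-argument of the undirected reduction applies. It shows that $S$ contains $\alpha$ and $\omega$, at least one vertex per colour class, and that selected vertices from distinct classes are adjacent in $G$. This yields a clique regardless of the exact size, so the exact-size condition is only needed in the forward direction.

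The main obstacle is the forward direction, namely that $S_C$ (and not all of $V_\con$) is temporally connected. Every connector path that remains must still be reached late from $\omega$ and reach $\alpha$ early. The selection paths from $\alpha$ to $\omega$ must use only vertices of $C$. Omitting unrelated connector paths must not break any needed path. I expect this to hold because each connector path attaches only to $\alpha$, $\omega$ and its docking neighbours, so paths through unrelated gadgets are never needed. I would verify it along the lines of the directed proof. If it turns out that some global connector vertices, for example those playing the role of $\vout[\omega]$ and $\vin[\alpha]$, are always required, I would add their constant-size contribution to $f(k)$. Finally, the reduction runs in polynomial time and $k'$ depends only on $k$, so it is a parameterized reduction from MCC and \W1-hardness follows.
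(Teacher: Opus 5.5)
Your proposal is correct and follows the paper's argument. The paper's forward lemma for the happy undirected reduction already uses your restricted set: the clique $X$ together with the $2(k-2)+\binom{k}{2}-(k-1)$ connector paths $\mP[c]$ for the in-, out- and connector vertices of the clique. These include the paths for $\alpha=x_1$ and $\omega=x_k$, since here $\alpha,\omega\in V$ and there are no extra selector vertices. This gives the clique-independent size $k'=10k^2+11k-60$, and the backward direction applies to any nontrivial closed tcc, exactly as you describe.
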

We show this statement by showing that if there is a closed tcc of nontrivial size in~$\mg$, then there is one of size exactly~$10 k^2 + 11 k - 60$ in the undirected happy reduction, where~$k$ is the size of the sought clique in the MCC instance.

Implication~\ref{imp maximality} follows because both constructions contain the edge~$e^* := \{\alpha,\vin[\alpha]\}$ which is part of every nontrivial closed tcc.
Hence, the endpoints of~$e^*$ are \emph{not} an inclusion-maximal closed tcc if and only if there is a nontrivial closed tcc in the respective temporal graph.

Implication~\ref{imp ETH} holds since the number of vertices in the reductions is~$\Oh(|V|)$ and~$\Oh(|V|^2)$ respectively, and MCC cannot be solved in $2^{o(|V|)}$~time, unless the ETH fails~\cite{IPZ01}.

Similarly to the directed case, Implication~\ref{imp Counting} follows by assuming that the MCC instance contained either no or~$2^{\Theta(|V|)}$ multicolored cliques.
As the undirected reductions construct temporal graphs with~$\Oh(|V|)$ and~$\Oh(|V|^2)$ vertices, respectively, all constructed yes-instances then contain $2^{\Theta(n)}$ and $2^{\Theta(\sqrt{n})}$ (maximal) \TC subgraphs, respectively.
This then also implies that there is some fixed~$c> 0$ such that approximating the number of (maximal) \TC subgraphs cannot be done within a factor of $2^{c\cdot n}$ and $2^{c\cdot \sqrt{n}}$, respectively, as a no-instance only contains~$\Oh(n^2)$ \TC subgraphs.

Finally, for Implication~\ref{imp approx}, we essentially proceed like in the directed case: We extend the connector gadget by additional connector paths~$\mP$ for which we do not add incident edges to the docking point (vertex~$8$), until the size of the connector gadget is at least~$(1-\epsilon)|V^*|$, for~$|V^*|$ being the number of vertices of the current temporal graph.
As we show, all the vertices of the connector gadget can always be added to a nontrivial closed tcc and the newly added connector paths do not create new nontrivial closed tccs.
Hence, similar to the directed case, it is \NP-hard to decide whether the largest \TC subgraph has trivial size or whether the largest \TC subgraph has size~$(1-\epsilon)n$.
However, as in the undirected case tccs of size~$2$ are trivial, we need to divide the factor by~$2$.
This yields the stated hardness to approximate the size of largest \TC subgraph within a factor of~$(1-\epsilon)\frac{n}{2}$.

\section{Conclusion}
In this paper, we significantly refined the lower bounds for finding \TC subgraphs (closed temporally connected components) in a given temporal graph.
We showed that even when asking for components of size at least 2 in the directed case and size at least 3 in the undirected case, the task can presumably not be performed in polynomial time.
Restricting the temporal graphs to have a constant lifetime does not change the complexity here, except in the case of undirected temporal graphs with non-strict reachability.
Our reductions imply a number of significant side results; in particular, they answer the open question~\cite{CLMS24} of whether finding a \TC subgraph of size \emph{exactly}~$k$ in the non-strict model on undirected temporal graphs can be done in FPT time for~$k$.

Following this work, there are a number of natural directions for future work:
\begin{itemize}
\item We show that determining the existence of a nontrivial \TC subgraph in the non-strict model on undirected temporal graphs can not be solved in $2^{o(\sqrt{n})}$~time. 
The trivial upper bound is at $2^{\Oh(n)}$~time.
Is it possible to improve the upper-bound to $2^{\Oh(\sqrt{n})}$~time?
\item Our work essentially only considers lower bounds. It would be interesting to see for which parameters the problem can be solved in FPT time. For example, is it possible to solve the problem in FPT time for the maximum degree or the treewidth of the \foot? 
In recent work~\cite{DDE+25}, it was shown that finding a large \TC subgraph is \NP-hard even on graphs of bounded treewidth, but the presented reduction is guaranteed to contain small nontrivial \TC subgraphs.
\item It would also be interesting to consider further models of components, for example the closed version of the round-trip components in~\cite{BSS24}, or closed components that are not completely \TC, but still highly connected, for example where half of the vertices can reach each other by temporal paths. 
These properties seem to share with \TC subgraphs the feature that they are neither hereditary nor quasi-hereditary, making their study an interesting target for future work.
  
\end{itemize}

\bibliographystyle{plainurl}
\bibliography{bib}

\longtrue
\newpage
\appendix

\section{Missing Parts of Section~\ref{sec directed}}

\subparagraph{Proof of~\Cref{dir happy paths between V}.}
\proofDirHappyClaim

\subparagraph{Proof of the first direction of the correctness.}~

We show that if~$G$ contains a clique of size~$k$, then~$\mg$ contains a nontrivial \TC subgraph.

\proofDirectedForward

\subparagraph{Proof of the second direction of the correctness.}~

We show that if~$\mg$ contains a nontrivial \TC subgraph, then $G$ contains a clique of size~$k$.

\proofDirectedBack

\section{Missing Parts of Section~\ref{sec gagdet}}
\subparagraph{Proof of~\Cref{sec tc girth}.}
\textGirth

\subparagraph{Some further observations and extensions of connector paths.}
\proofExtensionGadgets

\subparagraph{Proof of the optimality of the gadget for girth 5.}
\proofOptimal

\section{The Hardness on Undirected Happy Temporal Graphs}\label{sec hard happy}
With the previous observations and the definition of the connector paths, we are now ready to prove the second main result.

\begin{theorem}\label{hardness undir happy}
\NCTCC is \NP-hard on happy undirected temporal graphs.
\end{theorem}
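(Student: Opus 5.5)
We reduce from \textsc{Multicolored Clique}, following the blueprint of \Cref{hardness happy directed} and the sketch in \Cref{sec sketches}. Given $I=(G,k,(V_1,\dots,V_k))$, we first normalize $I$ without loss of generality: $k$ odd, every vertex of $V_i$ having neighbours in every other class, and whatever further regularity is needed so that the undirected selection gadget is triangle-free. The construction then has three parts.

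\begin{itemize}
\item \textbf{Selection gadget.} On $\{\alpha,\omega\}\cup V$ we place an undirected, triangle-free analogue of \Cref{dir sel gadget}. Its labels are chosen so that every temporal path from $\alpha$ to $\omega$ inside it visits one vertex of each class $V_1,\dots,V_k$ in order. To avoid triangles we route through edges between consecutive classes only, and between consecutive classes we add only edges that are edges of $G$.
\item \textbf{Connector gadget.} For each $x\in V\cup\{\omega\}$ we create a connector path $\mP[\vout[x]]$, and for each $y\in V\cup\{\alpha\}$ a connector path $\mP[\vin[y]]$. In every connector path, vertex $8$ is the docking point, vertex $20$ is joined to $\alpha$, and vertex $1$ is joined to $\omega$. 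All red labels are the smallest labels of the whole graph and all blue labels the largest.
\item \textbf{Validation edges.} For each edge $\{x,y\}$ of $G$ between non-consecutive classes, we add the edges $\{x,\vout[x]\}$ and $\{\vin[y],y\}$. We subdivide the edge $\{\vout[x],\vin[y]\}$ by a new vertex $\vcon[x,y]$, which is itself the docking point of its own connector path. The middle labels are arranged so that $(x,\vout[x],\vcon[x,y],\vin[y],y)$ is a temporal path, and every vertex has distinct labels on its incident edges, so the graph is happy.
\end{itemize}

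For the forward direction, take a clique $C$ and let $S$ be $C$ together with all connector paths (including those of $\alpha$, $\omega$ and the $\vcon$ vertices). We verify temporal connectivity as in the directed proof. Every connector-path vertex reaches $\alpha$ early along its red path. From $\alpha$, the selection gadget, through the vertices of $C$, reaches every vertex of $C$ and $\omega$ at middle times. From $\omega$, late blue paths reach all connector-path vertices. Pairs inside $C$ are connected either through the selection gadget or through the validation paths, which exist because $C$ is a clique. Reachability from $\omega$ to $\alpha$ and to the $V_k$ vertex goes through $\mP[\vout[\omega]]$, exactly as in the directed case.

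For the backward direction, take an inclusion-minimal nontrivial closed tcc $S$. By \Cref{connectivity}, $G[S]$ is $2$-vertex-connected. We then show $\alpha,\omega\in S$ in three steps.
\begin{enumerate}
\item $S$ cannot lie inside a single connector path, by \Cref{gadget no tcc}.
\item $S$ cannot lie inside a connector path together with only $\alpha$ and/or $\omega$ (or a subdivision vertex), by \Cref{gadget extensions no tcc} and \Cref{tccimplication}.
\item If $S$ contains vertices of $V$, the time-monotonicity argument of \Cref{dir happy paths between V} applies: a vertex of a connector path can return from its docking point only through red edges towards $\alpha$ or blue edges from $\omega$, which forces both $\alpha$ and $\omega$ into $S$.
\end{enumerate}
Once $\alpha,\omega\in S$, an $\alpha$–$\omega$ temporal path inside $S$ forces a vertex of every class into $S$, by induction on the class index as in \Cref{claim for each vert some color}. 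An undirected analogue of \Cref{dir happy paths between V} then shows that two vertices of $V$ in $S$ reach each other only if they are adjacent in $G$. Hence $S\cap V$ contains a multicolored clique.

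The main obstacle is the step placing $\alpha,\omega\in S$, that is, excluding small accidental tccs. These can mix pieces of several connector paths with $V$-vertices and subdivision vertices, for instance around short cycles through $x,\vout[x],\vcon[x,y],\vin[y],y$ and selection edges. The first line of attack is girth: show the construction has girth at least $5$ wherever possible, so that \Cref{tccimplication} forces $|S|\ge 17$. Then use the temporal bottlenecks at the docking points (labels strictly between $r$ and $b$) to show that any component touching a connector path beyond vertices reachable in \Cref{gadget extensions no tcc} must pass through both $\alpha$ and $\omega$. Where the girth argument fails near the selection gadget, fall back on a case analysis by color classes, as in Cases~1--3 of \Cref{claim dir alpha and omega}.
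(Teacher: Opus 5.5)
Your normalization only targets triangle-freeness, but $4$-cycles are just as dangerous. In a happy graph, a $4$-cycle whose two earliest edges are opposite is already \TC. You keep every edge of $G$ between consecutive classes with an unspecified proper labelling, so nothing prevents such a $C_4$ inside some $G[V_i\cup V_{i+1}]$. Then $\G$ has a nontrivial closed tcc whether or not $G$ has a $k$-clique, and the reduction is unsound as stated. The same issue undermines your backward direction. \Cref{tccimplication} and \Cref{gadget extensions no tcc} need girth at least $5$, and ``girth $5$ wherever possible, otherwise a case analysis by color classes'' is not an argument. Moreover, none of your steps 1--3 excludes a tcc $S\subseteq V\cup\{\alpha,\omega\}$, since they all assume $S$ meets a connector path. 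The missing idea is the paper's normalization of the MCC instance: singleton universal classes at positions $1,2,k-1,k$ and $2+3r$, and $G[V_i\cup V_{i+1}]$ a star or a matching for every $i$. Then every cycle of $G'[V]$ passes through two singleton classes and has length at least $6$, and the whole underlying graph has girth $5$. Also, any cycle must use non-consecutive classes while the selection labels increase with the class index, so $\G[V]$ has no nontrivial closed tcc.

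Keeping $\alpha,\omega$ outside $V$ also breaks your forward direction. With consecutive-only edges and selection labels increasing with the class index, no $x_i$ with $i\ge 2$ can reach $\alpha$, and $\omega$ cannot reach any $x_i$ with $i<k$. The directed zigzag that gave $\omega$ access to earlier classes is gone. Your validation paths exist only for edges of $G$ between non-consecutive classes, so nothing leads into $\vin[\alpha]$ or out of $\vout[\omega]$; ``exactly as in the directed case'' does not transfer. The paper instead makes $\alpha,\alpha',\omega',\omega$ the universal singleton classes $V_1,V_2,V_{k-1},V_k$. These pairs are then served by the generic paths $(x_j,\vout[x_j],\vcon[x_j,x_i],\vin[x_i],x_i)$. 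The buffers $\alpha',\omega'$ also reduce the subcases $S\cap V\subseteq\{\alpha,\alpha'\}$ and $S\cap V\subseteq\{\omega,\omega'\}$ to the graphs $\mathcal{G}_2,\mathcal{G}_4$ of \Cref{gadget extensions no tcc}.

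Finally, your ``middle labels'' must be pinned down as out $<$ out2edge $<$ edge2in $<$ selection $<$ in. The clique argument needs the out-time to be the smallest and the in-time the largest label at each $x\in V\setminus\{\alpha,\alpha',\omega,\omega'\}$. Otherwise a selection edge can be concatenated with a neighbour's validation path and certify a non-edge. The same order drives a key case of $\alpha,\omega\in S$ that your step 3 does not cover: a minimal $S$ containing a vertex of $\Vconn$ but neither $\alpha$ nor $\omega$. There, $2$-connectivity forces its $\Vin$- and $\Vout$-neighbours into $S$, but no further vertices of their connector paths. Every label at the $\Vin$-vertex then exceeds every label at the $\Vout$-vertex, so the former cannot reach the latter.
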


We again reduce from MCC.
Let~$I=(G=(V,E),k)$ be an instance of~MCC and let~$V_1 \cup \dots \cup V_k$ be a~$k$-partition of~$V$.
We can assume that~$k \mod 3 = 0$, $V_i := \{v_i^1\}$ for each~$i\in \{1,k\} \cup \{2 + 3r \mid r \in [0,\frac{k}{3}-1]\}$, $V_i := \{v_i^1, \dots, v_i^q\}$ with~$q> 1$ for each other~$i\in [1,k]$.
Moreover, we can assume that~$G[V_i \cup V_{i+1}]$ is a star or a matching for each~$i\in [1,k-1]$.\footnote{To achieve this property, we can simply duplicate each color class of a normal instance of MCC, add a matching between the two copies of each vertex, and add sufficiently many color classes to the instance that all contain only a single vertex, which is adjacent to each other vertex of the graph.}

\begin{figure}
\centering
\begin{tikzpicture}[xscale=1.2,yscale=.8]

\tikzstyle{k}=[circle,fill=white,draw=black,minimum size=8pt,inner sep=2pt]

\foreach \y in {3,4,6,7,9,10}{

\draw[rectangle] (.3 + \y,-2) -- (-.3 + \y,-2) -- (-.3 + \y,2) -- (.3 + \y,2) -- (.3 + \y,-2);
\node (v1) at (0 + \y,-2.4) {$V_{\y}$};

\foreach \x in {1,2,3,4,5} 
\node[k] (\y\x) at ($(0 + \y,-2.25) +  (0,\x*.75)$) {};
}
\foreach \y/\lab in {1/\alpha,2/\alpha',5/ ,8/ ,11/\omega',12/\omega}{

\draw[rectangle] (.3 + \y,-2) -- (-.3 + \y,-2) -- (-.3 + \y,2) -- (.3 + \y,2) -- (.3 + \y,-2);
\node (v1) at (0 + \y,-2.4) {$V_{\y}$};

\node[k] (\y1) at ($(0 + \y,0) +  (0,0)$) {$\lab$};
}

\draw[thick] (11) to (21);
\draw[thick] (111) to (121);

\foreach \y/\yy in {2/3, 5/4, 5/6, 8/7, 8/9, 11/10}{

\foreach \x in {1,2,3,4,5} 
\draw[thick] (\y1) to (\yy\x);

}

\foreach \y/\yy in {3/4, 6/7, 9/10}{

\foreach \x in {1,2,3,4,5} 
\draw[thick] (\y\x) to (\yy\x);

}

\end{tikzpicture}
\caption{An example for the \foot~$G'[V]$ in the reduction of~\Cref{hardness undir happy} for an input instance of MCC, where~$k=12$ and all color classes of size more than one contain exactly 5 vertices.}
\label{fig selection edges}
\end{figure}

\begin{figure}
\centering
\scalebox{.75}{
\begin{tikzpicture}[square/.style={regular polygon,regular polygon sides=4},yscale = .8]

\begin{scope}[xshift = 5cm]
\smallgadget

\node[knoten] (a) at (0,-8) {$\alpha$};

\node[knoten] (o) at (0,4) {$\omega$};

\draw[thick] (20) edge node[black,fill=white] {$\in$ $\alpha$-times} (a);
\draw[thick] (1) edge node[black,fill=white] {$\in$ $\omega$-times} (o);

\node[square,draw] (cc) at (8) {8};

\node (ll) at ($(1) + (-1.1,0)$) {$\mP[{\vcon[u,v]}]$};

\end{scope}

\begin{scope}
\smallgadget

\draw[thick] (20) edge node[black,fill=white] {$\in$ $\alpha$-times} (a);
\draw[thick] (1) edge node[black,fill=white] {$\in$ $\omega$-times} (o);
\node[square,draw] (cl) at (8) {8};

\node (ll) at ($(1) + (-1,0)$) {$\mP[{\vout[u]}]$};
\end{scope}

\begin{scope}[xshift = 10cm]
\smallgadget

\draw[thick] (20) edge node[black,fill=white] {$\in$ $\alpha$-times} (a);
\draw[thick] (1) edge node[black,fill=white] {$\in$ $\omega$-times} (o);
\node[square,draw] (cr) at (8) {8};

\node (ll) at ($(1) + (-1,0)$) {$\mP[{\vin[v]}]$};
\end{scope}

\draw[thick] (cl) edge node[black,fill=white] {$\in$ out2edge-times} (cc);
\draw[thick] (cr) edge node[black,fill=white] {$\in$ edge2in-times} (cc);

\node[knoten] (u) at ($(cl) + (-3,0)$) {$u$};
\node[knoten] (v) at ($(cr) + (3,0)$) {$v$};

\draw[thick] (u) edge node[black,fill=white] {out-time} (cl);
\draw[thick] (v) edge node[black,fill=white] {in-time} (cr);

\end{tikzpicture}
}
\caption{An illustration of the paths between vertices of~MCC instance in the reduction behind~\Cref{hardness undir happy}.
For vertices~$v\in V_i$ and~$u\in V_j$ with~$\{u,v\}\in E$ and~$j > i +1$, we have a temporal path from~$u$ to~$v$ in our temporal graph, namely, the path $(u,\vout[u],\vcon[u,v],\vin[v],v)$.
Similar to~\Cref{fig gadget extensions}, the rectangular areas indicate the connector paths~$\mP[{\vout[u]}]$, $\mP[{\vcon[u,v]}]$, and~$\mP[{\vin[v]}]$, where the respective vertex~$8$ of each such path denotes the docking point of that connector path, that is, the vertices~$\vout[u]$, $\vcon[u,v]$, and~$\vin[v]$, respectively.}
\label{fig connection happy}
\end{figure}

\subsection{Construction}
We initialize the \foot~$G' := (V',E')$ of~$\G$ as the graph~$(V,E' := \bigcup_{i=1}^{k-1} E_G(V_i,V_{i+1}))$.
See~\Cref{fig selection edges} for an example.
First we argue that~$G'$ has a girth of at least~$6$ so far. 
As~$G[V_i \cup V_{i+1}]$ is a star or a matching for each~$i\in [1,k-1]$, every cycle in~$G'$ contains at least two vertices for which the color class has size~$1$.
Since there is no~$j\in [1,k-2]$ with~$|V_j| = |V_{j+2}| = 1$, this implies that each cycle in~$G'$ has length at least~$6$.

Recall that~$|V_1| = |V_2| = |V_{k-1}| = |V_{k}| = 1$. 
We use the shorthands~$\alpha := v_1^1$, $\alpha' := v_2^1$, $\omega:= v_k^1$, and~$\omega' := v_{k-1}^1$.
Observe, that in contrast to the reduction for the directed temporal graphs, $\alpha$ and~$\omega$ are vertices of~$V$.
Note that we can assume that each of the vertices of~$\{\alpha,\alpha',\omega,\omega'\}$ is adjacent to every other vertex of~$V$, as each of these four vertices is part of every clique of size~$k$ in~$G$, and removing a nonneighbor of any of these four vertices would result in an equivalent instance of~MCC.

We now define several in-vertices and out-vertices similar to the reduction behind~\Cref{hardness happy directed}, as well as some additional connector-vertices that essentially act as subdivisions of the arcs from out- to in-vertices from the reduction behind~\Cref{hardness happy directed}.
We initialize three sets~$\Vin$, $\Vout$, and~$\Vconn$ as empty sets.
For each~$v\in V \setminus \{\omega,\omega'\}$, we add a vertex~$\vin[v]$ to~$\Vin$, and for each~$v\in V \setminus \{\alpha,\alpha'\}$, we add a vertex~$\vout[v]$ to~$\Vout$.
Now, we define the connector-vertices.
For each edge~$\{u,v\}\in E$ with~$u\in V_i$, $v\in V_j$, and~$j \geq i+2$, we add a vertex~$\vcon[v,u]$ to~$\Vconn$.  

For each~$c\in \Vin \cup \Vout \cup \Vconn$, we now add a copy of a connector path~$\mP[c]$ to~$G'$ (see~\Cref{def con path}), where we identify the vertex~$8$ of~$\mP[c]$ with~$c$, to which we may refer to as the~\emph{docking point} of~$\mP[c]$.
This makes~$\Vin \cup \Vout \cup \Vconn$ a subset of the vertices of~$G'$.
We call vertex~$1$ and~$20$ of~$\mP[c]$ the~\emph{source} and~\emph{sink} of~$\mP[c]$ respectively.

We complete the construction of the \foot by adding edges as follows:
\begin{enumerate}
\item For each path connector gadget~$\mP[c]$, we add the edge~$\{t^c,\alpha\}$ to~$G'$, where~$t^c$ is the sink of~$\mP[c]$.
\item For each vertex~$v\in V \setminus \{\alpha,\alpha'\}$, we add an edge~$\{v,\vout[v]\}$ to~$G'$ (called the~\emph{out-edge of~$v$}).
\item For each~$\vcon[u,v]\in \Vconn$, we add the edge~$\{\vout[u],\vcon[u,v]\}$ to~$G'$.
\item For each~$\vcon[u,v]\in \Vconn$, we add the edge~$\{\vcon[u,v],\vin[v]\}$ to~$G'$.
\item For each vertex~$v\in V \setminus \{\omega,\omega'\}$, we add an edge~$\{\vin[v],v\}$ to~$G'$ (called the~\emph{in-edge of~$v$}).
\item For each path connector gadget~$\mP[c]$, we add the edge $\{\omega,s^c\}$ to~$G'$, where~$s^c$ is the source of~$\mP[c]$.
\end{enumerate}

This completes the definition of the \foot~$G'$.
Some of the edges incident with vertices of~$\Vin\cup \Vout\cup \Vconn$ are depicted in~\Cref{fig connection happy}.

\begin{table}
\caption{The different intervals of time labels and the edges that receive labels from these intervals.
The intervals 'out-time' and 'in-time' have length one.
If an interval is below another interval in this table, then each label of that interval is strictly larger than the largest label of the interval above.
For example, all labels of the selection-times are strictly smaller than all labels of the~$\omega$-times.}
\centering
\begin{tabular}{l|l}
name &  receiving edges \\\hline
red-times & edges of the red paths of each connector path\\
$\alpha$-times & edges between~$\alpha$ and the sink vertex of each connector path\\
out-time & edges~$\{\{v,\vout[v]\}\mid v\in V \setminus \{\alpha,\alpha'\}\}$\\
out2edge-times &  edges~$\{\{\vout[u],\vcon[u,v]\}\mid \vcon[u,v]\in \Vconn\}$\\
edge2in-times  &  edges~$\{\{\vcon[u,v], \vin[v]\}\mid \vcon[u,v]\in \Vconn\}$\\
selection-times & edges between the vertices of~$V$ in~$\G$: $\bigcup_{i=1}^{k-1}E_G(V_i,V_{i+1})$\\
in-time &  edges~$\{\{v,\vin[v]\}\mid v\in V \setminus \{\omega,\omega'\}\}$\\
$\omega$-times & edges between~$\omega$ and the source vertex of each connector path\\
blue-times  & edges of the blue paths of each connector path
\end{tabular}
\label{tab times}
\end{table}

\begin{observation}
$G'$ has a girth of~$5$.
\end{observation}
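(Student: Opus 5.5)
Both parts of the claim are needed: girth at most~$5$, which is immediate, and girth at least~$5$, which takes a case analysis. For the upper bound it suffices to exhibit a $5$-cycle, and each connector path supplies one. The red path $(1,\dots,20)$ contains the subpath $2,3$, and the blue path contains the edges $\{10,3\}$ and $\{2,7\}$, but we need a closed walk of length $5$. Take the blue edges $\{1,5\}$ and $\{5,9\}$ together with the red path $9,8,\dots,1$; that cycle is longer than $5$, so instead we will search for red edges $\{i,i+1\}$ combined with blue edges $\{a,b\}$ where $|a-b|$ is small. The blue edge $\{15,8\}$ closes with the red path $8,\dots,15$, which is too long. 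We will find a concrete $5$-cycle by checking the list of blue edges. Candidates are pairs of blue edges whose endpoints are close on the red path, for example $\{17,10\}$ and $\{6,17\}$, or $\{14,6\}$ and $\{19,14\}$. Since the paper already asserts that connector paths have girth~$5$ (in the proof of \Cref{gadget extensions no tcc}), we may simply cite that, and then no $5$-cycle needs to be exhibited by hand.

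For the lower bound, we must show that $G'$ has no cycle of length $3$ or~$4$. The first step is to classify the edges of $G'$ into the following types:
\begin{enumerate}
\item edges inside a single connector path $\mP[c]$;
\item edges $\{s^c,\omega\}$ and $\{t^c,\alpha\}$;
\item edges between $V$ and $\Vin\cup\Vout$, namely the in- and out-edges;
\item edges between $\Vout\cup\Vin$ and $\Vconn$;
\item selection edges inside $V$.
\end{enumerate}
Inside $V$, the earlier argument already gives girth at least~$6$, and inside each $\mP[c]$ the girth is~$5$.

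Now consider a short cycle $C$ that is not contained in a single one of these parts. The key structural fact is that each connector path is attached to the rest of $G'$ only through its three vertices $1$, $8$ and $20$, and these are pairwise at distance at least~$3$ inside $\mP[c]$. Any cycle that enters and leaves $\mP[c]$ therefore spends at least $3$ edges inside it, plus at least $2$ edges outside, giving length at least~$5$. So a cycle of length at most~$4$ cannot pass through the interior of any connector path.

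It remains to consider a short cycle that avoids the interiors of connector paths entirely. The only edges such a cycle can use are the selection edges, the in- and out-edges, and the edges between $\Vout$, $\Vconn$ and $\Vin$. In this part of the graph, every vertex $\vout[u]$ has exactly one neighbor in $V$, namely $u$, and likewise every $\vin[v]$ has exactly one neighbor in $V$, namely~$v$. Each $\vcon[u,v]$ has only the two neighbors $\vout[u]$ and $\vin[v]$, apart from its connector-path edges.

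Hence a cycle that leaves $V$ must follow the route $u,\vout[u],\vcon[u,v],\vin[v],v$ and then return from $v$ to $u$ through $V$. This segment already has $4$ edges. Moreover, $\vcon$ exists only when the color classes of $u$ and $v$ differ in index by at least~$2$, so $u$ and $v$ are not adjacent in $G'$, and the return path within $V$ needs at least $2$ more edges. The cycle therefore has length at least~$6$. Finally, a cycle lying entirely inside $V$ has length at least $6$, as shown earlier.

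The main obstacle is the distance claim inside the connector path: that the pairwise distances among $1$, $8$ and $20$ are all at least~$3$, which has to be checked against the explicit blue edge list. Here we reuse the assertion made in the proof of \Cref{gadget extensions no tcc}.
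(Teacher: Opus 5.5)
Your decomposition is the right one, and it matches the reasoning the paper leaves implicit. The paper states this observation without proof; it rests on three facts: $G'[V]$ has girth at least~$6$, a connector path has no $3$- or $4$-cycles, and the distance remark in the proof of \Cref{gadget extensions no tcc}. Several steps, however, are wrong or incomplete as written.

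\textbf{The upper bound.} You never exhibit a $5$-cycle, and the candidates you list do not give one. The edges $\{6,17\},\{17,10\}$ close with the red path $10,9,8,7,6$ into a $6$-cycle. For $\{19,14\},\{14,6\}$, the red path from $6$ to $19$ passes through $14$, so there is no simple cycle at all. A witness is immediate: the blue edge $\{1,5\}$ together with the red path $1,2,3,4,5$.

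\textbf{The distances among $1,8,20$.} The assertion you cite only covers the pairs $\{1,8\}$ and $\{8,20\}$. You also need $d(1,20)\geq 3$. This holds because $1$ and $20$ are nonadjacent and $N(1)=\{2,5\}$, $N(20)=\{16,19\}$ are disjoint. Your claim of ``at least $2$ edges outside'' also needs a justification: no edge of $G'$ outside $\mP[c]$ joins two of its vertices $1,8,20$. Indeed, their outside neighbours are $\omega$, $\alpha$, and (for vertex~$8$) one vertex of $V$ or docking points of other connector paths.

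\textbf{Cycles avoiding connector-path edges.} You claim that a cycle leaving $V$ must follow $u,\vout[u],\vcon[u,v],\vin[v],v$ and then return inside $V$. This is false. From $\vin[v]$ the cycle may continue to another connector vertex and another out-vertex. There can also be cycles lying entirely in $\Vout\cup\Vconn\cup\Vin$, for instance through $\vout[u_1],\vin[v_1],\vout[u_2],\vin[v_2]$ and four connector vertices whenever all four edges $\{u_a,v_b\}$ exist in $G$. Your split into ``leaves $V$'' and ``lies inside $V$'' does not cover these.

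The fix is local. Take a cycle that avoids connector-path edges and contains a vertex outside $V$. It must contain some $\vcon[u,v]$, since an in- or out-vertex has exactly one neighbour in $V$ and all its other such neighbours are connector vertices. Hence the cycle contains the path $\vout[u],\vcon[u,v],\vin[v]$, plus a further cycle-neighbour at each end. A cycle of length at most $4$ would then force $\vout[u]$ and $\vin[v]$ to be adjacent or to share a second common neighbour. Both are impossible, because $u\neq v$ and there is only one connector vertex per pair. This is the argument you need in place of the fixed-route claim; with the three repairs above, the proof is complete.
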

We complete the construction by defining the labels of the edges.
To this end, we define several intervals that are pairwise disjoint and assign labels of these intervals to specific edge sets. 
See~\Cref{tab times} for a general overview.

\begin{itemize}
\item The first 19 time steps are the~\emph{red-times}. 
These are assigned to the red paths of all connector paths.
\item Afterwards, there are the~\emph{$\alpha$-times}.
These are assigned to the edges $\{\{t^c,\alpha\}\mid c\in \Vin\cup\Vout\cup\Vconn\}$.
Here, recall that~$t^c$ denotes the sink of the connector path~$\mP[c]$.
The labeling is done in an arbitrary proper way to these edges.
\item Afterwards, there is the~\emph{out-time}.
This time interval has length only~$1$ and thus consists of a single label. 
This label is assigned to the edges $\{\{v,\vout[v]\}\mid v\in V \setminus \{\alpha,\alpha'\}\}$.
\item Afterwards, there are the~\emph{out2edge-times}.
These are assigned to the edges between the vertices of~$\Vout$ and~$\Vconn$ in an arbitrary proper way.
\item Afterwards, there are the~\emph{edge2in-times}.
These are assigned to the edges between the vertices of~$\Vconn$ and~$\Vin$ in an arbitrary proper way.
\item Afterwards, there are the~\emph{selection-times}.
These are assigned to the edges between the vertices of~$V$ in an arbitrary proper way, such that all labels assigned to edges of~$E(V_i,V_{i+1})$ are strictly smaller than all labels assigned to edges of~$E(V_j,V_{j+1})$ for~$1\leq i < j < k$. 
\item Afterwards, there is the~\emph{in-time}.
This time interval has length only~$1$ and thus consists of a single label. 
This label is assigned to the edges $\{\{\vin[v],v\}\mid v\in V \setminus \{\omega,\omega'\}\}$.
\item Afterwards, there are the~\emph{$\omega$-times}.
These are assigned to the edges $\{\{\omega,s^c\}\mid c\in \Vin\cup\Vout\cup\Vconn\}$.
Here, recall that~$s^c$ denotes the source of the connector path~$\mP[c]$.
The labeling is done in an arbitrary proper way to these edges.
\item The last 19 time steps are the~\emph{blue-times}. 
These are assigned to the blue paths of all connector paths.
\end{itemize}

This completes the construction. 
Let~$\G$ be the resulting temporal graph.

We show that there is a nontrivial closed tcc in~$\G$ if and only if~$G$ has a clique of size~$k$.
In the proof, we will highly rely on the relative order of the time intervals (see~\Cref{tab times}).

\subsection{A size-$k$ clique in~$G$ implies a nontrivial closed tcc in~$\G$}
In this subsection, we show one direction of the correctness of the reduction.

\begin{lemma}\label{undirected direction forward}
If there is a clique of size~$k$ in~$G$, then there is a nontrivial closed tcc of size~$k + 20\cdot \left(2\cdot (k-2) +  \binom{k}{2} - (k-1) \right)\in \Oh(k^2)$ in~$\G$.
\end{lemma}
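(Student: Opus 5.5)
Given a multicolored clique $C=\{v_1,\dots,v_k\}$ with $v_i\in V_i$, we take
\[
S := C \cup \bigcup_{c} V(\mP[c]),
\]
where $c$ ranges over $\vin[v]$ for $v\in C\setminus\{\omega,\omega'\}$, over $\vout[v]$ for $v\in C\setminus\{\alpha,\alpha'\}$, and over $\vcon[u,v]$ for every pair $u,v\in C$ whose colour classes are at distance at least $2$. This gives $2(k-2)$ in/out connector paths and $\binom{k}{2}-(k-1)$ connector-vertex paths, each with $20$ vertices, so $|S|$ equals the claimed value. Note that $\alpha,\alpha',\omega,\omega'\in C$ automatically, since these classes are singletons. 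We then verify temporal connectivity of $\G[S]$ by the same three-phase scheme as in the directed forward direction, relying on the interval order in \Cref{tab times}.

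\textbf{Step 1 (gadget vertices reach $\alpha$ early; $\omega$ reaches gadget vertices late).} Every vertex of a chosen $\mP[c]$ follows the red path (red-times) to the sink and then takes the edge to $\alpha$ in the $\alpha$-times. Symmetrically, $\omega$ uses the edge to the source $s^c$ in the $\omega$-times and then the blue path (blue-times) to every vertex of $\mP[c]$. Since the red path is consecutively increasing from $1$ to $20$, every vertex can move forward along it; the blue path similarly reaches everything from source $1$.

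\textbf{Step 2 ($\alpha$ reaches every vertex of $C$ and $\omega$ within the selection-times).} We use the chain $v_1v_2\cdots v_k$. Each edge $\{v_i,v_{i+1}\}$ exists because $C$ is a clique and $E'$ contains $E_G(V_i,V_{i+1})$. Its labels increase with $i$, so $(\alpha=v_1,v_2,\dots,v_k=\omega)$ is a temporal path inside the selection-times. The same chain gives every $v_i$ a path forward to $\omega$ within the selection-times. Composing, every vertex of $S$ reaches $\alpha$ (by the $\alpha$-times), then each $v_i$ and $\omega$ (by the selection-times), then every gadget vertex (by the $\omega$-times and blue-times). This also handles $v_i$ to every gadget vertex, and gadget to gadget. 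Reaching $\omega$ itself before the $\omega$-times is fine.

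\textbf{Step 3 (remaining pairs: $v_j$ must reach $v_i$ for $i<j$, and $v_i$ must reach $\alpha$).} If $i\ge 3$ or the pair is not adjacent-class, we use the validation route
\[
(v_j,\vout[v_j],\vcon[v_j,v_i],\vin[v_i],v_i),
\]
traversed in out-time, out2edge, edge2in, in-time order. The orientation of $\vcon$ has to match the one the construction uses for $j\ge i+2$, and I must make sure $\vout[v_j]$ and $\vin[v_i]$ exist, which holds when $v_j\notin\{\alpha,\alpha'\}$ and $v_i\notin\{\omega,\omega'\}$. For $i=j-1$ the validation route does not exist. There we instead use the selection edge $\{v_{j-1},v_j\}$ and check whether moving backward along it is possible: the chain was used forward, but a single edge is traversable in either direction, so $v_j$ reaches $v_{j-1}$ directly. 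For $j=2$ ($\alpha'$ to $\alpha$) the direct edge also works. Vertices $v_j$ reach $\alpha$ either via the same edge argument or via $\vout[v_j]$'s red path to its sink and then the $\alpha$-times. However, the $\alpha$-times precede the out-time, so that second route fails; I would therefore route instead through $\vout[v_j]$ and $\mP[\vcon]$ to $\vin[v_1]$, then use the in-time edge to $\alpha$. Finally, $v_i$ reaches gadget vertices via Step 2.

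\textbf{Main obstacle.} The hard part is the bookkeeping of exactly which validation paths exist for pairs involving $\alpha,\alpha',\omega,\omega'$ and adjacent classes, given the asymmetric creation of $\vcon$ only for $j\ge i+2$. I would handle this by an explicit case table over $(i,j)$.
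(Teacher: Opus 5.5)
Your proposal is correct and follows the paper's proof. It uses the same set $S$, lets gadget vertices reach $\alpha$ along red paths and be reached from $\omega$ along blue paths, uses the forward selection path $(\alpha=v_1,\dots,v_k=\omega)$, and for $j>i$ takes either the direct edge ($j=i+1$) or the validation route $(v_j,\vout[v_j],\vcon[v_j,v_i],\vin[v_i],v_i)$ ($j\ge i+2$). Your anticipated ``main obstacle'' does not arise, and the clause ``$i\ge 3$ or'' in Step~3 should be dropped: $j\ge i+2$ alone forces $v_j\notin\{\alpha,\alpha'\}$ and $v_i\notin\{\omega,\omega'\}$, and $\vcon[v_j,v_i]$ exists because $\{v_i,v_j\}\in E$, so no case table over $(i,j)$ is needed.
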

For the remainder of the subsection assume that~$G$ contains a clique~$X$ of size~$k$.
By construction, for each~$i\in [1,k]$, there is a unique vertex~$x_i$ in~$X\cap V_i$.
Note that~$x_1 = \alpha$, $x_2 = \alpha'$, $x_{k-1} = \omega'$, and~$x_k = \omega$.

Recall that for each~$x_i\in X \setminus \{\omega,\omega'\}$, there is a vertex~$\vin[x_i]$ in~$\Vin$.
Similarly, for each~$x_i\in X \setminus \{\alpha,\alpha'\}$, there is a vertex~$\vout[x_i]$ in~$\Vout$.
Furthermore, for each~$1 \leq i < i+1 < j \leq k$, there is a vertex~$\vcon[x_i,x_j]$ in~$\Vconn$.
The latter is the case since~$\{x_i,x_j\}$ is an edge of~$G$, as~$X$ is a clique in~$G$.

Let~$C_X$ denote these vertices from~$\Vin\cup\Vout\cup\Vconn$.
That is, $$C_X:= \{\vin[x_i]\mid 1 \leq i \leq k-2\}\cup \{\vout[x_i]\mid 3\leq i \leq k\} \cup \{\vcon[x_j,x_i] \mid 1 \leq i < i+1 < j \leq k\}.$$

We show that~$S:= X \cup \{V(\mP[c]) \mid c\in C_X\}$ is a closed tcc.
Here, recall that~$\mP[c]$ denotes the connector path that contains~$c$ and that we denote the source of~$\mP[c]$ by~$s^c$ and the sink of~$\mP[c]$ by~$t^c$.

First, we show that each vertex of~$S\setminus X$ can reach each vertex of~$S$ in~$\G[S]$ and vice versa.
\begin{lemma}\label{reachability outside V is trivial}
In~$\G[S]$, each vertex of~$S\setminus X$ can reach each vertex of~$S$, and each vertex of~$S$ can reach each vertex of~$S\setminus X$. 
\end{lemma}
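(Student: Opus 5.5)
The plan is to route every path through $\alpha$ early and through $\omega$ late, exploiting the ordering of the time intervals in \Cref{tab times}. I will establish four facts.

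\textbf{(a) Every vertex of $S\setminus X$ reaches $\alpha$ within the red- and $\alpha$-times.}
Take $c\in C_X$ and a vertex $u$ of $\mP[c]$. Since the red path $(1,\dots,20)$ has consecutively increasing labels, $u$ reaches the sink $t^c=20$ along the red path during the red-times. It then crosses $\{t^c,\alpha\}$, whose label lies in the $\alpha$-times and is therefore larger.

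\textbf{(b) $\omega$ reaches every vertex of $S\setminus X$ within the $\omega$- and blue-times.}
From $\omega$, first use $\{\omega,s^c\}$ in the $\omega$-times. From the source $1$, the blue path is a Hamiltonian path of $\mP[c]$ starting at $1$ with increasing labels, all in the blue-times. So every vertex of $\mP[c]$ is reached.

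\textbf{(c) $\alpha$ reaches every vertex of $X$, and then $\omega$, by a path using only labels strictly between the $\alpha$-times and the $\omega$-times.}
\begin{itemize}
\item The selection edges on $x_1x_2\cdots x_k$ exist because $X$ is a clique and consecutive classes are joined in $G'$. Their labels increase with $i$, by the construction of the selection-times. Hence $(\alpha=x_1,x_2,\dots,x_k=\omega)$ is a temporal path lying entirely in the selection-times.
\item This same path reaches every $x_i$, and it reaches $\omega$ before the $\omega$-times begin.
\end{itemize}

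\textbf{(d) Every $x_i\in X$ reaches $\omega$ with last label before the $\omega$-times, and is reached from $\alpha$ after the $\alpha$-times.}
The second half of (d) is contained in (c). For the first half, $x_i$ reaches $\omega$ as follows:
\begin{itemize}
\item For $i\ge 3$, start with the out-edge $\{x_i,\vout[x_i]\}$ (out-time). Then use $\vout[x_i]\to\vcon[x_i,x_j]\to\vin[x_j]$ (out2edge-times, then edge2in-times) for some $j\le i-2$; the vertex $\vcon[x_i,x_j]$ belongs to $C_X$ since $X$ is a clique. Finish with the in-edge to $x_j$ (in-time). The natural choice is $j=1$, so the path arrives at $x_1=\alpha$ and ends at $\alpha$.
\end{itemize}
This last step is the subtle part. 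The in-time comes \emph{after} the selection-times, so after entering $\alpha$ through the in-edge, the path can no longer follow the selection path forward to $\omega$. I therefore need a different way to arrive at $\omega$ early enough. The plan is:
\begin{itemize}
\item Simply walk forward along the selection edges $x_i,x_{i+1},\dots,x_k$. These labels increase and lie in the selection-times, which precede the $\omega$-times. This works for every $i$, including $i=1,2$.
\end{itemize}
So (d) follows from the selection path alone, without any detour through the connector paths.

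\textbf{Combining the facts.}
I split into three cases according to where the endpoints lie.
\begin{itemize}
\item \emph{Source $u\in S\setminus X$, any target $w\in S$.} By (a), $u$ reaches $\alpha$ during the $\alpha$-times. By (c), $\alpha$ then reaches every vertex of $X$ and $\omega$ using later labels. If $w\in S\setminus X$, continue from $\omega$ via (b).
\item \emph{Source $x\in X$, target $w\in S\setminus X$.} By (d), $x$ reaches $\omega$ within the selection-times. Then (b) applies, since all of its labels are in the $\omega$-times or blue-times, which come later.
\item \emph{Endpoints $\alpha$ or $\omega$ themselves.} Reaching $\alpha$ is covered by (a). The target $\alpha\in X$ from $S\setminus X$ is trivial via (a). The source $\omega$ reaching $S\setminus X$ is (b).
\end{itemize}

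\textbf{Main obstacle.}
The step I expect to be delicate is checking that all the edges used actually lie inside $\G[S]$. This requires:
\begin{itemize}
\item the connector paths $\mP[c]$ for $c\in C_X$ are fully contained in $S$;
\item the sink and source edges to $\alpha$ and $\omega$ are present in the induced subgraph;
\item the strict inequalities between the consecutive label intervals of \Cref{tab times} hold, which matters because the setting is happy and hence strict.
\end{itemize}
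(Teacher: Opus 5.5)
Your proposal is correct and follows essentially the same route as the paper. Each vertex of a connector path follows the red path to the sink and crosses to $\alpha$ in the $\alpha$-times; the selection path $(\alpha=x_1,\dots,x_k=\omega)$ then reaches all of $X$ and $\omega$ during the selection-times; finally, $\omega$ reaches every connector-path vertex via the source edge and the blue path in the $\omega$- and blue-times. The abandoned detour through $\vout[x_i]$, $\vcon[x_i,x_j]$, $\vin[x_j]$ in (d) is unnecessary and can be deleted, since walking forward along the selection path, as you ultimately do, is exactly the paper's argument.
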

\begin{proof}
For each~$c\in C_X$, each vertex of~$V(\mP[c])$ can reach the sink~$t^c$ by following the red path in~$\mP[c]$, thus arriving at~$t^c$ prior to the~$\alpha$-times.
Afterwards, the edge~$\{t^c,\alpha\}$ can be traversed during the~$\alpha$-times, thus reaching~$\alpha$ prior to the selection-times.
Afterwards, there is the temporal path~$P_X = (\alpha = x_1, \dots, x_k = \omega)$ during the selection-times.
This path exists due to the fact that~$X$ is a clique in~$G$ and~$G'[V]$ contains all edges between consecutive color classes.
Moreover, the path is a temporal path, that is, its labels are strictly increasing, since by definition of the labels, the largest label assigned to any edge of~$E(V_i,V_{i+1})$ is strictly smaller than the smallest label assigned to~$E(V_j,V_{j+1})$ for~$1\leq i < j < k$.
Thus, this path reaches~$x_k = \omega$ (and all other vertices of~$X$) prior to the~$\omega$-times.
So far, this implies that each vertex of~$S\setminus X$ can reach each vertex of~$X$ prior to the~$\omega$-times and each vertex of~$X$ can reach~$\omega$ prior to the~$\omega$-times.
To show that each vertex of~$S\setminus X$ can reach each vertex of~$S$ in~$\G[S]$ and vice versa, it suffices to show that~$\omega$ can reach every vertex of~$S \setminus X$ when starting with an edge that receives a label from the~$\omega$-times.
As for each~$c\in C_X$, the edge~$\{\omega, s^c\}$ can be traversed during the~$\omega$-times and afterwards following the blue path of~$\mP[c]$ at the blue-times, $\omega$ can reach each vertex of~$\mP[c]$ when starting with an edge that receives a label from the~$\omega$-times.
As~$S\setminus X = \bigcup_{c\in C_X} V(\mP[c])$, this then implies that each vertex of~$S\setminus X$ can reach each vertex of~$S$ in~$\G[S]$ and vice versa.
\end{proof}

It thus remains to show that the vertices of~$X$ can reach each other in~$\G[S]$.
\begin{lemma}
In~$\G[S]$, each vertex of~$X$ can reach each vertex of~$X$. 
\end{lemma}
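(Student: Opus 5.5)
We need, for every ordered pair $(x_i,x_j)$ of distinct vertices of $X$, a temporal path from $x_i$ to $x_j$ in $\G[S]$. The plan is a case split on the relative order of $i$ and $j$, using either the selection-time path $P_X = (x_1,\dots,x_k)$ or a route through the connector vertices of $C_X$.

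\textbf{Case $i<j$.} We follow the subpath $(x_i,x_{i+1},\dots,x_j)$ of $P_X$. Its labels lie in the selection-times and strictly increase along the path, because all labels of $E(V_\ell,V_{\ell+1})$ are smaller than all labels of $E(V_{\ell'},V_{\ell'+1})$ for $\ell<\ell'$. These edges exist in $G'$ since $X$ is a clique and $G'[V]$ contains all edges of $G$ between consecutive color classes. So this case is immediate.

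\textbf{Case $i>j$ with $i\ge j+2$.} We use the path $(x_i,\vout[x_i],\vcon[x_i,x_j],\vin[x_j],x_j)$. Its edges carry labels from the out-time, the out2edge-times, the edge2in-times and the in-time, in that order. The vertices involved exist in $C_X$:
\begin{itemize}
\item $\vout[x_i]$ exists because $i\ge 3$, so $x_i\notin\{\alpha,\alpha'\}$;
\item $\vin[x_j]$ exists because $j\le k-2$, so $x_j\notin\{\omega,\omega'\}$;
\item $\vcon[x_i,x_j]$ exists because $\{x_i,x_j\}\in E$ and $i\ge j+2$.
\end{itemize}
The intervals appear in increasing order in \Cref{tab times}, so the path is temporal.

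\textbf{Case $i=j+1$.} This is the main obstacle, because no connector vertex exists for consecutive color classes and the direct edge $\{x_j,x_{j+1}\}$ has a selection-time label that only works forward. I would route through $\omega$ and $\alpha$, or use an earlier index, as follows.
\begin{itemize}
\item If $j\ge 3$, hence $x_{j+1}\notin\{\alpha,\alpha'\}$ with index at least $4$, take $x_{j+1}\to\vout[x_{j+1}]\to\vcon[x_{j+1},x_1]\to\vin[\alpha]\to\alpha$ at the out/connector/in times. Then continue along $P_X$ from $\alpha$ to $x_j$ during the selection-times. This requires the in-time to precede the selection-times, but in the table the in-time comes after them, so this route fails.
\item The fallback is to go $x_{j+1}\to\vout[x_{j+1}]\to\vcon[x_{j+1},x_{j'}]\to\vin[x_{j'}]\to x_{j'}$ for some $j'\le j-1$, and then walk forward along $P_X$ to $x_j$. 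This again needs the in-time before the selection-times, so it fails for the same reason.
\item Instead, I would exploit \Cref{reachability outside V is trivial}: $x_{j+1}$ reaches $\omega$ (or some vertex of $S\setminus X$), and every vertex of $S\setminus X$ can reach every vertex of $S$. The concern is composing these, since temporal paths are not transitive. So the concrete plan is to check the specific path $x_{j+1}\to\vout[x_{j+1}]$ (out-time) $\to\vcon[x_{j+1},x_{j-1}]$ (out2edge) $\to\vin[x_{j-1}]$ (edge2in) $\to$ the red or blue part of $\mP[\vin[x_{j-1}]]$.
\end{itemize}
The realistic resolution for the last subcase is to use the structural assumptions on the MCC instance: $|V_i|=1$ whenever $i\equiv 2 \pmod 3$, and $G[V_i\cup V_{i+1}]$ is a star or a matching. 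When one of $x_j,x_{j+1}$ is a singleton class, the required backward reachability should follow from the in-time edge $\{\vin[x_j],x_j\}$ coming after the selection edge. The order would be: first $x_{j+1}\to x_j$ forward is impossible directly, so go $x_{j+1}\to\vout[x_{j+1}]\to\vcon[x_{j+1},x_{j'}]\to\vin[x_{j'}]$ with $j'\le j-1$, reaching $\vin[x_{j'}]$ during the edge2in-times. If $j'=j-1$, I need an edge $\vin[x_{j-1}]\to x_{j-1}$ followed by $x_{j-1}\to x_j$, which is out of order; so I would instead look for a path that reaches $\vin[x_j]$ directly via $\vcon$ from a vertex of index at least $j+2$, reached from $x_{j+1}$ forward in the selection-times. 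This is again out of order, so I would need to examine the intended label placement carefully. I expect the $i=j+1$ case to require exactly this careful check against \Cref{tab times}, possibly revealing that edges $\{x_j,x_{j+1}\}$ themselves (labels in the selection-times, traversable in both directions since the graph is undirected) suffice. Indeed, undirectedness makes the single edge $\{x_{j+1},x_j\}\in E(V_j,V_{j+1})$ a temporal path of length one from $x_{j+1}$ to $x_j$. This resolves the case trivially, and in fact every pair in the case $i<j$ could be handled symmetrically only for adjacent indices, while the case $i\ge j+2$ uses the connector route above.
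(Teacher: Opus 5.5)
Your final argument is correct and matches the paper's proof. Like the paper, you use forward reachability along the subpath of $P_X$, backward reachability for non-consecutive indices via $(x_i,\vout[x_i],\vcon[x_i,x_j],\vin[x_j],x_j)$, and backward reachability for consecutive indices via the single undirected selection edge $\{x_j,x_{j+1}\}$. The exploratory detours in the case $i=j+1$ should be deleted along with the garbled closing remark: these are the routes through $\alpha$, $\omega$, or earlier indices, which you correctly observe fail because the in-time follows the selection-times.
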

\begin{proof}
Let~$1\leq i < j \leq k$.
We show that~$x_i$ and~$x_j$ can reach each other in~$\G[S]$.
If~$j = i+1$, then the two vertices are adjacent in~$\G$ and thus also in~$\G[S]$.
This implies that both can reach each other.
Otherwise, consider~$j \geq i+2$.
As already discussed in the proof of~\Cref{reachability outside V is trivial}, $P_X = (\alpha = x_1, \dots, x_k = \omega)$ is a temporal path in~$\G[S]$.
This immediately implies that~$x_i$ can reach~$x_j$ in~$\G[S]$ by simply following the respective subpath of~$P_X$.
It thus remains to show that~$x_j$ can reach~$x_i$ in~$\G[S]$.
As~$j \geq i+2$, we get that~$x_i \notin \{\omega,\omega'\}$ and~$x_j \notin \{\alpha,\alpha'\}$.
This implies that the vertices~$\vin[x_i]$ and~$\vout[x_j]$ exist and are part of~$C_X\subseteq S$.
Moreover, since~$\{x_i,x_j\}$ is an edge of~$G$, $\vcon[x_j,x_i]$ exists and is part of~$C_X \subseteq S$.
By definition of the edges in~$\G$, $(x_j,\vout[x_j],\vcon[x_j,x_i],\vin[x_i],x_i)$ is a path in~$\G$ and thus also in~$\G[S]$.
Moreover, this path is a temporal path, as the first edge of the path receives a label from the out-times, the second edge receives a label from the out2edge-times, the third edge receives a label from the edge2in-times, and the last edge receives a label from the in-times.
This implies that~$x_j$ can also reach~$x_i$ in~$\G[S]$.
\end{proof}

As a consequence, each vertex of~$S$ can reach every other vertex of~$S$ in~$\G[S]$.
That is, $\G[S]$ is temporally connected and thus~$S$ is a closed tcc in~$\G$.
As~$S$ is a closed tcc in~$\G$ of size~$k + 20\cdot \left(2\cdot (k-2) +  \binom{k}{2} - (k-1) \right)$ and~$k$ is larger than~$3$, this implies that~$\G$ contains a nontrivial closed tcc, which proves~\Cref{undirected direction forward}.
 
With the same arguments above, one can also show that~$S \cup (V'\setminus V)$ is a closed tcc, that is, we can add the vertices of the whole connector gadget (the vertices of all connector paths) to each nontrivial closed tcc of~$\mg$ without violating the property of being a closed tcc.

\subsection{A nontrivial closed tcc in~$\G$ implies a size-$k$ clique in~$G$}
In this subsection, we show the other direction of the correctness of the reduction.
\begin{lemma}\label{undirected direction backward}
If there is a nontrivial closed tcc in~$\G$, then~$G$ contains a clique of size~$k$.
\end{lemma}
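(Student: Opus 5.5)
Take an inclusion-minimal nontrivial closed tcc $S$. By \Cref{connectivity}, the \foot $G'[S]$ is 2-vertex-connected. Since $G'$ has girth $5$ and $\G$ is happy, \Cref{tccimplication} gives $|S|\ge 17$. The argument then mirrors the directed case: (i) show $\alpha,\omega\in S$; (ii) show $S$ meets every color class $V_i$; (iii) show that two vertices of $V$ from non-consecutive classes lie in a common closed tcc only if they are adjacent in $G$.

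\textbf{Step (i): $S$ must use $V$, and in particular $\alpha$ and $\omega$.} If $S$ contains no vertex of $V$, then $S\subseteq V'\setminus V$. The only edges between distinct connector paths go through docking points: the edges $\vout[u]$--$\vcon[u,v]$--$\vin[v]$. These edges are bridges-of-chains, so 2-connectivity together with minimality forces $S$ into a single connector path, which contradicts \Cref{gadget no tcc}. Similarly, if $S$ contains neither $\alpha$ nor $\omega$, each connector path can be entered only through its docking point $8$; the sink and source are pendant towards $\alpha$ and $\omega$. One then argues with the label order of \Cref{tab times}: the red-times come first and the blue-times last. Consequently vertices of $V$ cannot reach the interior of $\mP[c]$ via red edges, and interior vertices cannot reach $V$ via blue edges, which is a contradiction. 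Next, assume $\alpha\in S$ but $\omega\notin S$ (and symmetrically). Every connector path in $S$ then has the form of $\G_1$ or $\G_2$ of \Cref{gadget extensions no tcc}, and the reachability sets from $\alpha$ are too small, as in that proof. The residual case where $S\subseteq V\cup\{\text{few gadget vertices}\}$ is handled by the girth bound, since $G'[V]$ has girth $6$.

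\textbf{Steps (ii) and (iii): selection and validation.} For (ii), $\alpha$ and $\omega$ are in $V_1$ and $V_k$. Any temporal path from $\alpha$ to $\omega$ in $\G[S]$ must arrive at $\omega$ before the $\omega$-times, using an edge of $E(V_{k-1},V_k)$ or a gadget route. The gadget routes are excluded: $\alpha$ departs only during the selection-times or later, and the out-, in- and connector edges to $\omega$ carry earlier labels, except the $\omega$-edges, which belong to $\omega$-time and cannot be entered after the in-time in the right direction. Hence the path uses selection edges in increasing class order, and because labels increase with $i$, it visits every $V_i$. For (iii), take $x\in V_i$ and $y\in V_j$ in $S$ with $j\ge i+2$. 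A path from $y$ to $x$ cannot run backwards along selection edges, since those labels increase in $i$. So it must take $y\to\vout[y]$ at the out-time, then out2edge, then edge2in, then the in-time into $x$; no other route works, because paths through the connector paths leave $\alpha$ only forward in class order. Therefore $\vcon[y,x]$ exists, and $\{x,y\}\in E$.

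Consecutive classes need separate care. Since $G[V_i\cup V_{i+1}]$ is a star or a matching and singleton classes are universal, selected vertices of consecutive classes are adjacent whenever the forward path uses the edge between them. I would choose, per class, the vertex used on the $\alpha$--$\omega$ path; consecutive ones are then adjacent automatically. This yields a multicolored clique.

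\textbf{Main obstacle.} The main obstacle is step (i), ruling out tccs that avoid $\alpha$ or $\omega$ while mixing parts of several connector paths with vertices of $V$. I would first try 2-connectivity plus the $|S|\ge 17$ bound, together with a careful per-connector-path analysis of which red-prefix and blue-suffix vertices are reachable through vertex $8$ alone.
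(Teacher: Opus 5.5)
\paragraph{Where the proposal stands.} Your Steps (ii) and (iii) and the choice of $X$ follow the paper's route: a temporal $\alpha$--$\omega$ path in $\mg[S]$ stays inside $V$ and so picks one vertex per class. For nonconsecutive $x_i,x_j$, the only temporal path from $x_j$ to $x_i$ is $(x_j,\vout[x_j],\vcon[x_j,x_i],\vin[x_i],x_i)$. One correction in (ii): $\alpha$ also has edges to all sinks at the $\alpha$-times. These are dead ends, and you also need that after $\alpha'$ the path can use neither an out-edge (too early) nor an in-edge (dead end), so that it really stays in $V$.

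\paragraph{The gap: Step (i).} Step (i) carries essentially all the difficulty, and what you wrote does not establish $\alpha,\omega\in S$.
\begin{itemize}
\item The edges $\{\vout[u],\vcon[u,v]\}$ and $\{\vcon[u,v],\vin[v]\}$ are not bridges. $\vout[u]$ has a connector neighbour for every neighbour of $u$ at least two classes below, and $\vin[v]$ for every neighbour at least two classes above. So docking points alone form cycles, and 2-connectivity does not push $S$ into a single connector path.
\item In the case $\alpha,\omega\notin S$, your label argument only excludes interior vertices of connector paths. It says nothing about sets built from $V\setminus\{\alpha,\omega\}$ together with docking points of $\Vin\cup\Vout\cup\Vconn$.
\item If $\alpha\in S$ and $\omega\notin S$, it is false that every connector path met by $S$ looks like $\mg_1$ or $\mg_2$ of \Cref{gadget extensions no tcc}. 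Those describe only $\mP[{\vin[\alpha]}]$ and $\mP[{\vin[\alpha']}]$.
\item Girth does not settle the ``residual case'': \Cref{tccimplication} only gives $|S|\ge 17$.
\end{itemize}

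\paragraph{What is missing.} The paper closes these cases with label-order obstructions.
\begin{itemize}
\item[(a)] $S\subseteq V$ is impossible. By \Cref{tccminedges}, $G'[S]$ has a cycle. Every cycle of $G'[V]$ meets two nonconsecutive classes, since stars and matchings are acyclic. Selection labels increase with the class index, so the later vertex cannot reach the earlier one.
\item[(b)] Suppose $\alpha,\omega\notin S$. Then 2-connectivity forces $S$ to contain the docking point of every connector path it meets, plus two outside neighbours of that path.
\begin{itemize}
\item Without $\Vconn$, an in- or out-docking point has only one admissible outside neighbour, namely its vertex of $V$.
\item With some $c\in\Vconn\cap S$, its neighbours $c_1\in\Vin$ and $c_2\in\Vout$ lie in $S$ as the sole vertices of their connector paths. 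All labels at $c_1$ (edge2in-times, in-time) then exceed all labels at $c_2$ (out-time, out2edge-times), so $c_1$ cannot reach $c_2$.
\end{itemize}
\item[(c)] Suppose $\alpha\in S$ and $\omega\notin S$.
\begin{itemize}
\item Every temporal path from $\alpha$ either ends in a dead end (a sink or an in-edge), or passes $\alpha'$ after the out-time and then stays in $V$. Hence $\alpha$ reaches no vertex of $\Vconn$.
\item Without $\Vconn$, out-edges are dead ends too. So no $v^*\in V\setminus\{\alpha,\alpha'\}$ can reach $\alpha$, because it cannot descend along selection edges.
\item The case $S\cap V\subseteq\{\alpha,\alpha'\}$ reduces to $\mg_2$.
\end{itemize}
The case $\omega\in S$, $\alpha\notin S$ is symmetric, using $\mg_4$.
\end{itemize}

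Your plan of combining $|S|\ge 17$ with 2-connectivity could replace part of this case analysis. It would have to be an explicit count of the $2|S|-3$ edges required by \Cref{tccminedges} against the sparsity of $G'$ on $V$ and the docking points. As written, Step (i) is unproved, and Steps (ii)--(iii) depend on it.
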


For the remainder of this subsection, assume that~$\G$ contains a nontrivial closed tcc.
Let~$S$ be a nontrivial closed tcc in~$\G$ of minimal size.
We will carefully analyze the structure of~$S$ and eventually show that~$S\cap V$ is a clique of size~$k$ in~$G$.
As the \foot~$G'$ of~$\G$ has a girth of~$5$, we immediately get that~$S$ has size at least~$17$ due to~\Cref{tccimplication}.

\begin{lemma}\label{undir not only from V}
$S$ contains at least one vertex of~$V'\setminus V$.
\end{lemma}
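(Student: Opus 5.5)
We argue by contradiction and suppose $S\subseteq V$. Then $\G[S]$ is a temporal graph whose edges all belong to $E'\cap\binom{V}{2}=\bigcup_{i=1}^{k-1}E_G(V_i,V_{i+1})$, i.e., to the selection edges. All these edges carry selection-times, and by construction every label on $E(V_i,V_{i+1})$ is strictly smaller than every label on $E(V_j,V_{j+1})$ for $i<j$.

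The key step is to show that temporal paths inside $G'[V]$ are monotone in the color index. Consider a temporal path $(u_0,u_1,\dots,u_m)$ in $\G[S]$ with $u_\ell\in V_{c_\ell}$. Consecutive vertices lie in consecutive color classes, so $|c_{\ell+1}-c_\ell|=1$. Suppose the path turns around, say $c_{\ell-1}=c_\ell+1=c_{\ell+1}$ or $c_{\ell-1}=c_\ell-1=c_{\ell+1}$. Then two consecutive edges of the path lie in the same set $E(V_a,V_{a+1})$, and they are adjacent at $u_\ell$. Since the labeling is proper (the graph is happy), their labels differ. Because both edges lie in the same set, this alone does not forbid the turn. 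I therefore use a different, size-based argument rather than relying on monotonicity alone.

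Since $G'$ has girth at least $6$ on $V$ (as shown in the construction) and $\G$ is happy, \Cref{tccimplication} implies $|S|\ge 17$. Now take $u\in S\cap V_a$ and $w\in S\cap V_b$ with $a<b$ minimal and maximal respectively. Any temporal path from $w$ to $u$ must, at its last step into the lowest class used, traverse edges with decreasing class index. Because labels on $E(V_{i},V_{i+1})$ increase with $i$, a path going from class $b$ down to class $a$ must at some point use an edge of $E(V_{i},V_{i+1})$ after an edge of $E(V_{i+1},V_{i+2})$, which has larger labels. This violates the increasing label order. More precisely, consider the first time the path reaches a class $\le b-1$ after last being in class $b$. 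Descending from $b$ to $a$ requires consecutive steps $b\to b-1\to\dots$, and the first descent step from $i+1$ to $i$ after any edge in a higher class is a decrease of label interval. Formalizing this, any temporal path from a vertex in $V_b$ to a vertex in $V_a$ with $a\le b-2$ is impossible. If $a=b-1$, then $S$ lies in only two classes, and $G'[V_a\cup V_{a+1}]$ is a star or a matching, i.e., a forest. A forest of at least $17$ vertices is not \TC by \Cref{tccminedges}, since it has fewer than $2n-3$ edges.

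The main obstacle is the case $a\le b-2$, where the path is allowed to oscillate within one pair of classes. I would handle it with a cleaner statement: the maximum label along any temporal path ending in $V_a$ after visiting $V_b$ would have to come from $E(V_a,V_{a+1})$, yet that path must already have used an edge from $E(V_{b-1},V_b)$, which has strictly larger labels since $b-1>a$. This yields the contradiction, so $S\not\subseteq V$ and $S$ contains a vertex of $V'\setminus V$.
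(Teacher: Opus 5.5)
Your argument is correct and is essentially the paper's proof. If $S$ meets only two consecutive classes, $G'[S]$ is a forest and is excluded by \Cref{tccminedges}. Otherwise, the first edge leaving $w\in V_b$ has a label strictly larger than the last edge entering $u\in V_a$ (since $a<b-1$), which is exactly the paper's observation that no temporal path leads from $V_p$ to $V_q$ when $p>q+1$.

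The monotonicity detour and the bound $|S|\ge 17$ are unnecessary, because any forest on at least $3$ vertices already has fewer than $2n-3$ edges. The omitted case $a=b$ is trivial, since color classes are independent sets.
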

\begin{proof}
We show that~$\G[V]$ contains no nontrivial closed tcc.
This then implies that~$S$ contains at least one vertex of~$V'\setminus V$.
Assume towards a contradiction that~$S$ contains only vertices of~$V$.
As~$S$ is a nontrivial closed tcc, \Cref{tccminedges} implies that~$\G[S]$ contains at least one cycle.
Based on the fact that the edges in~$\G$ between consecutive color classes are either a star or a matching, this implies that~$S$ contains vertices of two nonconsecutive color classes. 

By definition of the labels, all edges in~$\G[V]$ are from the selection-times and defined in a way, such that for each~$1\leq i < j \leq k$, all labels assigned to the edges of~$E(V_i,V_{i+1})$ are strictly smaller than all labels assigned to the edges of~$E(V_j,V_{j+1})$.
This implies that there is no temporal path from any vertex of~$V_p$ to~$V_q$ in~$\G[V]$ if~$p > q+1$.
However, as~$S$ contains vertices of two nonconsecutive color classes, $S$ contains a vertex~$v_p$ and a vertex~$v_q$, such that there is no temporal path from~$v_p$ to~$v_q$ in~$\G[V]$ and thus also not in~$\G[S]$.
This contradicts the assumption that~$\G[S]$ is temporally connected.
\end{proof}

Observe the following.
\begin{lemma}\label{claim dont go in}
Let~$v\in V\setminus \{\omega,\omega'\}$ and let~$c= \vin[v]$.
Each temporal path that traverses the edge~$\{v,c\}$ from~$v$ to~$c$ in~$\G$ ends in a vertex of the connector paths~$\mP[c]$ and each temporal path that traverses the edge~$\{v,c\}$ from~$c$ to~$v$ in~$\G$ ends in~$v$.
\end{lemma}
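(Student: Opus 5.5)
The whole argument rests on the in-edge $\{v,c\}$ carrying the in-time label. This label is larger than every label used inside the connector gadget before the $\omega$-times and larger than all selection-times. It is also smaller than the $\omega$-times and the blue-times. I would prove the two statements separately, in each case listing the edges incident to the vertex reached after crossing $\{v,c\}$ and comparing their labels with the in-time, using the order of intervals in \Cref{tab times}.

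\textbf{From $v$ to $c$.} After the walk arrives at $c=\vin[v]$ at the in-time, it needs an edge at $c$ with a strictly larger label (non-strict is the same here, since the labeling is proper). The edges incident with $c$ are:
\begin{itemize}
\item the two edges of the red path of $\mP[c]$ at vertex $8$, with red-times;
\item the two edges of the blue path of $\mP[c]$ at vertex $8$, with blue-times;
\item the edges $\{\vcon[u,v],c\}$, with edge2in-times;
\item the in-edge itself.
\end{itemize}
Vertex $8$ is neither the source nor the sink, so $c$ has no edge to $\alpha$ or $\omega$. Only the blue edges have labels larger than the in-time. The path therefore continues within the blue path of $\mP[c]$.

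Every vertex of $\mP[c]$ other than $1$, $8$ and $20$ is incident only to red and blue edges of $\mP[c]$. Vertex $1$ additionally has the edge to $\omega$, vertex $20$ has the edge to $\alpha$, and vertex $8$ has the edges just listed. Once the path is at blue-times, it can only use edges labelled larger than the $\omega$-times, $\alpha$-times, edge2in-times and in-time. So it can only use blue edges of $\mP[c]$, and it must end in $V(\mP[c])$.

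\textbf{From $c$ to $v$.} After arriving at $v$ at the in-time, the path would need an edge at $v$ with a larger label. The edges at $v$ are:
\begin{itemize}
\item selection edges, with selection-times, which are smaller than the in-time;
\item the out-edge, with the out-time, also smaller;
\item the in-edge itself.
\end{itemize}
Since $v\notin\{\omega,\omega'\}$, there is one more possibility: $v$ might be $\omega$ itself and carry $\omega$-time edges. That is excluded, so $v$ has no edges from the $\omega$-times or blue-times. Hence the path cannot continue and ends at $v$.

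The only delicate point is the bookkeeping of which special edges attach to the docking point $8$ as opposed to the source and sink. This requires checking that $8\notin\{1,20\}$ and that no edges other than those in the construction are incident with the docking point, which holds by the construction.
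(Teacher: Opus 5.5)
Your proof is correct and follows the same argument as the paper: at $c$, the in-edge carries the last label before the blue-times, and at $v$, it carries the largest label. You simply spell out the incident edges explicitly instead of stating these facts directly. One small slip: for $v=\alpha$, your list of edges at $v$ omits the edges $\{\alpha,t^{c'}\}$ to the sinks of the connector paths, but these carry $\alpha$-times, which precede the in-time, so your conclusion is unaffected.
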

\begin{proof}
The first point follows by the fact that~$\{v,c\}$ is the last edge incident with~$c$ that has a label prior to the blue-times.
Thus, after entering the connector path through this edge, the only way to continue is by traversing with edges of the blue path, which is entirely in~$\mP[c]$.

The second point follows by the fact that~$\{v,c\}$ is the unique edge of largest label incident with~$v$, so the temporal path ends by necessity at~$v$.
\end{proof}

\begin{lemma}
\label{undir both al and om}
$S$ contains~$\alpha$ and~$\omega$.
\end{lemma}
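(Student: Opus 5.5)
We argue that $S$, a minimal-size nontrivial closed tcc (so $|S|\ge 17$ and, by \Cref{connectivity}, $G'[S]$ is 2-vertex-connected), contains both $\alpha$ and $\omega$. By \Cref{undir not only from V}, $S$ contains a vertex $u$ of some connector path $\mP[c]$. The first step is to show that $S\cap V(\mP[c])$ cannot be all of $S$. Indeed, by \Cref{gadget no tcc}, $\mP[c]$ itself has no nontrivial closed tcc. Hence $S$ must contain a vertex outside $V(\mP[c])$.

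The only edges leaving $V(\mP[c])$ are $\{t^c,\alpha\}$, $\{s^c,\omega\}$, and the edges incident with the docking point $c$. The latter lead to a vertex $v\in V$ via an in- or out-edge, or to neighboring connector vertices through out2edge-/edge2in-edges. Since $G'[S]$ is 2-vertex-connected and $|S\cap V(\mP[c])|\ge 1$ with $S\not\subseteq V(\mP[c])$, $S$ leaves $\mP[c]$ through at least two of these exits, unless $S\cap V(\mP[c])=\{c\}$. That exceptional case needs a separate argument, which we turn to in the third step.

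The second step uses the labels. Consider a vertex $u\in S\cap V(\mP[c])$ and a vertex $w\in S\setminus V(\mP[c])$. There must be temporal paths $u\to w$ and $w\to u$ inside $\G[S]$. We track the time windows as follows:
\begin{itemize}
\item Inside $\mP[c]$, before the $\alpha$-times only red edges exist. Red edges allow moving toward $20=t^c$; the docking point $8$ is reached early only from vertices $1,\dots,8$.
\item Entering $\mP[c]$ after the selection-times is only possible via $\{v,c\}$ at the in-time, which by \Cref{claim dont go in} traps the path inside $\mP[c]$, or via $s^c$ from $\omega$ at the $\omega$-times.
\end{itemize}
We use this to argue that some vertex of $\mP[c]\cap S$ (e.g.\ one among $\{17,\dots,20\}$, or the sink side) must exit through $t^c$ to $\alpha$. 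Symmetrically, some vertex must be entered from $\omega$ through $s^c$. The failure of such paths without these exits is modelled on the proof of \Cref{gadget no tcc}: vertices $17$–$20$ cannot reach $\{1,2,5,7\}$ using only red edges, so the return path must use blue edges (blue-times, the very end). Blue edges can only be entered after arriving at $\mP[c]$ from outside at late times, i.e.\ via $s^c$ from $\omega$ or via the in-edge.

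I would instead aim for a cleaner counting contradiction. If $\omega\notin S$, then the graph induced on each $\mP[c]\cap S$ plus its docking attachments resembles $\G_1$ or $\G_2$ of \Cref{gadget extensions no tcc}. If $\alpha\notin S$, it resembles $\G_3$ or $\G_4$. The argument of that lemma shows that the reachable sets are too small, at most $7$ or $9$ vertices of each path.

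The third step handles the case where $S$ meets connector paths only in docking points or meets only $V\cup\{\text{docking points}\}$. There, the docking points $\vout[u]$ (reached at the out-time) and $\vin[v]$ (left at the in-time) must still reach and be reached by other vertices. A docking point $\vin[v]$ whose path vertices are absent has only the neighbours $v$ and connector vertices in $S$. Any path into $\vin[v]$ arrives at or before the edge2in-times or the in-time, and leaving again needs later labels, which exist only along $\{\vin[v],v\}$, ending at $v$ by \Cref{claim dont go in}. This forces $\vin[v]$ to be unable to reach vertices other than $v$, a contradiction, so the full connector path structure, and hence $\alpha$ and $\omega$, must be present.

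The main obstacle is the second step: rigorously showing that without $\alpha$ (resp.\ $\omega$) the vertices of a connector path in $S$ cannot be reached back (resp.\ cannot reach out). This requires a careful case analysis of which of the $20$ vertices lie in $S$, combined with 2-connectivity and the girth bound $|S|\ge 17$.
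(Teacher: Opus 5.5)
There is a genuine gap: the decisive step is never proved. You acknowledge this yourself (``the main obstacle is the second step''), and the shortcut you offer in its place does not work.

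\textbf{The shortcut.} The graphs $\G_1,\dots,\G_4$ of \Cref{gadget extensions no tcc} describe $S$ only when $S$ is confined to one connector path plus one or two attached vertices. A set $S$ that misses $\omega$ (or $\alpha$) may still contain vertices of $V$, vertices of $\Vconn$, and parts of several connector paths. So ``resembles $\G_1$ or $\G_2$'' is not a valid reduction.

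Your per-path bounds are fine as far as they go: only a few vertices of $\mP[c]$ can be entered from outside without $\omega$, or can leave without $\alpha$. But they bound $|S\cap V(\mP[c])|$, not $|S|$, so they do not contradict $|S|\ge 17$.

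\textbf{The other steps.} Your third step ends in a non sequitur. Showing that a docking point $\vin[v]$ cannot lie in $S$ without further vertices of $\mP[{\vin[v]}]$ says nothing about whether $\alpha$ or $\omega$ is in $S$. In your second step, 2-vertex-connectivity gives two \emph{distinct} neighbours of $\mP[c]$ in $S$ only after you exclude $S\subseteq\{\alpha\}\cup V(\mP[{\vin[\alpha]}])$ and $S\subseteq\{\omega\}\cup V(\mP[{\vout[\omega]}])$. That exclusion is exactly where $\G_1$ and $\G_3$ are needed.

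\textbf{What is missing.} The paper runs a global case analysis in which each case is ruled out by a concrete temporal-path obstruction. It first shows that $S$ contains two neighbours of every connector path it meets. If $S$ misses $\alpha$ or $\omega$, $S$ also contains that path's docking point; otherwise $\{\alpha,t^{c'}\}$ or $\{\omega,s^{c'}\}$ would be the only edge leaving the path, contradicting 2-connectivity. Then it chooses $c\in S\cap\Vconn$ when possible:
\begin{itemize}
\item If neither $\alpha$ nor $\omega$ is in $S$, the $\Vin$- and $\Vout$-neighbours $c_1,c_2$ of $\mP[c]$ lie in $S$, each as the only vertex of its own connector path. Every label at $c_1$ (edge2in-/in-time) exceeds every label at $c_2$ (out-/out2edge-time), so $c_1$ cannot reach $c_2$.
\item If only $\alpha\in S$, no temporal path goes from $\alpha$ to $c$. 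Edges to sinks and in-edges cannot lead to $c$, and $\{\alpha,\alpha'\}$ is used at a selection-time, which is after the out-time.
\item If only $\omega\in S$, symmetrically $c$ cannot reach $\omega$.
\item If $S\cap\Vconn=\emptyset$ and exactly one of $\alpha,\omega$ is in $S$, one first finds $v^*\in S\cap V$ outside $\{\alpha,\alpha'\}$ (resp.\ $\{\omega,\omega'\}$); otherwise $S$ sits inside a copy of $\G_2$ (resp.\ $\G_4$). Then $v^*$ cannot reach $\alpha$ (resp.\ be reached from $\omega$). Inside $V$ the selection labels increase with the color index, and in- and out-edges trap paths in connector paths once $S\cap\Vconn=\emptyset$.
\end{itemize}
None of these arguments appears in your proposal, and the lemma does not follow without them or an equivalent.
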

\begin{proof}
Assume towards a contradiction that~$S$ does not contain both~$\alpha$ and~$\omega$.
Due to~\Cref{undir not only from V}, $S$ contains at least one vertex outside of~$V$.
Since~$V'\setminus V$ only consists of the union of the vertex sets of the connector paths, this implies that there is some~$c\in \Vin\cup \Vout\cup \Vconn$, such that~$S$ contains at least one vertex of~$\mP[c]$. We first observe the following, where for some connector path~$\mP[c]$, we call a vertex~$q\in V' \setminus V(\mP[c])$ a~\emph{neighbor} of~$\mP[c]$, if~$q$ is adjacent to at least one vertex of~$\mP[c]$.

\begin{claim}\label{if s contains neither al nor om}
Assume that~$S$ contains a vertex of some connector path~$\mP[c']$.
Then~$S$ contains at least two neighbors of~$\mP[c']$.
Moreover, if~$S$ does not contain both~$\alpha$ and~$\omega$, then~$S$ contains~$c'$.
\end{claim}
\begin{claimproof}
Since~$S$ is a nontrivial closed tcc of minimal size, $S$ is 2-vertex-connected (see \Cref{connectivity}). 
Let~$Q = V(\mP[c'])$.
We first show that~$S$ contains at least two neighbors of~$\mP[c']$.
Assume towards a contradiction that~$S$ does not contain two neighbors of~$\mP[c']$.
First of all, $S$ contains at least one neighbor of~$\mP[c']$, as~$\mP[c']$ is a connector path and does not contain a nontrivial closed tcc by~\Cref{gadget no tcc}.
That is, $S$ contains at least one vertex of~$V'\setminus Q$, which implies that there needs to be a neighbor~$r$ of~$\mP[c']$ in~$S$.
Since we assume that there is no other neighbor of~$\mP[c']$ in~$S$, there are two cases: there is a vertex~$r'\in S\setminus (Q\cup \{r\})$, or~$S\subseteq Q\cup \{r\}$.
In the first case, as~$r'$ is not a neighbor of~$\mP[c']$, this contradicts the assumption that~$S$ is 2-vertex connected, since each path between~$r'$ and any vertex of~$Q$ in~$\G[S]$ goes over~$r$.
In the second case, $r$ needs to have at least two neighbors in~$S\cap Q$, as~$S$ is 2-vertex-connected.
By construction of the connector paths and their incident edges, this implies that (i)~$r=\alpha$ and~$c'= \vin[\alpha]$ or (ii)~$r=\omega$ and~$c'= \vout[\omega]$.
That is, $S \subseteq S_1 := \{\alpha\} \cup V(\mP[{\vin[\alpha]}])$ or $S \subseteq S_3 := \{\omega\} \cup V(\mP[{\vout[\omega]}])$.
Note that~$\G[S_1]$ and $\G[S_3]$ fulfill the requirement of the temporal graph~$\G_1$ and~$\G_3$ from Item~\ref{gadget plus alpha} and Item~\ref{gadget plus omega} of~\Cref{gadget extensions no tcc}, respectively.
Thus, neither of these temporal subgraphs contains any nontrivial closed tcc due to~\Cref{gadget extensions no tcc}.
Consequently, $S$ contains at least two neighbors of~$\mP[c']$.

It remains to show that~$S$ contains~$c'$ if~$S$ does not contain both~$\alpha$ and~$\omega$.
To this end, assume towards a contradiction that~$S \cap \{\alpha,\omega,c'\} = \{\alpha\}$ or~$S \cap \{\alpha,\omega,c'\} = \{\omega\}$.
In the first case, this implies that the sink~$t^{c'}$ is the only vertex of~$Q$ that is adjacent to a neighbor (namely~$\alpha$) of~$\mP[c']$ in~$S$.
That is, $\{\alpha,t^{c'}\}$ is the only edge between vertices of~$Q$ and vertices outside of~$Q$ in~$\G[S]$.
This however contradicts the assumption that~$S$ is 2-vertex-connected.
The second case leads to a contradiction in a similar way:
the source~$s^{c'}$ is the only vertex of~$Q$ that is adjacent to a neighbor (namely~$\omega$) of~$\mP[c']$ in~$S$.
That is, $\{\omega,s^{c'}\}$ is the only edge between vertices of~$Q$ and vertices outside of~$Q$ in~$\G[S]$.
This contradicts the assumption that~$S$ is 2-vertex-connected.
Consequently, $S$ contains~$c'$ if~$S$ contains not both~$\alpha$ and~$\omega$.
\end{claimproof}

As we assume (towards a contradiction) that~$S$ does not contain both~$\alpha$ and~$\omega$, this claim implies that $S$ contains some vertex~$c\in \Vin\cup \Vout\cup \Vconn$.
In the following, let~$c$ be a vertex of~$\Vconn\cap S$, if one exists.
Otherwise, let~$c$ be an arbitrary of~$\Vin\cup \Vout \cap S$.
By the above, $S$ also contains at least two neighbors of~$\mP[c]$.
We now distinguish 6 different cases which two neighbors of~$\mP[c]$ are contained in~$S$.
We show that all of them lead to a contradiction, since we assume that~$S$ does not contain both~$\alpha$ and~$\omega$.

\begin{itemize}
\item \textbf{Case 1:} $S$ contains none of the vertices of~$\Vconn \cup \{\alpha,\omega\}$\textbf{.} 
As~$S$ contains~$c$, this implies that~$c\in \Vin\cup \Vout$.
Moreover, due to~\Cref{if s contains neither al nor om}, implies that~$S$ contains at least two neighbors of~$\mP[c]$.
As~$c\in \Vin\cup \Vout$, there is at most one vertex of~$V\setminus \{\alpha,\omega\}$ which is a neighbor of~$\mP[c]$.
All other neighbors of~$\mP[c]$ are from~$\Vconn \cup \{\alpha,\omega\}$.
This contradicts the assumption that~$S$ contains no vertex of~$\Vconn \cup \{\alpha,\omega\}$.
\item \textbf{Case 2:} $c\in \Vconn$ and~$S$ contains neither~$\alpha$ nor~$\omega$\textbf{.} 
By construction, $\mP[c]$ has only four neighbors, namely~$\alpha$, $\omega$, a vertex~$c_1\in \Vin$, and a vertex~$c_2\in \Vout$.
As~$S$ contains neither~$\alpha$ nor~$\omega$, this implies that~$c_1$ and~$c_2$ are in~$S$.
Moreover, for each~$i\in \{1,2\}$, $c_i$ is the only vertex of~$\mP[c_i]$ in~$S$, as~$S$ is nontrivial closed tcc of minimal size, which is~$2$-vertex connected by~\Cref{connectivity}.
This is due to the fact that since~$S$ contains neither~$\alpha$ nor~$\omega$, $G'[S\setminus \{c_i\}]$ would have at least two connected components if~$S$ contains at least a second vertex of~$\mP[c_i]$.
Hence, in~$\G[S]$, $c_i$ is incident with edges towards vertices of~$\Vconn$ and at most one vertex of~$V$.
Thus, in~$\G[S]$, all labels incident with~$c_1$ are from the edge2in-times or the in-time, and all labels incident with~$c_2$ are from the out2edge-times or the out-time.
That is, each label incident with~$c_2$ is strictly larger than each label incident with~$c_1$ in~$\G[S]$.
Thus, there is no temporal path from~$c_1$ to~$c_2$ in~$\G[S]$, contradicting that~$\G[S]$ is temporally connected. 
\item \textbf{Case 3:} $c\in \Vconn$ and~$S$ contains~$\alpha$ but not~$\omega$\textbf{.}
Consider any temporal path~$P$ from~$\alpha$ to~$c$ in~$\G[S]$.
There are only three types of edges incident with~$\alpha$: (1)~the edges towards the sink vertices of the connector paths, (2)~the edge towards~$\vin[\alpha]$, and (3)~the edge towards~$\alpha'$.
The path~$P$ cannot use an edge towards the sink~$t^{c'}$ of a connector path~$\mP[c']$ (which is labeled with an~$\alpha$-time), as~$t^{c'}$ has only one incident edge with a later label: the edge towards the vertex~$16$ of~$\mP[c']$, which exists only at the very last time step of~$\G$.
Moreover, $P$ cannot use the edge towards~$\vin[\alpha]$, as~\Cref{claim dont go in} implies that then~$P$ can only end in~$\mP[{\vin[\alpha]}]$ which is distinct from~$\mP[c]$ and thus does not contain~$c$.
In fact, for the same reason, $P$ cannot traverse any in-edge, as~$c\notin \Vin$.
Thus, assume that~$P$ traverses the edge towards~$\alpha'$.
This happens at a selection-time, which is later than the out-time.
Consequently, $P$ traverses none of the in-edges or the out-edges.
As~$S$ does not contain~$\omega$, $P$ does not visit~$\omega$ and thus can only traverse edges between vertices of~$V$, which contradicts the assumption that~$P$ reaches~$c$. 
\item \textbf{Case 4:} $c\in \Vconn$ and~$S$ contains~$\omega$ but not~$\alpha$\textbf{.}
Consider any temporal path~$P$ from~$c$ to~$\omega$ in~$\G[S]$.
There are only two types of edges incident with~$c$: (1)~the edges towards other vertices of~$\mP[c]$ and (2)~the edges towards the unique neighbors in~$\Vin$ and~$\Vout$.
The path~$P$ cannot use an edge towards another vertex of~$\mP[c]$, as~$\alpha$ is not in~$S$, and the only other vertex of~$\mP[c]$ that is adjacent to a neighbor of~$\mP[c]$ is the source~$s^c$, which cannot be reached from~$c$ by traversing only edges in the connector path~$\mP[c]$ (see~\Cref{fig gadget} and recall that~$c$ is associated with vertex~$8$).
Hence, assume that~$P$ uses an edge of the second type.
Then, $P$ arrives at a vertex of~$\Vin\cup\Vout$ at a time strictly larger than the out-time.
The only edges between vertices of~$V$ and vertices of~$V'\setminus V$ that exists later than the out-time are the edges between the~$\omega$ and the source vertices of the connector paths, and the in-edges.
Since~$\omega$ has no incident in-edge, $P$ cannot use any in-edge, as~$P$ would then end in a vertex distinct from~$\omega$ due to~\Cref{claim dont go in}. 
Moreover, as already discussed, there is no temporal path inside a connector path from the docking point to the source.
Thus, $P$ can also not traverse any edge between a source vertex~$s^c$ and~$\omega$.
This contradicts the assumption that $P$ reaches~$\omega$.
\end{itemize}

Note that the Cases 2--4 cover all possibilities in which~$S$ contains a vertex of~$\Vconn$.
In the remaining two cases, we will thus assume that~$S$ contains no vertex of~$\Vconn$, and by Case 1, exactly one of~$\alpha$ and~$\omega$.
Before we come to these cases, we observe the following.

\begin{claim}\label{claim dont go out}
If~$S$ contains no vertex of~$\Vconn$, then each temporal path in~$\G[S]$ that traverses an out-edge~$\{v,\vout[v]\}$ from~$v$ to~$\vout[v]$ ends in a vertex of~$\mP[{\vout[v]}]$.
\end{claim}
\begin{claimproof}
Recall that each out-edge receives the out-time as label.
Thus, if a temporal path~$P$ in~$\G[S]$ traverses an out-edge from~$v$ to~$\vout[v]$, then the only edges with labels larger than the out-time that are incident with~$\vout[v]$ are (i)~edges towards vertices of~$\Vconn$ and (ii)~edges of the blue path of~$\mP[{\vout[v]}]$.
As~$S$ does not contain any vertex of~$\Vconn$, $P$ thus has to continue with an edge of the blue path of~$\mP[{\vout[v]}]$.
Since these edges receive labels from the blue-times, which are strictly later than any edge between a vertex of~$\mP[{\vout[v]}]$ and a neighbor of~$\mP[{\vout[v]}]$, this implies that~$P$ ends in a vertex of~$\mP[{\vout[v]}]$.
\end{claimproof}
\begin{claim}\label{guaranteed neighbors}
If~$S$ contains no vertex of~$\Vconn$ and at most one vertex of~$\{\alpha,\omega\}$, then for each vertex~$\vin\in S\cap \Vin$, $v$ is in~$S$, and for each vertex~$\vout[w]\in S\cap \Vout$, $w$ is in~$S$.
\end{claim}
\begin{claimproof}
Recall that if~$\vin\in S$ ($\vout[w]\in S$), then by~\Cref{if s contains neither al nor om}, $S$ contains at least two neighbors of~$\mP[{\vin}]$ ($\mP[{\vout[w]}]$).
Now, as~$v$ ($w$) is the only neighbor of~$\vin$ ($\vout[w]$) outside of~$\Vconn \cup \{\alpha,\omega\}$, we conclude that~$v$ ($w$) is in~$S$.
\end{claimproof}

\begin{itemize}
\item \textbf{Case 5:} $c\in \Vin\cup \Vout$ and~$S$ contains~$\alpha$ but no vertex of~$\Vconn \cup \{\omega\}$\textbf{.}
We first show that~$S$ contains at least one vertex~$v^*$ of~$V\setminus \{\alpha,\alpha'\}$.

Assume that~$S\cap V \subseteq \{\alpha,\alpha'\}$.
Since~$\omega\notin S$, \Cref{if s contains neither al nor om} implies that for each~$d\in \Vin\cup\Vout$ for which~$S$ contains at least one vertex of~$\mP[d]$, $S$ also contains~$d$.
Furthermore, \Cref{guaranteed neighbors} thus implies that~$S \setminus V$ can only contain vertices from the two connector paths~$\mP[{\vin[\alpha]}]$ and~$\mP[{\vin[\alpha']}]$.
Moreover, $S$ can contain vertices from at most one of the two connector paths, as~$S$ is 2-vertex-connected (see \Cref{connectivity}) and~$G'[S\setminus \{\alpha\}]$ would otherwise have two connected components.
Thus, $S \subseteq Q_1 := \{\alpha,\alpha'\} \cup V(\mP[{\vin[\alpha]}])$ or~$S \subseteq Q_2 := \{\alpha,\alpha'\} \cup V(\mP[{\vin[\alpha']}])$.
In the first case, $S$ would contain only one neighbor of~$\mP[{\vin[\alpha]}]$, which would contradict that~$S$ is a closed tcc by~\Cref{if s contains neither al nor om}.
Hence, $S\subseteq Q_2$.
However, $\G[Q_2]$ fulfills the requirement of the temporal graph~$\G_2$ described in Item~\ref
{gadget plus alpha plus x1} of~\Cref{gadget extensions no tcc}.
As~$\G_2$ does not contain a nontrivial tcc, this contradicts the assumption that~$S$ is a closed tcc in~$\G$. 

Thus, $S$ contains a vertex~$v^*\in V \setminus \{\alpha,\alpha'\}$.
This implies that~$v^*\in V_j$ for some~$j>2$, as~$V_1 = \{\alpha\}$ and~$V_2 = \{\alpha'\}$.
Consider any temporal path~$P$ from~$v^*$ to~$\alpha$ in~$\G[S]$.
By definition of the labels between the vertices of~$V$, $P$ has to visit at least one vertex outside of~$V$.
To see this, recall that all labels assigned to~$E(V_i,V_{i+1})$ are strictly smaller than all labels assigned to~$E(V_j,V_{j+1})$ for~$1\leq i < j < k$.
However, all edges that are incident with both a vertex of~$V$ and a vertex outside of~$V$ are (i)~the in-edges, (ii)~the out-edges, and (iii)~edges incident with~$\alpha$ or~$\omega$.
Due to~\Cref{claim dont go in} and~\Cref{claim dont go out}, $P$ cannot traverse any in-edge or any out-edge, as otherwise~$P$ would end in a vertex of a connector path and not in~$\alpha$.
Furthermore, $P$ cannot traverse an edge incident with~$\omega$, as~$\omega$ is not in~$S$.
Finally, $P$ can also not traverse an edge incident with~$\alpha$ to leave~$V$, as~$P$ is a path (and thus visits each vertex at most once) and ends in~$\alpha$. 
This contradicts the assumption that $P$ reaches~$\alpha$.

\item \textbf{Case 6:} $c\in \Vin\cup \Vout$ and~$S$ contains~$\omega$ but no vertex of~$\Vconn \cup \{\alpha\}$\textbf{.}
This case is similar to the previous one.
We first show that~$S$ contains at least one vertex~$v^*$ of~$V\setminus \{\omega,\omega'\}$.

Assume that~$S\cap V \subseteq \{\omega,\omega'\}$.
Since~$\alpha\notin S$, \Cref{if s contains neither al nor om} implies that for each~$d\in \Vin\cup\Vout$ for which~$S$ contains at least one vertex of~$\mP[d]$, $S$ also contains~$d$.
Furthermore, \Cref{guaranteed neighbors} thus implies that~$S \setminus V$ can only contain vertices from the two connector paths~$\mP[{\vout[\omega]}]$ and~$\mP[{\vout[\omega']}]$.
Moreover, $S$ can contain vertices from at most one of the two connector paths, as~$S$ is 2-vertex-connected (see \Cref{connectivity}) and~$G'[S\setminus \{\omega\}]$ would otherwise have two connected components.
Thus, $S \subseteq Q_3 := \{\omega,\omega'\} \cup V(\mP[{\vout[\omega]}])$ or~$S \subseteq Q_4 := \{\omega,\omega'\} \cup V(\mP[{\vout[\omega']}])$.
In the first case, $S$ would contain only one neighbor of~$\mP[{\vout[\omega]}]$, which would contradict that~$S$ is a closed tcc by~\Cref{if s contains neither al nor om}.
Hence, $S\subseteq Q_4$.
However, $\G[Q_4]$ fulfills the requirement of the temporal graph~$\G_4$ described in Item~\ref
{gadget plus omega plus xk} of~\Cref{gadget extensions no tcc}.
As~$\G_4$ does not contain a nontrivial tcc, this contradicts the assumption that~$S$ is a closed tcc in~$\G$.

Thus, $S$ contains a vertex~$v^*\in V \setminus \{\omega,\omega'\}$.
This implies that~$v^*\in V_j$ for some~$j<k-1$, as~$V_{k-1} = \{\omega'\}$ and~$V_k = \{\omega\}$.
Consider any temporal path~$P$ from~$\omega$ to~$v^*$ in~$\G[S]$.
By definition of the labels between the vertices of~$V$, $P$ has to visit at least one vertex outside of~$V$.
To see this, recall that all labels assigned to~$E(V_i,V_{i+1})$ are strictly smaller than all labels assigned to~$E(V_\ell,V_{\ell+1})$ for~$1\leq i < \ell < k$.
However, all edges that are incident with both a vertex of~$V$ and a vertex outside of~$V$ are (i)~the in-edges, (ii)~the out-edges, (iii)~edges incident with~$\alpha$, and (iv)~edges between~$\omega$ and the sink vertices of the connector paths.
Due to~\Cref{claim dont go out} and \Cref{claim dont go in}, $P$ cannot traverse any in-edge or any out-edge, as otherwise~$P$ would end in a vertex of a connector path and not in~$v^*$.
Furthermore, $P$ cannot traverse an edge incident with~$\alpha$, as~$\alpha$ is not in~$S$.
Finally, $P$ cannot traverse an edge from~$\omega$ to a sink vertex of a connector path, as this edge receives a label from the~$\omega$-times, which are strictly larger than the in-time, which is the largest label incident with~$v^*$. 
This contradicts the assumption that $P$ reaches~$v^*$.
\end{itemize}
As the case distinction is exhaustive and each case leads to a contradiction, we conclude that~$S$ contains~$\alpha$ and~$\omega$.
\end{proof}

Next, we show that the containment of~$\alpha$ and~$\omega$ also implies that existence of at least one vertex per color class in~$S$.

\begin{lemma}
\label{undir path from al to om}
$S$ contains for each~$i\in [1,k]$ a vertex~$x_i\in V_i$, such that~$(\alpha = x_1,\dots, x_k=\omega)$ is a temporal path in~$\G$.
\end{lemma}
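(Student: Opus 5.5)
Since $\alpha,\omega\in S$ by \Cref{undir both al and om} and $\G[S]$ is temporally connected, I will fix a temporal path $P$ from $\alpha$ to $\omega$ in $\G[S]$ and show that $P$ never leaves $V$. Then $P$ uses only selection-time edges between consecutive colour classes. Since labels on $E(V_i,V_{i+1})$ are strictly smaller than those on $E(V_j,V_{j+1})$ for $i<j$, such a path starting in $V_1$ can only move from $V_i$ to $V_{i+1}$. Hence $P=(\alpha=x_1,\dots,x_k=\omega)$ with $x_i\in V_i\cap S$, and $P$ is a temporal path in $\G$ as well, which is exactly the statement.

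To show that $P$ stays inside $V$, I will go through the possible edges by which $P$ could leave $V$. These are the in-edges, the out-edges, and the edges incident with $\alpha$ or $\omega$ that go to sinks or sources of connector paths. First, $P$ cannot start at $\alpha$ with an edge $\{\alpha,t^{c}\}$ to a sink. Such an edge carries an $\alpha$-time label, and after that $t^c$ only has a blue-time edge. So $P$ would be trapped inside $\mP[c]$, because blue-time edges are the last labels and lead nowhere outside $\mP[c]$. Second, $P$ cannot use an in-edge in either direction: by \Cref{claim dont go in}, traversing $\{v,\vin[v]\}$ toward $\vin[v]$ forces $P$ to end in $\mP[{\vin[v]}]$, and traversing it toward $v$ forces $P$ to end at $v\neq\omega$ (recall that $\omega$ has no in-edge). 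Third, $P$ cannot visit $\omega$ before its end, and $\omega$ is entered only at the end. Edges from sources $s^c$ into $\omega$ carry $\omega$-times, so to enter $\omega$ that way $P$ would have to reach $s^c$ before the $\omega$-times. By the structure of the connector path, from any entry point (docking point $8$ or sink $20$) the source $1$ is reachable only along the blue path, and blue times come after the $\omega$-times, so this is impossible.

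The main obstacle is the out-edges, because \Cref{claim dont go out} assumed $S\cap\Vconn=\emptyset$, which may now fail. A path leaving $V$ at $v\in V_i$ through $\{v,\vout[v]\}$ at the out-time can continue to some $\vcon[v,u]$ at an out2edge-time, then to $\vin[u]$ at an edge2in-time, and then either take the in-edge into $u$ or go into the blue path of a connector path. Entering a blue path ends the path inside that connector path, so $P$ cannot reach $\omega$ that way; red edges and $\alpha$-edges are earlier than the out-time and cannot be used after it. So after leaving at the out-time, $P$ can only come back to $V$ through an in-edge at the in-time, and by \Cref{claim dont go in} it then ends at that vertex, which is not $\omega$. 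This contradicts $P$ reaching $\omega$.

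One more point needs checking: before the out-time, the only label available at $\alpha$ that lies in $V$ is the edge to $\alpha'$, which has a selection time, and selection times come after the out-time. So once $P$ is inside $V$, it can no longer use an out-edge at all. This closes every escape, and the lemma follows.
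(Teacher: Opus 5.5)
Your route is the paper's. Sink edges and in-edges either trap $P$ inside a connector path or end it at a vertex other than $\omega$. So the first edge of $P$ must be $\{\alpha,\alpha'\}$, whose selection-time label is later than the out-time. Hence no out-edge is usable afterwards and $P$ stays in $V$. Given this closing observation, which is the paper's key point, two parts of your proof are unnecessary: the separate treatment of entering $\omega$ from a source, and the paragraph chasing out-edge escapes through $\Vconn$. That paragraph also lists the continuations at the in-vertex $\vin[u]$ incompletely, since $P$ could step to another connector vertex at a later edge2in-time. This is harmless only because that step is again a dead end.

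The step that needs a patch is the last one. The ordering between blocks does not by itself force $P$ to move only from $V_i$ to $V_{i+1}$. Labels inside a single block $E(V_i,V_{i+1})$ are assigned arbitrarily, so at a star center the path can turn back within the same block. For example, $x_4\to v_5^1\to x_4'$ with $x_4,x_4'\in V_4$ is possible whenever the second edge has the larger label, and it extends a genuine temporal path from $\alpha$. You must add that such a turn is a dead end. The leaf $z\in V_i$ reached has no other neighbor in $V_{i+1}$, its edges to $V_{i-1}$ lie in an earlier block, and its in- and out-edges are already excluded. So $P$ would end at $z\neq\omega$ or inside a connector path. Alternatively, use the paper's static argument, which ignores labels: every vertex of degree larger than $2$ in $G'[V]$ lies in a singleton color class, hence on every $\alpha$--$\omega$ path in $G'[V]$, and this forces any such path to contain exactly one vertex of each class.
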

\begin{proof}
Due to~\Cref{undir both al and om}, we know that~$S$ contains~$\alpha$ and~$\omega$.
As~$\G[S]$ is temporally connected, this implies that there is a temporal path~$P$ from~$\alpha$ to~$\omega$ in~$\G[S]$.
We show that~$P$ has the desired structure, that is, $P=(\alpha = x_1,\dots, x_k=\omega)$, where~$x_i\in V_i$ for each~$i\in [1,k]$.
To this end, we show that~$P$ only visits vertices of~$V$.
Based on the structure of~$G'[V]$, this then immediately implies that~$P=(\alpha = x_1,\dots, x_k=\omega)$, where~$x_i\in V_i$ for each~$i\in [1,k]$, since each vertex of degree larger than 2 in~$G'[V]$ is part of a color class of size one and is thus contained in every path from~$\alpha$ to~$\omega$ in~$G'[V]$ (see~\Cref{fig selection edges}). 

Consider the edges incident with~$\alpha$.
There are three types of such edges:
\begin{itemize}
\item the edges towards the sinks of the connector paths,
\item the edge~$\{\alpha,\vin[\alpha]\}$, and
\item the edge~$\{\alpha,\alpha'\}$.
\end{itemize}
Recall that there is no out-edge for~$\alpha$, that is, there is no vertex~$\vout[\alpha]$.
We show that starting~$P$ with any of the first two types of edges would violate the property that~$P$ reaches~$\omega$.

If starting~$P$ with an edge of the first type, we end in the sink vertex~$t^c$ of some connector path~$\mP[c]$ while using an edge that receives a label from the~$\alpha$-times.
As~$\alpha$ is the only neighbor of~$t^c$ outside of~$\mP[c]$, the path~$P$ would need to continue inside of~$\mP[c]$ and at a time that appears after the~$\alpha$-times.
By definition of the connector paths, there is only one such edge incident with~$t^c$, namely the unique edge of the blue path of~$\mP[c]$.
However, this edge receives a time from the blue-times which are strictly later than all times assigned to edges incident with~$\omega$.
Thus, $P$ cannot use an edge of the first type as its first edge.

Similarly, if~$P$ traverses an edge~$\{v,\vin[v]\}$ from~$v$ to~$\vin[v]$ for some~$v\in V\setminus \{\omega,\omega'\}$, then the path can only continue at the blue-times, as the edge~$\{v,\vin[v]\}$ receives the largest time of all edges incident with~$\vin[v]$ that is not part of the blue path of~$\mP[{\vin[v]}]$.
Thus, $P$ cannot traverse an edge~$\{v,\vin[v]\}$ from~$v$ to~$\vin[v]$ for some~$v\in V\setminus \{\omega,\omega'\}$, since each edge of the blue path of~$\mP[{\vin[v]}]$ receives a label from the blue-times which are strictly later than all times assigned to edges incident with~$\omega$.

In particular, $P$ cannot use the edge~$\{\alpha,\vin[\alpha]\}$ as its first edge.

This implies that the first edge of~$P$ is~$\{\alpha,\alpha'\}$.
As this edge receives a label from the selection-times, no edge of~$P$ can traverse any edge at the out-time.
By construction, each vertex of~$V\setminus \{\alpha,\omega\}$ has at most two incident edges that are not incident with other vertices of~$V$.
These are the edges at the in-time and the out-time, both of which~$P$ cannot traverse as discussed above.
Hence, the underlying path of~$P$ in~$G'$ contains only vertices of~$V$, which implies that~$P$ has the desired property due to the structure of~$G'[V]$ (see~\Cref{fig selection edges}).
\end{proof}

Let~$X:= \{x_i\mid 1\leq i \leq k\}$.
In the remainder of the proof we show that~$X$ is a clique in~$G$.
This then implies that~$G$ contains a clique of size~$k$, which proves~\Cref{undirected direction backward}.

\begin{lemma}
$X$ is a clique of size~$k$ in~$G$.
\end{lemma}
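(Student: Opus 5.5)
We fix $1\le i<j\le k$ and show $\{x_i,x_j\}\in E$. If $j=i+1$, this is immediate: by \Cref{undir path from al to om}, $(x_1,\dots,x_k)$ is a path in $G'$, and the edges of $G'$ between $V_i$ and $V_{i+1}$ are exactly $E_G(V_i,V_{i+1})$. So the real work is the case $j\ge i+2$. Since $\G[S]$ is temporally connected, it contains a temporal path $P$ from $x_j$ to $x_i$. The plan is to show that $P$ must be $(x_j,\vout[x_j],\vcon[x_j,x_i],\vin[x_i],x_i)$. The vertex $\vcon[x_j,x_i]$ exists only if $\{x_i,x_j\}\in E$, so this finishes the argument.

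First, $P$ cannot stay inside $V$. All edges of $G'[V]$ carry selection-times that increase strictly with the index of the color class. Hence a temporal path inside $G'[V]$ can only move from $V_p$ to $V_q$ with $q\ge p-1$ (the star/matching structure does not change this). Since $j\ge i+2$, $P$ must therefore leave $V$. The edges between $V$ and $V'\setminus V$ fall into four kinds:
\begin{itemize}
\item in-edges: by \Cref{claim dont go in}, traversing one away from $V$ traps $P$ inside a connector path, so $P$ can use an in-edge only as its last edge, namely $\{\vin[x_i],x_i\}$;
\item edges at $\alpha$ to sinks: they have $\alpha$-times, after which a sink can only continue at blue-times, and no edge into $x_i$ is that late;
\item edges at $\omega$ to sources: they have $\omega$-times, which are later than every label incident with $x_i$ (the in-time is the largest);
\item out-edges.
\end{itemize}
Leaving via $\alpha$ or $\omega$ is therefore useless. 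So $P$ must traverse some out-edge $\{v,\vout[v]\}$ at the out-time and later re-enter $V$ through the in-edge of $x_i$.

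Second, $v=x_j$. All selection-times come after the out-time, so before using the out-edge $P$ cannot take any edge inside $V$. Its first edge must thus be the out-edge of $x_j$ (this exists since $x_j\notin\{\alpha,\alpha'\}$).

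Third, we analyse the segment between $\vout[x_j]$ (reached at the out-time) and $\vin[x_i]$ (left at the in-time). From $\vout[x_j]$ the only continuations with larger labels go to vertices $\vcon[x_j,\cdot]$ at out2edge-times, or into blue-time edges, which are dead ends. From a $\vcon$-vertex the next useful step goes to its unique $\Vin$-neighbour at an edge2in-time. The red edges and the $\alpha$-edges are too early to use at this point. From $\vin[y]$, the only non-blue continuation is the in-edge to $y$, so we need $y=x_i$. Therefore $P$ passes through some $\vcon[x_j,x_i]$, this vertex exists, and $\{x_i,x_j\}\in E$.

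The main obstacle is making the third step rigorous. I would track the current time window along $P$ and, for each vertex type, list its incident labels using \Cref{tab times}, to rule out detours through connector-path internals, through $\alpha$, or through $\omega$ between the out-time and the in-time. One particular case to check is a path that goes from $\vout[x_j]$ to a $\vcon$-vertex and then onto its connector path: from the docking point $8$ the red path is too early, and the blue path is too late to return before the in-time.
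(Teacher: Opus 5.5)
Your proposal is correct and follows essentially the same route as the paper's proof. The consecutive case is trivial; for $j\ge i+2$ the temporal path from $x_j$ to $x_i$ must leave $V$, can do so first only via the out-edge of $x_j$, must end with the in-edge of $x_i$, and in between can only pass through $\vcon[x_j,x_i]$. Unlike the paper, you need no separate treatment of $\alpha,\alpha',\omega,\omega'$, because $j\ge i+2$ already guarantees that $\vout[x_j]$ and $\vin[x_i]$ exist and you handle the $\alpha$- and $\omega$-edges by timing.

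One small inaccuracy in your third step: $\vin[y]$ also has later edge2in-time edges to other vertices $\vcon[w,y]$. These are dead ends, since the only remaining non-blue edge at $\vcon[w,y]$ goes to $\vout[w]$ at an earlier out2edge-time. The paper handles this globally by observing that no temporal path leads from $\Vin$ to $\Vout$ using only labels strictly between the out-time and the in-time.
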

\begin{proof}
Let~$1\leq i < j \leq k$.
We show that~$\{x_i,x_j\}$ is an edge of~$G$.
If~$j = i+1$, this trivially holds, as the edges between the vertices of~$V$ in~$\G$ are a subset of the edges of~$G$, and~$\{x_i,x_{i+1}= x_j\}$ is an edge of the path from~$\alpha$ to~$\omega$ stated in~\Cref{undir path from al to om}.
Now consider~$j \geq i+2$.
If~$i\in \{1,2\}$ (that is, if~$x_i \in \{\alpha,\alpha'\}$), then the statement trivially holds, since~$x_i$ is by assumption adjacent to each other vertex of~$V$ in~$G$.
Similarly, if~$j\in \{k-1,k\}$ (that is, if~$x_j \in \{\omega,\omega'\}$), then the statement also trivially holds, since~$x_j$ is by assumption adjacent to each other vertex of~$V$ in~$G$.
So, assume that~$x_i$ and~$x_j$ are not from~$\{\alpha,\alpha',\omega,\omega'\}$.
By the fact that~$S$ is a closed tcc in~$\G$ and~$X\subseteq S$, there is a temporal path~$P$ from~$x_j$ to~$x_i$ in~$\G$.
We show that~$P= (x_j,\vout[x_j],\vcon[x_j,x_i],\vin[x_i],x_i)$.
This then implies that the vertex~$\vcon[x_j,x_i]$ exists, which by construction then implies that~$\{x_i,x_j\}$ is an edge of~$G$.

Consider the edges incident with the vertices~$x_i$ and~$x_j$.
For each~$\ell\in \{i,j\}$, $x_\ell$ has only edges towards (i)~$\vout[x_\ell]$ (labeled with the out-time), (ii)~$\vin[x_\ell]$ (labeled with the in-time), and (iii)~each neighbor of~$x_\ell$ of~$V_{\ell-1}\cup V_{\ell+1}$ in~$G$ (labeled with a label from the selection-times).
Hence, $P$ starts later than the~$\alpha$-times and ends prior to the~$\omega$-times.

As already argued in the proof of~\Cref{undir not only from V}, $P$ has to use at least one vertex outside of~$V$, as all labels incident with~$x_j$ towards other vertices of~$V$ are strictly larger than all labels incident with~$x_i$ towards other vertices of~$V$. 
This implies that~$P$ starts by traversing the edge~$\{x_j,\vout[x_j]\}$.
To see this, let~$q$ be the first vertex of~$P$ that is outside of~$V$.
The only possible choices for~$q$ are from~$\Vin\cup \Vout$.
Due to~\Cref{claim dont go in}, $q\notin \Vin$, as otherwise, $P$ would end in a vertex distinct from~$x_i$.
Thus~$q\in \Vout$.
However, as the unique edge between~$q$ and a vertex of~$V$ receives the out-label, which is the smallest label incident with~$x_j$, $P$ cannot traverse any edge prior to the edge that leads to~$q$, which gives us that~$q$ is the unique neighbor of~$\Vout$ of~$x_i$.
That is, $q = \vout[x_i]$.
Now consider the remainder of the path.
As~$P$ reaches a vertex of~$\Vout$ at the out-time and there are no edges between any vertex of~$V$ and any vertex of~$V'\setminus V$ in the times between the out-time and the in-time, $P$ has to contain the edge~$\{\vin[x_i],x_i\}$.
This is due to the fact that this edge receives the in-time, which is the largest label incident with~$x_i$.

So far, we thus know that~$P$ starts by traversing the edge~$\{x_j,\vout[x_j]\}$, ends by traversing the edge~$\{\vin[x_i],x_i\}$, and avoids vertices of~$V$ in between.
Note that by construction, each vertex of~$\Vconn$ has exactly one neighbor in~$\Vin$ and exactly one neighbor in~$\Vout$ and the edge towards the neighbor in~$\Vout$ receives a smaller label than the edge towards the neighbor in~$\Vin$.
Moreover, in the times strictly between the out-time and the in-time, each vertex of~$\Vout$ has only edges towards vertices of~$\Vconn$, each vertex of~$\Vin$ has only edges towards vertices of~$\Vconn$, and each vertex of~$\Vconn$ has only edges towards vertices of~$\Vin\cup \Vout$.
This implies that there is no path from any vertex of~$\Vin$ to any vertex of~$\Vout$ that uses only labels strictly between the out-time and the in-time.
As a consequence, $P$ has exactly one vertex~$q\in \Vconn$ between~$\vout[x_j]$ and~$\vin[x_i]$.
That is, $P=(x_j, \vout[x_j],q,\vin[x_i],x_i)$.
By construction, the only vertex of~$\Vconn$ that can be adjacent to both~$\vout[x_j]$ and~$\vin[x_i]$ is~$\vcon[x_j,x_i]$.
As this vertex only exists if~$\{x_i,x_j\}$ is an edge of~$G$, we conclude that~$X$ is a clique in~$G$.
\end{proof}

The correctness of the reduction now follows from~\Cref{undirected direction forward} and~\Cref{undirected direction backward}.

\subsection{Implication~\ref{imp hardness question}}

Recall that if we are dealing with a yes-instance of MCC in the reduction, then there is a closed tcc of size exactly~$k + 20\cdot \left(2\cdot (k-2) +  \binom{k}{2} - (k-1) \right) \in \Oh(k^2)$ in~$\G$, where~$k$ is the size of the sought clique in the MCC instance.
Otherwise, $\G$ contains no closed tcc of size larger than~$2$.
As MCC is~\W1-hard when parameterized by~$k$, this implies also parameterized intractability for the following famous problem even on happy temporal graphs and thus in particular in the non-strict model.

\prob{\CTCC}{A temporal graph~$\G$ and an integer~$k$.}{Does~$\G$ have a closed temporally connected component of size exactly~$k$?}

\begin{theorem}[\Cref{hardness exact tcc} reformulated]
\CTCC is \W1-hard when parameterized by~$k$ even on happy undirected temporal graphs.
\end{theorem}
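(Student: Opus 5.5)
The plan is to reuse the reduction behind \Cref{hardness undir happy} unchanged as a parameterized reduction from \textsc{Multicolored Clique}, parameterized by the clique size, to \CTCC. Given an MCC instance $(G,k)$, we first apply the normalization described at the start of \Cref{sec hard happy}: duplicate the color classes, add universal singleton classes, and enforce $k \bmod 3 = 0$ together with the star/matching structure. This produces an equivalent instance whose new parameter $k'$ is bounded by $\Oh(k)$, since the duplication doubles $k$ and only $\Oh(k)$ singleton classes are needed. We then build the happy undirected temporal graph $\G$ in polynomial time and output the pair $(\G,k^*)$ with $k^* := k' + 20\cdot\left(2(k'-2) + \binom{k'}{2} - (k'-1)\right)$. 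This $k^*$ depends only on $k'$, and $k^*\in\Oh(k^2)$, so the parameter is bounded by a function of $k$ alone, as a parameterized reduction requires.

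For correctness, the forward direction is exactly \Cref{undirected direction forward}. A multicolored clique $X$ yields the closed tcc $S = X\cup\bigcup_{c\in C_X}V(\mP[c])$, and its size is exactly $k^*$ because $|C_X| = (k'-2)+(k'-2)+\left(\binom{k'}{2}-(k'-1)\right)$ and each connector path has $20$ vertices. For the backward direction, a closed tcc of size exactly $k^*$ is nontrivial, since $k^*\ge 3$ for $k'\ge 3$. \Cref{undirected direction backward} therefore gives a clique of size $k'$ in the normalized instance, and hence one of size $k$ in $G$.

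The main point to verify, which I expect to be the only real obstacle, is that the backward direction only needs \emph{some} nontrivial closed tcc and is insensitive to its size. This holds because \Cref{undirected direction backward} passes to a minimal nontrivial closed tcc, and any closed tcc of size exactly $k^*$ witnesses that such a minimal one exists. A second check is that the normalization keeps the parameter bounded: the footnoted transformation adds singleton universal classes whose number depends only on $k$ and on the residues needed, not on $|V|$. Since MCC is \W1-hard for $k$~\cite{C+15}, \CTCC is \W1-hard for its size parameter on happy undirected temporal graphs. Because happy temporal graphs are both proper and simple, this hardness carries over to the non-strict model.
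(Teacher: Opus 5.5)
Your proposal is correct and follows the paper's route. The paper reuses the happy undirected reduction, observes that a yes-instance yields a closed tcc of size exactly $k+20\bigl(2(k-2)+\binom{k}{2}-(k-1)\bigr)$ and a no-instance has no closed tcc of size larger than $2$, and concludes via the \W1-hardness of MCC. Your additional checks, that the normalization keeps $k'\in\Oh(k)$ and that the backward direction only needs some nontrivial closed tcc, make explicit what the paper leaves implicit.
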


That is, we get Implication~\ref{imp hardness question}.

\section{Hardness for Strict Undirected Temporal Graphs of Constant Lifetime}\label{sec hard strict}
We now show hardness also for undirected temporal graphs of constant lifetime in the strict setting.

\begin{theorem}\label{hardness undir strict}
\NCTCC is \NP-hard on simple undirected temporal graphs of lifetime~$55$ in the strict setting.
\end{theorem}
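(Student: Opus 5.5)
We reduce again from MCC and follow the blueprint of the happy undirected reduction (\Cref{hardness undir happy}), with two changes. First, the lifetime must be constant, so the selection-times can no longer grow with $k$. Second, in the strict setting a single label may be shared by adjacent edges without ever being chained. This lets us drop the subdivision vertices $\vcon[u,v]$ and use the left picture of \Cref{fig sketch}. Concretely, every vertex $\vin[v]$ and $\vout[u]$ is the docking point (vertex $8$) of a copy of a connector path, with sink $20$ joined to $\alpha$ and source $1$ joined to $\omega$. Validation edges $\{\vout[u],\vin[v]\}$ are added only for edges $\{u,v\}\in E$ whose reverse reachability is not already provided by the selection gadget. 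The out-edges and in-edges $\{u,\vout[u]\}$, $\{\vin[v],v\}$ are added as before.

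For the labels we reuse the interval scheme of \Cref{tab times}, but all edges in one interval share one constant label. The red paths get labels $1,\dots,19$, so all red paths share them, and the blue paths get $37,\dots,55$. The middle $17$ labels are split as follows: $\alpha$-time, out-time, validation-time, a constant number of selection labels, in-time, and $\omega$-time. This works because a copy of $\mP$ is only attached to the rest of the graph at vertices $1$, $8$, $20$, and strictness forbids traversing two edges with the same label in sequence. For the selection gadget we mimic the directed construction of \Cref{hardness happy directed} rather than the increasing path $\alpha=x_1,\dots,x_k=\omega$. That directed gadget uses parity of the color-class index, so a constant number of labels suffices: $\alpha$ reaches $V_1$ early, $\omega$ is reached from odd classes at one label, $\omega$ reaches even classes at another, and even classes connect to their odd neighbours at two further labels. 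Triangles and $4$-cycles must be avoided so that \Cref{tccimplication} (girth $\ge 5$, hence every nontrivial closed tcc has size $\ge 17$) still applies. For this we use the w.l.o.g. assumptions already used in \Cref{hardness undir happy}, namely single-vertex classes interleaved with star/matching structure between consecutive classes. Where needed, we replace a would-be triangle through $\alpha$ or $\omega$ by a path through an auxiliary vertex, exactly as $\alpha'$ and $\omega'$ are used there. We then check by inspection that the underlying graph has girth $5$. Note that \Cref{tccimplication}, \Cref{gadget no tcc}, \Cref{gadget extensions no tcc} and \Cref{connectivity} are all available, since a simple graph with strict reachability behaves like a happy graph along any single path.

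The forward direction copies \Cref{undirected direction forward}. Take $S=X\cup\bigcup_{c\in C_X}V(\mP[c])$. Every gadget vertex reaches $\alpha$ via its red path by the $\alpha$-time. From $\alpha$, the selection edges (within the constant number of selection labels) reach every $x_i$ and $\omega$. Every $x_i$ reaches $\omega$ before the $\omega$-time, and $\omega$ then reaches all gadget vertices through the blue paths. The reachability pairs between clique vertices that the selection gadget does not provide use $(x,\vout[x],\vin[y],y)$ at out-time $<$ validation-time $<$ in-time.

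The backward direction, for a minimal nontrivial closed tcc $S$, is the main work, and I would follow the sequence of lemmas of the happy case. First, $S\not\subseteq V$, because the selection gadget alone admits no nontrivial closed tcc (argued via its parity/label structure as in the directed Claim). Second, $\alpha,\omega\in S$, using 2-connectivity from \Cref{connectivity}, the "at least two neighbours of every touched connector path" argument, and \Cref{gadget extensions no tcc} for the degenerate cases around $\alpha$ and $\omega$. Third, a temporal $\alpha$–$\omega$ path forces a vertex of every color class, by the inductive parity argument of \Cref{claim for each vert some color}. Fourth, nonconsecutive selected vertices can reach each other only through a validation edge, which gives the clique.

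The step I expect to be hardest is the second one, together with the analogue of \Cref{dir happy paths between V}. With shared constant labels there are many more candidate temporal walks that combine selection edges, validation edges and connector paths. We must rule out, label by label, any short detour through $\alpha$ or $\omega$ that would give a spurious connection in the strict model. I would first try to fix the label order so that each docking point's non-blue incident labels are strictly sandwiched, making \Cref{claim dont go in} and \Cref{claim dont go out} hold verbatim, and then redo the six-case analysis.
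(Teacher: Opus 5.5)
Your proposal has a genuine gap at the selection gadget, which is the real new ingredient of this theorem. Turning the arcs of the directed parity gadget from \Cref{hardness happy directed} into undirected edges fails for two reasons.

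First, it creates triangles. In that gadget $\alpha$ is joined to all of $V$, $\omega$ to all classes, and consecutive classes to each other. So, for example, $\{\alpha,v_1,v_2\}$ and $\{\omega,v_2,v_3\}$ become triangles. A triangle is a nontrivial closed tcc whatever its labels, so every instance would become a yes-instance.

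Second, and more fundamentally, the induction in \Cref{claim for each vert some color} works only because the arcs at $\omega$ are one-way. For even $j$, the path from $v_j$ to $\omega$ is forced through $V_{j+1}$ because the arc between $v_j$ and $\omega$ points towards $v_j$; the odd case is symmetric. Once $\{\omega,v_j\}$ is an undirected edge, $v_j$ and $\omega$ reach each other directly, and nothing forces $S$ to meet every class.

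Your remedy of replacing ``would-be triangles'' by paths through auxiliary vertices is not a construction. In an undirected graph, one-wayness can only be simulated by subdivided paths with increasing labels. Each auxiliary vertex must then itself be temporally connected to all of $S$ and must not create new small closed tccs, and none of this is specified or verified. Note also that with constant lifetime a strict temporal path has $\Oh(1)$ edges. Hence the $\alpha$--$\omega$ path can certify only $\Oh(1)$ classes, and the forcing must come from a whole family of required paths.

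The paper uses a different gadget with a different forcing device. All classes are singletons except $V_i$ with $i\ge 11$ and $i\equiv 5 \pmod 6$, and consecutive classes form complete bicliques. For each $i\equiv 2\pmod 6$ there is a block $v_i^*,v_{i+1}^*,v_{i+2}^*,V_{i+3},v_{i+4}^*,v_{i+5}^*,v_{i+6}^*$ labeled $5,\dots,10$, with the same labels in every block. In addition there are the edges $\{v_5^*,v_{i+1}^*\}$ with label $5$, and $\{v_{i+6}^*,\omega'\}$ and $\{v_{i-2}^*,v_{i+6}^*\}$ with label $11$. This is how the selection part uses only $12$ labels.

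The crucial point is that the pairs $(v_5^*,v_3^*)$ and $(v_i^*,v_{i-2}^*)$ are deliberately left out of $A_\conn$. By \Cref{claim paths inside V strict}, the temporal path from $v_i^*$ to $v_{i-2}^*$ that $\mg[S]$ must contain therefore stays inside $V$. It runs through block $i$, picks a vertex of $V_{i+3}$, and visits $v_{i+4}^*$ and $v_{i+6}^*$, which triggers the next inductive step. You omit pairs from the validation set only to avoid redundancy, not as the forcing mechanism; that is the missing idea.

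A secondary problem concerns girth. Direct edges $\{\vout[x],\vin[y]\}$ generally create $4$-cycles, e.g.\ $\vout[x],\vin[y],\vout[z],\vin[w]$ whenever $(x,y),(z,y),(z,w),(x,w)$ are all validated. So girth $5$, and with it \Cref{tccimplication}, is not available globally. The paper instead relies on triangle-freeness, minimality via \Cref{intermediate}, \Cref{connectivity}, and \Cref{gadget extensions no tcc}. The latter is applied only to small subgraphs of girth $5$.
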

We mostly follow the reduction behind~\Cref{hardness undir happy}.
The main differences are 
\begin{itemize}
\item we assign only~$\Oh(1)$ distinct time labels per time interval,
\item the vertices of~$\Vconn$ (and their respective connector paths) are not added, but there are now edges between vertices of~$\Vin$ and~$\Vout$, and
\item there are edges between vertices of~$V$ that are not necessarily in consecutive color classes.
\end{itemize}

We also assume a slightly different restriction on the structure of the graph from the MCC instance.
Let~$I= (G=(V,E),k)$ be an instance of~MCC and let~$V_1 \cup \dots \cup V_k$ be a~$k$-partition of~$V$.
We can assume that~$k > 20$, $k \mod 6 = 4$, and that $V_i$ has size~$1$ for each~$i\in [1,k]$ with~$i < 11$ or~$i\mod 6 \neq 5$.
Moreover, we can assume that~$G[V_i \cup V_{i+1}]$ is a complete biclique, and that for each color class of size one, the unique vertex of that color class is adjacent to each other vertex of the graph, as otherwise, we are dealing with a trivial no-instance of MCC.\footnote{To achieve this property, we can simply add sufficiently many color classes to the instance that all contain only a single vertex which is adjacent to every other vertex of the graph.}
Note that this implies that for example the color classes~$V_1,V_2,V_{k-1}$, and~$V_k$ have size~$1$.

\subsection{Construction}
We start the construction:
We initialize the \foot~$G' := (V',E')$ as the graph~$(V,E' := \bigcup_{i=1}^{k-1} E_G(V_i,V_{i+1}))$.

Recall that~$|V_i| = 1$ for all~$i\in [1,k]$ with~$i< 11$ or~$i\mod 6 \neq 5$. 
For each such~$i$, we denote the unique vertex of~$V_i$ as~$v_i^*$.
We use the shorthands~$\alpha := v_1^*$, $\alpha' = v_2^*$, $\omega:= v_k^*$, and~$\omega' := v_{k-1}^*$.
We now add additional edges between some vertices of size-1 color classes.
See~\Cref{fig undir strict selection} for an illustration.
Namely, we add
\begin{itemize}
\item the edges~$\{v_3^*,v_8^*\}$ and~$\{v_8^*,\omega'\}$, and
\item for each~$i\in [8,k-8]$ with~$i \mod 6 = 2$, the edges~$\{v_5^*, v_{i+1}^*\}$, $\{v_{i+6}^*,\omega'\}$, and~$\{v_{i-2}^*,v_{i+6}^*\}$.
\end{itemize}
This completes the definition of the edges between the vertices of~$V$ in~$\mg$.

\begin{figure}
\centering
\scalebox{.8}{
\begin{tikzpicture}[scale=1.7,square/.style={regular polygon,regular polygon sides=4,inner sep=1pt}]

\tikzstyle{k}=[circle,fill=white,draw=black,minimum size=8pt,inner sep=2pt]

\def\iii{j}
\foreach \y in {3,...,8}{
\node (\iii\y) at (-4 + \y,0) {$v_{\y}^*$};
}

\node (a) at (-3,0) {$\alpha$};
\node (a2) at (-2,0) {$\alpha'$};

\node (o) at (8,0) {$\omega$};
\node (o2) at (7,0) {$\omega'$};

\draw[thick,-, bend left = 0] (a) edge node[black,fill=white] {1} (a2);
\draw[thick,-, bend left = 0] (a2) edge node[black,fill=white] {2} (\iii3);
\draw[thick,-, bend left = 0] (\iii3) edge node[black,fill=white] {3} (\iii4);
\draw[thick,-, bend left = 0] (\iii4) edge node[black,fill=white] {4} (\iii5);
\draw[thick,-, bend left = 0] (\iii5) edge node[black,fill=white] {5} (\iii6);
\draw[thick,-, bend left = 0] (\iii6) edge node[black,fill=white] {9} (\iii7);
\draw[thick,-, bend left = 0] (\iii7) edge node[black,fill=white] {10} (\iii8);
\draw[thick,-, bend left = 40] (\iii3) edge node[black,fill=white] {11} (\iii8);

\draw[thick,-, bend left = 0] (o2) edge node[black,fill=white] {11} (\iii8);
\draw[thick,-, bend left = 0] (o) edge node[black,fill=white] {12} (o2);

\begin{scope}[yshift=-3cm,xshift=20pt]
\def\iii{i}

\foreach \y in {-2,-1}{
\node (\iii\y) at (0 + \y,0) {$v_{i\y}^*$};
}

\foreach \y in {0}{
\node (\iii\y) at (0 + \y,0) {$v_{i}^*$};
}

\foreach \y in {3}{
\draw [draw=black] (.2 + \y,-1) rectangle (-.2 + \y,1);

\node (\iii\y) at (0 + \y,-1.2) {$V_{i+\y}$};
\node[knoten] (\iii\y1) at (0 + \y,-.75) {};
\node[knoten] (\iii\y2) at (0 + \y,0) {};
\node[knoten] (\iii\y3) at (0 + \y,.75) {};
}

\foreach \y in {1,2,4,5,6}{
\node (\iii\y) at (0 + \y,0) {$v_{i+\y}^*$};
}

\draw[thick,-, bend left = 0] (\iii-2) edge node[black,fill=white] {9} (\iii-1);
\draw[thick,-, bend left = 0] (\iii-1) edge node[black,fill=white] {10} (\iii0);
\draw[thick,-, bend left = 0] (\iii0) edge node[black,fill=white] {5} (\iii1);
\draw[thick,-, bend left = 0] (\iii1) edge node[black,fill=white] {6} (\iii2);
\draw[thick,-, bend left = 0] (\iii2) edge node[black,fill=white] {7} (\iii31);
\draw[thick,-, bend left = 0] (\iii2) edge node[black,fill=white] {7} (\iii32);
\draw[thick,-, bend left = 0] (\iii2) edge node[black,fill=white] {7} (\iii33);
\draw[thick,-, bend left = 0] (\iii4) edge node[black,fill=white] {8} (\iii31);
\draw[thick,-, bend left = 0] (\iii4) edge node[black,fill=white] {8} (\iii32);
\draw[thick,-, bend left = 0] (\iii4) edge node[black,fill=white] {8} (\iii33);
\draw[thick,-, bend left = 0] (\iii4) edge node[black,fill=white] {9} (\iii5);
\draw[thick,-, bend left = 0] (\iii5) edge node[black,fill=white] {10} (\iii6);
\draw[thick,-, bend left = 35] (\iii-2) edge node[black,fill=white] {11} (\iii6);

\end{scope}

\draw[thick,-, bend left = 0] (j5) edge node[black,fill=white] {5} (i1);
\draw[thick,-, bend left = 0] (o2) edge node[black,fill=white] {11} (i6);
\draw[thick,-, bend left = 0] (o2) edge node[black,fill=white] {11} (i0);

\node (h1) at ($.5*(j6)+ .5*(8,0)$) {};
\node (h11) at ($(h1) + (0,1)$) {};
\draw[] (j6) edge [dashed,thick,bend left = 20] node[black,fill=white, very near end] {11} (h11);

\node (h2) at ($(i0) - (2.3,0)$) {};
\node (h21) at ($(h2) + (0,1.2)$) {};
\draw[] (i0) edge [dashed,thick,bend right = 20] node[black,fill=white, very near end] {11} (h21);

\node (h3) at ($(i4)+ (2.3,0)$) {};
\node (h31) at ($(h3) + (0,1.2)$) {};
\draw[] (i4) edge [dashed,thick,bend left = 20] node[black,fill=white, very near end] {11} (h31);

\end{tikzpicture}
}
\caption{The selection gadget for the reduction with strict temporal paths and constant lifetime.
For each~$i\in[8,k-6]$ with~$i\mod 6 = 2$, the lower gadget exists for the color classes between~$V_{i}$ and~$V_{i+6}$ and overlap.
The dashed edges with label~$11$ are part of such a gadget for the previous or next~$i$.
Note that only the color classes~$V_{i+3}$ contain more than one vertex and all other color classes are assumed to have size~$1$.}
\label{fig undir strict selection}
\end{figure}

\begin{figure}
\centering
\scalebox{.75}{
\begin{tikzpicture}[square/.style={regular polygon,regular polygon sides=4},yscale = .8]

\begin{scope}[xshift = 2cm]
\node[knoten] (a) at (0,-6.5) {$\alpha$};

\node[knoten] (o) at (0,2.5) {$\omega$};
\end{scope}

\begin{scope}
\smallgadget

\draw[thick] (20) edge node[black,fill=white] {$\alpha$-time} (a);
\draw[thick] (1) edge node[black,fill=white] {$\omega$-time} (o);
\node[square,draw] (cl) at (8) {8};

\node (ll) at ($(1) + (-1,0)$) {$\mP[{\vout[u]}]$};
\end{scope}

\begin{scope}[xshift = 4cm]
\smallgadget

\draw[thick] (20) edge node[black,fill=white] {$\alpha$-time} (a);
\draw[thick] (1) edge node[black,fill=white] {$\omega$-time} (o);
\node[square,draw] (cr) at (8) {8};

\node (ll) at ($(1) + (-1,0)$) {$\mP[{\vin[v]}]$};
\end{scope}

\draw[thick] (cl) edge node[black,fill=white] {out2in-time} (cr);

\node[knoten] (u) at ($(cl) + (-3,0)$) {$u$};
\node[knoten] (v) at ($(cr) + (3,0)$) {$v$};

\draw[thick] (u) edge node[black,fill=white] {out-time} (cl);
\draw[thick] (v) edge node[black,fill=white] {in-time} (cr);

\end{tikzpicture}
}
\caption{An illustration of the paths between vertices of~MCC instance in the reduction behind~\Cref{hardness undir strict}.
For each~$(u,v)\in A_\conn$, we have a temporal path from~$u$ to~$v$ in our temporal graph, namely, the path $(u,\vout[u],\vin[v],v)$.
Similar to~\Cref{fig gadget extensions}, the rectangular areas indicate the connector paths~$\mP[{\vout[u]}]$ and~$\mP[{\vin[v]}]$, where the respective vertex~$8$ of each such path denotes the docking point of that connector path, that is, the vertices~$\vout[u]$ and~$\vin[v]$, respectively.
Note that in comparison to the reduction behind~\Cref{hardness undir happy} (see~\Cref{fig connection happy}), the vertices~$\vout[u]$ and~$\vin[v]$ are adjacent and there is no connector path with an intermediate vertex anymore. 
}
\label{fig connection strict}
\end{figure}

Next, we define the in-vertices and out-vertices identical to the reduction behind~\Cref{hardness undir happy}.
We initialize the two sets~$\Vin$ and $\Vout$ as empty sets.
For each~$v\in V \setminus \{\omega,\omega'\}$, we add a vertex~$\vin[v]$ to~$\Vin$, and for each~$v\in V \setminus \{\alpha,\alpha'\}$, we add a vertex~$\vout[v]$ to~$\Vout$.

For each~$c\in \Vin \cup \Vout$, we now add a copy of a connector path~$\mP[c]$ to~$G'$ (see~\Cref{def con path}), where we rename the vertex~$8$ of~$\mP[c]$ to~$c$, to which we may refer to as the~\emph{docking point} of~$\mP[c]$.
That is, now~$\Vin \cup \Vout$ is a subset of the vertices of~$G'$.
We may call vertex~$1$ and~$20$ of~$\mP[c]$ the~\emph{source} and~\emph{sink} of~$\mP[c]$ respectively.

We complete the construction of the \foot by adding edges mostly identical to the reduction behind~\Cref{hardness undir happy}.
The main difference is that we directly add edges between vertices of~$\Vout$ and~$\Vin$.
These edges will be added only for a specific subset of vertex pairs of~$V$, namely for:
\begin{align*}
A_\conn =~ & \{(x,y),(y,x)\mid \{x,y\}\in E\} ~ \setminus & \hspace{-9pt} \Big(\{(\alpha,x), (\alpha',x), (x,\omega'), (x,\omega)\mid x\in V\} \cup \{(v_5^*,v_3^*)\}  \\
&&\cup \{(v_{i}^*,v_{i-2}^*)\mid i\in [8,k-8], i\mod 6 = 2\}\Big).
\end{align*}
We now start by defining the remaining edges of~$\mg$.
\begin{enumerate}
\item For each~$c\in \Vin \cup \Vout$, we add the edge~$\{t^c,\alpha\}$ to~$G'$, where~$t^c$ is the sink of~$\mP[c]$.
\item For each vertex~$v\in V \setminus \{\alpha,\alpha'\}$, we add an edge~$\{v,\vout[v]\}$ to~$G'$ (called the~\emph{out-edge of~$v$}).
\item For each~$(x,y)\in A_\conn$, we add the edge~$\{\vout[x],\vin[y]\}$ to~$G'$.
\item For each vertex~$v\in V \setminus \{\omega,\omega'\}$, we add an edge~$\{\vin[v],v\}$ to~$G'$ (called the~\emph{in-edge of~$v$}).
\item For each~$c\in \Vin \cup \Vout$, we add the edge $\{\omega,s^c\}$ to~$G'$, where~$s^c$ is the source of~$\mP[c]$.
\end{enumerate}

This completes the definition of the \foot~$G'$.

\begin{table}
\caption{The different intervals of time labels and the edges that receive that respective labels from these intervals.
The intervals `out-time', `in-time', `$\alpha$-time', `$\omega$-time', and `out2in-time' are intervals of length one.
If an interval is below another interval in this table, then each label of that interval is strictly larger than the largest label of the interval above.
For example, all labels of the selection-times are strictly smaller than the~$\omega$-time.}
\centering
\begin{tabular}{l|l|l}
name &  receiving edges & \#labels\\\hline
red-times & edges of the red paths of each connector path& 19\\
$\alpha$-time & edges between~$\alpha$ and the sink vertex of each connector path& 1\\
out-time & edges~$\{\{v,\vout[v]\}\mid v\in V \setminus \{\alpha,\alpha'\}\}$& 1\\
out2in-time &  edges~$\{\{\vout[x],\vin[y]\}\mid (x,y)\in A_\conn\}$& 1\\
selection-times & edges between the vertices of~$V$ in~$\G$& 12\\
in-time &  edges~$\{\{v,\vin[v]\}\mid v\in V \setminus \{\omega,\omega'\}\}$& 1\\
$\omega$-time & edges between~$\omega$ and the source vertex of each connector path& 1\\
blue-times  & edges of the blue paths of each connector path& 19
\end{tabular}
\label{tab times strict}
\end{table}

\begin{observation}
$G'$ does not contain a triangle.
\end{observation}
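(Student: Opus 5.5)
The plan is a case analysis of where the three vertices of a hypothetical triangle $\{a,b,c\}$ in $G'$ could lie. The vertex set of $G'$ splits into $V$ and the union of the vertex sets of the connector paths $\mP[c]$ for $c\in\Vin\cup\Vout$; the docking points $\Vin\cup\Vout$ lie inside these paths. I will first list, for each vertex type, its neighbours outside its own part:
\begin{itemize}
\item a source has only $\omega$;
\item a sink has only $\alpha$;
\item $\vout[x]$ has $x$ and some vertices of $\Vin$;
\item $\vin[y]$ has $y$ and some vertices of $\Vout$;
\item every other connector-path vertex has none.
\end{itemize}
Inside a single connector path there is no triangle, since the path has girth $5$.

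\textbf{Case 1: all three vertices lie in $V$.} The edges of $G'[V]$ are of two kinds: the biclique edges $E_G(V_i,V_{i+1})$ between consecutive color classes, and the extra edges among the singleton-class vertices $v_i^*$. Every biclique edge joins classes whose indices differ by $1$. A triangle using only such edges would need three indices with pairwise difference $1$, which is impossible, so it must use an extra edge. The extra edges are $\{v_3^*,v_8^*\}$, $\{v_8^*,\omega'\}$, $\{v_5^*,v_{i+1}^*\}$, $\{v_{i+6}^*,\omega'\}$ and $\{v_{i-2}^*,v_{i+6}^*\}$ for $i\equiv 2 \pmod 6$ with $i\in[8,k-8]$. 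For each of them I will check that its two endpoints have no common neighbour, looking at the index sets of the neighbourhoods. For example, $N(\omega')=\{\omega,v_{k-2}^*\}\cup\{v_8^*\}\cup\{v_{i+6}^*\}$, where the $v_{i+6}^*$ have indices $\equiv 2\pmod 6$ and $k-2\equiv 2\pmod 6$. These vertices are pairwise nonadjacent: their indices differ by multiples of $6$, and the only extra edges between indices $\equiv 2$ and $\equiv 0 \pmod 6$ are not among them. The edge $\{v_{i-2}^*,v_{i+6}^*\}$ joins indices $\equiv 0$ and $\equiv 2 \pmod 6$, and $N(v_{i-2}^*)\cap N(v_{i+6}^*)$ is empty because their consecutive neighbours have indices $i-3,i-1$ versus $i+5,i+7$. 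The parity and modulo-$6$ bookkeeping here, and the similar check for the star at $v_5^*$, is the main obstacle. I will handle it with one table of all extra edges and their endpoint neighbourhoods.

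\textbf{Case 2: the triangle uses vertices outside $V$.} A triangle with exactly one vertex $p$ outside $V$ needs $p$ to have two neighbours in $V$. A source or sink has only one, and $\vout[x]$ and $\vin[y]$ each have exactly one, so this is impossible. Now suppose the triangle has two or three vertices outside $V$, spread over connector paths. Two vertices in different paths are adjacent only via an edge $\vout[x]\vin[y]$. Then a third vertex adjacent to both would have to be either in $V$ (but $x\neq y$, since $(x,y)\in A_\conn$ comes from an edge of $G$) or in some path. Docking points of $\Vout$ are pairwise nonadjacent, and the same holds for $\Vin$. Within a path, the docking point $8$ has no common neighbour with an outside vertex, since its only outside neighbours are $V$ and docking points of the other type.

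\textbf{Remaining subcases.} If one vertex is $\alpha$ or $\omega$ together with path vertices, recall that $\alpha$ is adjacent only to sinks, to $\vin[\alpha]$ and to $\alpha'$. Sinks of distinct paths are nonadjacent. Vertex $20$ and vertex $8$ of the same path are not adjacent: the connector path has $8$ and $20$ at distance at least $3$, as noted in the proof of \Cref{gadget extensions no tcc}. The argument for $\omega$ is symmetric, using the source $1$ and vertex $8$. Finally, if two vertices are in the same path and the third is outside it, the outside vertex would need two neighbours in that path. This happens only for $\alpha$ with $\mP[\vin[\alpha]]$ and $\omega$ with $\mP[\vout[\omega]]$, which were just excluded.
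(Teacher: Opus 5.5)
Your proof is correct. The paper states this as a bare observation with no argument, so there is no proof to compare against. Your case split fills in exactly the verification the paper leaves implicit:
\begin{itemize}
\item $G'[V]$ is handled through class indices;
\item every vertex outside $V$ has at most one neighbour in $V$;
\item edges between distinct connector paths only join $\Vout$ to $\Vin$;
\item only $\alpha$ and $\omega$ have two neighbours in a single connector path, and those two neighbours ($8,20$ resp.\ $1,8$) are nonadjacent.
\end{itemize}
This mirrors the girth remarks the paper does make elsewhere, namely consecutive-class edges and vertices $1$, $8$, $20$ being far apart.

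A few small loose ends in Case~1 are easy to close.
\begin{itemize}
\item The deferred check for $\{v_3^*,v_8^*\}$ is immediate: $N(v_3^*)\cap V=\{v_2^*,v_4^*,v_8^*\}$ and $N(v_8^*)\cap V=\{v_3^*,v_7^*,v_9^*,\omega'\}$ share no vertex.
\item The deferred check for the star at $v_5^*$ is also immediate. We have $N(v_{i+1}^*)\cap V=\{v_i^*,v_{i+2}^*,v_5^*\}$, where $i\equiv 2$ and $i+2\equiv 4 \pmod 6$ with $i\ge 8$. Neither $v_i^*$ nor $v_{i+2}^*$ is a neighbour of $v_5^*$, since those neighbours have indices $4$, $6$, or indices $\equiv 3 \pmod 6$.
\item In the $\omega'$ check, $\omega$ is not covered by your ``indices differ by multiples of $6$'' remark. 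It is nonadjacent to the rest of $N(\omega')$ simply because its only neighbour in $V$ is $\omega'$.
\item In the $\{v_{i-2}^*,v_{i+6}^*\}$ check, you should also list $\omega'\in N(v_{i+6}^*)$. It is not adjacent to $v_{i-2}^*$, so the intersection is still empty.
\end{itemize}
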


We complete the construction by defining the labels of the edges.
To this end, we define (similar to the reduction behind~\Cref{hardness undir happy}) several intervals that are pairwise disjoint and assign labels of these intervals to specific edge sets. 
See~\Cref{tab times strict} for a general overview.

\begin{itemize}
\item The first 19 time steps are the~\emph{red-times}. 
These are assigned to the red paths of all connector paths.
\item Afterwards, there is the~\emph{$\alpha$-time}.
This single label is assigned to the edges $\{\{t^c,\alpha\}\mid c\in \Vin\cup\Vout\}$.
Here, recall that~$t^c$ denotes the sink of the connector path~$\mP[c]$.
\item Afterwards, there is the~\emph{out-time}.
This single label is assigned to the edges $\{\{v,\vout[v]\}\mid v\in V \setminus \{\alpha,\alpha'\}\}$.
\item Afterwards, there is the~\emph{out2in-time}.
This label is assigned to the edges between vertices of~$\Vout$ and~$\Vin$.
\item Afterwards, there are the~\emph{selection-times}.
These are~$12$ different time labels that are assigned to the edges between the vertices of~$V$.
We will define the concrete way these labels are assigned at a later time.
\item Afterwards, there is the~\emph{in-time}.
This label is assigned to the edges $\{\{\vin[v],v\}\mid v\in V \setminus \{\omega,\omega'\}\}$.
\item Afterwards, there is the~\emph{$\omega$-time}.
This label is assigned to the edges $\{\{\omega,s^c\}\mid c\in \Vin\cup\Vout\}$.
Here, recall that~$s^c$ denotes the source of the connector path~$\mP[c]$.
\item The last 19 time steps are the~\emph{blue-times}. 
These are assigned to the blue paths of all connector paths.
\end{itemize}

It remains to define how the selection times are assigned to the edges between the vertices of~$V$ in~$\mg$ (see \Cref{fig undir strict selection}).
For simplicity, we define the selection times, that is, the labels assigned to the edges between vertices of~$V$, as the numbers of~$[1,12]$.
Note that this implies that all prior time-intervals only contain nonpositive time labels. 
We label the edges between the vertices of~$\{\alpha,\alpha',\omega,\omega'\} \cup \{v_\ell^*\mid \ell\in [3,8]\}$ as depicted in~\Cref{fig undir strict selection}.
For each~$i\in[8,k-6]$ with~$i\mod 6 = 2$,
\begin{itemize}
\item assign label~$5$ to the edge~$\{v_i^*,v_{i+1}^*\}$, label~$6$ to the edge~$\{v_{i+1}^*,v_{i+2}^*\}$, label~$9$ to the edge~$\{v_{i+4}^*,v_{i+5}^*\}$, and label~$10$ to the edge~$\{v_{i+5}^*,v_{i+6}^*\}$, 
\item assign label~$7$ to all edges of~$\{\{v_{i+2}^*,v_{i+3}\}\mid v_{i+3}\in V_{i+3}\}$ and assign label~$8$ to all edges of~$\{\{v_{i+3},v_{i+4}^*\}\mid v_{i+3}\in V_{i+3}\}$,
\item assign label~$11$ to the edges~$\{v_{i+6}^*,\omega'\}$ and~$\{v_{i+6}^*,v_{i-2}^*\}$ (this includes the edge~$\{v_6^*,v_{14}^*\}$), and
\item assign label~$5$ to the edge~$\{v_5^*,v_{i+1}^*\}$ (this includes the edge~$\{v_5^*,v_{9}^*\}$).
\end{itemize}

This completes the construction. 
Let~$\G$ be the resulting temporal graph.

Before we show the correctness of the reduction, let us observe several properties about temporal paths that traverse in-edges or out-edges.

\begin{claim}\label{claim dont go in strict}
Let~$x\in V\setminus \{\omega,\omega'\}$ and let~$c= \vin[x]$.
Each temporal path that traverses the edge~$\{x,c\}$ from~$x$ to~$c$ in~$\G$ ends in a vertex of the connector paths~$\mP[c]$ and each temporal path that traverses the edge~$\{x,c\}$ from~$c$ to~$x$ in~$\G$ ends in~$x$.
\end{claim}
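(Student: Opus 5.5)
The plan is to mirror the proof of \Cref{claim dont go in}, using only the relative order of the time intervals in \Cref{tab times strict}. The key step is to list all edges incident with $c=\vin[x]$ and all edges incident with $x$, together with their label intervals.

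For $c=\vin[x]$, the incident edges are:
\begin{itemize}
\item the red and blue path edges of $\mP[c]$ at vertex $8$ (red-times and blue-times);
\item the out2in edges $\{\vout[y],c\}$ for $(y,x)\in A_\conn$ (out2in-time);
\item the in-edge $\{x,c\}$ (in-time).
\end{itemize}
Here $c$ is vertex $8$ of the connector path, not its source or sink, so $c$ has no edge to $\alpha$ or $\omega$. Hence the in-time is the largest label at $c$ outside the blue-times.

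For $x$, the incident edges are:
\begin{itemize}
\item edges to other vertices of $V$ (selection-times);
\item possibly the out-edge $\{x,\vout[x]\}$ (out-time);
\item the in-edge $\{x,\vin[x]\}$ (in-time);
\item if $x\in\{\alpha,\omega\}$, edges to sinks or sources of connector paths.
\end{itemize}
Since $x\notin\{\omega,\omega'\}$, the only such extra edges occur for $x=\alpha$, and they carry the $\alpha$-time.

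First part. Suppose a strict temporal path traverses $\{x,c\}$ from $x$ to $c$ at the in-time. Any later edge at $c$ must have a strictly larger label, so it is a blue edge of $\mP[c]$. From then on every label is a blue-time. The only edges with blue-time labels are blue path edges of connector paths, and distinct connector paths are vertex-disjoint; their only links to the outside go through the docking point or through edges at $\alpha$- or $\omega$-times. So the path stays inside $\mP[c]$ until it ends.

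Second part. Suppose the path traverses $\{x,c\}$ from $c$ to $x$ at the in-time. Every other edge at $x$ has a label below the in-time: selection-times, out-time and $\alpha$-time all precede it. Hence no strictly later edge is available at $x$, and the path ends at $x$.

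The only step that needs care is the check for the special vertices $\alpha$ and $\alpha'$. The vertex $\alpha$ has edges to sink vertices, and I must confirm that the $\alpha$-time lies below the in-time, which \Cref{tab times strict} gives. I must also confirm that $\alpha$ has no edge to any source vertex; this holds because source edges go only to $\omega$, which is excluded. For $\alpha'$, there is no out-edge, which only removes incident edges and so changes nothing. With these checks done, both directions follow directly from the interval order.
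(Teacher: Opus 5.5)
Your proof is correct and takes essentially the same route as the paper, which simply reuses the argument of \Cref{claim dont go in}. Both rely on two facts: the in-edge $\{x,\vin[x]\}$ is the last edge at $\vin[x]$ with a label before the blue-times, so a path entering that way is confined to the blue path of $\mP[c]$; and it is the unique edge of largest label at $x$, so a path arriving that way must stop at $x$. Your explicit list of the incident edges and your check of the special vertex $\alpha$ spell out what the paper leaves implicit.
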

The claim follows by the same arguments as~\Cref{claim dont go in}, which stated the same properties for the temporal graph of the reduction on happy temporal graphs.

\begin{claim}\label{claim go out means go in strict}
Let~$x\in V\setminus \{\alpha,\alpha'\}$ and let~$c= \vout[x]$.
Each temporal path that traverses the edge~$\{x,c\}$ from~$x$ to~$c$ in~$\G$ ends in a vertex of~$V(\mP[c]) \cup \bigcup_{y \colon (x,y)\in A_\conn} (\{y\} \cup V(\mP[{\vin[y]}]))$.
\end{claim}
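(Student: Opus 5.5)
The plan is to follow the argument of \Cref{claim dont go in} and track which edges incident with $c=\vout[x]$ carry labels strictly larger than the out-time. Let $P$ be a temporal path that traverses $\{x,c\}$ from $x$ to $c$. This step happens exactly at the out-time. Since the setting is strict, every later edge of $P$ must have a label strictly larger than the out-time.

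First we list the edges incident with $c$ and their labels:
\begin{itemize}
\item the red-path edges of $\mP[c]$, at red-times (too early);
\item the blue-path edges of $\mP[c]$, at blue-times;
\item the out-edge $\{x,c\}$ itself;
\item the edges $\{c,\vin[y]\}$ for $(x,y)\in A_\conn$, at the out2in-time.
\end{itemize}
Note that $c$ is vertex $8$ of the connector path, not its source or sink. Hence $c$ is not incident with $\alpha$ or $\omega$.

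After $c$, $P$ therefore either ends at $c$, or continues along blue edges of $\mP[c]$, or moves to some $\vin[y]$ with $(x,y)\in A_\conn$ at the out2in-time.

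Case 1: $P$ takes a blue edge. All later labels are blue-times, which are the largest labels in $\G$. The only edges of $\mP[c]$'s vertices leaving $\mP[c]$ are the edges to $\alpha$, $\omega$ and the edges at the docking point. All of these have labels smaller than every blue-time. So $P$ stays inside $V(\mP[c])$.

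Case 2: $P$ moves to $d=\vin[y]$ at the out2in-time. The edges incident with $d$ that carry a label later than the out2in-time are:
\begin{itemize}
\item the in-edge $\{d,y\}$, at the in-time;
\item the blue edges of $\mP[d]$.
\end{itemize}
The remaining edges are too early to use: the other out2in edges (all at the same out2in-time, forbidden by strictness), the red edges, and nothing at $\alpha$ or $\omega$ since $d$ is again a docking point. If $P$ takes a blue edge of $\mP[d]$, then by the argument of Case 1 it ends in $V(\mP[d])$. If $P$ takes the in-edge to $y$, then by \Cref{claim dont go in strict} it ends at $y$. This holds because the in-edge is the unique largest-label edge at $y$.

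Together with the possibility that $P$ ends at $c$ or at $d$, this gives the stated set of endpoints.

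The main obstacle is verifying carefully that the in-edge is indeed the largest label incident with $y$. For $y\in V\setminus\{\omega,\omega'\}$ it is, since the only other incident edges are selection edges and possibly the out-edge, all of which are earlier. This is exactly the content of \Cref{claim dont go in strict}, which we may invoke. The remaining work is a check that the label intervals in \Cref{tab times strict} are ordered as used.
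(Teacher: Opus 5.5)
Your proof is correct and follows essentially the same route as the paper. You list the edges at $\vout[x]$ that are later than the out-time: the blue edges of $\mP[c]$ and the out2in edges to $\vin[y]$ with $(x,y)\in A_\conn$. At $\vin[y]$ you split again into the blue path of $\mP[{\vin[y]}]$ and the in-edge, which ends the path at $y$ by \Cref{claim dont go in strict}. Your explicit use of strictness to rule out a second out2in edge, and your remark that docking points are not adjacent to $\alpha$ or $\omega$, spell out what the paper leaves implicit.
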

\begin{proof}
Recall that the edge~$\{x,c\}$ receives the out-time.
The only incident edges with~$c$ that receive a larger label than the out-time, are edges towards other vertices of~$\mP[c]$ at the blue-times, and edges that receive the out2in-time towards vertices~$\vin[y]\in \Vin$ with~$(x,y)\in A_\conn$.
If a path traverses an edge of the first type, the path is guaranteed to end in~$\mP[c]$.
If a path traverses an edge towards a vertex~$\vin[y]\in \Vin$, the path can only continue with edges of two types:
the edge towards the vertex~$y$ at the in-time, or an edge towards another vertex of~$\mP[{\vin[y]}]$ at the blue-times.
By~\Cref{claim dont go in strict}, traversing the edge towards~$y$ lets the path end in~$y$.
Otherwise, as discussed above, traversing an edge that receives a label from the blue-times lets the path end in a vertex of~$\mP[{\vin[y]}]$.
Hence, each temporal path that traverses the edge~$\{x,c\}$ from~$x$ to~$c$ in~$\G$ ends in a vertex of~$V(\mP[c]) \cup \bigcup_{y \colon (x,y)\in A_\conn} (\{y\} \cup V(\mP[{\vin[y]}]))$.
\end{proof}

This also implies the following.
\begin{claim}\label{claim paths inside V strict}
Let~$x$ and~$y$ be distinct vertices of~$V$ with~$(x,y)\notin A_\conn$.
Each temporal path from~$x$ to~$y$ in~$\mg$ visits only vertices of~$V$.
\end{claim}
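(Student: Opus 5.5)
Let $P$ be a temporal path from $x$ to $y$ in $\mg$, with $(x,y)\notin A_\conn$. I would argue by contradiction: suppose $P$ visits a vertex outside of $V$, and let $q$ be the first such vertex on $P$. The predecessor of $q$ on $P$ is a vertex $z\in V$, and the edge $\{z,q\}$ is an edge between $V$ and $V'\setminus V$. By construction there are only four kinds of such edges: the in-edge $\{z,\vin[z]\}$, the out-edge $\{z,\vout[z]\}$, the edges $\{\alpha,t^c\}$, and the edges $\{\omega,s^c\}$. I would rule out each kind in turn, using the relative order of the time intervals in \Cref{tab times strict}.

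If $q=\vin[z]$, then \Cref{claim dont go in strict} says $P$ ends inside $\mP[{\vin[z]}]$, which is impossible since $y\in V$. If $z=\alpha$ and $q=t^c$, the edge has the $\alpha$-time. The only later edge at $t^c$ is a blue edge of $\mP[c]$, so $P$ stays in $\mP[c]$ forever, again a contradiction. If $z=\omega$ and $q=s^c$, the edge has the $\omega$-time, which is later than every label on an edge incident with a vertex of $V$ other than these source edges. So $P$ again cannot return to $V$.

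The remaining case is $q=\vout[z]$. By \Cref{claim go out means go in strict}, $P$ ends in $V(\mP[{\vout[z]}])$ or in $\{w\}\cup V(\mP[{\vin[w]}])$ for some $(z,w)\in A_\conn$. Since $y\in V$, this forces $y=w$ with $(z,y)\in A_\conn$.

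The main obstacle is handling $z\neq x$, that is, the case where $P$ first walks inside $V$ and only later takes an out-edge. The key observation is that the out-time is strictly smaller than all selection-times. Every edge inside $V$ carries a selection-time, so after traversing such an edge, $P$ can no longer use an out-edge. Hence an out-edge can only be the first edge of $P$, giving $z=x$, and then $(x,y)\in A_\conn$, contradicting the hypothesis.

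Putting the cases together, every temporal path from $x$ to $y$ stays inside $V$. The same argument handles a path that starts at $\alpha$ or $\omega$: its first edge may go directly to $t^c$ or $s^c$, and the cases above already dispose of this.
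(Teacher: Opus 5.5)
Your proof is correct and takes essentially the same route as the paper. Both rule out in-edges and the $\alpha$-sink and $\omega$-source edges anywhere on the path, using \Cref{claim dont go in strict} and the ordering of the time intervals. Both also show that an out-edge could only be the first edge of $P$, because selection-times exceed the out-time, and then exclude it via \Cref{claim go out means go in strict} and $(x,y)\notin A_\conn$. The only difference is cosmetic: you organize the case analysis around the first vertex outside $V$, whereas the paper organizes it around the first edge of $P$.
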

\begin{proof}
Assume that there is a temporal path~$P$ from~$x$ to~$y$.
We show that~$P$ visits only vertices of~$V$.
Consider the first edge of~$P$.
As~$(x,y)\notin A_\conn$, $P$ cannot start with the (potential) out-edge incident with~$x$ due to~\Cref{claim go out means go in strict}.
By~\Cref{claim dont go in strict}, $P$ also cannot traverse any in-edge~$\{y,\vin[y]\}$, as otherwise, the path would end in~$\mP[{\vin[y]}]$.
The only other possible edges that~$P$ could use as first edge are (i)~edges towards the sink vertices of the connector paths, if~$x = \alpha$, (ii)~edges towards the source vertices of the connector paths, if~$x = \omega$, or (iii)~edges between the vertices of~$V$.
If the path~$P$ would traverse an edge towards the sink~$t^c$ or source~$s^c$ of any connector path~$\mP[c]$, then~$P$ would end in a vertex of~$\mP[c]$.
This is due to the fact that any edge incident with~$s^c$ ($t^c$) that receives a label larger than the label of the edge from~$\omega$ ($\alpha$) is an edge of the blue path and thus receives a label strictly larger than the largest label incident with~$y$.
In particular, this implies that the first edge traversed by~$P$ is an edge incident with another vertex of~$V$.
This edge receives a label from the selection-times, which is strictly larger than the out-time.
Hence, $P$ can traverse none of the out-edges.
As these were the only edges between vertices of~$V$ and vertices outside of~$V$ that were not excluded from~$P$ by previous arguments, we conclude that~$P$ visits only vertices of~$V$.
\end{proof}

\begin{claim}\label{claim paths not in V strict}
Let~$x$ and~$y$ be distinct vertices of~$V$, such that there is no temporal path from~$x$ to~$y$ in~$\mg[V]$.
If there is a temporal path~$P$ from~$x$ to~$y$ in~$\mg$, then~$(x,y)\in A_\conn$ and~$P$ traverses the edge~$\{\vout[x],\vin[y]\}$.
\end{claim}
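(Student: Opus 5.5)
By \Cref{claim paths inside V strict}, if $(x,y)\notin A_\conn$ then every temporal path from $x$ to $y$ in $\mg$ visits only vertices of~$V$. Such a path is therefore a temporal path in $\mg[V]$, contradicting the hypothesis. Hence $(x,y)\in A_\conn$. It remains to show that $P$ traverses the edge $\{\vout[x],\vin[y]\}$.

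Since $P$ is not contained in $\mg[V]$, it leaves $V$ at some point. Let $q$ be the first vertex of $P$ outside $V$, and let $p\in V$ be its predecessor. The only edges between $V$ and $V'\setminus V$ are of four kinds: in-edges, out-edges, edges from $\alpha$ to sinks, and edges from $\omega$ to sources. I treat each kind in turn.

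\begin{itemize}
\item \emph{In-edge.} By \Cref{claim dont go in strict}, $P$ would then end inside $\mP[{\vin[p]}]$, which does not contain $y$.
\item \emph{Sink or source edge.} This is an edge $\{\alpha,t^c\}$ or $\{\omega,s^c\}$. As in the proof of \Cref{claim paths inside V strict}, the only later edges at $t^c$ or $s^c$ are blue-time edges. So $P$ would be trapped in $\mP[c]$, which again misses $y$.
\item \emph{Out-edge $\{p,\vout[p]\}$.} This edge carries the out-time, which is smaller than every selection-time. Hence no selection edge can precede it on $P$, so $p=x$ and the first edge of $P$ is the out-edge of $x$.
\end{itemize}

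Now apply \Cref{claim go out means go in strict} to $x$. Since $P$ ends at $y\in V$, the only admissible endpoints are the vertices $y'$ with $(x,y')\in A_\conn$, reached via $\vin[y']$. Following the proof of that claim, $P$ goes $x\to\vout[x]\to\vin[y']\to y'$: after the out2in edge, the path either stays inside $\mP[{\vin[y']}]$ (at blue-times) or takes the in-edge, and after the in-edge it ends by \Cref{claim dont go in strict}. So $P$ ends at $y'$, which forces $y'=y$, and the edge traversed is $\{\vout[x],\vin[y]\}$.

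The main thing to check is the exhaustiveness of the list of $V$-to-non-$V$ edges, including that $\alpha$ and $\omega$ have no out- or in-edges of the wrong type. One must also confirm that the sink and source escape routes are genuinely dead ends: their blue-time labels exceed every label at $y$, namely the selection-times and the in-time.
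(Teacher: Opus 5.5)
Your proof is correct and follows essentially the same route as the paper: you get $(x,y)\in A_\conn$ from the preceding claim on paths staying inside $V$, rule out leaving $V$ through an in-edge or a sink/source edge, and conclude that the first edge must be the out-edge of $x$. You then finish with the out-edge claim to force the edge $\{\vout[x],\vin[y]\}$. The only difference is that you make explicit why the out-edge must come first (out-time precedes all selection-times), which the paper leaves implicit.
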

\begin{proof}
By~\Cref{claim paths inside V strict}, we immediately get that~$(x,y)\in A_\conn$, as otherwise this would contradict the assumption that~$P$ is not in~$\mg[V]$.
It remains to show that~$P$ traverses the edge~$\{\vout[x],\vin[y]\}$.
By the arguments used in~\Cref{claim paths inside V strict}, $P$ cannot traverse an edge from~$\alpha$ or~$\omega$ to the sink or the source of any connector path, as otherwise~$P$ would end in a vertex of that connector path.
Additionally, $P$ cannot traverse an in-edge~$\{z,\vin[z]\}$ from~$z$ to~$\vin[z]$ due to~\Cref{claim dont go in strict}.
As~$P$ is not entirely in~$\mg[V]$, the first edge of~$P$ is~$\{x,\vout[x]\}$.
The next edge then has to be~$\{\vout[x],\vin[y]\}$, as otherwise, $P$ would end in a vertex of~$V(\mP[{\vout[x]}]) \cup \bigcup_{z \colon (x,z)\in A_\conn, z \neq y} (\{z\} \cup V(\mP[{\vin[z]}]))$.
\end{proof}

\begin{observation}\label{claim dont go in2out strict}
Let~$\{c,d\}$ be an edge of~$\mg$ with~$c\in \Vin$ and~$d\in \Vout$.
Each temporal path that traverses the edge~$\{c,d\}$ from~$c$ to~$d$ in~$\G$ ends in a vertex of~$V(\mP[d])$.
\end{observation}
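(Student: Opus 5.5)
We prove the observation by inspecting the labels of the edges incident with $d$, in the same way as \Cref{claim dont go in} and the first part of \Cref{claim go out means go in strict}. The edge $\{c,d\}$ with $c\in \Vin$ and $d\in \Vout$ exists only because of some pair $(x,y)\in A_\conn$ with $d=\vout[x]$ and $c=\vin[y]$. By the construction, it receives the out2in-time.

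Next we list all edges incident with $d$ and locate their labels in \Cref{tab times strict}. These edges are:
\begin{itemize}
\item the two edges of the red path of $\mP[d]$ at $d$ (the docking point is vertex~$8$, an interior vertex of the red path), labeled with red-times;
\item the two edges of the blue path of $\mP[d]$ at $d$, labeled with blue-times;
\item the out-edge $\{x,d\}$, labeled with the out-time;
\item the edges $\{d,\vin[z]\}$ for all $(x,z)\in A_\conn$, all labeled with the out2in-time.
\end{itemize}
The vertex $d=\vout[x]$ is not the source or sink of $\mP[d]$, so it has no edge to $\alpha$ or $\omega$. Hence the only labels at $d$ that are strictly larger than the out2in-time are blue-times.

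Now consider a temporal path $P$ that traverses $\{c,d\}$ from $c$ to $d$. Since reachability is strict, every edge used after reaching $d$ has a label strictly greater than the out2in-time. So either $P$ ends at $d\in V(\mP[d])$, or it continues with a blue edge of $\mP[d]$ at a blue-time.

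In the second case, every later edge of $P$ also has a blue-time label. The only edges of $\mG$ with blue-time labels are blue-path edges of connector paths, and connector paths are vertex-disjoint. An edge leaving $\mP[d]$ (to $\alpha$, $\omega$, $x$, or some $\vin[z]$) carries a label from a strictly earlier interval, so $P$ can never leave $\mP[d]$. Therefore $P$ ends in $V(\mP[d])$.

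The only point needing care is the completeness of the list of edges at $d$: the docking point has no edges to $\alpha$ or $\omega$, and all its remaining non-gadget edges carry the out-time or the out2in-time.
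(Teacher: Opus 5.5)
Your proof is correct and is exactly the argument the paper leaves implicit for this observation, mirroring \Cref{claim dont go in} and \Cref{claim go out means go in strict}. After arriving at $d$ at the out2in-time, strictness leaves only the blue edges of $\mP[d]$, and from then on every edge carries a blue-time label and so stays inside $\mP[d]$.

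One harmless slip: for $x=\omega$ the docking point $\vout[\omega]$ is adjacent to $\omega$ via its out-edge, so "no edge to $\alpha$ or $\omega$" is not literally true. You already list that edge with the out-time, so the conclusion is unaffected.
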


We show that there is a nontrivial closed tcc in~$\G$ if and only if~$G$ has a clique of size~$k$.
In the proof, we will highly rely on the relative order of the time intervals (see~\Cref{tab times strict}) and follow the same ideas as the proof behind~\Cref{hardness undir happy}.

\subsection{A size-$k$ clique in~$G$ implies a nontrivial closed tcc in~$\G$}
In this subsection, we show the one direction of the correctness of the reduction.

\begin{lemma}\label{undirected direction forward strict}
If there is a clique of size~$k$ in~$G$, then there is a nontrivial closed tcc of size~$\Theta(k)$ in~$\G$.
\end{lemma}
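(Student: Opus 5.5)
Given a clique $X=\{x_1,\dots,x_k\}$ with $x_i\in V_i$, I take $C_X:=\{\vin[x_i]\mid 1\le i\le k-2\}\cup\{\vout[x_i]\mid 3\le i\le k\}$ and $S:=X\cup\bigcup_{c\in C_X}V(\mP[c])$. Then $|S|=k+20(2k-4)=\Theta(k)$, and $S$ is clearly nontrivial. Note that for each $i$ with $|V_i|=1$ we have $x_i=v_i^*$. The plan follows the happy case, with the direct out2in-edges replacing the $\Vconn$ vertices.

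\textbf{Step 1: connector-path vertices.} First I handle reachability involving $S\setminus X$, copying \Cref{reachability outside V is trivial}. Every vertex of $\mP[c]$ reaches the sink $t^c$ along the red path during the red-times. It then reaches $\alpha$ at the $\alpha$-time, which precedes all selection-times. Next I need a temporal path in $\mg[X]$ from $\alpha$ to every vertex of $X$, and from every vertex of $X$ to $\omega$, all within the selection-times $[1,12]$. From $\omega$, the $\omega$-time edge to $s^c$ followed by the blue path reaches every vertex of $\mP[c]$. Since the $\omega$-time lies after all selection-times and the in-time, any vertex that reaches $\omega$ by time $12$ can then reach all of $S\setminus X$.

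So the main work is on the selection gadget of \Cref{fig undir strict selection} restricted to $X$. Using the labels, I check that the path $\alpha,\alpha',v_3^*,v_4^*,v_5^*$ has labels $1,\dots,4$. Then $v_5^*$ reaches each $v_{i+1}^*$ at label $5$, and continues $v_{i+2}^*$, $x_{i+3}$, $v_{i+4}^*,v_{i+5}^*,v_{i+6}^*$ with labels $6,7,8,9,10$. Finally $v_{i+6}^*$ reaches $\omega'$ at $11$ and $\omega$ at $12$; for the first block, $v_8^*$ reaches $\omega'$ at $11$ via $v_6^*,v_7^*$ at $9,10$. This gives $\alpha$ to everything and everything to $\omega$. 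The delicate cases are vertices early in a block, such as $v_i^*$, which reach $\omega'$ directly at label $11$, and vertices in the middle of a block, which go forward to $v_{i+6}^*$ and then use label $11$.

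\textbf{Step 2: pairs within $X$.} For two vertices of $X$, I split by whether $(x,y)\in A_\conn$. If it is, then $(x,\vout[x],\vin[y],y)$ is a temporal path with labels out-time, out2in-time, in-time, and the middle edge exists because $\{x,y\}\in E$. Otherwise the pair is one of the excluded ones: $x\in\{\alpha,\alpha'\}$, or $y\in\{\omega,\omega'\}$, or $(v_5^*,v_3^*)$, or $(v_i^*,v_{i-2}^*)$. The first two types are covered by Step 1, noting that $\alpha'$ reaches everything by the same path minus its first edge and $\omega'$ is reached at label $11$. For $(v_5^*,v_3^*)$ I use $v_5^*\to v_6^*\to v_7^*\to v_8^*\to v_3^*$ with labels $5,9,10,11$. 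For $(v_i^*,v_{i-2}^*)$ I use $v_i^*\to v_{i+1}^*\to\dots\to v_{i+6}^*\to v_{i-2}^*$ with labels $5,\dots,10,11$, passing through $x_{i+3}$.

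The main obstacle is the careful label bookkeeping in the selection gadget: I must confirm each excluded pair really has an increasing path inside $\mg[X]$, including boundary blocks near $k$ where $v_{i+6}^*=\omega'$ or close to it, using $k\bmod 6=4$. I would check these with the explicit labels from the construction.
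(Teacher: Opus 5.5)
Your proposal is correct and follows essentially the same route as the paper. It uses the same set $S=X\cup\bigcup_{c\in C_X}V(\mP[c])$, the same red-path/$\alpha$ and $\omega$/blue-path reduction to the two families of selection paths from $\alpha$ to $\omega$, and the same explicit detours $(v_5^*,v_6^*,v_7^*,v_8^*,v_3^*)$ and $(v_i^*,\dots,v_{i+6}^*,v_{i-2}^*)$ for the pairs excluded from $A_\conn$. The boundary case you flag never arises: the last block ($i=k-8$) ends at $v_{k-2}^*$, which is joined to $\omega'$ by a label-$11$ edge, so $v_{i+6}^*=\omega'$ never occurs.
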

For the remainder of the subsection assume that~$G$ contains a clique of size~$k$.
Let~$X$ be such a clique.
Thus, for each~$i\in [1,k]$, there is a unique vertex~$x_i$ in~$X\cap V_i$.
Clearly~$x_i = v_i^*$ for each~$i\in [1,k]$, where~$V_i$ has size~$1$.
Note that this implies~$x_1 = \alpha$, $x_2 = \alpha'$, $x_{k-1} = \omega'$, and~$x_k = \omega$.

Recall that for each~$x_i\in X \setminus \{\omega,\omega'\}$, there is a vertex~$\vin[x_i]$ in~$\Vin$.
Similarly, for each~$x_i\in X \setminus \{\alpha,\alpha'\}$, there is a vertex~$\vout[x_i]$ in~$\Vout$.
Furthermore, for each~$(x,y)\in A_\conn$, there is an edge~$\{\vout[x],\vin[y]\}$ that received the out2in-time in~$\mg$.

Let~$C_X$ denote these vertices from~$\Vin\cup\Vout$.
That is, $$C_X:= \{\vin[x_i]\mid 1 \leq i \leq k-2\}\cup \{\vout[x_i]\mid 3\leq i \leq k\}.$$

We show that~$S:= X \cup \{V(\mP[c]) \mid c\in C_X\}$ is a closed tcc.
Here, recall that~$\mP[c]$ denotes the connector path that contains~$c$ and that we denote the source of~$\mP[c]$ by~$s^c$ and the sink of~$\mP[c]$ by~$t^c$.

First, we show that each vertex of~$S\setminus X$ can reach each vertex of~$X$ in~$\G[S]$ and vice versa.
\begin{lemma}\label{reachability outside V is trivial strict}
In~$\G[S]$, each vertex of~$S\setminus X$ can reach each vertex of~$S$, and each vertex of~$S$ can reach each vertex of~$S\setminus X$. 
\end{lemma}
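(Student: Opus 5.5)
We follow the proof of \Cref{reachability outside V is trivial} for the happy reduction, adapting it to the strict setting, where all labels along a temporal path must increase strictly. The plan is to route every pair through the hub vertices: first show that every vertex of $S\setminus X$ reaches $\alpha$ before the selection-times, then that $\alpha$ reaches every vertex of $X$ (in particular $\omega$) within the selection-times, and finally that $\omega$ reaches every vertex of $S\setminus X$ using a path that starts at the $\omega$-time.

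\textbf{Step 1.} Fix $c\in C_X$. Every vertex of $\mP[c]$ reaches the sink $t^c$ along the red path. These labels are consecutive and lie in the red-times, hence are strictly increasing. It then uses the edge $\{t^c,\alpha\}$ at the $\alpha$-time, which is strictly after all red-times. If the vertex is $t^c$ itself, only that last edge is used.

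\textbf{Step 2 (main obstacle).} We need strict temporal paths in $\G[S]$ from $\alpha$ to every vertex of $X$, in particular to $\omega$, using only selection-times $[1,12]$. Unlike the happy case, the labels on consecutive color classes are not monotone; for instance, $v_6^*v_7^*$ has label $9$ but $v_8^*v_9^*$ has label $5$. So the path must follow the label scheme of \Cref{fig undir strict selection}.
\begin{itemize}
\item Start with $\alpha,\alpha',v_3^*,v_4^*,v_5^*$ using labels $1,2,3,4$.
\item From $v_5^*$, jump at label $5$ to $v_{i+1}^*$ for each block $i\equiv 2 \pmod 6$ with $i\in[8,k-6]$. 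Within the block, continue along $v_{i+1}^*,v_{i+2}^*,x_{i+3},v_{i+4}^*,v_{i+5}^*,v_{i+6}^*$ with labels $6,7,8,9,10$. This uses the edges $\{v_{i+2}^*,x_{i+3}\}$ and $\{x_{i+3},v_{i+4}^*\}$, where $x_{i+3}\in X$.
\item Finish with the edge $\{v_{i+6}^*,\omega'\}$ at label $11$, then $\{\omega',\omega\}$ at label $12$.
\item Reach $v_6^*,v_7^*,v_8^*$ via $v_5^*v_6^*v_7^*v_8^*$ with labels $5,9,10$.
\item Reach $v_i^*$ (the vertex with label-$5$ edge to $v_{i+1}^*$) and the last vertices of each block through the label-$11$ edges $\{v_{i-2}^*,v_{i+6}^*\}$. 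For example, $v_i^*$ is reached from $v_{i-2}^*$ or from the previous block's path arriving at label $10$. Each such case requires a careful check against the figure.
\end{itemize}
Every vertex used lies in $X$, since size-one classes are forced and $x_{i+3}\in X$, so all these paths stay inside $\G[S]$. Hence $\alpha$ reaches all of $X$ before the in-time, and in particular reaches $\omega$ strictly before the $\omega$-time.

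\textbf{Step 3.} For $c\in C_X$, $\omega$ uses $\{\omega,s^c\}$ at the $\omega$-time and then follows the blue path of $\mP[c]$ at the blue-times. These are consecutive and strictly later, so $\omega$ reaches every vertex of $\mP[c]$.

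\textbf{Combining.}
\begin{itemize}
\item A vertex $u\in S\setminus X$ reaches $\alpha$ (Step 1), then any vertex of $X$ (Step 2), and also $\omega$ strictly before the $\omega$-time. It then reaches any vertex of $S\setminus X$ by Step 3.
\item A vertex $x\in X$ reaches $\omega$ before the $\omega$-time, namely via the suffix of the Step 2 path, which exists for each $x$ by the same label analysis. Appending Step 3, $x$ reaches all of $S\setminus X$.
\item Finally, $\alpha$ and $\omega$ themselves are covered as vertices of $X$.
\end{itemize}
The delicate part is verifying that Step 2's suffix paths to $\omega$ exist strictly increasingly from every $x\in X$. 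We check this block by block: from $v_{i+1}^*,\dots,v_{i+6}^*$ forward to $\omega'$ at label $11$, and from earlier vertices via the label-$5$ jumps or $v_3^*v_8^*\omega'$.
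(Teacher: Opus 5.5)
Your proof follows the paper's exactly. Steps 1 and 3 are the paper's connector-path arguments, and Step 2 is its claim that in $\mathcal{G}[X]$ the vertex $\alpha$ reaches every vertex and every vertex reaches $\omega$. You even use the same block paths $(\alpha,\alpha',v_3^*,v_4^*,v_5^*,v_{i+1}^*,v_{i+2}^*,x_{i+3},v_{i+4}^*,v_{i+5}^*,v_{i+6}^*,\omega',\omega)$ with labels $1,2,\dots,12$.

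There is one local slip you need to fix.
\begin{itemize}
\item The route $v_3^*v_8^*\omega'$ in your last paragraph is not a strict temporal path, because both $\{v_3^*,v_8^*\}$ and $\{v_8^*,\omega'\}$ carry label $11$.
\item As written, nothing shows that $v_6^*$ and $v_7^*$ reach $\omega$. Your Step 2 only brings them in from $\alpha$ via $v_5^*v_6^*v_7^*v_8^*$, and the label-$5$ jumps leave from $v_5^*$, not from them.
\end{itemize}
Both vertices need the edge $\{v_8^*,\omega'\}$. The paper handles this with the single path $(\alpha,\alpha',v_3^*,v_4^*,v_5^*,v_6^*,v_7^*,v_8^*,\omega',\omega)$, whose labels are $1,2,3,4,5,9,10,11,12$.

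It also helps to organize Step 2 as the paper does: show that every vertex of $X$ lies on one of these full $\alpha$--$\omega$ temporal paths. Prefixes and suffixes then give both directions at once, so no separate per-vertex checks remain. The index ranges $[i+1,i+6]$ for $i=8,14,\dots,k-8$, together with $[1,8]\cup\{k-1,k\}$, cover all of $[1,k]$. So your extra case about reaching $v_i^*$ through the label-$11$ edges $\{v_{i-2}^*,v_{i+6}^*\}$ is unnecessary: $v_i^*$ is already the endpoint of the previous block's path, reached at label $10$.
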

\begin{proof}
For each~$c\in C_X$, each vertex of~$V(\mP[c])$ can reach the sink~$t^c$ by following the red path in~$\mP[c]$, thus arriving at~$t^c$ prior to the~$\alpha$-time.
Afterwards, the edge~$\{t^c,\alpha\}$ can be traversed during the~$\alpha$-time, thus reaching~$\alpha$ prior to the selection-times.
Similarly, for each~$d\in C_X$, $\omega$ can reach each vertex of~$V(\mP[d])$ by traversing the edge~$\{\omega,s^d\}$ during the~$\omega$-time and afterwards traversing the blue path of~$\mP[d]$ entirely.
Note that these paths starting from~$\omega$ start with the~$\omega$-time, which is strictly larger than the selection times.
We will show that in~$\mg[X]$, (i)~each vertex of~$V$ can be reached from~$\alpha$ and (ii)~each vertex of~$V$ can reach~$\omega$.
Recall that~$\mg[V]$ (and thus~$\mg[X]$) has only edges that receive labels from the selection-times.
This then proves the statement, as each vertex of~$\mP[c]$ can reach~$\alpha$ prior to the selection-times, and each vertex of~$\mP[d]$ can be reached from~$\omega$ by using only edges with time labels larger than the selection-times.

For the vertices of~$\{\alpha,\alpha',\omega,\omega'\} \cup \{v_\ell^*\mid \ell\in [3,8]\}$, both statements follow by the fact that $$(\alpha,\alpha',v_3^* , v_4^* , v_5^* , v_6^*,v_7^*,v_8^*,\omega',\omega)$$ is a temporal path in~$\mg$ and thus in~$\mg[S]$ (see~\Cref{fig undir strict selection}).
It remains to show the statement for each vertex~$x_\ell$ with~$\ell\in [9,k-3]$.
This implies that there is some~$i\in [8,k-8]$ with~$i\mod 6 = 2$, such that~$\ell\in [i+1,i+6]$.
Then, by construction, $(\alpha,\alpha',v_3^*,v_4^*,v_5^*,v_{i+1}^*,v_{i+2}^*,x_{i+3},v_{i+4}^*,v_{i+5}^*,v_{i+6}^*,\omega',\omega)$, is a temporal path.
Consequently, both statements also hold for vertex~$x_\ell$, which by the previous discussion implies that each vertex of~$S\setminus X$ can reach each vertex of~$X$ in~$\mg[S]$ and vice versa.
\end{proof}

It thus remains to show that the vertices of~$X$ can reach each other in~$\G[S]$.

\begin{lemma}
In~$\G[S]$, each vertex of~$X$ can reach each vertex of~$X$. 
\end{lemma}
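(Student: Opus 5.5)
We must show that for distinct $x,y\in X$, $x$ reaches $y$ in $\G[S]$. The plan splits on whether $(x,y)\in A_\conn$.

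\textbf{Pairs in $A_\conn$.} Here the path $(x,\vout[x],\vin[y],y)$ is available. Its labels are the out-time, the out2in-time and the in-time, which are strictly increasing. All four vertices lie in $S$: $(x,y)\in A_\conn$ forces $x\notin\{\alpha,\alpha'\}$ and $y\notin\{\omega,\omega'\}$, so $\vout[x],\vin[y]\in C_X$. Also $\{x,y\}\in E$ because $X$ is a clique, so the edge $\{\vout[x],\vin[y]\}$ exists in $\G$. This settles every pair except the excluded ones:
\begin{itemize}
\item pairs with $x\in\{\alpha,\alpha'\}$ or $y\in\{\omega,\omega'\}$;
\item the pair $(v_5^*,v_3^*)$;
\item the pairs $(v_i^*,v_{i-2}^*)$ for $i\in[8,k-8]$ with $i\bmod 6=2$;
\item pairs that are nonadjacent in $G$.
\end{itemize}
Nonadjacent pairs cannot occur inside the clique $X$.

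\textbf{Excluded pairs via selection-time paths.} For these pairs we use temporal paths inside $\mg[X]$, which are labeled only by selection-times in $[1,12]$.
\begin{itemize}
\item \emph{Source $\alpha$ or $\alpha'$, target $\omega$ or $\omega'$.} The proof of \Cref{reachability outside V is trivial strict} already gives, for every $x_\ell$, a temporal path $(\alpha,\alpha',v_3^*,v_4^*,v_5^*,v_{i+1}^*,\dots,x_{i+3},\dots,v_{i+6}^*,\omega',\omega)$, or its short version for small indices. Its prefixes handle sources $\alpha,\alpha'$ and its suffixes handle targets $\omega',\omega$.
\item \emph{Source $\omega$ or $\omega'$, target $\alpha$ or $\alpha'$.} This case is covered by $A_\conn$ whenever the pair is not itself excluded. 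The remaining subcases are the reverse pairs such as $(\omega,\alpha)$ and $(\omega',\alpha')$, and the pairs $(\alpha',\alpha)$ and $(\omega,\omega')$. The last two are direct edges. For $(\omega,\alpha)$ or $(\omega',\alpha)$ we use $\omega$-to-$\vout[\omega]$-to-$\vin[\alpha]$-to-$\alpha$, since $(\omega,\alpha)\in A_\conn$ as a pair (only $(\alpha,\cdot)$ and $(\cdot,\omega)$ are removed). The general rule is that the pair $(x,y)$ lies in $A_\conn$ unless $x\in\{\alpha,\alpha'\}$ or $y\in\{\omega,\omega'\}$.
\item \emph{The pair $(v_5^*,v_3^*)$.} We read off Figure~\ref{fig undir strict selection} the path $v_5^*\to v_6^*\to v_7^*\to v_8^*\to v_3^*$ with labels $5,9,10,11$, which is strictly increasing.
\item \emph{The pairs $(v_i^*,v_{i-2}^*)$.} We use $v_i^*\to v_{i+1}^*\to v_{i+2}^*\to x_{i+3}\to v_{i+4}^*\to v_{i+5}^*\to v_{i+6}^*\to v_{i-2}^*$ with labels $5,6,7,8,9,10,11$. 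The intermediate $x_{i+3}$ is in $X$ and the edges exist because $G[V_j\cup V_{j+1}]$ is a complete biclique. The constraint $i\le k-8$ guarantees that $v_{i+6}^*$ exists with the label-$11$ edge to $v_{i-2}^*$.
\end{itemize}

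\textbf{Main obstacle.} The hard part is bookkeeping: checking that every pair with $x\in\{\alpha,\alpha'\}$ or $y\in\{\omega,\omega'\}$ really has a selection-time path in $\mg[X]$. Sources $\alpha,\alpha'$ and targets $\omega,\omega'$ come from the global $\alpha$-to-$\omega$ path. I still need to confirm the path label sequences for the boundary blocks ($v_3^*,\dots,v_8^*$ and the last block ending at $\omega'$) against the figure. I would do this carefully rather than trust the sketch.
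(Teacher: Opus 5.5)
Your proposal is correct and follows essentially the same route as the paper. Every ordered pair in $A_\conn$ is handled by the path $(x,\vout[x],\vin[y],y)$, whose labels (out-time, out2in-time, in-time) increase strictly; the removed pairs are covered inside $\mg[X]$ by prefixes and suffixes of the $\alpha$-to-$\omega$ paths from \Cref{reachability outside V is trivial strict}, by the direct edges $\{\alpha',\alpha\}$ and $\{\omega,\omega'\}$, and by the explicit paths for $(v_5^*,v_3^*)$ and $(v_i^*,v_{i-2}^*)$. Your uniform treatment of $A_\conn$ is actually slightly tidier than the paper's restriction to $3\le p<q\le k-3$, which formally skips $x_{k-2}$; the boundary label checks you were worried about do go through ($1,2,3,4,5,9,10,11,12$ and $1,\dots,12$), and in your second bullet $(\omega,\alpha)$ and $(\omega',\alpha')$ already lie in $A_\conn$, so the only genuine leftovers are $(\alpha',\alpha)$ and $(\omega,\omega')$.
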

\begin{proof}
By the proof of~\Cref{reachability outside V is trivial strict}, we immediately get that~$\alpha$ and~$\alpha'$ can reach each vertex of~$X$ in~$\mg[S]$, and~$\omega$ and~$\omega'$ can be reached by each vertex of~$X$ in~$\mg[S]$.
As we assume that these four vertices are universal in~$G$, $A_\conn$ contains the pairs~$(x,\alpha)$ and~$(x,\alpha')$ for each~$x\in V\setminus \{\alpha,\alpha'\}$.
Thus, $(x,\vout[x],\vin[\alpha],\alpha)$ and~$(x,\vout[x],\vin[\alpha'],\alpha')$ are temporal paths in~$\mg[S]$.
This shows that each vertex of~$X$ can reach both~$\alpha$ and~$\alpha'$ in~$\mg[S]$, and vice versa.
Similarly, for each~$y\in V\setminus \{\omega,\omega'\}$, $(\omega,\vout[\omega],\vin[y],y)$ and~$(\omega',\vout[\omega'],\vin[y],y)$ are temporal paths in~$\mg[S]$.
This shows that both~$\alpha$ and~$\alpha'$ can reach each vertex of~$X$ in~$\mg[S]$, and vice versa.

For the remainder of the proof, focus on distinct vertices~$x_p$ and~$x_q$ of~$X$ with~$3 \leq p < q  \leq k-3$.
Note that if~$(x_p,x_q)\in A_\conn$, then~$(x_p,\vout[x_p],\vin[x_q],x_q)$ is a temporal path in~$\mg[S]$, and if~$(x_q,x_p)\in A_\conn$, then~$(x_q,\vout[x_q],\vin[x_p],x_p)$ is a temporal path in~$\mg[S]$.
As~$X$ is a clique in~$G$, $\{x_p,x_q\}$ is an edge of~$G$, which implies that~$(x_p,x_q)\in A_\conn$.
That is, $x_p$ can reach~$x_q$ in~$\mg[S]$.
Moreover, $\{x_p,x_q\}$ is an edge of~$G$, $A_\conn$ also contains~$(x_q,x_p)$ except for the two cases:
\begin{itemize}
\item $p=3$ and~$q=5$ and
\item $p=i-2$ and~$q=i$ for some~$i\in [8,k-8]$ with~$i\mod 6 = 2$.
\end{itemize}
For the first case, $(v_5^*,v_6^*,v_7^*,v_8^*,v_3^*)$ is a temporal path in~$\mg[S]$, and for the second case,
 $$(v_i^*,v_{i+1}^*,v_{i+2}^*,x_{i+3},v_{i+4}^*,v_{i+5}^*,v_{i+6}^*,v_{i-2}^*)$$ is a temporal path in~$\mg[S]$ (see~\Cref{fig undir strict selection}).
Thus, $x_p$ and~$x_q$ can reach each other. 
Hence, each vertex of~$X$ can reach each other vertex of~$X$.
\end{proof}

As a consequence, each vertex of~$S$ can reach each other vertex of~$S$ in~$\G[S]$.
That is, $\G[S]$ is temporally connected and thus~$S$ is a closed tcc in~$\G$.
As~$S$ is a closed tcc in~$\G$ of size~$k + 20 (2\cdot (k-2)) = 41k-80$ and~$k$ is larger than~$3$, this implies that~$\G$ contains a nontrivial closed tcc, which proves~\Cref{undirected direction forward strict}.

With the same arguments above, one can also show that~$S \cup (V'\setminus V)$ is a closed tcc, that is, we can add the vertices of the whole connector gadget (the vertices of all connector paths) to each nontrivial closed tcc of~$\mg$ without violating the property of being a closed tcc.

\subsection{A nontrivial closed tcc in~$\G$ implies a size-$k$ clique in~$G$}
In this subsection, we show the other direction of the correctness of the reduction.

\begin{lemma}\label{undirected direction backward strict}
If there is a nontrivial closed tcc in~$\G$, then~$G$ contains a clique of size~$k$.
\end{lemma}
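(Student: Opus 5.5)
We follow the structure of the backward direction of the happy reduction (Lemma~\ref{undirected direction backward}), adapted to the strict constant-lifetime construction. Let $S$ be a nontrivial closed tcc of minimal size. By Lemma~\ref{connectivity}, $G'[S]$ is 2-vertex-connected. Since the connector paths have girth $5$ and $G'$ is triangle-free, we cannot directly invoke Corollary~\ref{tccimplication} for the whole graph (4-cycles may appear among $\Vin,\Vout$ and within $V$). We will therefore argue structurally rather than by counting.

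\textbf{Step 1: $S\not\subseteq V$.} The edges of $\mg[V]$ carry only selection-times $[1,12]$, arranged as in Figure~\ref{fig undir strict selection}. We check that $\mg[V]$ has no nontrivial closed tcc. Any such set must contain a cycle, and the cycles of $G'[V]$ all pass through label-$11$ edges, the label-$5$ edges from $v_5^*$, or the biclique $V_{i+2},V_{i+3},V_{i+4}$. In each case the strictly increasing labels along the cycles (for instance the $4$-cycles $v_{i+2},a,v_{i+4},b$ with labels $7,8,8,7$) leave some ordered pair unreachable: $a$ cannot reach $b$, since both edges at $a$ go out with label $7$ and come in with label $8$. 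This step needs a small case analysis over the local gadget.

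\textbf{Step 2: $\alpha,\omega\in S$.} We mirror Lemma~\ref{undir both al and om} and Claim~\ref{if s contains neither al nor om}, with $\Vconn$ removed. If $S$ meets a connector path $\mP[c]$, then 2-connectivity together with Lemmas~\ref{gadget no tcc} and~\ref{gadget extensions no tcc} forces two neighbours of $\mP[c]$ to lie in $S$, and forces $c\in S$ unless both $\alpha$ and $\omega$ are present. If neither $\alpha$ nor $\omega$ is in $S$, then every docking point in $S$ is an isolated cut unless its only in-$S$ neighbours are in $V\cup\Vin\cup\Vout$. Using Claim~\ref{claim dont go in strict}, Claim~\ref{claim go out means go in strict} and Observation~\ref{claim dont go in2out strict}, the labels (out-time $<$ out2in-time $<$ in-time) make a $\Vin$-vertex unable to reach a $\Vout$-vertex. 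If exactly one of $\alpha,\omega$ is present, we repeat Cases~5 and~6 of the happy proof: paths from $V\setminus\{\alpha,\alpha'\}$ to $\alpha$ (respectively from $\omega$) cannot leave $V$ usefully and cannot go backwards in the selection order. The exceptional small configurations are exactly $\G_1$ through $\G_4$. The one new point is that here $\alpha$ can be reached inside $V$ only via the label-$1$ edge from $\alpha'$, which is still impossible.

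\textbf{Step 3: one vertex per class, and the clique.} A temporal path $P$ from $\alpha$ to $\omega$ in $\mg[S]$ cannot use sink/source edges or in-edges, and since it starts at selection-time it cannot use out-edges either. So $P$ lies in $\mg[V]$ and starts with the label-$1$ edge. Following the increasing labels $1,\dots,12$, $P$ must pass $v_3^*,\dots,v_5^*$ and then, for every gadget block, some $x_{i+3}\in V_{i+3}$. This is the main obstacle. The shortcut edges of labels $5$ and $11$ allow $P$ to skip blocks, so a single path does not select all classes. Instead we use the pairs missing from $A_\conn$: for each $i$, $v_i^*$ must reach $v_{i-2}^*$. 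By Claim~\ref{claim paths not in V strict} and since $(v_i^*,v_{i-2}^*)\notin A_\conn$, this path lies in $\mg[V]$ and must use the label-$11$ edge $\{v_{i+6}^*,v_{i-2}^*\}$. It is therefore forced through $v_{i+1}^*,v_{i+2}^*$, some $x_{i+3}\in V_{i+3}\cap S$, $v_{i+4}^*,\dots$, which we verify by checking all label-increasing routes. Finally, for $x_p,x_q\in S\cap V$ in nonadjacent classes with multiple vertices, no temporal path from $x_p$ to $x_q$ exists in $\mg[V]$ (the within-$V$ labels only climb), so Claim~\ref{claim paths not in V strict} gives $(x_p,x_q)\in A_\conn$, hence $\{x_p,x_q\}\in E$. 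The remaining pairs are adjacent to universal singletons or lie in consecutive bicliques, so one chosen vertex per class yields a $k$-clique.
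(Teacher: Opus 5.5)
Your Step~2 has a genuine gap. For the case where exactly one of $\alpha,\omega$ lies in $S$, you propose to repeat Cases~5 and~6 of the happy proof, on the grounds that paths from $V\setminus\{\alpha,\alpha'\}$ to $\alpha$ (resp.\ from $\omega$ into $V$) cannot leave $V$ usefully. In the happy reduction this rested on the fact that, when $S\cap\Vconn=\emptyset$, a path entering an out-edge is trapped in $\mP[{\vout[v]}]$.

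The strict construction has no $\Vconn$. Out-vertices are joined directly to in-vertices by out2in-time edges. Since $\alpha$ and $\omega$ are universal in $G$, you have $(v,\alpha)\in A_\conn$ for all $v\in V\setminus\{\alpha,\alpha'\}$ and $(\omega,v)\in A_\conn$ for all $v\in V\setminus\{\omega,\omega'\}$. Hence $(v,\vout[v],\vin[\alpha],\alpha)$ and $(\omega,\vout[\omega],\vin[v],v)$ are temporal paths, because out-time $<$ out2in-time $<$ in-time. So your assertion is false. Nothing you say excludes, e.g., $\alpha\in S$ and $\omega\notin S$ with some $v^*\in S\cap(V\setminus\{\alpha,\alpha'\})$ and $\vout[v^*],\vin[\alpha]\in S$. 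The case with neither $\alpha$ nor $\omega$ has the same defect: a $\Vin$-vertex does reach a $\Vout$-vertex when the two are adjacent, so the label order alone gives no contradiction.

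The missing ingredient is a reverse-direction argument, which is the paper's Claim~\ref{claim implies al or om}.

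\textbf{Out-vertices force $\omega$.} Let $x\in S\cap V$ and $\vout[y]\in S$ with $x\neq y$. A temporal path from $x$ to $\vout[y]$ cannot start with the out-edge of $x$: by Claim~\ref{claim go out means go in strict} it would end in $\mP[{\vout[x]}]$, in some $\mP[{\vin[z]}]$, or at some $z\in V$. It also cannot start with an in-edge or a sink edge, since those are dead ends. So it starts at a selection-time and can no longer use any out-edge. It can then enter $\mP[{\vout[y]}]$ only through $\omega$ and the source, so $\omega\in S$.

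\textbf{In-vertices force $\alpha$.} Symmetrically, let $\vin[y]\in S$ and $x\in S\cap V$ with $x\neq y$. From the docking point, the in-edge and the out2in-edges are dead ends for reaching $x$, and the source is unreachable inside the connector path. So the path must leave via the sink and $\alpha$, giving $\alpha\in S$.

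\textbf{Applying this.} In your bad configuration, $\alpha$ must reach $\vout[v^*]$, which forces $\omega$. Combine this claim with the two-neighbours claim, with $|S|\geq 4$ from triangle-freeness, and with $\mg_1,\dots,\mg_4$ for the degenerate cases. A case analysis on how many vertices of $\Vin$ and of $\Vout$ lie in $S$ then yields $\alpha,\omega\in S$.

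Three lesser points:
\begin{enumerate}
\item In Step~1, checking each cycle of $G'[V]$ on its own edges does not exclude sets containing several cycles or extra vertices. You need a global argument, such as the paper's: minimality, every vertex being intermediate on some path, and 2-connectivity, used to peel off $\alpha,\alpha',\omega,\omega',v_3^*,v_4^*,v_5^*$ and then the leftmost block.
\item In Step~3 you must establish $v_{i-2}^*,v_i^*\in S$ before using the pair $(v_i^*,v_{i-2}^*)$. This needs the base case $(v_5^*,v_3^*)\notin A_\conn$, which forces $v_6^*,v_7^*,v_8^*$, followed by induction.
\item Within $V$ the labels do not ``only climb'' here, since label-$11$ edges go backwards. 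The correct reason in the final step is that multi-vertex classes carry only labels $7$ and $8$, and $x_p,x_q$ have no common neighbour.
\end{enumerate}
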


For the remainder of this subsection, assume that~$\G$ contains a nontrivial closed tcc.
Let~$S$ be a nontrivial closed tcc in~$\G$ of minimal size.
We will carefully analyze the structure of~$S$ and eventually show that~$S\cap V$ is a clique of size~$k$ in~$G$.
As the \foot~$G'$ of~$\G$ contains no triangle, we immediately get that~$S$ has size at least~$4$.
Due to~\Cref{intermediate}, each vertex~$v$ of~$S$ is an intermediate vertex of at least one temporal path in~$\mg[S]$, as otherwise~$\mg[S\setminus\{v\}]$ would be still temporally connected.
Since~$S$ has size at least~$4$, this would then contradict the minimality of~$S$.

Again, we follow the same ideas as the proof behind~\Cref{hardness undir happy}:
\begin{itemize}
\item Firstly, we show that~$S$ contains at least one vertex that is not from~$V$.
\item Secondly, we derive from this that~$S$ contains~$\alpha$ and~$\omega$.
\item Finally, we show that this implies that~$S$ contains one vertex of each color class, and that~$S\cap V$ is a clique in~$G$.
\end{itemize}
The proofs for these statements differ slightly from the ones for the happy case, as there are more edges than before and the labels are different between the vertices of~$V$.

\begin{lemma}\label{undir not only from V strict}
$S$ contains at least one vertex of~$V'\setminus V$.
\end{lemma}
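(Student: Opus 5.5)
We plan to show directly that the temporal graph $\mg[V]$ contains no nontrivial closed tcc; since $S$ is a nontrivial closed tcc, $S$ then cannot be a subset of $V$. So assume towards a contradiction that $S\subseteq V$. Recall that $S$ has minimal size and at least~$4$ vertices, and that every vertex of $S$ is an intermediate vertex of some temporal path in $\mg[S]$ (\Cref{intermediate}). In particular, every vertex of $S$ has degree at least~$2$ in $G'[S]$, and its incident labels inside $S$ cannot all coincide. We also use \Cref{connectivity}, which applies since $\mg$ is simple and we are in the strict setting: $G'[S]$ is $2$-vertex-connected. Hence $S$ lies inside a single block of $G'[V]$ that contains a cycle.

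First we remove the degenerate vertices. The vertex $\alpha$ has the single incident edge $\{\alpha,\alpha'\}$ and $\omega$ has the single incident edge $\{\omega,\omega'\}$, so neither lies in $S$. The same degree argument excludes $\alpha'$, which is adjacent to $\alpha$ and $v_3^*$ only. Next we list the cycles of $G'[V]$ using \Cref{fig undir strict selection}. They are: the cycle $(v_3^*,\dots,v_8^*)$ closed by the label-$11$ edge $\{v_3^*,v_8^*\}$; for each $i$, the cycles $v_{i-2}^*,\dots,v_{i+6}^*$ closed by the label-$11$ edge $\{v_{i-2}^*,v_{i+6}^*\}$; the $4$-cycles inside each biclique $v_{i+2}^*,V_{i+3},v_{i+4}^*$ with labels $7,8$; the cycles through the hub $\omega'$ via $v_8^*$, $v_i^*$ and $v_{i+6}^*$ (all with label~$11$); and the cycles through the hub $v_5^*$ via the label-$5$ edges $\{v_5^*,v_{i+1}^*\}$.

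In each case, the argument will exhibit an ordered pair $(s,t)$ of vertices of $S$ such that $s$ cannot reach $t$ in $\mg[S]$. The exploited feature is the one already used in \Cref{undir not only from V}: along the ``backbone'' the labels increase from left to right within each block ($5,6,7,8,9,10$). Hence a strict temporal path moving leftwards must use a label-$11$ edge or a hub edge, and these are few.
\begin{itemize}
\item In the biclique $4$-cycles, the labels $7$ and $8$ only allow passing from $v_{i+2}^*$ to $v_{i+4}^*$. Two distinct vertices of $V_{i+3}$ therefore cannot reach each other unless more of the graph is included.
\item Vertices just to the right of a hub, which enter it with label~$11$, cannot be followed by anything except $\omega'$ and then $\omega$ (label~$12$), and $\omega\notin S$. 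So such a vertex cannot reach vertices to its left whose incident labels are all at most~$10$.
\end{itemize}
Concretely, we will pick $s$ to be a vertex of $S$ all of whose incident labels in $S$ are at least~$9$ or~$11$. We will pick $t$ to be a vertex all of whose incident labels in $S$ are at most~$8$ and which lies on no label-$11$ edge. We then argue that every label-increasing walk from $s$ inside $S$ stays within $\{s\}\cup N_{11}(s)\cup\{\omega'\}$, where $N_{11}(s)$ denotes the neighbors of $s$ via label-$11$ edges.

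The main obstacle is making this exhaustive. $S$ could span several overlapping gadgets glued through the hubs $\omega'$ and $v_5^*$, together with the label-$11$ chords $\{v_{i-2}^*,v_{i+6}^*\}$ that link consecutive gadgets. I would handle it as follows. Let $i$ be the largest index such that $S$ contains a vertex of the $i$-th gadget, that is, of $\{v_{i+1}^*,\dots,v_{i+6}^*\}\cup V_{i+3}$, or of the initial gadget if no such $i$ exists. Then take $s$ as the rightmost such vertex of $S$ and $t$ as the leftmost vertex of $S$ in that gadget. I would check, using the labels $5,\dots,11$ and the exclusions of pairs in the definition of $A_\conn$, that no strict temporal path returns from $s$ to $t$. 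This case analysis over whether $\omega'$ and $v_5^*$ belong to $S$ is where I expect most of the work.
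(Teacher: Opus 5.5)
Your framework (minimality via \Cref{intermediate}, minimum degree $2$ in $G'[S]$, and label monotonicity) is the right one. However, the step that is supposed to produce the contradiction fails as stated, and the case analysis that carries the proof is left as ``I would check''.

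Consider the fallback case where $S$ meets only the initial gadget. There, $\omega'$ and all vertices outside $\{v_3^*,\dots,v_8^*\}$ drop out, so $S$ is the whole $6$-cycle $(v_3^*,v_4^*,\dots,v_8^*)$. Your rule picks $s=v_8^*$ and $t=v_3^*$, but $v_8^*$ reaches $v_3^*$ in one step over the label-$11$ chord $\{v_3^*,v_8^*\}$. Your own ``concretely'' criterion (choose $t$ on no label-$11$ edge) contradicts this choice.

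Your claim that every label-increasing walk from $s$ stays in $\{s\}\cup N_{11}(s)\cup\{\omega'\}$ is also false: for example, $v_{i+6}^*$ reaches $v_{i+5}^*$ via label $10$. You do not need it. If all labels at $s$ inside $S$ are at least $9$ and all labels at $t$ inside $S$ are at most $8$, no strict temporal path from $s$ to $t$ can exist, since the first edge would have label at least $9$ and the last at most $8$. Finally, the exclusions in $A_{\conn}$ play no role here, because the edges they govern are not in $\mg[V]$.

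The ``main obstacle'' you anticipate disappears once the two local rules are applied systematically:
\begin{itemize}
\item $\omega'$ is excluded outright, exactly as in the paper. Since $\omega\notin S$, all its edges inside $S$ carry label $11$, so it is never an intermediate vertex.
\item Let $i$ be the largest regular index with $S$ meeting $\{v_{i+1}^*,\dots,v_{i+6}^*\}\cup V_{i+3}$, and let $s$ be the rightmost vertex of $S$ there. Since $v_{i+7}^*,v_{i+12}^*\notin S$, every choice $s\neq v_{i+6}^*$ either has at most one neighbor in $S$ or sees only one label (label $5$ at $v_{i+1}^*$, label $8$ at $v_{i+4}^*$).
\item For $s=v_{i+6}^*$, the same rules successively force $v_{i+5}^*$, $v_{i+4}^*$, some $x\in V_{i+3}$, $v_{i+2}^*$, and $v_{i+1}^*$ into $S$. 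Then $v_{i+6}^*$, whose labels in $S$ lie in $\{10,11\}$, cannot reach $v_{i+1}^*$, whose labels lie in $\{5,6\}$.
\item In the initial-gadget case, take $t=v_4^*$ (labels $3,4$) instead of $v_3^*$.
\end{itemize}

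With these repairs, your right-to-left sweep is a valid mirror image of the paper's proof, which sweeps from the left. The paper excludes $\alpha,\alpha',\omega,\omega'$, then $v_4^*$, $v_3^*$, $v_5^*$. It then takes the smallest index $\ell$ met by $S$, where $v_{i-2}^*,\dots,v_{i+1}^*$ are all-or-nothing and $v_{i-1}^*$ (labels $9,10$) cannot reach $v_{i+1}^*$ (labels $5,6$). Your direction never has to deal with the hub $v_5^*$, at the cost of a separate initial-gadget case.
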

\begin{proof}
Assume towards a contradiction that~$S\subseteq V$.
First, we show that~$S$ contains no vertex of~$\{\alpha,\alpha',\omega,\omega',v_3^*,v_4^*,v_5^*\}$.
Clearly, $S$ cannot contain~$\alpha$ or~$\omega$, as both have only one neighbor in~$V$ each.
Therefore, $S$ cannot contain~$\alpha'$, as~$\alpha'$ has only two neighbors in~$V$, for which one is~$\alpha$. 
Similarly, since each edge incident with~$\omega'$ that does not end in~$\omega$, has label~$11$, no temporal path in~$\mg[S]$ could have~$\omega'$ as an intermediate vertex.
This is due to the minimality of~$S$ and~\Cref{intermediate}.

Now observe that~$v_4^*$ has only two incident edges, one labeled with~$3$ and one labeled with~$4$.
As~$S$ has size at least~$4$, there thus is a vertex~$w$ in~$S$ that is not incident with any of these edges.
By the above, $w$ is neither~$\alpha$ nor~$\alpha'$.
Due to the labeling, each edge incident with~$w$ has label strictly larger than~$4$, which would contradict the assumption that~$\mg[S]$ is temporally connected.
Thus, $v_4^*$ is not in~$S$, which implies that~$v_3^*$ would have at most one neighbor in~$S$, and that each edge between~$v_5^*$ and any vertex of~$S$ would have label~$5$.
Due to~\Cref{intermediate}, this thus also excludes both~$v_3^*$ and~$v_5^*$ from~$S$.
Hence, $S$ contains no vertex of~$\{\alpha,\alpha',\omega,\omega',v_3^*,v_4^*,v_5^*\}$.

Now let~$\ell$ be the smallest index of~$[1,k]$ with~$S\cap V_\ell \neq \emptyset$. 
By the above argumentation, $\ell \geq 6$ and~$\ell \leq k-4$.
The latter is due to the fact that~$S$ has size at least~$4$, $S$ contains neither~$\omega$ nor~$\omega'$, and~$V_{k-3}$ and~$V_{k-2}$ contain only one vertex each.
This implies that there is some~$i\in [8,k-8]$ with~$i\mod 6 = 2$, such that~$\ell \in [i-2,i+3]$. 
Note that each of the vertices of~$\{v_{i-2}^*,v_{i-1}^*,v_{i}^*,v_{i+1}^*\}$ has at most two neighbors in~$\mg[S]$, as all their other neighbors are not in~$S$, as discussed before, namely~$v_5^*$, $\omega'$, $v_{i-6}^*$, and~$v_{i-3}^*$.
By the fact that~$S$ is~$2$-vertex-connected (see~\Cref{connectivity}), this implies that~$S$ contains either all of~$\{v_{i-2}^*,v_{i-1}^*,v_{i}^*,v_{i+1}^*\}$ or none such vertex.
However, the label on the edges incident with~$v_{i+1}^*$ are labeled with~$5$ and~$6$, whereas the edges incident with~$v_{i-1}^*$ are labeled with~$9$ and~$10$.
Thus, there cannot be a temporal path in~$\mg[S]$ from~$v_{i-1}^*$ to~$v_{i+1}^*$.
This implies that~$S$ contains no vertex from~$\{v_{i-2}^*,v_{i-1}^*,v_{i}^*,v_{i+1}^*\}$.
For vertex~$v_{i+2}^*$, this implies that all incident edges towards vertices that are in~$S$ (namely the vertices of~$V_{i+3}$) all receive label~$7$.
This implies that~$v_{i+2}^*$ cannot be in~$S$, as otherwise, there would be no temporal path in~$\mg[S]$ that has~$v_{i+2}^*$ as intermediate vertex.
Finally, $S$ cannot contain a vertex of~$V_{i+3}$, as each such vertex could have at most one neighbor in~$\mg[S]$, which would contradict the fact that~$S$ is~$2$-vertex-connected.
This contradicts the assumption that~$S$ contains a vertex from~$V_\ell$ with~$\ell\in [i-2,i+3]$.
We thus conclude that~$S$ is not a subset of~$V$.
\end{proof}

\begin{lemma}\label{undir both al and om strict}
$S$ contains~$\alpha$ and~$\omega$.
\end{lemma}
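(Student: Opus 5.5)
I would follow the structure of the proof of \Cref{undir both al and om} for the happy case, adapting it to the strict construction. There the $\Vconn$ vertices are gone and the $\Vout$--$\Vin$ edges all carry the single out2in-time. I assume towards a contradiction that $S$ does not contain both $\alpha$ and $\omega$. By \Cref{undir not only from V strict}, $S$ meets some connector path $\mP[c]$ with $c\in\Vin\cup\Vout$.

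The first step is an analogue of \Cref{if s contains neither al nor om}. Since $S$ is a minimal nontrivial closed tcc, it is $2$-vertex-connected by \Cref{connectivity}. As in the happy case, $S$ must then contain at least two neighbors of $\mP[c]$, because \Cref{gadget no tcc} and \Cref{gadget extensions no tcc} (items 1 and 3) rule out $S\subseteq V(\mP[c])\cup\{r\}$. If $S$ misses one of $\alpha,\omega$, it must also contain the docking point $c$. The argument is the same 2-connectivity argument: otherwise the source or the sink is the only attachment of $S\cap V(\mP[c])$ to the rest of $S$.

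Next I split into two cases, according to which of $\alpha,\omega$ lies in $S$, together with the case where neither does.

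\textbf{Case where neither lies in $S$.} Every connector path met by $S$ is attached only through its docking point, so that docking point is a cut vertex unless $S\cap V(\mP[c])=\{c\}$. Hence each $c\in S\cap(\Vin\cup\Vout)$ has all its $S$-neighbors among its $V$-neighbor and its out2in-neighbors. I would then argue by labels. A vertex $\vin[y]$ only sees the out2in-time and the in-time, and a vertex $\vout[x]$ only sees the out-time and the out2in-time. So no temporal path goes from $\vin[y]$ to a vertex $\vout[x]$, since after the in-time only selection-free in-edges and edges ending at $y$ remain, using \Cref{claim dont go in strict}. Once $S$ contains both an in- and an out-vertex, this already yields the contradiction. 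If $S$ contains only one type, 2-connectivity forces at least two out2in-neighbors of an opposite type, which is impossible.

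\textbf{Case $\alpha\in S$, $\omega\notin S$ (and symmetrically).} I mimic Case 5 of the happy proof. First I show that $S$ contains some $v^*\in V\setminus\{\alpha,\alpha'\}$; otherwise $S$ lies in $\{\alpha,\alpha'\}\cup V(\mP[{\vin[\alpha]}])$ or $\{\alpha,\alpha'\}\cup V(\mP[{\vin[\alpha']}])$, which is excluded by \Cref{gadget extensions no tcc}. Then I consider a temporal path from $v^*$ to $\alpha$ in $\mg[S]$. By \Cref{claim paths not in V strict}, since $(v^*,\alpha)\in A_\conn$ when they are adjacent, such a path either lies in $\mg[V]$ or traverses $\{\vout[v^*],\vin[\alpha]\}$.

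This last step is the main obstacle. Unlike the happy case, $(v^*,\vout[v^*],\vin[\alpha],\alpha)$ is a genuine temporal path, so the contradiction cannot come from the path to $\alpha$. Instead I would look at paths from $\alpha$ into the connector gadget. The path from $\alpha$ to $c$ (or to a vertex of $\mP[c]$) cannot use sink edges, since these lead only to blue edges and the path would end inside that connector path. It cannot use in-edges either, by \Cref{claim dont go in strict}. So it must leave $\alpha$ via $\alpha'$ at a selection-time, after which out-edges are unusable and $\omega$ is absent. Therefore no vertex of $\Vout$ can be reached, and $\Vin$-vertices are reachable only through their own in-edge, ending at their docking point. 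Applying 2-connectivity to these docking points then finishes this case.

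The symmetric case $\omega\in S$, $\alpha\notin S$ uses the path from $c$ to $\omega$ and items 3 and 4 of \Cref{gadget extensions no tcc}. Checking that the extra selection edges with label $11$ do not open new routes is the delicate bookkeeping step.
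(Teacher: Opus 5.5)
Your case split on $S\cap\{\alpha,\omega\}$ is a possible alternative to the paper's route, which instead proves \Cref{claim implies al or om} and then cases on how $S$ meets $\Vin\cup\Vout$. However, the step that should close the one-sided cases fails.

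For $\alpha\in S$, $\omega\notin S$ you correctly get that $\alpha$ reaches no out-vertex, so $S\cap\Vout=\emptyset$. (This is item~1 of \Cref{claim implies al or om} with $x=\alpha$.) You then appeal to 2-connectivity, which cannot finish the case. Take $S=\{\alpha,\alpha',v_3^*\}\cup V(\mP[{\vin[v_3^*]}])$. It satisfies everything you derived: it has no out-vertex, and the docking point $\vin[v_3^*]$ lies in $S$ with the two neighbors $v_3^*$ and $\alpha$ of its connector path. This docking point is also reached from $\alpha$ through its own in-edge. Moreover $G'[S]$ is 2-connected, being the connector path plus the ear $8,v_3^*,\alpha',\alpha,20$.

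What excludes this set is temporal: $v_3^*$ cannot reach $\alpha$. That is exactly the route you dismissed. The path $(v^*,\vout[v^*],\vin[\alpha],\alpha)$ needs $\vout[v^*]\in S$, which you have just ruled out. Once $S\cap\Vout=\emptyset$, consider a temporal path from $v^*\in V\setminus\{\alpha,\alpha'\}$ to $\alpha$. It cannot use an out-edge, and it is trapped after any in-edge (\Cref{claim dont go in strict}). Otherwise it can leave $V$ only through $\alpha$ or $\omega$. So it lies in $\mg[V]$ and must enter $\alpha$ via $\{\alpha',\alpha\}$ at label~$1$. That would require reaching $\alpha'$ before time~$1$, which is impossible since all edges at $\alpha'$ have label at least~$1$.

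The $\omega$-case likewise needs this argument, not bookkeeping. First, no in-vertex reaches $\omega$: its in-edge ends at $y$, and an out2in-edge traps the path (\Cref{claim dont go in2out strict}). Inside its connector path, the source is unreachable from the docking point, and the sink leads to $\alpha\notin S$. Hence $S\cap\Vin=\emptyset$. Then $\omega$ cannot reach any $v^*\in V\setminus\{\omega,\omega'\}$: its out-edge only leads into $\mP[{\vout[\omega]}]$, its source edges trap the path, and after $\{\omega,\omega'\}$ at label~$12$ the vertex $\omega'$ has no later edge.

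Two further steps need repair.

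First, take the case where neither $\alpha$ nor $\omega$ lies in $S$. Your claim that no temporal path leads from $\vin[y]$ to $\vout[x]$ is false whenever $(x,y)\in A_\conn$, since the edge $\{\vout[x],\vin[y]\}$ is itself such a path. What is true, given that $S$ meets each connector path only in its docking point, is that $\vin[y]$ reaches only $y$ and its out-neighbors. So $S\subseteq\{\vin[y],y\}\cup(N(\vin[y])\cap\Vout)$. Any $\vout[x]\in S$ then has $x\neq y$, hence $x\notin S$. Then $\vin[y]$ is the only neighbor of $\mP[{\vout[x]}]$ in $S$, contradicting \Cref{if s contains neither al nor om strict}.

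Second, your reduction to $\{\alpha,\alpha'\}\cup V(\mP[{\vin[\alpha]}])$ or $\{\alpha,\alpha'\}\cup V(\mP[{\vin[\alpha']}])$ is borrowed from the happy proof. There it rests on \Cref{guaranteed neighbors} and the absence of $\Vconn$. In the strict graph the connector path of an in-vertex has out-vertices as neighbors. So this localization is only available after you have shown $S\cap\Vout=\emptyset$.
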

\begin{proof}
Due to~\Cref{undir not only from V strict}, $S$ contains at least one vertex outside of~$V$.
Since~$V'\setminus V$ only consists of the union of the vertex sets of the connector paths, this implies that there is some~$c\in \Vin\cup \Vout$, such that~$S$ contains at least one vertex of~$\mP[c]$.

By the same arguments used to show~\Cref{if s contains neither al nor om} in the proof of~\Cref{hardness undir happy}, we obtain the following.

\begin{claim}\label{if s contains neither al nor om strict}
Assume that~$S$ contains a vertex of some connector path~$\mP[c']$.
Then~$S$ contains at least two neighbors of~$\mP[c']$.
Moreover, if~$S$ does not contain both~$\alpha$ and~$\omega$, then~$S$ contains~$c'$.
\end{claim}

This implies that~$S$ contains~$c$ and at least two neighbors of~$\mP[c]$.
Based on this claim, we now show that~$S$ contains at least one of~$\alpha$ and~$\omega$, if~$S$ contains a vertex of~$V$ and a vertex of~$\Vin\cup \Vout$ that are not adjacent.

\begin{claim}\label{claim implies al or om}
It holds that:
\begin{enumerate}
\item\label{item claim implies om} If~$S$ contains a vertex~$x\in V$ and a vertex~$\vout[y]\in \Vout$ with~$x\neq y$, then~$S$ contains~$\omega$.
\item\label{item claim implies al} If~$S$ contains a vertex~$x\in V$ and a vertex~$\vin[y]\in \Vin$ with~$x\neq y$, then~$S$ contains~$\alpha$.
\end{enumerate}
\end{claim}
\begin{claimproof}
We start by showing the first statement.
Since~$\mg[S]$ is temporally connected, there is a temporal path~$P$ from~$x$ to~$\vout[y]$.
Moreover, $x$ and~$\vout[y]$ are not adjacent, as~$x\neq y$.
If~$x=\omega$, the statement is proven.
So, assume that~$x\neq \omega$.
First, we show that~$P$ has to start with an edge towards a different vertex of~$V$.
Consider the incident edges of~$x$ that do not have their other endpoint in~$V$.
These are (i)~the edges towards the sinks~$t^c$ of each connector path if~$x=\alpha$, (ii)~the edge~$\{x,\vout[x]\}$ if~$x\in V \setminus \{\alpha,\alpha'\}$, and (iii)~the edge~$\{x,\vin[x]\}$ if~$x\in V \setminus \{\omega,\omega'\}$.
Edges of the first type can only be traversed at the~$\alpha$-time and~$t^c$ has only one incident edge with a larger label.
However, this edge receives the largest assigned label among all edges and does end in a vertex of~$\mP[c]$ distinct from~$c$.
Hence, $P$ cannot traverse any edge of the first type.
Moreover, $P$ cannot traverse any of the edges of the second or the third type due to~\Cref{claim go out means go in strict} and~\Cref{claim dont go in strict} since~$x\neq y$.
This implies that~$P$ traverses one edge towards a vertex of~$V$ at the selection times.
Consequently, $P$ cannot traverse any out-edge after that, as these edges receive a strictly smaller label.
Moreover, $P$ cannot traverse any in-edge due to~\Cref{claim dont go in strict}.
The only other edges that connect a vertex from~$V$ to a vertex of~$V'\setminus V$ are incident with~$\omega$.
This implies that~$S$ contains~$\omega$, as~$P$ has to visit~$\omega$ to reach~$\vout[y]$.

We now show the second statement.
Since~$\mg[S]$ is temporally connected, there is a temporal path~$P$ from~$\vin[y]$ to~$x$.
Let~$c = \vin[y]$.
Recall that the connector path~$\mP[c]$ contains only three vertices that have neighbors outside of the connector path: the source~$s^c$, the sink~$t^c$, and the docking point~$c$.
By definition of a connector path, $c$ cannot reach~$s^c$ via only edges in~$\mP[c]$.
Thus, $P$ leaves~$\mP[c]$ via an edge incident with~$c$ or~$t^c$.
First, we show that~$P$ cannot leave~$\mP[c]$ without an edge incident with~$c$.
The only such edges are (i)~the edge~$\{c=\vin[y],y\}$ and (ii)~edges that receive the out2in-time towards vertices~$\vout[z]$ of~$\Vout$.  
By~\Cref{claim dont go in strict}, taking edge~$\{c=\vin[y],y\}$ would guarantee~$P$ to end in~$y$ and not in~$x$.
Similarly, traversing an edge towards a vertex~$\vout[z]\in \Vout$ would guarantee~$P$ to end in a vertex of~$\mP[{\vout[z]}]$ and not in~$x$ due to~\Cref{claim dont go in2out strict}.
Thus, $P$ leaves~$\mP[c]$ via an edge incident with~$t^c$.
As the only edge incident with~$t^c$ that leads to a vertex outside of~$\mP[c]$ is the edge~$\{t^c,\alpha\}$, we conclude that~$P$ visits~$\alpha$.
Hence, $\alpha$ is in~$S$.
\end{claimproof}

Based on this claim, we now distinguish different cases on how~$S$ intersects with~$\Vin\cup \Vout$.
First, observe that if~$S$ contains two nonadjacent vertices of~$\Vin\cup\Vout$, then~$S$ contains a vertex of~$V$, as no temporal path between nonadjacent vertices of~$\Vin\cup \Vout$ avoids all vertices of~$V$.
This is due to the fact that all edges between distinct connector paths all receive the same label, which prevents nonadjacent docking points from reaching each other in~$\mg[V'\setminus V]$.

\begin{itemize}
\item \textbf{Case 1:} $S$ contains two distinct vertices~$\vout[x]$ and~$\vout[y]$ of~$\Vout$\textbf{.}
By the above and the fact that vertices of~$\Vout$ are pairwise nonadjacent, there is some vertex~$v\in V\cap S$.
As~$x\neq y$, $v\neq x$ or~$v\neq y$.
\Cref{item claim implies om} of \Cref{claim implies al or om} then implies that~$\omega\in S$.
Now, if~$S$ contains a vertex~$\vin[z] \in \Vin$, then~\Cref{item claim implies al} of \Cref{claim implies al or om} implies that~$\alpha$ is in~$S$ too, as~$z\neq \omega$ by the fact that~$\vin[z]$ only exists for~$z\in V\setminus \{\omega,\omega'\}$.
So assume in the following that~$S$ contains no vertex of~$\Vin$.
Recall that~$S$ contains at least two neighbors of each of the connector paths~$\mP[{\vout[x]}]$ and~$\mP[{\vout[y]}]$ (see \Cref{if s contains neither al nor om strict}).
By construction, for each~$z\in V\setminus \{\alpha,\alpha'\}$, the neighbors of~$\mP[{\vout[z]}]$ are from~$\Vin \cup \{\alpha,\omega,z\}$, which implies that~$S$ contains~$\alpha$ and~$\omega$, or~$\omega$ and both~$x$ and~$y$.
So assume the latter is the case.
We show that this would imply that~$S$ contains a vertex of~$\Vin$, which we assumed is not the case.
The latter case implies that~$\omega$ is distinct from both~$x$ and~$y$, as otherwise, the respective connector path would have~$\omega$ as its only neighbor in~$S$.
This implies that~$\{x,y\}$ contains a vertex~$z$ of~$V\setminus \{\omega,\omega'\}$.
Hence, there must be a temporal path~$P'$ from~$\omega$ to~$z$ in~$\mg[S]$.
The path~$P'$ cannot be completely in~$\mg[V]$, as both vertices are not adjacent (since~$z \neq \omega'$), the smallest label incident with~$\omega$ is~$12$, and the largest label incident with~$z$ is at most~$11$.
By~\Cref{claim paths not in V strict}, this implies that~$P'$ visits~$\vin[z]$, which would contradict our assumption that~$S$ avoids all vertices from~$\Vin$.
Hence, $\alpha\in S$.

\item \textbf{Case 2:} $S$ contains two distinct vertices~$\vin[x]$ and~$\vin[y]$ of~$\Vin$\textbf{.}
The idea is essentially the same as in the previous case.
The fact that vertices of~$\Vin$ are pairwise nonadjacent implies that there is some vertex~$v\in V\cap S$.
As~$x\neq y$, $v\neq x$ or~$v\neq y$.
\Cref{item claim implies al} of \Cref{claim implies al or om} then implies that~$\alpha\in S$.
Now, if~$S$ contains a vertex~$\vout[z] \in \Vout$, then~\Cref{item claim implies om} of \Cref{claim implies al or om} implies that~$\omega$ is in~$S$ too, as~$z\neq \alpha$ by the fact that~$\vout[z]$ only exists for~$z\in V\setminus \{\alpha,\alpha'\}$.
So assume in the following that~$S$ contains no vertex of~$\Vout$.
Recall that~$S$ contains at least two neighbors of each of the connector paths~$\mP[{\vin[x]}]$ and~$\mP[{\vin[y]}]$ (see \Cref{if s contains neither al nor om strict}).
By construction, for each~$z\in V\setminus \{\omega,\omega'\}$, the neighbors of~$\mP[{\vin[z]}]$ are from~$\Vout \cup \{\alpha,\omega,z\}$, which implies that~$S$ contains~$\alpha$ and~$\omega$, or~$\alpha$ and both~$x$ and~$y$.
So assume the latter is the case.
We show that this would imply that~$S$ contains a vertex of~$\Vout$, which we assumed is not the case.
The latter case implies that~$\alpha$ is distinct from both~$x$ and~$y$, as otherwise, the respective connector path would have~$\alpha$ as its only neighbor in~$S$.
This implies that~$\{x,y\}$ contains a vertex~$z$ of~$V\setminus \{\alpha,\alpha'\}$.
Hence, there must be a temporal path~$P'$ from~$z$ to~$\alpha$ in~$\mg[S]$.
The path~$P'$ cannot be completely in~$\mg[V]$, as both vertices are not adjacent (since~$z \neq \alpha'$), the largest label incident with~$\alpha$ is~$1$, and the smallest label incident with~$z$ is at least~$2$.
By~\Cref{claim paths not in V strict}, this implies that~$P'$ visits~$\vout[z]$, which would contradict our assumption that~$S$ avoids all vertices from~$\Vout$.
Hence, $\omega\in S$.

\item \textbf{Case 3:} $S$ contains exactly one vertex~$\vout[x]$ of~$\Vout$ and exactly one vertex~$\vin[y]$ of~$\Vin$\textbf{.}
First consider~$x=y$.
This implies that~$x\notin V\setminus \{\alpha,\alpha',\omega,\omega'\}$ and that~$\{\vout[x],\vin[y]\}$ is not an edge of~$\mg$.
Hence, $S$ has to contain a neighbor~$z$ of~$\mP[{\vin[x]}]$ distinct from~$x$.
The only neighbors of~$\mP[{\vin[x]}]$ besides~$x$ are from~$\Vout\cup \{\alpha,\omega\}$.\\
As~$\vout[x]$ is the only vertex of~$S\cap \Vout$ by assumption, this implies that~$S$ contains~$\alpha$ or~$\omega$.
By~\Cref{claim implies al or om} we then derive that~$S$ contains~$\alpha$ and~$\omega$, as~$y\notin \{\alpha,\omega\}$.

Now consider~$x\neq y$.
Again, $S$ has at least two neighbors of~$\mP[{\vout[x]}]$.
As~$\vout[x]$ and~$\vin[y]$ are the only vertices of~$S\cap (\Vin\cup \Vout)$, one neighbor of~$\mP[{\vout[x]}]$ in~$S$ is from~$\{\alpha,\omega,x\}$.
In all cases, \Cref{claim implies al or om} implies that~$S$ contains~$\alpha$ and~$\omega$: If~$S$ contains~$\alpha$ then~$\vout[x]\in S$ implies~$\omega\in S$; if~$S$ contains~$\omega$, then $\vin[y]\in S$ implies~$\alpha\in S$; if~$S$ contains~$x$, then by~$x\neq y$, $\vin[y]\in S$ implies that~$\alpha\in S$, which by the previous argument shows~$\omega\in S$.

\item \textbf{Case 4:} $S$ contains exactly one vertex~$c$ of~$\Vin \cup \Vout$\textbf{.} 
Since~$S$ has at least two neighbors of~$\mP[c]$ and~$c$ is the only vertex of~$\Vin\cup\Vout$ in~$S$, $S$ contains at least two vertices from~$\{\alpha,\omega,x\}$, where~$x$ is the vertex of~$V$ fulfilling~$c\in \{\vin[x],\vout[x]\}$.
In particular, $S$ contains at least one of~$\alpha$ and~$\omega$.
If~$S$ contains both, we are done. 
So assume in the following that this is not the case.
We will show that this leads to a contradiction.

Firstly, assume that~$S$ contains~$\omega$.
If~$S$ contains at least one vertex~$v$ of~$V\setminus \{\omega,\omega'\}$, then every temporal path (as shown in Case 1) from~$\omega$ to~$v$ contains at least two vertices from~$\Vin\cup \Vout$, which is not possible by assumption that~$S$ contains one vertex of that set.
So assume in the following that~$S \cap V \subseteq \{\omega, \omega'\}$.
This implies that~$x\in \{\omega, \omega'\}$, which further implies that~$c\in \{\vout[\omega],\vout[\omega']\}$.
If~$c = \vout[\omega]$, then we get a contradiction to the assumption that~$\alpha\notin S$, as~$\alpha$ is the only neighbor of the connector path~$\mP[c]$ outside of~$\Vin\cup\Vout \cup \{\omega\}$ and~$S$ contains at least two neighbors of~$\mP[c]$.
For~$c=\vout[\omega']$, this would imply that~$S \subseteq Q_2:=V(\mP[c]) \cup \{\omega,\omega'\}$.
However, $\G[Q_2]$ fulfills the requirement of the temporal graph~$\G_2$ described in Item~\ref
{gadget plus alpha plus x1} of~\Cref{gadget extensions no tcc}.
As~$\G_2$ does not contain a nontrivial closed tcc, this contradicts the assumption that~$S$ is a closed tcc in~$\mg$.

Secondly, assume that~$S$ contains~$\alpha$.
The arguments are essentially the same as for the case that~$S$ contains~$\omega$.
If~$S$ contains at least one vertex~$v$ of~$V\setminus \{\alpha,\alpha'\}$, then every temporal path (as shown in Case 2) from~$v$ to~$\alpha$ contains at least two vertices from~$\Vin\cup \Vout$, which is not possible by assumption that~$S$ contains one vertex of that set.
So assume in the following that~$S \cap V \subseteq \{\alpha, \alpha'\}$.
This implies that~$x\in \{\alpha, \alpha'\}$, which further implies that~$c\in \{\vout[\alpha],\vout[\alpha']\}$.
If~$c = \vout[\alpha]$, then we get a contradiction to the assumption that~$\omega\notin S$, as~$\omega$ is the only neighbor of the connector path~$\mP[c]$ outside of~$\Vin\cup\Vout \cup \{\alpha\}$ and~$S$ contains at least two neighbors of~$\mP[c]$.
For~$c=\vout[\alpha']$, this would imply that~$S \subseteq Q_4:=V(\mP[c]) \cup \{\alpha,\alpha'\}$.
However, $\G[Q_4]$ fulfills the requirement of the temporal graph~$\G_4$ described in Item~\ref
{gadget plus omega plus xk} of~\Cref{gadget extensions no tcc}.
As~$\G_4$ does not contain a nontrivial closed tcc, this contradicts the assumption that~$S$ is a closed tcc in~$\mg$.

In both cases, we get a contradiction.
We conclude that~$S$ contains~$\alpha$ and~$\omega$, if~$S$ contains only a single vertex of~$\Vin\cup\Vout$. 
\end{itemize}
Note that these cases are exhaustive and all show that~$S$ contains~$\alpha$ and~$\omega$.
\end{proof}

\begin{lemma}\label{undir path from al to om strict}
$S$ contains a vertex of~$x_i\in V_i$ for each~$i\in [1,k]$.
\end{lemma}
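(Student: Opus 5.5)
The plan mirrors \Cref{undir path from al to om}. By \Cref{undir both al and om strict}, both $\alpha$ and $\omega$ lie in $S$. Since $\mg[S]$ is temporally connected, there is a temporal path $P$ from $\alpha$ to $\omega$ in $\mg[S]$. I would show that $P$ visits only vertices of $V$, and then read off from the structure of the selection gadget (\Cref{fig undir strict selection}) that $P$ meets every color class.

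\textbf{Step 1: $P$ stays inside $V$.} Note that $(\alpha,\omega)\notin A_\conn$, since pairs $(\alpha,x)$ are excluded. Hence \Cref{claim paths inside V strict} applies directly and shows that every temporal path from $\alpha$ to $\omega$ in $\mg$ visits only vertices of $V$. The same is then true of $P$.

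\textbf{Step 2: read off the color classes along $P$.} All labels in $\mg[V]$ lie in $[1,12]$. The only edge at $\alpha$ is $\{\alpha,\alpha'\}$ with label $1$, and the only edge at $\omega$ is $\{\omega',\omega\}$ with label $12$. I would trace the possible continuations using the labels:
\begin{itemize}
\item From $\alpha'$ the path must continue along $v_3^*,v_4^*,v_5^*$ with labels $2,3,4$.
\item At $v_5^*$ it takes a label-$5$ edge, either to $v_6^*$ or to some $v_{i+1}^*$.
\item To arrive at $\omega'$ before time $12$, it must enter $\omega'$ with a label-$11$ edge, from $v_8^*$, from some $v_{i+6}^*$, or from $v_i^*$.
\end{itemize}

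\textbf{Step 3: case analysis on the branch taken at $v_5^*$.} This is the main obstacle: I must show that any temporal path from $v_5^*$ to $\omega'$ inside $\mg[V]$ passes through all color classes. The labels are designed so that a single visit covers everything only via the detours of label $11$.
\begin{itemize}
\item \emph{Branch via $v_{i+1}^*$.} Labels force $v_{i+1}^*,v_{i+2}^*$, then a vertex of $V_{i+3}$ (label $7$), then $v_{i+4}^*$ (label $8$), $v_{i+5}^*$ (label $9$), $v_{i+6}^*$ (label $10$). Finally a label-$11$ edge goes to $\omega'$, or to $v_{i-2}^*$, which has no later edge. So this branch covers only $V_1,\dots,V_5$ and $V_{i+1},\dots,V_{i+6}$, $V_{k-1},V_k$.
\item \emph{Branch via $v_6^*$.} This goes $v_6^*,v_7^*,v_8^*$ (labels $9,10$) and then $\omega'$ with label $11$.
\end{itemize}
Either way, a single such path misses most color classes. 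I am therefore no longer confident that $P$ alone meets all classes.

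\textbf{Step 4: use the paths from $\omega$ and to $\alpha$ in a chain.} Since a single $\alpha$–$\omega$ path does not suffice, I would combine several temporal paths, arguing in the style of \Cref{claim for each vert some color}:
\begin{itemize}
\item Show $\omega'\in S$ (the only neighbor of $\omega$ in $V$), and $\alpha'\in S$ similarly.
\item For a vertex $v_j^*\in S$ of a singleton class, consider temporal paths between it and $\alpha$ or $\omega$.
\item Paths staying in $V$ must use the forced label sequences above. Paths leaving $V$ must go through $\vout$ or $\vin$, by \Cref{claim paths not in V strict}, and these are excluded for the relevant pairs in $A_\conn$, namely pairs involving $\alpha,\alpha',\omega,\omega'$.
\end{itemize}
Concretely, $v_{i+6}^*\in S$ needs an in-$V$ path to $\omega'$ or $\omega$. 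Its edge to $v_{i-2}^*$ should force $v_{i-2}^*,\dots$ into $S$, giving an induction over $i$ from both ends that covers every block $V_{i+1},\dots,V_{i+6}$. This induction over the blocks is the step I expect to be hardest to make airtight.
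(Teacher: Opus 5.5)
\textbf{There is a genuine gap in Step 4.} Steps 1--3 are fine. You are right that the single $\alpha$--$\omega$ path only forces $\alpha,\alpha',v_3^*,v_4^*,v_5^*,\omega',\omega$ plus either $V_6,V_7,V_8$ or one block $V_{i+1},\dots,V_{i+6}$. But Step 4 does not supply the mechanism that forces the other blocks.

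Requirements of the form ``temporal path from $\alpha$ or $\alpha'$ to $x$'' or ``from $x$ to $\omega'$ or $\omega$'' cannot force them. All such requirements are already met inside $\{\alpha,\alpha',v_3^*,\dots,v_8^*,\omega',\omega\}$, by the path with labels $1,2,3,4,5,9,10,11,12$, its subpaths, and the direct edges. Your concrete suggestion fails too:
\begin{itemize}
\item the in-$V$ path from $v_{i+6}^*$ to $\omega'$ is just the direct label-$11$ edge, so it forces nothing;
\item the mere existence of the edge $\{v_{i+6}^*,v_{i-2}^*\}$ does not put $v_{i-2}^*$ into $S$.
\end{itemize}
So nothing yet rules out $S$ missing a whole block.

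\textbf{The missing idea.} Use the \emph{other} pairs that were deliberately removed from $A_\conn$, namely $(v_5^*,v_3^*)$ and $(v_i^*,v_{i-2}^*)$ for $i\in[8,k-8]$ with $i\bmod 6=2$. By \Cref{claim paths inside V strict}, a temporal path in $\mg[S]$ between such a pair must stay inside $V$, and then the labels pin it down.

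\emph{Base case.} Since all labels at $v_5^*$ are at least $4$, a path from $v_5^*$ to $v_3^*$ must enter $v_3^*$ via $\{v_8^*,v_3^*\}$ at time $11$. This forces the path $(v_5^*,v_6^*,v_7^*,v_8^*,v_3^*)$, hence $v_6^*,v_8^*\in S$.

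\emph{Inductive step.} Suppose $v_{i-2}^*,v_i^*\in S$. The in-$V$ path from $v_i^*$ to $v_{i-2}^*$ must arrive by time $11$. So it leaves $v_i^*$ by the label-$5$ or the label-$10$ edge, and the label-$10$ edge to $v_{i-1}^*$ is a dead end. The path is therefore forced to be
\[
(v_i^*,v_{i+1}^*,v_{i+2}^*,x_{i+3},v_{i+4}^*,v_{i+5}^*,v_{i+6}^*,v_{i-2}^*).
\]
This covers $V_{i+1},\dots,V_{i+6}$ and supplies $v_{i+4}^*,v_{i+6}^*$ for the next block.

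So the induction runs in one direction only, starting at $i=8$. It is driven by these $v_i^*\to v_{i-2}^*$ requirements, not by paths to $\omega'$ or $\omega$.
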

\begin{proof}
Due to~\Cref{undir both al and om strict}, we know that~$S$ contains~$\alpha$ and~$\omega$.
As~$\G[S]$ is temporally connected, this implies that there is a temporal path~$P$ from~$\alpha$ to~$\omega$ in~$\G[S]$.
We show that~$P$ contains the vertices~$\{\alpha',v_3^*,v_4^*,v_5^*\}$.
To this end, we show that~$P$ only visits vertices of~$V$.
This is due to~\Cref{claim paths inside V strict} and the fact that~$(\alpha,\omega)\notin A_\conn$.

This implies that the first edge of~$P$ is~$\{\alpha,\alpha'\}$.
Observe that~$\omega$ has only one incident edge towards another vertex of~$V$, namely the edge~$\{\omega',\omega\}$ with label~$12$.
Hence, $P$ has to traverse this edge at that time step, which implies that~$P$ has to reach~$\omega'$ prior to time~$12$.
As each other incident edge of~$\omega'$ towards a vertex of~$V\setminus \{\omega\}$ receives label~$11$, $P$ traverses such an edge at time~$11$, and thus does not traverse any other edge with label~$11$ prior.
In particular, the edge~$\{v_3^*,v_8^*\}$ is not part of~$P$.
Hence, $P$ has to start with the subpath~$(\alpha,\alpha',v_3^*,v_4^*,v_5^*)$, as each such vertex has only these incident edges (see~\Cref{fig undir strict selection}).
This implies that~$S$ contains all of~$\{\alpha,\alpha',v_3^*,v_4^*,v_5^*\}$.

With this property, we show next that~$S$ contains also the vertices from~$\{v_6^*,v_7^*,v_8^*\}$.
As~$\mg[S]$ is temporally connected, there is a temporal path~$P_5$ from~$v_5^*$ to~$v_3^*$ in~$\mg[S]$.
We show that~$P_5=(v_5^*,v_6^*,v_7^*,v_8^*,v_3^*)$.
By definition, $(v_5^*,v_3^*)\notin A_\conn$.
Thus, \Cref{claim paths inside V strict} implies that~$P_5$ visits only vertices of~$V$.
Now focus on the edges of~$\mg[V]$ incident with~$v_3^*$ and~$v_5^*$.
Each such edge incident with~$v_5^*$ has a label of at least~$4$ and~$\{v_3^*,v_8^*\}$ is the only such edge incident with~$v_3^*$ that receives a larger label.
As~$v_3^*$ and~$v_5^*$ are not adjacent, $P_5$ traverses~$\{v_3^*,v_8^*\}$ at time~$11$.
For~$P_5$ to arrive at~$v_8^*$ prior to time~$11$, $P_5$ has to start with~$(v_5^*,v_6^*,v_7^*,v_8^*)$ (see~\Cref{fig undir strict selection}).

This implies that~$S$ contains all vertices from~$\{\alpha,\alpha',\omega,\omega'\} \cup \{v_\ell^*\mid \ell\in [3,8]\}$.
We now conclude the proof via an inductive argument.
More precisely, we will show that if for some~$i\in [8,k-8]$ with~$i\mod 6 = 2$, $S$ contains the vertices~$v_{i-2}^*$ and~$v_{i}^*$, then~$S$ contains for each~$\ell\in [i+1,i+6]$ at least one vertex of~$V_\ell$.
This includes the vertices~$v_{i+6-2}^*$ and~$v_{i+6}^*$, for which the argument then inductively shows that~$S$ contains at least one vertex per color class.
The base case follows by the fact that for~$i=8$, we already showed that~$S$ contains~$v_{6}^*$ and~$v_{8}^*$.
For the inductive case, consider the following claim.

\begin{claim}
Let~$i\in [8,k-8]$ with~$i\mod 6 = 2$, such that~$S$ contains~$v_{i-2}^*$ and~$v_{i}^*$.
Then, $S$ contains at least one vertex of~$V_\ell$ for each~$\ell\in[i+1,i+6]$.
\end{claim}
\begin{claimproof}
As~$\mg[S]$ is temporally connected, there is a temporal path~$P_i$ from~$v_i^*$ to~$v_{i-2}^*$ in~$\mg[S]$.
We show that~$P_i=(v_i^*,v_{i+1}^*,v_{i+2}^*,x_{i+3},v_{i+4}^*,v_{i+5}^*,v_{i+6}^*,v_{i-2}^*)$ for some vertex~$x_{i+3}\in V_{i+3}$.
By definition, $(v_{i}^*,v_{i-2}^*)\notin A_\conn$.
Thus, \Cref{claim paths inside V strict} implies that~$P_i$ visits only vertices of~$V$.
Now focus on the edges of~$\mg[V]$ incident with~$v_{i-2}^*$ and~$v_{i}^*$.
Each such edge incident with~$v_{i-2}^*$ has a label of at most~$11$, which implies that~$P_i$ arrives at~$v_{i-2}^*$ at time at most~$11$.
The only incident edges of~$v_i^*$ that have a smaller label than~$11$ are (i)~the edge~$\{v_i^*,v_{i-1}^*\}$ with label~$10$ and (ii)~edge~$\{v_i^*,v_{i+1}^*\}$ with label~$5$.
The path~$P_i$ cannot traverse the edge~$\{v_i^*,v_{i-1}^*\}$, as~$v_{i+1}^*$ has no incident edge with label larger than~$10$.
Thus, $P_i$ starts with~$\{v_i^*,v_{i+1}^*\}$ at time~$5$.
Afterwards, $P_i$ has to traverse the edge~$\{v_{i+1}^*,v_{i+2}^*\}$, as this is the only edge incident with~$v_{i+1}^*$ with a label larger than~$5$, namely label~$6$.
All other edges incident with~$v_{i+2}^*$ lead to vertices of~$V_{i+3}$.
Hence, $P_i$ traverses the edge~$\{v_{i+2}^*,x_{i+3}\}$ at time~$7$ for some~$x_{i+3}\in V_{i+3}$, and then follows the only other edge incident with~$x_{i+3}$, namely~$\{x_{i+3},v_{i+4}^*\}$ at time~$8$.
Now, $v_{i+4}^*$ has two incident edges with labels larger than~$8$, namely~$\{v_{i+4}^*,v_{i+12}^*\}$ with label~$11$ and~$\{v_{i+4}^*,v_{i+5}^*\}$ with label~$9$.\footnote{The first edge only exists if~$i \leq k-12$.}
As~$P_i$ arrives at~$v_{i-2}^*$ at time at most~$11$, the only way for~$P_i$ to continue is to use the edge~$\{v_{i+4}^*,v_{i+5}^*\}$ at time~$9$.
Finally, $P_i$ has to continue to the only other neighbor of~$v_{i+5}^*$, namely~$v_{i+6}^*$.
As a consequence, $P_i$ visits all the vertices of~$\{v_\ell^*\mid \ell\in[i-2,i+2]\cup [i+5,i+6]\}$ and some vertex~$x_{i+3}$ of~$V_{i+3}$.
This proves the statement, as~$P_i$ is a temporal path in~$\mg[S]$.
\end{claimproof}

Consequently, $S$ contains at least one vertex of each color class.
\end{proof}

Let~$X:= \{x_i\mid 1\leq i \leq k\}$.
In the remainder of the proof we show that~$X$ is a clique in~$G$.
This then implies that~$(G,k)$ is a yes-instance of~MCC.

\begin{lemma}
$X$ is a clique of size~$k$ in~$G$.
\end{lemma}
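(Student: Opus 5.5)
We need to show that $X=\{x_1,\dots,x_k\}$ from \Cref{undir path from al to om strict} is a clique in $G$. For any $1\le p<q\le k$ we must show $\{x_p,x_q\}\in E$. The plan is to split the pairs into those that are adjacent by assumption and those that must be certified by a temporal path.

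\textbf{Pairs adjacent by assumption.} If $|V_p|=1$ or $|V_q|=1$, the unique vertex of that color class is adjacent to every other vertex of $G$ by our assumption, so the edge exists. If $q=p+1$, then $G[V_p\cup V_{p+1}]$ is a complete biclique, so again the edge exists. The remaining case is therefore $p,q\ge 11$ with $p\bmod 6=q\bmod 6=5$. Since both indices are congruent to $5$ modulo $6$, this forces $q\ge p+6$.

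\textbf{Pairs certified by a temporal path.} Since $\mg[S]$ is temporally connected, there is a temporal path $P$ from $x_q$ to $x_p$ in $\mg[S]\subseteq \mg$. The key step is to show that no such path exists inside $\mg[V]$. Given that, \Cref{claim paths not in V strict} yields $(x_q,x_p)\in A_\conn$. By definition, $A_\conn\subseteq\{(x,y),(y,x)\mid\{x,y\}\in E\}$, so $\{x_p,x_q\}\in E$. Note that $x_p,x_q$ are not among the excluded special pairs of $A_\conn$, although this does not matter for the argument, because membership in $A_\conn$ already certifies the edge.

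\textbf{Main obstacle: no path inside $\mg[V]$.} I expect this step to require the most care. Write $p=i+3$ and $q=i'+3$ with $i,i'\equiv 2\pmod 6$ and $i'\ge i+6$. In $\mg[V]$, the only edges incident with $x_q$ are $\{v_{i'+2}^*,x_q\}$ with label $7$ and $\{x_q,v_{i'+4}^*\}$ with label $8$. Any path to $x_p$ must arrive via an edge with label $7$ or $8$. I would examine how a path leaving $x_q$ at time $\ge 7$ could travel "backwards" to the block of $x_p$.

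From $v_{i'+2}^*$, all later edges go into $V_{i'+3}$ at label $7$, or back to $v_{i'+1}^*$ at label $6$, which is too early. From $v_{i'+4}^*$ after time $8$, the only options are label $9$ toward $v_{i'+5}^*$, then label $10$ toward $v_{i'+6}^*$, then label $11$ edges to $\omega'$ or $v_{i'-2}^*$. The label $11$ edge $\{v_{i'+4}^*,v_{i'+12}^*\}$ also exists, but it moves forward rather than backward. After time $11$, the only continuation is $\omega'$ to $\omega$ at label $12$. A path reaching $v_{i'-2}^*$ at time $11$ is stuck, since $v_{i'-2}^*$ has no later edges.

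Hence every vertex reachable from $x_q$ in $\mg[V]$ lies in $\{v_{i'-2}^*,\omega',\omega\}$ or among the indices $\ge i'+2$. Since $x_p\in V_p$ with $p\le i'-3$, it is not reachable. The care needed here is to enumerate the label-$\ge 7$ edges (those with labels $9,10,11,12$) and confirm that none of them leads into an earlier block with a still-later continuation.
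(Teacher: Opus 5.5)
Your proof is correct and follows the paper's argument: reduce to $p,q\ge 11$ with $p\equiv q\equiv 5 \pmod 6$, show there is no temporal path inside $\mg[V]$, and conclude via \Cref{claim paths not in V strict} and the definition of $A_\conn$. The paper uses the path from $x_p$ to $x_q$ rather than from $x_q$ to $x_p$, which makes no difference. Your reachability enumeration for the $\mg[V]$ step is valid, but the paper handles it in one line: every edge at $x_p$ and at $x_q$ in $\mg[V]$ has label $7$ or $8$. A strict path between them would therefore consist of exactly two edges through a common neighbour, and no such neighbour exists since $q\ge p+6$.
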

\begin{proof}
Let~$1\leq p < q \leq k$.
We show that~$\{x_p,x_q\}$ is an edge of~$G$.
Recall that we assume that each vertex of a size-1 color class is adjacent to each other vertex of~$V$ in~$G$.
As~$V_\ell$ has size one for each~$\ell \in [1,k]$ with~$\ell < 11$ or~$\ell \mod 6 \neq 5$, we thus only need to consider~$p,q \geq 11$, $p\mod 6=5$, and~$q\mod 6 = 5$.
This implies that~$q \geq p + 6$.
As~$\mg[S]$ is temporally connected, there is a temporal path~$P$ from~$x_p$ to~$x_q$ in~$\mg[S]$.
Note that~$P$ does not contain only vertices of~$V$.
This is due to the facts that (i)~each edge incident with~$x_p$ in~$\mg[V]$ receives a label of at least~$7$, (ii)~each edge incident with~$x_q$ in~$\mg[V]$ receives a label of at most~$8$, and (iii)~$x_p$ and~$x_q$ have no common neighbor, as~$q \geq p + 6$. 
Hence, \Cref{claim paths not in V strict} implies that~$(x_p,x_q)\in A_\conn$.
By definition of~$A_\conn$, this then implies that~$\{x_p,x_q\}$ is an edge of~$G$.
Consequently, $X$ is a clique in~$G$.
\end{proof}

This thus implies that~$G$ has a clique of size~$k$, which proves~\Cref{undirected direction backward strict}.
The correctness of the reduction now follows from~\Cref{undirected direction forward strict} and~\Cref{undirected direction backward strict}.

\end{document}